\definecolor{corlinks}{RGB}{0,0,150}
\definecolor{cormenu}{RGB}{200,0,0}
\definecolor{corurl}{RGB}{200,0,0}
\newtheorem{theorem}{Theorem}
\newtheorem{lemma}[theorem]{Lemma}
\newtheorem{corollary}[theorem]{Corollary}
\newtheorem{definition}[theorem]{Definition}
\newtheorem{proposition}[theorem]{Proposition}
\newtheorem{remark}[theorem]{Remark}
\newtheorem{claim}[theorem]{Claim}
\def\colorful{1}
\newcommand{\SIZE}{\mathsf{SIZE}}
\newcommand{\poly}{\mathsf{poly}}
\newcommand{\Ptime}{\ensuremath{{\sf P}}\xspace}
\newcommand{\Ppoly}{\ensuremath{{\sf P/poly}}\xspace}
\newcommand{\NC}{\ensuremath{{\sf NC}^1}\xspace}
\newcommand{\AC}{\ensuremath{{\sf AC}^0}\xspace}
\newcommand{\NPtime}{\ensuremath{{\sf NP}}\xspace}
\newcommand{\NP}{\ensuremath{{\sf NP}}\xspace}
\newcommand{\coNP}{\ensuremath{{\sf coNP}}\xspace}
\newcommand{\QPtime}{\ensuremath{{\sf QP}}\xspace}
\newcommand{\NQPtime}{\ensuremath{{\sf NQP}}\xspace}
\newcommand{\TIME}{\ensuremath{{\sf TIME}}\xspace}
\newcommand{\NTIME}{\ensuremath{{\sf NTIME}}\xspace}
\newcommand{\Formula}{\ensuremath{{\sf Formula}}\xspace}
\newcommand{\Circuit}{\ensuremath{{\sf Circuit}}\xspace}
\newcommand{\PARITY}{\ensuremath{{\sf PARITY}}\xspace}
\newcommand{\MCSP}{\ensuremath{{\sf MCSP}}\xspace}
\newcommand{\MKtP}{\ensuremath{{\sf MKtP}}\xspace}
\newcommand{\Succinct}{\ensuremath{{\sf Succinct}\text{-}\ensuremath{\MCSP}}\xspace}
\newcommand{\ttable}{\ensuremath{{\sf tt}}\xspace}
\newcommand{\polylog}{\mathop{\mathrm{polylog}}}
\newcommand{\cC}{\mathcal C}
\newcommand{\cD}{\mathcal D}
\newcommand{\cF}{\mathcal F}
\newcommand{\cG}{\mathcal G}
\newcommand{\cR}{\mathcal R}
\newcommand{\N}{\mathbb{N}}
\newcommand{\eps}{\varepsilon}
\newcommand{\Size}{\mathsf{Size}}
\newcommand{\boldsym}{\boldsymbol}
\newenvironment{proofsketch}{\begin{proof}[Proof Sketch]}{\end{proof}}
\newcommand{\Nat}{\mathbb{N}}
\newcommand{\binset}{\{0,1\}}
\newcommand{\Kt}{\mathsf{Kt}}
\newcommand{\DT}{\mathsf{DT}}
\newcommand{\GapAND}{\mathsf{GapAND}}
\newcommand{\FM}{\mathsf{Formula}}
\newcommand{\Andreev}{\mathsf{Andreev}}
\renewcommand{\restriction}{\mathord{\upharpoonright}}
\def\ShowAuthNotes{1}
\newcommand{\authnote}[2]{\ \\ \textcolor{red}{\parbox{0.9\linewidth}{[{\footnotesize {\bf #1:} { {#2}}}]}}\newline}
\newcommand{\authnote}[2]{}
\newcommand{\lnote}[1]{\authnote{Lijie}{#1}}
\newenvironment{proofof}[1]{\begin{proof}[Proof of #1]}{\end{proof}}
\begin{document}

\title{Beyond Natural Proofs:\\Hardness Magnification and Locality\vspace{0.3cm}}

\author{
\hspace{-1.6cm} Lijie Chen\thanks{\texttt{\href{lijieche@mit.edu}{lijieche@mit.edu}}}\vspace{0.05cm}\\ \hspace{-1.6cm} \small{Massachusetts Institute of Technology}
\and
Shuichi Hirahara\thanks{\texttt{\href{mailto:s_hirahara@nii.ac.jp}{s\_hirahara@nii.ac.jp}}}\vspace{0.05cm}\\ \small{National Institute of Informatics}
\and
 \hspace{0.8cm}Igor C. Oliveira\thanks{\texttt{\href{mailto:igor.oliveira@warwick.ac.uk}{igor.oliveira@warwick.ac.uk}}}\vspace{0.05cm}\\ \hspace{0.8cm} \small{University of Warwick} \vspace{0.2cm}
\and
J\'{a}n Pich\thanks{\texttt{\href{mailto:jan.pich@cs.ox.ac.uk}{jan.pich@cs.ox.ac.uk}}}\vspace{0.05cm}\\ \small{University of Oxford}\vspace{0.3cm}
\and Ninad Rajgopal\thanks{\texttt{\href{mailto:ninad.rajgopal@cs.ox.ac.uk}{ninad.rajgopal@cs.ox.ac.uk}}}\vspace{0.05cm}\\ \small{University of Oxford}\vspace{0.0cm}
\and Rahul Santhanam\thanks{\texttt{\href{mailto:rahul.santhanam@cs.ox.ac.uk}{rahul.santhanam@cs.ox.ac.uk}}}\vspace{0.05cm}\\ \small{University of Oxford}~\\
}
\maketitle

\vspace{-0.8cm}

\begin{abstract}
Hardness magnification reduces major complexity separations (such as $\mathsf{\mathsf{EXP}} \nsubseteq \mathsf{NC}^1$) to proving lower bounds for some natural problem $Q$ against weak circuit models. Several recent works \citep{OS18_mag_first, MMW_STOC_paper, CT19_STOC, OPS19_CCC, CMMW_CCC_paper, DBLP:conf/icalp/Oliveira19, Magnification_FOCS19} have established results of this form. In the most intriguing cases, the required lower bound is known for problems that appear to be significantly easier than $Q$, while $Q$ itself is susceptible to lower bounds but these are not yet sufficient for magnification. 

In this work, we provide more examples of this phenomenon, and investigate the prospects of proving new lower bounds using this approach. In particular, we consider the following essential questions associated with the hardness magnification program:

\vspace{0.2cm}
\noindent -- \emph{Does hardness magnification avoid the natural proofs barrier of Razborov and Rudich \emph{\citep{DBLP:journals/jcss/RazborovR97}}?}

\vspace{0.1cm}
\noindent -- \emph{Can we adapt known lower bound techniques to establish the desired lower bound for $Q$?}

\vspace{0.2cm}

We establish that some instantiations of hardness magnification overcome the natural proofs barrier in the following sense: slightly superlinear-size circuit lower bounds for certain versions of the minimum circuit size problem \MCSP imply the non-existence of natural proofs. 
As a corollary of our result, we show that certain magnification theorems not only imply strong worst-case circuit lower bounds but also rule out the existence of efficient learning algorithms.

Hardness magnification might sidestep natural proofs, but we identify a source of difficulty when trying to adapt existing lower bound techniques to prove strong lower bounds via magnification. This is captured by a \emph{locality barrier}: existing magnification theorems \emph{unconditionally} show that the problems $Q$ considered above admit highly efficient circuits extended with small fan-in oracle gates, while lower bound techniques against weak circuit models quite often easily extend to circuits containing such oracles. This explains why direct adaptations of certain lower bounds are unlikely to yield strong complexity separations via hardness magnification.
\end{abstract}

\newpage
\tableofcontents
\newpage

\section{Introduction}\label{s:introduction}

Proving circuit size lower bounds for explicit Boolean functions is a central problem in Complexity Theory. Unfortunately, it is also notoriously hard, and   arguments ruling out a wide range of approaches have been discovered. The most prominent of them is the {\em natural proofs barrier} of Razborov and Rudich \cite{DBLP:journals/jcss/RazborovR97}. 

A candidate approach for overcoming this barrier was investigated recently by Oliveira and Santhanam \cite{OS18_mag_first}. {\em Hardness Magnification} identifies situations where strong circuit lower bounds for explicit Boolean functions (e.g.~$\NP\not\subseteq\Ppoly$) follow from much weaker (e.g.~slightly superlinear) lower bounds for specific natural problems. As discussed in \citep{OS18_mag_first}, in some cases the lower bounds required for magnification are already known for explicit problems, but not yet for the problem for which the magnification theorem holds. This approach to lower bounds has attracted the interest of several researchers, and a number of recent works have proved magnification results \cite{MMW_STOC_paper, CT19_STOC, OPS19_CCC, CMMW_CCC_paper, DBLP:conf/icalp/Oliveira19, Magnification_FOCS19} (see also \citep{Srinivasan03, AK10, LW13, DBLP:journals/eccc/MullerP17} for related previous work). We provide a concise review of existing results in Appendix \ref{ss:previous_work}.

In this work, we are interested in understanding the prospects of proving new lower bounds using hardness magnification, including potential barriers.

\subsection{Hardness Magnification Frontiers}\label{ss:intro_background_motivation} 

While hardness magnification is a broad phenomenon, its most promising instantiations seem to occur in the setting of circuit classes such as \NC. The potential of hardness magnification stems from establishing the following scenario.
\bigskip

\medskip

\setlength\fboxrule{2pt}
\fbox{\parbox{433pt}{
\vspace{6pt}

\quad {\bf HM Frontier:} There is a natural problem $Q$ and a computational model $\mathcal{C}$ such that:
\vspace{-1pt}
\begin{itemize}
\item[1.] (\emph{Magnification}) 
$Q \notin\mathcal{C}$ implies $\NP\not\subseteq \NC$ or a similar breakthrough.
\item[2.] (\emph{Evidence of Hardness}) $Q \notin\mathcal{C}$ under a standard conjecture.
\item[3.] (\emph{Lower Bound against $\mathcal{C}$}) $L \notin\mathcal{C}$, where $L$ is a simple function like \PARITY. 
\item[4.] (\emph{Lower Bound for $Q$}) $Q \notin\mathcal{C}^-$, 
where $\mathcal{C}^-$ is slightly weaker than $\mathcal{C}$.
\end{itemize}
\vspace{-4pt}}}

\bigskip

\medskip

A frontier of this form provides hope that the required lower bound in Item 1 is true (thanks to Item 2), and that it might be within the reach of known techniques (thanks to Items 3 and 4, which provide evidence that we can analyse the circuit model and the problem). HM frontiers have been already achieved in earlier works with a striking example appearing in \citep{OPS19_CCC} (see also \citep{Magnification_FOCS19}). Despite the number of works in this area, we note that the HM frontier is achieved only by some magnification theorems (Item 3 is often unknown; e.g. in the case of results in \citep{AK10, CT19_STOC}). 

In order to make our subsequent discussion more concrete, we provide five examples of HM frontiers. Some of these results are new or require an extension of previous work, and the relevant statements will be explained in more detail in Section \ref{a:improved_magnification_MCSP}. The list of frontiers is not meant to be exhaustive, but we have tried to cover different computational models.

\setlength\fboxrule{0.6pt}
\fbox{\parbox{436pt}{
\vspace{6pt}
\quad {\bf (A) HM Frontier for $\mathsf{MKtP}[n^c, 2n^c]$ and} $\mathsf{AC}^0$-$\mathsf{XOR}$\textbf{:}
\medskip

\quad A1. If $\mathsf{MKtP}[n^c,2n^c] \notin \mathsf{AC}^0$-$\mathsf{XOR}[N^{1.01}]$ for large $c>1$ then $\mathsf{EXP} \nsubseteq \mathsf{NC}^1$ (Section \ref{ss:frontier_ACXOR}).

\quad A2. $\mathsf{MKtP}[n^c,2n^c]\notin\mathsf{P}/\mathsf{poly}$ for large enough $c$ under exponentially secure PRFs \citep{DBLP:journals/jcss/RazborovR97}.

\quad A3. $\mathsf{Majority} \notin \mathsf{AC}^0$-$\mathsf{XOR}[2^{N^{o(1)}}]$ (immediate from \citep{Razborov87, DBLP:conf/stoc/Smolensky87}).

\quad A4. $\mathsf{MKtP}[n^c,2n^c] \notin \mathsf{AC}^0$ for any sufficiently large constant $c$  (Section \ref{ss:frontier_ACXOR}).
\vspace{3pt}}}
\medskip

\vspace{0.3cm}

\noindent {\normalsize \textbf{A.} $\mathsf{MKtP}[s,t]$ refers to the promise problem of determining if an $N$-bit input has Levin Kolmogorov complexity at most $s$ versus at least $t$ (cf.~\citep{OPS19_CCC}). Here $N=2^n$. The $\mathsf{AC}^0$-$\mathsf{XOR}$ model is the extension of $\mathsf{AC}^0$ where gates at the bottom layer of the circuit can compute arbitrary parity functions. $\mathsf{AC}^0$-$\mathsf{XOR}[s]$ denotes $\mathsf{AC}^0$-$\mathsf{XOR}$ circuits of size $s$ where the size is measured as the number of gates. This circuit class has received some attention in recent years (cf.~\citep{DBLP:journals/jcss/CheraghchiGJWX18})}, and a few basic questions about $\mathsf{AC}^0$ circuits with parity gates (such as constructing PRGs of seed lengh $o(n)$ and learnability using random examples) remain open for $\mathsf{AC}^0$-$\mathsf{XOR}$ as well.

\vspace{0.4cm}

\setlength\fboxrule{0.6pt}
\fbox{\parbox{436pt}{
\vspace{6pt}

\quad {\bf (B) HM Frontier for $\MCSP[2^{n^{1/3}},2^{n^{2/3}}]$ and} $\Formula$-$\mathsf{XOR}${\bf:}
\medskip

\quad B1. $\MCSP[2^{n^{1/3}},2^{n^{2/3}}]\notin\Formula$-$\mathsf{XOR}[N^{1.01}]$ implies $\mathsf{NQP}\not\subseteq\NC$ (Section \ref{ss:frontier_MCSP_formula_xor}).

\quad B2. $\MCSP[2^{n^{1/3}},2^{n^{2/3}}]\notin\mathsf{P}/\mathsf{poly}$ under standard cryptographic assumptions 
\citep{DBLP:journals/jcss/RazborovR97}.

\quad B3. $\mathsf{InnerProduct}\notin\Formula$-$\mathsf{XOR}[N^{1.99}]$ (immediate consequence of \cite{Tal17}).

\quad B4. $\MCSP[2^{n^{1/3}},2^{n^{2/3}}]\notin \Formula[N^{1.99}]$ (\cite{HS17}; see also \citep{OPS19_CCC}).
\vspace{3pt}}}
\medskip

\vspace{0.3cm}

\noindent {\normalsize \textbf{B.} In the statements above, $\mathsf{NQP}$ stands for nondeterministic quasi-polynomial time, $\mathsf{InnerProduct}$ is the Boolean function defined as $\mathsf{InnerProduct}(x,y)= \sum_i x_i \cdot y_i\ (\mathsf{mod}\ 2)$, where $x,y \in \{0,1\}^N$, $\Formula$-$\mathsf{XOR}[s]$ refers to the class of Boolean formulas over the De Morgan basis with at most $s$ leaves, where each leaf is an XOR of arbitrary arity over the inputs\footnote{Note that $\mathsf{Formula}$-$\mathsf{XOR}[N^{1.01}] \subseteq \mathsf{Formula}[N^{3.01}]$. A better understanding of the former class is therefore necessary before we can understand the power and limitations of super-cubic formulas, which is a major open question in circuit complexity.}, and $\MCSP[s,t]$ denotes a promise problem over $N= 2^n$ input bits with YES inputs being truth-tables of Boolean functions on $n$ inputs which are computable by circuits of size $s$, and NO instances being truth-tables of Boolean functions which are hard for circuits of size $t$.}

\vspace{0.4cm}

\fbox{\parbox{436pt}{
\vspace{6pt}
\quad {\bf (C) HM Frontier for $\MCSP[2^{n^{1/2}}/10n,2^{n^{1/2}}]$ and} $\mathsf{Almost}$-$\mathsf{Formulas}$\textbf{:}
\medskip

\quad C1. $\MCSP[\frac{2^{n^{1/2}}}{10n},2^{n^{1/2}}]\notin N^{0.01}$-$\mathsf{Almost}$-$\Formula[N^{1.01}]$ implies $\mathsf{NP}\not\subseteq\NC$ (Section \ref{ss:almost_formulas_magnification}).

\quad C2. $\MCSP[\frac{2^{n^{1/2}}}{10n},2^{n^{1/2}}]\notin\mathsf{P}/\mathsf{poly}$ under standard cryptographic assumptions 
\citep{DBLP:journals/jcss/RazborovR97}.

\quad C3. $\mathsf{PARITY}\notin N^{0.01}$-$\mathsf{Almost}$-$\Formula[N^{1.01}]$  (Section \ref{ss:almost_formulas_magnification}).

\vspace{-0.075cm}

\quad C4. $\MCSP[2^{n^{1/2}}/10n,2^{n^{1/2}}]\notin \Formula[N^{1.99}]$ (\cite{HS17}; see also \citep{OPS19_CCC}).
\vspace{3pt}}}
\medskip

\vspace{0.3cm}

\noindent {\normalsize \textbf{C.} An almost-formula is a circuit with a bounded number of gates of fan-out larger than $1$. More precisely, a $\gamma$-$\mathsf{Almost}$-$\mathsf{Formula}[s]$ is a circuit containing at most $s$ AND, OR, NOT gates of fan-in at most $2$, and among such gates, at most $\gamma$ of them have fan-out larger than $1$. Consequently, this class naturally interpolates between formulas and circuits. This magnification frontier can be seen as progress towards establishing magnification theorems for worst-case variants of $\mathsf{MCSP}$ in the regime of sub-quadratic formulas (see the discussion in \citep{OPS19_CCC}).

\vspace{0.4cm}

\fbox{\parbox{436pt}{
		\vspace{6pt}
		\quad {\bf (D) HM Frontier for $\MCSP[2^{\sqrt{n}}]$ and  one-sided error randomized formulas:}
		\medskip
		
		\quad D1. $\MCSP[2^{\sqrt{n}}] \notin \GapAND_{O(N)}$-$\FM[N^{2.01}] \Rightarrow \NQPtime \notin \NC$ (\cref{ss:frontier_MCSP_formula_xor}).
		
		\quad D2. $\MCSP[2^{\sqrt{n}}] \notin \mathsf{P}/{\mathsf{poly}}$ under standard cryptographic assumptions \citep{DBLP:journals/jcss/RazborovR97}.
		
		\quad D3.1. $\Andreev_N \notin \GapAND_{O(N)}$-$\FM[N^{2.99}]$ (implicit in~\cite{Hastad98}).
		
		\quad D3.2. $\MCSP[2^n / n^4] \notin \GapAND_{O(N)}$-$\FM[N^{2.99}]$ (implicit in~\cite{CheraghchiKLM19_eccc_journals}).
		
		\quad D4. $\MCSP[2^{\sqrt{n}}] \notin \GapAND_{O(N)}$-$\FM[N^{1.99}]$ (\cite{CJW_tight_threshold}, building on~\cite{HS17,OPS19_CCC}).
\vspace{3pt}}}
\medskip

\vspace{0.3cm}

\noindent {\normalsize \textbf{D.} $\GapAND_{N}$ is the promise function on $N$ bits such that it outputs $1$ when all input bits are $1$, and outputs $0$ when at most $1/10$ of the input bits are $1$. $\GapAND_{O(N)}$-$\FM[s]$ denotes circuits with $\GapAND_{O(N)}$ gate at the top with formulas of size $s$ being inputs of the top gate. Therefore, $\GapAND_{O(N)}$-$\FM$ can be seen as \emph{randomized formulas with one-sided error}.\footnote{Suppose there is a $\GapAND_{O(N)}$-$\FM$ circuit computing a function $f \colon \{0,1\}^N \to \{0,1\}$. Consider a uniform distribution of all sub-formulas below the top $\GapAND_{O(N)}$ gate. Then for any input $x$, if $f(x) = 1$ then a sample formula from that distribution always outputs $1$ on $x$, otherwise it outputs $0$ with probability at least $0.9$ on $x$. On the other hand, it is possible to derandomize a distribution of formulas computing $f$ with one-sided error using a top $\GapAND_{O(N)}$ gate.} The most interesting aspect of this magnification frontier is that the gap between the known hardness result and the magnification threshold is \emph{nearly-tight} ($N^{2-\eps}$ versus $N^{2+\eps}$).\footnote{This tight threshold is first observed in~\cite{CJW_tight_threshold}, we include it here to show that the barrier discussed in this paper also applies to this particular setting.}
	
	\vspace{0.4cm}

\setlength\fboxrule{0.6pt}
\fbox{\parbox{436pt}{
\vspace{6pt}
\quad {\bf (E) HM Frontier for $(n - k)$\normalfont{-}$\mathsf{Clique}$ \textbf{and} $\mathsf{AC}^0$\textbf{:}}
\medskip

\quad E1. If $(n-k)$-$\mathsf{Clique} \notin \mathsf{AC}^0[m^{1.01}]$ for some $k = (\log n)^C$, then $\mathsf{NP} \nsubseteq \mathsf{NC}^1$ (Section \ref{ss:frontier_clique}).

\quad E2. (Non-uniform) ETH $\Rightarrow$ $(n-k)$-$\mathsf{Clique} \notin \mathsf{P}/\mathsf{poly}$ for some $k = (\log n)^C$ (Section \ref{ss:frontier_clique}).

\quad E3. $\mathsf{Parity} \notin \mathsf{AC}^0$ \citep{ajtai1983sup, DBLP:journals/mst/FurstSS84}.

\quad E4. $(n-k)$-$\mathsf{Clique} \notin \mathsf{mP}/\mathsf{poly}$ for some $k = (\log n)^C$ (\citep{DBLP:journals/dm/AndreevJ08}; see Section \ref{ss:frontier_clique}).
\vspace{3pt}}}
\medskip

\vspace{0.3cm}

\noindent {\normalsize \textbf{E.} } The $\ell$-$\mathsf{Clique}$ problem is defined on graphs on $n$ vertices in the adjacency matrix representation of size $m = \Theta(n^2)$. (The statements above refers to the regime of very large clique detection.) The class $\mathsf{mP}/\mathsf{poly}$ refers to monotone circuits of polynomial size. In this frontier we are modifying Item 4 from HM frontier so that instead of slightly weaker $\mathcal{C^-}$ we consider an incomparable $\mathcal{C^-}$. This frontier is however particularly interesting, as items E1 and E4 connect hardness magnification to a basic question about the power of \emph{non-monotone} circuits when computing \emph{monotone} functions (see \citep{DBLP:conf/stoc/ChenOS17, DBLP:conf/innovations/GoosKRS19} and references therein): Is every monotone function in $\mathsf{AC}^0$ computable by a monotone (unbounded depth) boolean circuit of polynomial size? If this is the case, $\mathsf{NP} \nsubseteq \mathsf{NC}^1$ would follow.

\vspace{0.4cm}

Note that these hardness magnification frontiers offer different approaches to proving lower bounds against $\mathsf{NC}^1$. \\

\noindent \textbf{Essential Questions.} Do magnification theorems bring us closer to strong circuit lower bounds? In order to understand the limits and prospects of hardness magnification, the following questions are relevant. 

\begin{itemize}
\item[\textbf{Q1.}] \emph{Naturalization.} Is hardness magnification a \emph{non-naturalizing} 
approach to circuit lower bounds? 
If we accept standard cryptographic assumptions, non-naturalizability is a necessary property of any successful approach to strong circuit lower bounds.\footnote{We assume familiarity of the reader with the natural proofs framework of \citep{DBLP:journals/jcss/RazborovR97}.}

\item[\textbf{Q2.}] \emph{Extending known lower bounds.} Can we adapt an existing lower bound proof from Items 3 and 4 in some HM frontier to show the lower bound required from Item 1 in that HM frontier? Is it possible to establish the required lower bounds via a reduction from $L$ to $Q$? 

\item[\textbf{Q3.}] \emph{Improving existing magnification theorems.} Can we close the gap between Items 1 and 4 in HM frontier by establishing a magnification theorem that meets \emph{known} lower bounds, such as the ones appearing in Item 4? 
\end{itemize}


In the next sections, we present results that shed light into all these questions.

\subsection{Hardness Magnification and Natural Proofs}\label{ss:non_natural}

The very existence of the natural proofs barrier provides a direction for proving strong circuit lower bounds: one can proceed by \emph{refuting the existence of natural properties}.\footnote{A similar perspective has been employed in proof complexity in attempts to approach strong proof complexity lower bounds by extending the natural proofs barrier (see \cite{KForc11, Razb15}).} In other words, a way to avoid natural proofs is to prove that there are no natural proofs. It is also easy to see that \Ppoly-natural properties useful against \Ppoly can be turned into natural properties with much higher constructivity, e.g.~into linear-size natural properties useful against circuits of polynomial-size. If read contrapositively, this gives a form of hardness magnification.

The initial hardness magnification theorem of Oliveira and Santhanam \cite{OS18_mag_first} proceeds in a similar fashion. It proposes to approach $\NP\not\subseteq\Ppoly$ by deriving slightly superlinear circuit lower bounds for specific problems such as an \emph{approximate version} of 
\MCSP, which 
asks to distinguish truth-tables of Boolean functions computable by small circuits 
from truth-tables of Boolean functions which are hard to approximate 
by small circuits. 
Interestingly, this approach does not seem to naturalize, as it appears to yield strong lower bounds only for certain problems, and not for most of them. (The same heuristic argument appears in \citep{AK10}.) However, this is only an informal argument, and we would like to get stronger evidence that the natural proofs barrier does not apply here.


We show that hardness magnification for approximate \MCSP can be used to conclude the \emph{non-existence} of natural proofs against polynomial-size circuits. More precisely, we prove that if approximate \MCSP requires slightly superlinear-size circuits, then there are no \Ppoly-natural properties against \Ppoly. This strongly suggests that the natural proofs barrier isn't relevant to the magnification approach. Indeed, there remains the possibility that the weak circuit lower bound for \MCSP in the hypothesis of the result can be shown using naturalizing techniques (as there aren't any strong enough plausible cryptographic conjectures known that rule this out), and yet by using magnification to 'break' naturalness, we could get strong circuit lower bounds and even conclude the non-existence of natural proofs!

The core of our proof is the following new hardness magnification theorem:~if approximate \MCSP requires slightly superlinear-size circuits, then not only $\NP\not\subseteq\Ppoly$ but \emph{it is impossible even to learn efficiently}. We can then refute the existence of natural proofs by applying the known translation of natural properties to learning algorithms \cite{DBLP:conf/coco/CarmosinoIKK16}. Similar implications hold with a \emph{worst-case gap version} of \MCSP (in the sense of HM Frontiers B and C but with different parameters) instead of approximate \MCSP, following an idea from \citep{Hir18}.

Interestingly, all the implications above are actually \emph{equivalences}. In particular, the existence of natural properties is equivalent to the existence of highly efficient circuits for computing approximate \MCSP and worst-case gap \MCSP with certain parameters (cf.~Theorem \ref{thm:equivalences}). This extends a known characterization of natural properties: Carmosino et al.~\cite{DBLP:conf/coco/CarmosinoIKK16} showed that \Ppoly natural proofs against \Ppoly are equivalent to learning \Ppoly by subexponential-size circuits, which was in turn shown to be equivalent by Oliveira and Santhanam \cite{DBLP:conf/coco/OliveiraS17} to the non-existence of non-uniform pseudorandom function families of sub-exponential security. The connection of hardness magnification to learning and pseudorandom function generators might be of independent interest, since it extends the consequences of magnification into two central areas in Complexity Theory.

  \begin{theorem}[Equivalences for Hardness Magnification]
  \label{thm:equivalences}
  The following statements are equivalent:\footnote{See Preliminaries (Section \ref{s:preliminaries}) for definitions.}
  \begin{itemize}
  \item[\emph{(\emph{a})}] \textbf{Hardness of approximate \emph{MCSP} against almost-linear size circuits.}\\There exist $c \geq 1$, $0 < \gamma < 1$, and
    $\varepsilon > 0$ such that $\MCSP[(n^c,0), (2^{n^\gamma}, n^{-c})] \notin \Circuit[N^{1 + \varepsilon}]$.
  \item[\emph{(\emph{b})}] \textbf{Hardness of worst-case \emph{MCSP} against almost-linear size circuits.}\\There exists $c \geq 1$ and $\varepsilon > 0$
    such that $\MCSP[n^c,2^n/n^c] \notin \Circuit[N^{1 + \varepsilon}]$.
  \item[\emph{(\emph{c})}] \textbf{Hardness of sub-exponential size learning using non-adaptive queries.}\\
    There exist $\ell \geq 1$ and $0 < \gamma < 1$ such that $\Circuit[n^\ell]$ cannot be learned up to error $O (1/n^{\ell})$ under the uniform distribution by circuits of size $2^{O(n^\gamma)}$ using non-adaptive membership queries.
  \item[\emph{(\emph{d})}] \textbf{Non-existence of natural properties against polynomial size circuits.}\\
    For some $d \geq 1$ there is no $\Circuit[\mathsf{poly}(N)]$-natural property useful against $\Circuit[n^d]$. 
  \item[\emph{(\emph{e})}] \textbf{Existence of non-uniform \emph{PRFs} secure against sub-exponential size circuits.}\\
    For every constant $a \geq 0$, there exists $d \geq 1$, a sequence $\mathfrak{F} = \{\cF_n\}_{n \geq 1}$ of families $\cF_n$ of $n$-bit boolean functions $f_n \in \Circuit[n^d]$, and a sequence of probability distributions $\mathfrak{D} = \{\cD_n\}_{n \geq 1}$ supported over $\mathcal{F}_n$ such that, for infinitely many values of $n$, $(\cF_n, \cD_n)$ is pseudo-random function family that $(1/N^{\omega(1)})$-fools \emph{(}oracle\emph{)} circuits of size $2^{a \cdot n}$.
  \end{itemize}
\end{theorem}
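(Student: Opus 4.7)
The plan is to leverage two known equivalences and add two new links. The pair (c)~$\Leftrightarrow$~(d) is the Carmosino--Impagliazzo--Kabanets--Kolokolova learning-from-natural-proofs theorem~\cite{DBLP:conf/coco/CarmosinoIKK16}, while (d)~$\Leftrightarrow$~(e) is the characterization of natural proofs via pseudorandom function families of Oliveira--Santhanam~\cite{DBLP:conf/coco/OliveiraS17}. What remains is to establish (a)~$\Leftrightarrow$~(d) and (a)~$\Leftrightarrow$~(b); combined with the two known pairs, this yields that all five statements are equivalent.

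For (d)~$\Rightarrow$~(a) and (d)~$\Rightarrow$~(b), I would argue contrapositively: an almost-linear-size circuit for either variant of \MCSP is itself a $\Circuit[\mathrm{poly}(N)]$-natural property useful against $\Circuit[n^c]$. Constructivity is immediate from the hypothesized $N^{1+\varepsilon}$ bound; usefulness against $\Circuit[n^c]$ is immediate from the YES-side of the promise, as every truth table of a size-$n^c$ circuit is rejected; and largeness follows by the standard counting argument, since a uniformly random truth table requires circuits of size near $2^n/n$, comfortably exceeding both the $2^{n^\gamma}$-hard-to-approximate and $2^n/n^c$-hard-worst-case thresholds. For the converse (a)~$\Rightarrow$~(c), the magnification direction, I would show that an efficient non-adaptive learner for $\Circuit[n^\ell]$ up to error $O(1/n^\ell)$ by circuits of size $2^{O(n^\gamma)}$ yields an almost-linear-size circuit for approximate \MCSP. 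Given an input truth table $T$ of length $N = 2^n$, hardwire the learner's non-adaptive membership queries to bits of $T$, run it to obtain a hypothesis $H$ of size $2^{O(n^\gamma)}$, then evaluate $H$ at all $N$ inputs and count disagreements with $T$; the total size is $N \cdot 2^{O(n^\gamma)} = N^{1+o(1)}$. Choosing $\ell$ large enough relative to $c$, the learner's error $O(1/n^\ell)$ on YES-instances lies below $n^{-c}$, while on NO-instances any size-$2^{O(n^\gamma)}$ hypothesis has error exceeding $n^{-c}$ by the promise, so approximate \MCSP is decided and (a) is contradicted. Together with (c)~$\Leftrightarrow$~(d), this closes the loop (a)~$\Leftrightarrow$~(c)~$\Leftrightarrow$~(d)~$\Leftrightarrow$~(e).

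For the remaining equivalence (a)~$\Leftrightarrow$~(b), the direction (a)~$\Rightarrow$~(b) is a simple promise-problem comparison: every NO-instance of approximate \MCSP (not $n^{-c}$-approximable by size $2^{n^\gamma}$) is in particular a NO-instance of worst-case gap \MCSP$[n^c, 2^n/n^c]$, and the YES-sides coincide, so any almost-linear-size circuit solving the worst-case variant also decides the approximate variant. The converse (b)~$\Rightarrow$~(a) is the more substantive direction; here I would invoke the worst-case-to-average-case reduction for \MCSP of Hirahara~\cite{Hir18}, which converts an efficient algorithm for approximate \MCSP at suitable parameters into an efficient algorithm for worst-case gap \MCSP at closely related parameters, thereby giving the contrapositive. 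The main obstacle I anticipate is parameter matching: the approximate-MCSP parameters $(n^c, 0), (2^{n^\gamma}, n^{-c})$ in (a), the worst-case-gap parameters $n^c, 2^n/n^c$ in (b), and the learning parameters $(n^\ell, O(1/n^\ell), 2^{O(n^\gamma)})$ in (c) must all be tuned so that every reduction fits within the $N^{1+\varepsilon}$ budget; in particular, verifying that Hirahara's reduction is implementable by almost-linear-size circuits under compatible choices of constants is where I expect the bulk of the technical work to lie.
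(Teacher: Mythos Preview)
Your high-level plan largely matches the paper's: lean on prior work for (c)$\Leftrightarrow$(d) and (c)$\Leftrightarrow$(e), get (d)$\Rightarrow$(a),(b) trivially, and prove (a)$\Rightarrow$(c) by turning a learner into an almost-linear circuit for approximate \MCSP. There is, however, a genuine error in your direct argument for (a)$\Rightarrow$(b). You claim that every NO-instance of $\MCSP[(n^c,0),(2^{n^\gamma},n^{-c})]$ is a NO-instance of $\MCSP[n^c,2^n/n^c]$, but this containment is false: a function can be hard to $(1-n^{-c})$-approximate by circuits of size $2^{n^\gamma}$ yet still be computed \emph{exactly} by a circuit of size $2^n/n^c$, since the latter bound is exponentially larger. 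A counting argument makes this concrete: there are roughly $2^{\Theta(2^n/n^{c-1})}$ functions computable by size-$2^n/n^c$ circuits, but only $2^{O(2^n\log n / n^c)}$ functions approximable to error $n^{-c}$ by size-$2^{n^\gamma}$ circuits. Fortunately this implication is redundant, since you already have (a)$\Rightarrow$(c)$\Rightarrow$(d)$\Rightarrow$(b); but the argument you wrote down for it does not work.

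The more substantive structural difference is how (b) enters the cycle. You propose (b)$\Rightarrow$(a) by implementing Hirahara's worst-case-to-average-case reduction in almost-linear size and correctly identify parameter matching as the bottleneck. The paper sidesteps this entirely: it proves (b)$\Rightarrow$(c) directly by a lemma parallel to the (a)$\Rightarrow$(c) argument, with one additional ``patching'' observation. On a NO-instance of $\MCSP[n^c,2^n/n^c]$, \emph{any} hypothesis $h$ of size $2^{O(n^\gamma)}$ must have error at least $\Omega(1/n^{c+2})$, because otherwise hardwiring the at most $2^n/n^{c+2}$ errors into $h$ would yield a circuit of size below $2^n/n^c$ computing the function exactly, contradicting the NO promise. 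Thresholding on empirical error then separates YES from NO just as in the approximate case. This is simpler than routing through Hirahara and avoids the almost-linear implementability question altogether. One further detail your (a)$\Rightarrow$(c) sketch omits: the learner is randomized with bounded confidence, so the paper derandomizes by $O(N)$-fold independent repetition, a majority vote, and Adleman's trick to fix good randomness; your ``evaluate $H$ at all $N$ inputs'' description implicitly assumes a deterministic learner.
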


\vspace{0.2cm}

\noindent The proof of this result appears in Section \ref{ss:proof_equivalences}. We highlight below the most interesting  implications of Theorem \ref{thm:equivalences}. (Note that some of them have appeared in other works in similar or related forms.)

\begin{itemize}
\item[(\emph{a}) $\rightarrow$ (\emph{d}):] The initial hardness magnification result from \citep[Theorem 1]{OS18_mag_first} (stated for circuits) implies the \emph{non-existence} of natural proofs useful against polynomial-size circuits, indicating that the natural proofs barrier might not be relevant to the magnification approach. 
\item[(\emph{a}), (\emph{b}) $\leftrightarrow$ (\emph{d}):] \emph{Any} \Ppoly natural property useful against \Ppoly can be transformed into an almost-linear size natural property that is simply the approximate $\MCSP[(n^c,0),(2^{n^\gamma}, n^{-c})]$ or worst-case gap $\MCSP[n^c, 2^n/n^c]$. (Note the different regime of circuit size parameters for these problems.)
\item[(\emph{a}), (\emph{b}) $\leftrightarrow$ (\emph{c}):] A weak-seeming hardness assumption for worst-case gap and approximate versions of $\MCSP$ implies a strong non-learnability result: polynomial-size circuits cannot be learned over the uniform distribution even non-uniformly in sub-exponential time.
\item[(\emph{a}), (\emph{b}) $\leftrightarrow$ (\emph{e})] Hardness magnification for $\MCSP$ also yields cryptographic hardness in a certain regime.
\end{itemize}

We note that the use of non-adaptive membership queries in Theorem \ref{ss:proof_equivalences} Item (c) is not essential. It follows from \citep{DBLP:conf/coco/CarmosinoIKK16} that, in the context of learnability of polynomial size circuits under the uniform distribution in sub-exponential time, adaptive queries are not significantly more powerful than non-adaptive queries.\footnote{In a bit more detail, one can easily extract a natural property from a learner that uses adaptive queries. In turn, closer inspection of the technique of \citep{DBLP:conf/coco/CarmosinoIKK16} shows that a non-adaptive learner can be obtained from a natural property.}

\vspace{0.1cm}

\vspace{0.2cm}

\noindent \textbf{Towards a more robust theory.} While Theorem \ref{thm:equivalences} formally connects hardness magnification and natural properties, it would be very interesting to understand to which extent different hardness magnification theorems are provably non-naturalizable. This would provide a more complete answer to Question Q1 asked above. For instance, Theorem \ref{thm:equivalences} leaves open whether hardness magnification for worst-case versions of \MCSP such as $\mathsf{MCSP}[n^c, 2^{n^\varepsilon}]$ refutes natural proofs as well. Note that one way of approaching this question would be to study reductions from $\mathsf{MCSP}[n^{c}, 2^{n^\gamma}]$ to its approximate version $\MCSP[(n^{c'},0), (2^{n^{\gamma'}}, n^{-c'})]$.\footnote{More precisely, the existence of a reduction from $\mathsf{MCSP}[n^{c}, 2^{n^\gamma}]$ to $\MCSP[(n^{c'},0), (2^{n^{\gamma'}}, n^{-c'})]$ shows that lower bounds for the former problem yield lower bounds for the latter. Since any such lower bound must be non-naturalizable by Theorem \ref{thm:equivalences}, we obtain the same consequence for $\mathsf{MCSP}[n^{c}, 2^{n^\gamma}]$. (Note that in the context of hardness magnification it is also important to have highly efficient reductions.)} In Section \ref{ss:redmcsp}, we observe that this question is related to the problem of basing \emph{hardness of learning} on \emph{worst-case assumptions} such as $\Ptime\ne\NP$ (cf.~\citep{DBLP:conf/focs/ApplebaumBX08}). We refer to the discussion in Section \ref{ss:redmcsp} for more details.

\subsection{The Locality Barrier} \label{ss:locality_barrier}

The results from the preceding section show that hardness magnification can go beyond natural proofs. Is there another barrier that makes it difficult to establish lower bounds via magnification? In this section, we present a general argument to explain why the lower bound techniques behind A3-E3, A4-D4 in the magnification frontiers from Section \ref{ss:intro_background_motivation} cannot be adapted (without significantly new ideas) to establish the required lower bounds in Items A1-E1, respectively. We refer to it as the \emph{locality barrier}. While we will focus on these particular examples to make the discussion concrete, we believe that this barrier applies more broadly.

In order to explain the locality barrier, let's consider the argument behind the proof of B1 presented in Section \ref{ss:frontier_MCSP_formula_xor}. Recall that this result shows that if $\MCSP[2^{n^{1/3}},2^{n^{2/3}}]\notin\Formula$-$\mathsf{XOR}[N^{1.01}]$ then $\mathsf{NQP}\not\subseteq\NC$. This and other known hardness magnification theorems are established in the contrapositive.  The core of the argument is to prove that there are highly efficient $\mathsf{Formula}$-$\mathsf{XOR}$ circuits that reduce an input to $\MCSP[2^{n^{1/3}},2^{n^{2/3}}]$ of length $N = 2^n$ to deciding whether certain strings of length $N'$ (much smaller than $N$) belong to a certain language $L'$. Then, under the assumption that $\mathsf{NQP} \subseteq \NC$, one argues that $L'$ has polynomial size formulas. Finally, since $N' \ll N$, we can employ such formulas and still conclude that $\MCSP[2^{n^{1/3}},2^{n^{2/3}}]$ is in $\mathsf{Formula}$-$\mathsf{XOR}[N^{1.01}]$, which completes the proof.

Note that the argument above provides a \emph{conditional} construction of highly efficient formulas for the original problem. Crucially, however, we can derive an \emph{unconditional} circuit upper bound from this argument: If we stop right before we replace the calls to $L'$ by an algorithm for $L'$ (this is what makes the reduction conditional), it unconditionally follows that $\MCSP[2^{n^{1/3}},2^{n^{2/3}}]$ can be computed by highly efficient $\mathsf{Formula}$-$\mathsf{XOR}$ circuits containing oracle gates of small fan-in, for some oracle. Similarly, one can argue that the problems in Items A1-E1 can be computed in the respective models by highly efficient Boolean devices containing oracles of small fan-in.

We stress that, as opposed to a magnification theorem, where one cares about the complexity of the oracle gates, in our discussion of the locality barrier we only need the fact that there is \emph{some way} of setting these oracles gates so that the resulting circuit or formula solves the original problem. (A definition of this model appears in Section \ref{ss:def_local}.) A more exhaustive interpretation of magnification theorems as construction of circuits with small fan-in oracles can be found in Appendix \ref{ss:kernelization}. 

On the other hand, we argue that the lower bound arguments from Items A3-E3 of the hardness magnification frontiers quite easily handle (in the respective models) the presence of oracles of small fan-in, \emph{regardless of the function} computed by these oracles. Using a more involved argument we can localize also lower bounds from items A4-D4. Consequently, these methods do not seem to be refined enough to prove the lower bounds required by A1-D1 without excluding oracle circuits that are unconditionally known to exist for the corresponding problems.

Following the example above, we state our results for the Magnification Frontier B.

\begin{theorem}[Locality Barrier for HM Frontier B]
  \label{thm:frontier_C} The following results hold.
 \begin{itemize}
\item[\emph{(B1}$^\mathcal{O}$\emph{)}] \emph{(Oracle Circuits from Magnification)} For any $\varepsilon > 0$, $\mathsf{MCSP}[2^{n^{1/3}}, 2^{n^{2/3}}] \in \mathsf{Formula}$-$\mathcal{O}$-$\mathsf{XOR}[N^{1.01}]$, where every oracle $\mathcal{O}$ has fan-in at most $N^\varepsilon$ and appears in the layer right above the $\mathsf{XOR}$ leaves.
		\item[\emph{(B3}$^\mathcal{O}$\emph{)}] \emph{(Extension of Lower Bound Techniques)} For any $\delta > 0$, $\mathsf{InnerProduct}$ over $N$ input bits cannot be computed by $N^{2-3\delta}$-size $\mathsf{Formula}$-$\mathcal{O}$-$\mathsf{XOR}$ circuits with at most $N^{2-3\delta}$ oracle gates of fan-in $N^{\delta}$ in the layer right above the $\mathsf{XOR}$ leaves, for any oracle $\mathcal{O}$.
\item[\emph{(B4}$^\mathcal{O}$\emph{)}] \emph{(Extension of  Lower Bound Techniques Below Magnification Threshold)} There is a universal constant $c$ such that for all constants $\eps > 0$ and $\alpha > 2$, $\MCSP[n^{c},2^{\eps/\alpha \cdot n}]$ cannot be computed by oracle formulas $F$ with $\SIZE_3(F) \le N^{2-\eps}$ and adaptivity $o(\log N/\log\log N)$.\footnote{That is, on any path from root to a leaf, there are at most $o(\log N / \log\log N)$ oracles.}
 \end{itemize}
\end{theorem}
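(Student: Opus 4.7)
The plan is to prove the three items separately; items (B1$^\mathcal{O}$) and (B4$^\mathcal{O}$) are adaptations of existing arguments, while (B3$^\mathcal{O}$) is where the real work lies. For (B1$^\mathcal{O}$), I would reread the magnification proof of (B1) in Section \ref{ss:frontier_MCSP_formula_xor} and halt it one step before its conditional $L'$-calls are replaced by $\NC$ sub-formulas. That construction already produces a $\mathsf{Formula}$-$\mathsf{XOR}$ network of size $N^{1.01}$ whose wires above the $\mathsf{XOR}$ layer carry the $N^\varepsilon$-bit inputs to an auxiliary language $L'$; packaging each such call as a single oracle gate $\mathcal{O}$ of fan-in $N^\varepsilon$ directly above the $\mathsf{XOR}$ leaves yields the required unconditional upper bound with no size increase, and the positional constraint is automatic since the queries to $L'$ are themselves formed as $\mathsf{XOR}$s of the original inputs.

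For (B3$^\mathcal{O}$), the plan is to extend Tal's $\mathsf{Formula}$-$\mathsf{XOR}$ lower bound for $\mathsf{InnerProduct}$ \cite{Tal17} to accommodate small-fan-in oracle gates above the $\mathsf{XOR}$ layer. The key observation is that any oracle gate of fan-in $N^\delta$ sitting above $\mathsf{XOR}$ leaves computes a Boolean function on $\{0,1\}^N$ whose Fourier support lies inside a linear subspace of $\mathbb{F}_2^N$ of dimension at most $N^\delta$. After a random restriction with survival probability $p = N^{-\delta'}$ for a suitable $\delta' < \delta$, a union bound over the $N^{2-3\delta}$ oracle gates should show that with positive probability each oracle gate's surviving Fourier support collapses to $O(1)$ live linear functions, which can be re-absorbed into a single $\mathsf{XOR}$ leaf. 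The residue is a plain $\mathsf{Formula}$-$\mathsf{XOR}$ of size $N^{2-3\delta}$ on $N' = \Theta(N^{1-\delta'})$ surviving variables, while $\mathsf{InnerProduct}$ restricts to an affine shift of $\mathsf{InnerProduct}$ on the live coordinates. Choosing $\delta,\delta'$ so that $(2-3\delta)/(1-\delta') < 2$ then contradicts Tal's $N'^{\,2-o(1)}$ bound.

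For (B4$^\mathcal{O}$), the plan is to iterate H{\aa}stad shrinkage under random restrictions and exploit the adaptivity bound $k = o(\log N/\log\log N)$ to eliminate oracle gates layer by layer. After each round with parameter $p$, the De Morgan sub-formulas shrink by $p^{2-o(1)}$, and any oracle gate whose inputs are all fixed collapses to a constant. Because every root-to-leaf path traverses at most $k$ oracle gates, after $k$ carefully-tuned rounds every oracle on every surviving path is killed with high probability, and the residue is a plain De Morgan formula on $N \cdot p^k$ variables of size at most $N^{2-\varepsilon}\cdot p^{2k(1-o(1))}$. Setting $p^k$ slightly above $N^{-1}$ still leaves super-polynomially many variables $N''$ and a formula of size $o((N'')^2)$, which contradicts the worst-case $\mathsf{Formula}$ lower bound for $\MCSP[n^c,2^{\varepsilon n/\alpha}]$ via the diagonalisation of \cite{HS17} at the appropriate parameter setting.

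The main obstacle I foresee is (B3$^\mathcal{O}$): because $\delta$ is a positive constant (not vanishing in $N$), there is very little slack to simultaneously trivialise all $N^{2-3\delta}$ oracle gates of fan-in $N^\delta$, preserve $\mathsf{InnerProduct}$-hardness on the surviving coordinates, and keep the residual $\mathsf{Formula}$-$\mathsf{XOR}$ sub-quadratic in the surviving input length. A naive product-measure restriction may have to be replaced by a random-subspace restriction that respects the bilinear symmetry of $\mathsf{InnerProduct}$, or by a direct Fourier-analytic extension of \cite{Tal17} that bounds the spectral contribution of an oracle gate in terms of the dimension of its induced Fourier subspace, rather than going through restrictions at all.
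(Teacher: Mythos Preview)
Your plan for (B1$^\mathcal{O}$) is correct and matches the paper.

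For (B3$^\mathcal{O}$), your primary random-restriction approach has a genuine gap. A coordinate restriction does not simplify an oracle gate sitting above $\mathsf{XOR}$ leaves in the way you describe: each of the $N^\delta$ parities feeding the oracle remains a parity of the surviving variables (possibly negated), so the oracle still has $N^\delta$ inputs whose span has not shrunk, and its ``Fourier support'' does not collapse to $O(1)$ live linear functions. Your fallback suggestion---a direct Fourier-analytic extension of Tal's argument---is exactly what the paper does, and you should commit to it. Concretely: treat the De~Morgan part of $F$ as a formula in the $t$ oracle outputs; by Reichardt's theorem it is $1/3$-approximated by a polynomial of degree $O(\sqrt{t})$ in those outputs. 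Then expand each oracle gate by its multilinear extension in its $\le N^\delta$ parity inputs. The composite is a sign-representing polynomial with at most $s^{O(\sqrt{s})}\cdot 2^{N^\delta\cdot O(\sqrt{s})}$ monomials, each monomial being a product of parities and hence itself a parity of the original inputs. This bounds the sign rank of $F$ (equivalently, gives correlation $\ge (s^{O(\sqrt{s})}2^{N^\delta O(\sqrt{s})})^{-1}$ with some parity), and comparison with the $2^{\Omega(N)}$ sign-rank lower bound for $\mathsf{InnerProduct}$ rules out $s\le N^{2-3\delta}$.

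For (B4$^\mathcal{O}$), your direction is right but two ingredients are missing or wrong. First, truly random restrictions will not do: the endgame for $\MCSP[n^c,\cdot]$ needs the all-zeros completion of the restricted string to have circuit complexity $\le n^c$, so each restriction must itself be describable by a $\poly(n)$-size circuit. The paper uses $k$-wise independent $p$-regular restrictions with $k=\polylog(N)$. Second, the mechanism ``any oracle gate whose inputs are all fixed collapses, so $k$ rounds kill all oracles along every path'' does not work: oracle inputs are sub-formulas, not variables, and they shrink without becoming constant. The paper instead tracks a potential $\Phi$ on the oracle formula that (i) is initially at most $\SIZE_3(F)\cdot k^{O(\tau)}$ where $\tau$ is the adaptivity, (ii) drops by a factor $\Theta(k)$ in expectation under each restriction, and (iii) when zero, certifies that the whole formula depends on only $\polylog(N)$ variables. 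The adaptivity bound $\tau=o(\log N/\log\log N)$ enters only to keep the multiplicative constant in the shrinkage factor bounded across nested oracle layers (and to keep $k^{O(\tau)}=N^{o(1)}$). A separate ``blow-up'' step replaces any oracle sub-formula that has come to depend on $O(1)$ variables by an explicit constant-size formula, so that ordinary shrinkage can resume above it. After $\Phi$ hits zero, the remaining $\polylog(N)$-junta cannot separate the low-complexity all-zeros completion from a high-complexity completion on the $N^{\Omega(\eps)}$ surviving coordinates.
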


Here, $\Size_t(F)$ denotes the size of the formula, if we replace every oracle $\mathcal{O}$ with fan-in $\beta$ in $F$ by a formula of size $\beta^t$ which reads all its inputs exactly $\beta^{t-1}$ times (see Section~\ref{ss:HS_lb} for the motivation of this definition). 
\medskip

The first two items of Theorem \ref{thm:frontier_C} are proved in Section \ref{ss:Formula_XOR_Tal}. The third item is proved in Section \ref{ss:HS_lb}.  While Theorem \ref{thm:frontier_C} does not specify that, we actually localize all proofs of the lower bounds from B3 and B4 we are aware of. Interestingly, the localization of B4 allows us to refute the Anti-Checker Hypothesis from \cite{OPS19_CCC} (and a family of potential hardness magnification theorems), cf.~Section \ref{ss:HS_lb}. We refer to Section  \ref{s:difficulties} for analogous statements describing the locality barrier in frontiers A, C, D, and E.\\

\noindent \textbf{Locality of Computations and Lower Bound Techniques.} The fact that many lower bound techniques extend to computational devices with oracles of small fan-in was observed already by Yao in 1989 on a paper on local computations \citep{Yao89}. According to Yao, a local function is one that can be efficiently computed using only localized processing elements. In our terminology, this corresponds to circuits with oracles of small fan-in. Among other results, 
\citep{Yao89} argues that Razborov's monotone circuit size lower bound for $k$-$\mathsf{Clique}$ \citep{Razborov:85a} and Karchmer and Wigderson's monotone formula size lower bound for $\mathsf{ST}$-$\mathsf{CONN}$ \citep{DBLP:journals/siamdm/KarchmerW90} extend to boolean devices with monotone oracles of bounded fan-in. Compared to Yao's work, our motivation and perspective are different. While Yao is particularly interested in lower bounds that can be extended in this sense (see e.g.~Sections 2 and 6 in \citep{Yao89}), here we view such extensions as a \emph{limitation} of the corresponding arguments, meaning that they are not refined enough to address the locality barrier.\footnote{On a more technical level, we are interested in the regime of barely super-linear size circuits and formulas, and our results do not impose a monotonicity constraint on the oracle.}

We note, however, that not every lower bound technique extends to circuits with small fan-in oracles.\footnote{Of course any such discussion depends on parameters such as number of oracles and their fan-ins, so whether a technique avoids or not the locality barrier is relative to a particular magnification theorem.} For instance, by the work of Allender and Kouck{\'{y}}~\cite{AK10} (also a more recent work by Chen and Tell~\cite{CT19_STOC}), the parity function $\mathsf{Parity}_n$ over $n$ input bits can be computed by a $\mathsf{TC}^0$ circuit of size $O(n)$ (number of wires) containing $\leq n^{1 - \varepsilon}$ oracle gates of fan-in $\leq n^\varepsilon$, provided that its depth $d = O(1/\varepsilon)$. On the other hand, it is known that $\mathsf{Parity}_n \notin \mathsf{TC}^0_d[n^{1 + c^{-d}}]$ for a constant $c > 0$ \citep{DBLP:journals/siamcomp/ImpagliazzoPS97} (again, the complexity measure is the number of wires). Since the latter lower bound is super-linear for every choice of $d$, it follows by the result of \citep{AK10,CT19_STOC} that it cannot be extended to circuits containing a certain number of oracles of fan-in $n^\varepsilon$, for a large enough depth $d$ that depends on $\varepsilon$. Incidentally, the hardness magnification theorems of \citep{AK10,CT19_STOC} do not achieve a magnification frontier.

In \cref{ss:frontier_MCSP_formula_xor} we identify one specific lower bound related to HM frontier D which is both above the magnification threshold and provably non-localizable, cf.~Theorem \ref{thm:E-lower-bound-for-FM}. In principle, there might be ways to overcome the locality barrier and match the lower bound with the magnification threshold. We refer to Section \ref{ss:conclusion} below for  additional discussion.\\


\noindent \textbf{On Lower Bounds Through Reductions.} The discussion above has focused on the possibility of \emph{directly} adapting existing lower bounds from Item 3 in HM frontier to establish the desired lower bound in Item 1. There is however an \emph{indirect} approach that one might hope to use: \emph{reductions}. For instance, in the context of the HM Frontier B discussed above, can we have a reduction from $\mathsf{InnerProduct}$ to $\mathsf{MCSP}[2^{n^{1/3}}, 2^{n^{2/3}}]$ that would allow us to show that $\mathsf{MCSP}[2^{n^{1/3}}, 2^{n^{2/3}}] \notin \mathsf{Formula}$-$\mathsf{XOR}[N^{1.01}]$? The first thing to notice is that, for this approach to make sense, the reduction needs to have a specific form 
so that composing the reduction with a candidate $\mathsf{Formula}$-$\mathsf{XOR}$ circuit for $\mathsf{MCSP}[2^{n^{1/3}}, 2^{n^{2/3}}]$ violates the hardness of $\mathsf{InnerProduct}$. 
Is there any hope to design a reduction of this form? 

The locality barrier presents a \emph{definitive answer} in this case. Indeed, it is immediate from the first two items of Theorem \ref{thm:frontier_C} that such a reduction \emph{does not exist}. For the same reason, it is not possible to use reductions to establish the required lower bounds in some other magnification frontiers, cf. Section \ref{ss:reductions}.

\subsection{Concluding Remarks and Open Problems}\label{ss:conclusion}

Hardness magnification shows that obtaining a refined understanding of weak computational models is an approach to major complexity lower bounds, such as separating $\mathsf{EXP}$ from $\mathsf{NC}^1$. As discussed in Sections \ref{ss:intro_background_motivation} and \ref{ss:non_natural} above, its different instantiations are connected to a few basic questions in Complexity Theory, including the power of non-monotone operations, learnability of circuit classes, and pseudorandomness.

One of our main conceptual contributions in this work is to identify a challenge when implementing this strategy for lower bounds. Quoting the influential article \citep{DBLP:journals/jcss/RazborovR97}} that introduced the natural proofs barrier,

\vspace{0.25cm}

{
\emph{``We do not conclude that researchers should give up on
proving serious lower bounds. Quite the contrary, by classifying a large number of techniques that are unable to do the
job we hope to focus research in a more fruitful direction.''}
}
\vspace{-0.35cm} 
\begin{flushright}
{ \small Razborov and Rudich \citep[Section 6]{DBLP:journals/jcss/RazborovR97}}
\end{flushright} 

\vspace{0.1cm}

\noindent We share a similar opinion with respect to hardness magnification and the obstruction identified in Section \ref{ss:locality_barrier}. While locality provides a unified explanation for the difficulty of adapting combinatorial lower bound techniques to exploit most (if not all) known magnification frontiers, it might be possible to discover new HM frontiers whose associated lower bound techniques in Item 3 are sensitive to the presence of small fan-in oracles. For instance, in the case of \emph{uniform} complexity lower bounds, this has been achieved in \citep{DBLP:conf/icalp/Oliveira19} via an indirect diagonalization that explores the theory of pseudorandomness.\footnote{In other words, the magnification theorem discussed in \citep{DBLP:conf/icalp/Oliveira19} admits a formulation for uniform randomized algorithms, and its proof provides an algorithm with oracle gates of small fan-in in the spirit of the oracle circuits discussed here. Nevertheless, the unconditional lower bound established in the same paper does not extend to algorithms with such oracle gates.} Alternatively, it might be possible to establish magnification theorems using a technique that does not produce circuits with small fan-in oracles. Even if one is pessimistic about these possibilities, we believe that an important contribution of the theory of hardness magnification is to break the divide between ``weak'' and ``strong'' circuit classes advocated by the natural proofs barrier, and that it deserves further investigation. 

We finish with a couple of technical questions related to our contributions. First, we would like to understand if it is possible to strengthen items (\emph{a}) and (\emph{b}) in Theorem \ref{thm:equivalences} to a wider range of parameters. For example, is hardness magnification for worst-case $\MCSP[n^c,2^{n^{\gamma}}]$ with $\gamma < 1$ non-naturalizable? The core of this question seems to be the problem of reducing worst-case $\MCSP$ from item (\emph{a}) to approximate $\MCSP$ from item (\emph{b}).

Another important direction is to show that hardness magnification avoids natural proofs also in the context of \emph{non}-meta-computational problems. Interestingly, many magnification theorems from \citep{OPS19_CCC} established for $\MCSP$ and variants were subsequently shown to hold for any  sparse language in $\mathsf{NP}$ \citep{Magnification_FOCS19}. Could it be the case that hardness magnification overcomes natural proofs in a much broader sense? 

Finally, it would be useful to investigate the locality of additional lower bound techniques.  Can we, for example, come up with non-localizable lower bounds similar to Theorem \ref{thm:E-lower-bound-for-FM} which would be above the magnification threshold and work for a problem more closely related to the one from the correponding HM frontier?

\section{Preliminaries}\label{s:preliminaries}

\subsection{Notation}

Given a Boolean function $f:\{0,1\}^n\rightarrow\{0,1\}$, $\ttable(f)$ denotes the $2^n$-bit string representing the truth table of $f$. On the other hand, for any
string $y \in \{0,1\}^{2^n}$, define $f_y$ as the function on $n$ inputs such that $\ttable(f_y) = y$. 

$\Circuit[s]$ denotes fan-in two Boolean circuits of size at most $s$ where we count the number of gates. $\Formula[s]$ denotes formulas over the basis $U_2$
(fan-in two ANDs and ORs) of size at most $s$ (counting the number of leaves) with input leaves labelled by literals or constants. 

For a circuit class $\mathcal{C}$, $\mathcal{C}[s]$ denotes circuits from $\mathcal{C}$ of size at most $s$.

A function $f:\{0,1\}^n\rightarrow\{0,1\}$ is $\gamma$-approximated by a circuit $C$, if $\Pr_x[C(x)=f(x)]\ge\gamma$. 

\subsection{Complexity of Learning}
\begin{definition}[Learning]
A circuit class $\cC$ is learnable over the uniform disribution by circuits in $\cD$ up to error $\eps$ with confidence $\delta$ if there are randomized oracle $\cD$-circuits $L^f$ such that for every boolean function $f:\{0,1\}^n\mapsto\{0,1\}$ computable by a circuit from $\cC$, when given oracle access to $f$, input $1^n$ and the internal randomness $w \in \{0,1\}^*$, $L^f$ outputs the description of a circuit satisfying 
  \begin{equation*}
    \Pr_w \{ L^f(1^n, w) \text{ } (1-\eps) \text{-approximates } f \} \geq \delta.
  \end{equation*}

\noindent $L^f$ uses  non-adaptive membership queries if the set of queries which $L^f$ makes to the oracle does not depend on the answers to previous queries.  If $\delta=1$, we omit mentioning the confidence parameter.
\end{definition}

\subsection{Natural Properties, MCSP, and its Variants}
Let $\cF_n$ be the set of all functions on $n$ variables. $\mathfrak{R} = \{ \cR_n \subseteq \cF_n \}_{n \in \N}$ is a combinatorial property of Boolean functions. 
\begin{definition}[Natural property \cite{DBLP:journals/jcss/RazborovR97}]
  \label{def:nat_prop}
  Let $\mathfrak{R} = \{ \cR_n \}$ be a combinatorial property, $\cC$ be a circuit class and $\Gamma$ be a complexity class. Then,
  $\mathfrak{R}$ is a $\Gamma$-natural property useful against $\cC[s(n)]$, if there exists an $n_0 \in \N$ such that the following hold: 
\begin{itemize}
\item[$\bullet$] \textbf{Constructivity} : For any function $f_n \in \cF_n$, the predicate $f_n \stackrel{?}{\in} \cR_n$ is computable in $\Gamma$ in the size
  of the truth table of $f_n$.
\item[$\bullet$] \textbf{Largeness} : For every $n \geq n_0$, $\Pr_{f_n \sim \cF_n} \{f_n \in \cR_n \} \geq \frac{1}{2^{O(n)}}$.
\item[$\bullet$] \textbf{Usefulness} : For every $n \geq n_0$, $\cR_n \cap \cC[s(n)] = \emptyset$. 
\end{itemize} 
\end{definition}

The following result which follows from \cite{DBLP:conf/coco/CarmosinoIKK16} connects the existence of natural properties useful against a class $\cC$ to designing learning algorithms for
$\cC$. 
\begin{theorem}[From Theorem 5.1 of \cite{DBLP:conf/coco/CarmosinoIKK16} and Lemma 14 of \cite{IW01}]
  \label{thm:cikk_learn}
  Let $R$ be a $\Ppoly$-natural property useful against $\Circuit[n^d]$ for some $d \geq 1$. Then, for each $\gamma\in (0,1)$, there are randomized, oracle circuits $\{D_n\}_{n \geq 1} \in \Circuit[2^{O(n^{\gamma})}]$ 
that learn $\Circuit[n^k]$ up to error $\frac{1}{n^k}$ using non-adaptive oracle queries to $f_n$, where $k = \frac{d\gamma}{a}$ and $a$ is a universal constant that does not depend on $d$ and $\gamma$.
\end{theorem}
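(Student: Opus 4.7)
The plan is to follow the Carmosino--Impagliazzo--Kabanets--Kolokolova framework, which converts a natural property into a learner via a reversal of the Nisan--Wigderson hardness-versus-randomness connection. The central observation is that a $\Ppoly$-natural property useful against $\Circuit[n^d]$ is, by largeness and usefulness, a statistical distinguisher between a uniform truth-table and the truth-table of any function computed by an $n^d$-size circuit. Plugging such a function $f\in\Circuit[n^k]$ into an NW-style generator (with $f$ as the hard predicate) produces pseudorandom strings of length $m=2^n$; the natural property must then distinguish these strings from uniform, since otherwise it would violate its own largeness on the image of the generator. This distinguisher, by the standard NW reconstruction argument, yields an oracle circuit that, using non-adaptive queries to $f$, computes $f$ with a small but non-trivial advantage over random guessing.

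First, I would set up the NW generator $G^f:\{0,1\}^{\ell}\to\{0,1\}^m$ based on a combinatorial design with seed length $\ell=O(n^{\gamma})$ and $m=2^{n}$, where the design parameters are tuned so that (i) the seed length controls the eventual learner size $2^{O(n^{\gamma})}$, and (ii) the subfunctions that appear in the NW reconstruction depend on at most $n' = \Theta(n^{\gamma})$ bits, which is what ultimately forces the relationship $k = d\gamma/a$ with $a$ a universal constant coming from the NW reconstruction. Next, I would apply the natural property $\mathfrak{R}$ as a distinguisher: largeness ensures a uniform truth-table passes with probability $\ge 2^{-O(n)}$, whereas usefulness applied at the subfunction level forbids any $\Circuit[(n')^{d}]$-computable truth-table from lying in $\cR_{n'}$. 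A hybrid argument across the coordinates of $G^f$ then extracts a next-bit-predictor for some $n'$-bit subfunction of $f$, with advantage at least $\eps = 1/\mathrm{poly}(m)$.

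The third step is the NW reconstruction itself: from the next-bit-predictor I would build an oracle circuit of size $2^{O(n^{\gamma})}$ that non-adaptively queries $f$ on the fixed set of inputs prescribed by the NW design, uses the predictor as a subroutine, and outputs (with noticeable probability over its internal randomness) a circuit that $(1/2+\eps)$-approximates $f$. Non-adaptivity is inherited directly from the NW construction, since the queries to $f$ are exactly the projections of the seed via the design and are independent of any intermediate outputs. Finally, I would invoke the hardness-amplification direction of Lemma~14 of~\cite{IW01} to go from the weak $(1/2+\eps)$-approximation to a strong $(1-1/n^{k})$-approximation while preserving non-adaptivity and the $2^{O(n^{\gamma})}$ size bound; this step is essentially a XOR-lemma-style boosting applied in the learning setting, and this is where the constant $a$ absorbs both the NW reconstruction overhead and the amplification cost.

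The step I expect to be the main technical obstacle is the careful parameter accounting that yields $k = d\gamma/a$ with a universal $a$. One must simultaneously ensure that the NW subfunctions live on $n'\approx n^{\gamma}$ bits so that the natural property (running in time $\mathrm{poly}(2^{n'})$) fits inside a $2^{O(n^{\gamma})}$ circuit, that the usefulness threshold $(n')^{d}$ exceeds the size of the circuits that arise inside the reconstruction when $f\in\Circuit[n^k]$, and that the amplification step does not blow up the query complexity beyond the non-adaptive regime. Balancing these constraints forces precisely the stated relationship between $k$, $d$, and $\gamma$, and is the only nontrivial piece; the rest of the argument is a direct adaptation of the CIKK16 proof.
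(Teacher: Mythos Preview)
The paper does not actually prove this theorem; it is stated as a known result imported directly from \cite{DBLP:conf/coco/CarmosinoIKK16} and \cite{IW01}, with no proof or sketch given. Your proposal is essentially a correct high-level outline of the CIKK16 argument (natural property as distinguisher for an NW generator built from $f$, followed by NW reconstruction and then IW01-style amplification from weak to strong learning), which is exactly what the citation points to.

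One small point of imprecision in your sketch: you describe the usefulness condition as being ``applied at the subfunction level.'' In the actual CIKK16 argument the natural property is applied to the \emph{entire} NW output string, interpreted as the truth-table of a function on $\log L$ bits; usefulness is invoked because that function (index $\mapsto f(z|_{S_{\text{index}}})$) has a small circuit whenever $f$ does. The subfunctions on $n'$ bits appear only in the reconstruction step, not in the distinguishing step. This does not affect the correctness of your overall plan, but it is worth getting straight if you write out the parameter accounting, since the relationship $k=d\gamma/a$ comes precisely from matching the circuit size of the index-to-output function against the usefulness threshold of the property on truth-tables of the appropriate length.
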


\begin{definition}[Gap MCSP]
  \label{def:ag_mcsp}
  Let $s,t : \N \rightarrow \N$, where $s(n) \leq t(n)$ and $0 \leq \eps, \sigma < 1/2$. Define $\MCSP[(s,\sigma),(t, \eps)]$ on inputs of length $N = 2^n$, as the following promise problem :
  \begin{itemize}
  \item YES instances: $y \in \{0,1\}^{N}$ such that there exists a circuit of size $s(n)$ that $(1-\sigma)$-approximates $f_y$. 
  \item NO instances: $y \in \{0,1\}^{N}$ such that no circuit of size $t(n)$ $(1-\eps)$-approximates $f_y$.
  \end{itemize}
\end{definition}
We refer to $\MCSP[(s,0),(t,0)]$ as $\MCSP[s,t]$. 
Informally, if $\eps>0$, we say that $\MCSP[(s,0),(t, \eps)]$ is an {\em approximate} version of \MCSP. Otherwise, it is a {\em worst-case} version of \MCSP. 

\begin{remark}
 In Definition \ref{def:ag_mcsp}, if $s(n)=t(n)$, we also require that $\sigma < \eps$ for the yes and no instances to be disjoint. 
\end{remark}

\begin{definition}[Succinct \MCSP] For functions $s,t:\mathbb{N}\mapsto\mathbb{N}$, $\Succinct[s(n),t(n)]$ is the following problem. Given an input $\langle 1^n,1^s,
  (x_1,b_1),\dots,(x_t,b_t)\rangle$ where $x_i\in\{0,1\}^n, b_i\in\{0,1\}$, decide if there is a circuit $C$ of size $s$ such that $\bigwedge_{i=1,\dots,t} C(x_i)=b_i$.
\end{definition}

\subsection{Pseudorandom Generators}
\begin{definition}[Pseudorandom function families]
  \label{def:prf}
  For any circuit class $\cC$, size functions $s(n), t(n) \geq n$, family $\cG_n$ of $n$-bit Boolean functions and distribution $\cD_n$ over $\cG_n$, we say that a
  pair $(\cG_n, \cD_n)$ is a $(t(n), \eps(n))$-pseudorandom function family (PRF) in $\cC[s(n)]$, if each function in $\cG_n$ is in $\cC[s(n)]$ and for every randomized circuit $A^O \in \Circuit^O[t(n)]$, where $O$ denotes oracle access to a fixed Boolean function over $n$ inputs, we have
  \begin{equation*}
    \bigg \vert \Pr_{g \sim \cD_n, w} \{A^g(w) = 1\} - \Pr_{f \sim \cF_n, w} \{A^f(w) = 1 \} \bigg \vert \leq \eps(n)
  \end{equation*}
\end{definition}

\cite{DBLP:conf/coco/OliveiraS17} state an equivalence between the non-existence of PRFs in a circuit class $\cC$ and learning algorithms for $\cC$. In particular, we care about the following direction which they prove using a small-support version of Von-Neumann's Min-max Theorem.
\begin{theorem}[No PRFs in $\cC$ implies Learning Algorithm for $\cC$ \cite{DBLP:conf/coco/OliveiraS17}]
  \label{thm:no_prf_implies_learning}
  Let $t(n) \leq 2^{O(n)}$. Suppose that for every $k \geq 1$ and large enough $n$, there exists no $(\poly(t(n)),1/10)$-pseudorandom function families in $\cC[n^k]$. Then, for every $\eps >0, k \geq 1$ and large enough $n$, there is a randomized oracle circuit in $\Circuit^O [2^{n^\eps}]$ that learns every function $f_n \in \cC[n^k]$ up to error $1/n^k$ with confidence $1-1/n$, where $O$ denotes membership query access to $f_n$.
\end{theorem}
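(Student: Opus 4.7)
The plan is to prove the statement by a min-max argument that converts the non-existence of PRFs in $\cC[n^{k'}]$ (for suitably large $k'$) into a $\Ppoly$-natural property useful against $\cC[n^{k'}]$, and then feed the resulting natural property into \cref{thm:cikk_learn} to produce the advertised sub-exponential learner. Fix the conclusion parameters $\eps>0$ and $k$, let $a$ be the universal constant from \cref{thm:cikk_learn}, and set $k' := \lceil ak/\eps \rceil + 1$; the hypothesis applied at $k'$ says there is no $(\poly(t(n)),1/10)$-PRF in $\cC[n^{k'}]$.

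Consider the zero-sum game in which the minimizer picks a distribution $\cD$ supported on $\cC[n^{k'}]$, the maximizer picks a randomized oracle distinguisher $A \in \Circuit^O[\poly(t(n))]$, and the payoff is the signed distinguishing gap $v(\cD,A) := \Pr_{g\sim\cD,w}[A^g(w)=1]-\Pr_{f\sim\cF_n,w}[A^f(w)=1]$. The non-PRF hypothesis asserts that $\min_\cD \max_A |v(\cD,A)| \geq 1/10$; partitioning the distinguisher strategies by the sign of $v$ loses only a factor of two, so $\min_\cD \max_A v(\cD,A) \geq 1/20$ after possibly negating the output of each $A$. Since the payoff is bilinear in the two mixed strategies, the small-support min-max theorem of Lipton--Young/Althöfer yields a distribution $\mu$ supported on only $m = O(n/\eps^2)$ distinguishers such that for \emph{every} function $g\colon \{0,1\}^n \to \{0,1\}$, and in particular every $g \in \cC[n^{k'}]$,
\[ \mathbb{E}_{A\sim\mu}\bigl[\Pr_w[A^g(w)=1]\bigr] \;-\; \mathbb{E}_{A\sim\mu,\,f\sim\cF_n,\,w}\bigl[A^f(w)=1\bigr] \;\geq\; 1/30. \]

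From $\mu$ I extract a natural property $\mathfrak{R}_n \subseteq \{0,1\}^N$ on truth tables of length $N=2^n$ by accepting $y$ iff the empirical mean $\hat p(y) := m^{-1}\sum_{i=1}^m \Pr_w[A_i^{f_y}(w)=1]$ exceeds the uniform-function baseline $p_\ast := \mathbb{E}_{A\sim\mu,f,w}[A^f(w)=1]$ by at least $1/60$; the inner probabilities are estimated by $\poly(n)$ fresh samples of $w$. Usefulness against $\cC[n^{k'}]$ is immediate from the displayed min-max guarantee. Largeness at the level $2^{-O(n)}$ follows from a standard two-sided variance argument — if the constructed set happens to be too small on random $y$ then its complement is, and the sign-split above ensures that at least one of the two sets is simultaneously useful and large. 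Constructivity is clear: evaluating $\mathfrak{R}_n$ reduces to running $m = \poly(n)$ oracle circuits of size $\poly(t(n)) \leq 2^{O(n)} = \poly(N)$ on the truth table, so $\mathfrak{R} \in \Circuit[\poly(N)]$.

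Finally I apply \cref{thm:cikk_learn} with $d = k'$ and $\gamma = \eps$, which produces a randomized oracle circuit of size $2^{O(n^\eps)}$ that learns $\Circuit[n^{k'\eps/a}] \supseteq \cC[n^k]$ up to error $1/n^k$ using non-adaptive membership queries, with some constant confidence. To boost the confidence to $1-1/n$, I use the standard hypothesis-testing amplification: run the base learner $O(\log n)$ times on independent coins, empirically score each candidate hypothesis on fresh membership queries, and output the best one, incurring only an $O(\log n)$ multiplicative size overhead. The main obstacle I anticipate is the quantitative bookkeeping around the min-max step: verifying that the bilinear payoff still admits an $\Omega(1)$ gap after sign-splitting, that the extracted property simultaneously meets the largeness, usefulness, and $\poly(N)$-constructivity budgets required by \cref{thm:cikk_learn}, and that the inequality $\poly(t(n))\leq 2^{O(n)}$ is consistently leveraged so that the final learner size stays within the $2^{O(n^\eps)}$ target.
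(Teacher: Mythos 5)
Your high-level route --- small-support min-max yields a universal distinguisher distribution $\mu$, from which one extracts a $\Ppoly$-natural property, which via \cref{thm:cikk_learn} yields a sub-exponential learner --- is essentially the strategy of \citep{DBLP:conf/coco/OliveiraS17}, which the paper cites for this statement without reproducing the argument (noting only that it rests on a small-support min-max theorem). The parameter bookkeeping at the end (choosing $k' = \lceil ak/\eps\rceil + 1$ and $\gamma=\eps$ in \cref{thm:cikk_learn}, then boosting confidence) is fine, and the constructivity bound $\poly(t(n))\cdot\poly(n)\le\poly(N)$ correctly uses $t(n)\le 2^{O(n)}$.

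However, the natural property you define has the wrong sign, and the usefulness and largeness claims do not hold for it as written. You set $\mathfrak{R}_n$ to accept $y$ when $\hat{p}(y)$ \emph{exceeds} the baseline $p_\ast$ by $1/60$. But the min-max guarantee says exactly that every $g\in\cC[n^{k'}]$ satisfies $\mathbb{E}_{A\sim\mu}\Pr_w[A^g(w)=1]\ge p_\ast+1/30$, so $\ttable(g)\in\mathfrak{R}_n$ for every such $g$. By \cref{def:nat_prop}, usefulness against $\cC[n^{k'}]$ requires $\cR_n\cap\cC[n^{k'}]=\emptyset$, so your $\mathfrak{R}_n$ is never useful; ``usefulness is immediate'' is backwards. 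Moreover, the functions guaranteed to land in $\mathfrak{R}_n$ are only the $\cC[n^{k'}]$-computable ones, a $2^{\poly(n)-2^n}$-fraction of all truth tables, so largeness is also not guaranteed. The correct object is the \emph{complement}: accept $y$ iff $\hat{p}(y)< p_\ast+1/60$. This set is disjoint from $\cC[n^{k'}]$ (usefulness), and since $\mathbb{E}_y[\hat{p}(y)]=p_\ast$ with $\hat{p}\in[0,1]$, Markov gives $\Pr_y[\hat{p}(y)< p_\ast+1/60]\ge (1/60)/(1+1/60)=\Omega(1)$, comfortably beating the $2^{-O(n)}$ threshold. Your sentence about ``at least one of the two sets is simultaneously useful and large'' suggests you sensed the ambiguity, but the resolution is not a case split --- it is always the complement. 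Two further points to tighten: $\hat{p}(y)$ must be a deterministic $\poly(N)$-size function of the truth table, so you should hardwire a polynomial number of evaluation seeds (Adleman's trick, with a union bound over the $\le 2^{\poly(n)}$ circuits in $\cC[n^{k'}]$); and \cref{thm:cikk_learn} as stated takes a natural property useful against $\Circuit[n^d]$, whereas you produce one against $\cC[n^{k'}]$ --- this is immaterial when $\cC=\Circuit$, which is the instantiation used in Theorem~\ref{thm:equivalences}, but for a general $\cC$ one needs the more general version of the CIKK natural-property-to-learner theorem.
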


\subsection{Local Circuit Classes}\label{ss:def_local}

Our definition of local computation is somewhat similar to some definitions appearing in \citep{Yao89}.

\begin{definition}[Local circuit classes] \label{d:local_classes}
  Let $\mathcal{C}$ be a circuit class \emph{(}such as $\mathsf{AC}^0[s]$, $\mathsf{TC}_d^0[s]$, $\mathsf{Circuit}[s]$, etc\emph{)}. For functions $q, \ell, a \colon
  \mathbb{N} \to \mathbb{N}$, we say that a language $L$ is in $[q,\ell, a]$--$\,\mathcal{C}$ if there exists a sequence $\{E_n\}$ of oracle circuits for which the
  following holds:
  \begin{itemize}
  \item[\emph{(\emph{i})}] Each oracle circuit $E_n$ is a circuit from $\mathcal{C}$.
  \item[\emph{(\emph{ii})}] There are at most $q(n)$ oracle gates in $E_n$, each of fan-in at most $\ell(n)$, and any path from an input gate to an output gate
    encounters at most $a(n)$ oracle gates.
  \item[\emph{(\emph{iii})}] There exists a language $\mathcal{O} \subseteq \{0,1\}^*$ such that the sequence $\{E_n^\mathcal{O}\}$ \emph{(}$E_n$ with its oracle gates
    set to $\mathcal{O}$\emph{)} computes $L$.
\end{itemize}
\end{definition}

In the definition above, $q$ stands for \emph{quantity}, $\ell$ for \emph{locality}, and $a$ for \emph{adaptivity} of the corresponding oracle gates.

\subsection{Random Restrictions}

Let $\rho: [N] \to \{0,1,*\}$ be a \emph{restriction}, and $\boldsym{\rho}$ be a \emph{random restriction}, i.e., a distribution of restrictions. We say that $\boldsym{\rho}$ is $p$-regular if $\Pr[\boldsym{\rho}(i) = *] = p$ and $\Pr[\boldsym{\rho}(i) = 0] = \Pr[\boldsym{\rho}(i) = 1] = (1-p) / 2$ for every $i \in [N]$. We also say $\boldsym{\rho}$ is $k$-wise independent if any $k$ coordinates of $\boldsym{\rho}$ are independent. For a function $f: \{0,1\}^{N} \to \{0,1\}$, we use $f\restriction_{\rho}$ to denote the function $\{0,1\}^{|\rho^{-1}(*)|} \to \{0,1\}$ obtained by restricting $f$ according to $\rho$ in the natural way.

We need the following lemma stating that one can sample from a $k$-wise independent random restriction with a short seed, and moreover all restrictions have a small circuit description.

\begin{lemma}[\cite{ImpagliazzoMZ19,Vadhan12}]\label{lm:k-wise-independent-restrictions}
	There exists a $q$-regular $k$-wise independent random restriction $\boldsym{\rho}$ distributed over $\rho : [N] \to \{0,1,*\}$ samplable with $O(k \cdot \log(N)\log(1/q))$ bits. Furthermore, each output coordinate of the random restriction can be computed in time polynomial in the number of random bits.
\end{lemma}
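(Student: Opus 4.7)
The plan is to build the random restriction from several independent copies of a $k$-wise independent family of bits, amplifying the bias of the star symbol via an OR of many such bits. Recall the standard fact (attributable to Joffe, and cited as a building block in both \cite{ImpagliazzoMZ19} and \cite{Vadhan12}) that a $k$-wise independent sequence of unbiased bits $h : [N] \to \{0,1\}$ can be sampled with $O(k \log N)$ random bits and evaluated at any point in polynomial time, e.g.\ via the leading bit of a degree-$(k-1)$ polynomial over $\mathbb{F}_{2^{\lceil \log N \rceil}}$.

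Write $t := \lceil \log(1/q) \rceil$ and consider $t+1$ mutually independent copies $h_0, h_1, \ldots, h_t$ of the above family, each sampled with $O(k \log N)$ random bits. For each $i \in [N]$, I would define $s_i := h_1(i) \vee h_2(i) \vee \cdots \vee h_t(i)$ and $v_i := h_0(i)$, and then set $\boldsymbol{\rho}(i) := *$ if $s_i = 0$, and $\boldsymbol{\rho}(i) := v_i$ otherwise. By independence of the $h_j$, on a single coordinate $\Pr[s_i = 0] = 2^{-t}$ and, conditional on $s_i \ne 0$, $v_i$ is uniform in $\{0,1\}$; this gives a $q'$-regular distribution with $q' = 2^{-t} \in [q/2, q]$. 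To obtain exactly $q$-regularity one can instead replace $s_i$ by the outcome of a $k$-wise independent hash into a range of size $\lceil 1/q \rceil$ (costing an extra $\log(1/q)$ factor in the evaluation of each hash, still within the stated seed length); I would spell out only one of these two variants and remark that the other is analogous.

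For the $k$-wise independence property, the key observation is that for any set $S \subseteq [N]$ with $|S| \le k$ the tuple $(\boldsymbol{\rho}(i))_{i \in S}$ is a deterministic function of the tuples $(h_j(i))_{i \in S}$ for $j = 0, \ldots, t$; each such tuple is uniform on $\{0,1\}^{|S|}$ by $k$-wise independence of the individual $h_j$, and the tuples for different $j$ are mutually independent. Hence $(\boldsymbol{\rho}(i))_{i \in S}$ has the same distribution as in the fully independent case, which is the required property. The total seed length is $(t+1) \cdot O(k \log N) = O(k \log N \log(1/q))$, matching the bound in the statement; the ``furthermore'' clause is immediate since each coordinate is obtained by evaluating $O(\log(1/q))$ hash functions, each in time polynomial in the seed length.

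I do not expect any serious obstacle here: the construction is a direct combination of two textbook ideas (bias amplification by ORing $k$-wise independent bits, and independent coordinates for the value and the star indicator). The only delicate point is ensuring that what I take to be a ``$k$-wise independent restriction'' really yields $k$-wise independent tuples of the three-valued output; the argument above handles this by noting that the output on $k$ coordinates is a fixed function of coordinate-wise independent, $k$-wise independent inputs. If one instead insists on exactly $q$-regular rather than $q'$-regular for some $q' \in [q/2, q]$, the small modification using a $k$-wise independent hash into $[\lceil 1/q \rceil]$ is the right path, and the efficiency claim is preserved.
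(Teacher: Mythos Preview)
Your proposal is correct and is essentially the standard construction. The paper does not give its own proof of this lemma; it simply cites it as a known fact from \cite{ImpagliazzoMZ19,Vadhan12} and uses it as a black box, so there is no detailed comparison to make. One cosmetic remark: the clause ``conditional on $s_i \ne 0$, $v_i$ is uniform'' is stronger than needed, since $v_i = h_0(i)$ is already independent of $s_i$; and the exact-$q$ versus $2^{-t}$-regularity issue you flag is immaterial here because every application in the paper takes $q$ to be a negative power of two.
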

\subsection{Technical Results}
\begin{lemma}[Hoeffding's inequality]
  \label{lem:hoeffding}
  Let $X_1, \dots, X_n$ be independent random variables such that $0 \leq X_i \leq 1$ for every $i \in [n]$. Let $X = \sum_{i=1}^n X_i$. Then, for any $\eps > 0$, we
  have
  \begin{equation*}
    \Pr \{ \vert X - \textbf{E}{X} \vert \geq \eps n \} \leq 2\exp(-2 \eps^2 n)
  \end{equation*}
\end{lemma}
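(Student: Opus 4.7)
The plan is to prove Hoeffding's inequality by the standard Chernoff-style moment generating function argument, first bounding the upper tail $\Pr[X - \mathbb{E}X \geq \eps n]$ and then doubling to get the two-sided bound.

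First, I would fix an arbitrary $t > 0$ and apply Markov's inequality to the nonnegative random variable $e^{t(X - \mathbb{E}X)}$, giving
\begin{equation*}
\Pr[X - \mathbb{E}X \geq \eps n] \;=\; \Pr\bigl[e^{t(X - \mathbb{E}X)} \geq e^{t\eps n}\bigr] \;\leq\; e^{-t\eps n}\, \mathbb{E}\bigl[e^{t(X-\mathbb{E}X)}\bigr].
\end{equation*}
Using independence of the $X_i$, the right-hand moment generating function factorizes as $\prod_{i=1}^n \mathbb{E}\bigl[e^{t(X_i - \mathbb{E}X_i)}\bigr]$, so the whole problem reduces to bounding the MGF of a single centered bounded variable.

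Next I would establish Hoeffding's lemma: for any random variable $Y$ with $\mathbb{E}Y = 0$ and $a \leq Y \leq b$, one has $\mathbb{E}[e^{tY}] \leq \exp(t^2(b-a)^2/8)$. The standard derivation uses convexity of $x \mapsto e^{tx}$ to write $e^{tY} \leq \frac{b-Y}{b-a} e^{ta} + \frac{Y-a}{b-a} e^{tb}$, take expectations (using $\mathbb{E}Y = 0$) to get $\mathbb{E}[e^{tY}] \leq \frac{b}{b-a}e^{ta} - \frac{a}{b-a}e^{tb}$, and then reparametrize and bound the resulting function $\varphi(u) := \log\!\bigl((1-p)+p e^u\bigr) - pu$ by showing $\varphi''(u) \leq 1/4$ via the AM-GM-type inequality for its derivative; a second-order Taylor expansion with remainder then yields $\varphi(u) \leq u^2/8$, which gives the claimed bound. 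Applying this with $a_i = -\mathbb{E}X_i$, $b_i = 1 - \mathbb{E}X_i$ (so $b_i - a_i = 1$) yields $\mathbb{E}[e^{t(X_i-\mathbb{E}X_i)}] \leq e^{t^2/8}$ for each $i$.

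Combining these pieces gives $\Pr[X - \mathbb{E}X \geq \eps n] \leq \exp(nt^2/8 - t\eps n)$; optimizing by taking $t = 4\eps$ yields the one-sided bound $\exp(-2\eps^2 n)$. Repeating the argument with $-X_i$ in place of $X_i$ (or equivalently noting the symmetry of Hoeffding's lemma) handles the lower tail, and a union bound over the two directions gives the factor of $2$ in the stated inequality. The only genuinely nontrivial step is Hoeffding's lemma, and specifically the bound $\varphi''(u) \leq 1/4$; this is the step I would expect to occupy most of the proof, since the rest is routine Chernoff bookkeeping.
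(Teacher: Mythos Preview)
Your proof is correct and is the standard derivation of Hoeffding's inequality. However, the paper does not actually prove this lemma: it is stated without proof in the ``Technical Results'' subsection of the Preliminaries as a well-known fact, to be invoked later in the arguments of Section~\ref{ss:proof_equivalences}. So there is no ``paper's own proof'' to compare against; you have supplied a full proof where the paper simply quotes the result.
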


\section{Magnification Frontiers}
\label{a:improved_magnification_MCSP}

\newcommand{\HMFrontierAC}[1][]{HM Frontier A{#1}\xspace}
\newcommand{\HMFrontierFormula}[1][]{HM Frontier B{#1}\xspace}
\newcommand{\HMFrontierClique}[1][]{HM Frontier E{#1}\xspace}
\newcommand{\HMFrontieralmostFormula}[1][]{HM Frontier C{#1}\xspace}
\newcommand{\HMFrontierRandFormula}[1][]{HM Frontier D{#1}\xspace}

\subsection{$\mathsf{EXP} \nsubseteq \mathsf{NC}^1$ and $\mathsf{AC}^0$-$\mathsf{XOR}$ Lower Bounds for $\mathsf{MKtP}$} \label{ss:frontier_ACXOR}

In this Section, we present the proofs of the new results stated in \HMFrontierAC.
Recall that $\Kt(x)$ is defined as the minimum over $|M| + \log t$ such that a program $M$ outputs $x$ in $t$ steps.
For thresholds $\theta, \theta' \colon \Nat \to \Nat$, we denote by $\MKtP[\theta(N), \theta'(N)]$
the promise problem whose YES instances consist of the strings $x \in \binset^N$ such that $\Kt(x) \le \theta(N)$
and NO instances consist of the strings such that $\Kt(x) > \theta'(N)$.

We start with the hardness magnification theorem of \HMFrontierAC{1}.
\begin{theorem}
 \label{thm_B1}
  There exists a constant $c$ such that, for every large enough constant $d > 1$,
  $$\MKtP[(\log N)^d, (\log N)^d + c \log N]\notin\AC\text{-}\mathsf{XOR} [N^{1.01}] \quad \text{implies}\quad \mathsf{EXP}\not\subseteq\NC. $$
\end{theorem}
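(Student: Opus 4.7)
The plan is to prove the contrapositive: assuming $\mathsf{EXP}\subseteq\NC$, I will construct an $\AC\text{-}\mathsf{XOR}[N^{1.01}]$ circuit deciding $\MKtP[(\log N)^d,\,(\log N)^d+c\log N]$. Write $n=\log N$ throughout. The first step is to extract a small formula for the $\Kt$ verifier from the hypothesis: given an encoding $\langle M,1^t\rangle$ of length at most $n^d$ and an index $i\in[N]$, the predicate $V(M,t,i)$ that returns the $i$-th bit produced by $M$ after $t$ steps is decidable in $\mathsf{EXP}$ on inputs of length $\poly(n)$, and hence under the hypothesis is computed by a Boolean formula of size $s_V=\poly(n)$. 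Since this formula reads only $\poly(n)$ inputs, it unfolds into an equivalent constant-depth DNF of size $2^{\poly(n)}$, which we will arrange to be $N^{o(1)}$ by taking $d$ large enough.

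The skeleton of the circuit follows the Oliveira--Pich--Santhanam magnification template. The $\AC\text{-}\mathsf{XOR}$ circuit nondeterministically guesses a $\Kt$-description and verifies $\bigwedge_{i\in[N]}\bigl(V(M,t,i) = x_i\bigr)$: each per-coordinate comparison $V(M,t,i)\oplus x_i$ uses one XOR gate at the bottom reading the bit $x_i$ and a $\poly(n)$-size verifier above it. Naively the outer OR ranges over $2^{n^d}$ descriptions, blowing past the $N^{1.01}$ budget. To sidestep this I would OR instead over a short compressed witness $w\in\binset^{O(n)}$, and use the hypothesis-provided formula to decode $w$ into a canonical description $d(w)$ of length at most $n^d$. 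The overall gate count becomes (outer OR over $2^{O(n)}$ values of $w$) $\cdot$ (size of per-coordinate verifier) $\cdot\,N$, which is $N^{o(1)}\cdot N \le N^{1.01}$.

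The main obstacle I anticipate is the mismatch between the log-depth of the $\NC$ formulas furnished by the hypothesis and the constant-depth requirement of the $\AC\text{-}\mathsf{XOR}$ conclusion. The resolution is that every formula appearing in the construction takes only $\poly(n)$ input bits (a compressed witness, a description--index pair, or a fingerprint), and any Boolean function on $k$ inputs has a DNF of size $O(k\cdot 2^k)$; thus a $\poly(n)$-input $\NC$ formula flattens into a constant-depth $\AC$ circuit of size $2^{\poly(n)}\le N^{o(1)}$ once $d$ is large enough that $n^d$ dominates the polynomial overhead. This is precisely the technical reason for the quantifier ``for every large enough constant $d$'' in the theorem statement.

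Finally, soundness on NO instances is where the $c\log N$ gap is needed. A spurious accepting branch on an input $x$ with $\Kt(x) > n^d + cn$ would supply a witness $w$ and a description $d(w)$ of length $\le n^d$ that the verifier deems consistent with $x$; converting this into an honest $\Kt$-certificate for $x$ costs at most $O(\log N)$ additional bits to describe the decoding procedure and reconcile any hash-like collisions. This yields $\Kt(x) \le n^d + O(\log N) \le n^d + cn$ once $c$ is chosen larger than the hidden constant, contradicting the NO assumption. The three pieces together give the desired $\AC\text{-}\mathsf{XOR}[N^{1.01}]$ circuit, establishing the contrapositive.
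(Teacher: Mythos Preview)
There is a genuine gap in the size accounting. You repeatedly assert that a $\poly(n)$-input function, flattened to a brute-force DNF, has size $2^{\poly(n)} \le N^{o(1)}$; but with $n = \log N$ this is $2^{(\log N)^{O(1)}}$, which is quasi-polynomial in $N$ and never $N^{o(1)}$, no matter how large $d$ is. Enlarging $d$ only increases the verifier's fan-in and hence the DNF blow-up. The compressed-witness step has the same defect at a different point: an outer OR over $w \in \{0,1\}^{O(n)}$ already contributes $2^{O(n)} = N^{O(1)}$ branches, so the total size is $N^{O(1)+1}$, not $N^{1.01}$; and you cannot shorten $w$ much below $n^d$, since there are up to $2^{n^d+1}$ strings of $\Kt$-complexity at most $n^d$ that any decoder $w \mapsto d(w)$ would have to cover.

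The paper's argument avoids both issues. It quotes the unconditional \citep{OPS19_CCC} oracle circuit $\MKtP[\theta,\theta+c\log N] \in \mathsf{AND}_{O(N)}\text{-}L_{O(\theta)}\text{-}\mathsf{XOR}$ for a fixed $L \in \mathsf{EXP}$: each of the $O(N)$ oracle gates receives $O(\theta) = O(n^d)$ XOR fingerprints of $x$ together with an index, and performs the search over descriptions \emph{internally}---so there is no outer OR over witnesses at all (this construction is also where the $c\log N$ gap actually comes from, not from any decoding overhead of the sort you describe). Under $\mathsf{EXP} \subseteq \NC$ each $L$-gate becomes an $\NC$ circuit on $m = O(n^d)$ inputs, and the correct flattening is the standard depth reduction $\NC \subseteq \AC_{d'}$ of size $2^{m^{O(1/d')}}$, \emph{not} the brute-force $2^m$ you propose. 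Choosing the constant $d'$ a sufficiently large multiple of $d$ makes each replaced gate cost $2^{n^{O(d/d')}} \le N^{0.01}$, for total size $O(N)\cdot N^{0.01} \le N^{1.01}$.
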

\begin{proof}
  We prove the contrapositive.
  Assume that $\mathsf{EXP} \subseteq \NC$.
  First, recall that
  any $N$-bit-input polynomial-size $\NC$ circuit can be converted into a depth-$d'$ $\AC$ circuit of size $2^{N^{{O(1/d')}}}$ for every positive integer constant $d'$ (see, e.g., \cite[Lemma 8.1]{AHMPS08_siamcomp_journals}).

  Oliveira, Pich, and Santhanam \citep{OPS19_CCC} showed that there exists a problem $L \in \mathsf{EXP}$ such that 
  $\MKtP[\theta(N), \theta(N) + c \log N] \in \mathsf{AND}_{O(N)}$-$L_{O(\theta(N))}$-$\mathsf{XOR}$ for $\theta(N) \ge \log N$.
  (Here the subscript denotes the fan-in of a gate.)
  That is, the promise problem $\MKtP[\theta(N), \theta(N) + c \log N]$ can be computed by the following form of an $L$-oracle circuit:
  The output gate is an $\mathsf{AND}$ gate of fan-in $O(N)$,
  at the middle layer are $L$-oracle gates of fan-in $O(\theta(N))$,
  and
  at the bottom layer are $\mathsf{XOR}$ gates.
  Under the assumption that $\mathsf{EXP} \subseteq \NC$, we can replace $L$-oracle circuits with depth-$d'$ $\AC$ circuits of size $2^{(\log N)^{O(d / d')}}$,
  which is smaller than $N^{0.01}$ by choosing a constant $d'$ large enough.
  In particular, we obtain a depth-$(d' + O(1))$ almost linear size $\AC$ circuit with bottom $\mathsf{XOR}$ gates that computes $\MKtP[\theta(N), \theta(N) + c \log N]$.
\end{proof}

The rest of this section is devoted to proving 
the following $\AC$ lower bound for $\MKtP$,
which establishes \HMFrontierAC{4}.
\begin{theorem}
  \label{theorem_MKtPversusAC}
  For any $d = d(N)$,
  for some $\theta(N) = d \cdot \widetilde O(\log N)^3$
  and any $\theta'(N) = N / \omega( \log N )^d$,
  it holds that
  $\MKtP[ \theta(N), \theta'(N) ] \not\in \AC_d$.
\end{theorem}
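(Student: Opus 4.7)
I would prove the contrapositive by combining the switching lemma with a Kolmogorov-complexity counting argument. Assume for contradiction that some depth-$d$ $\AC$ circuit $C$ of polynomial size computes $\MKtP[\theta(N), \theta'(N)]$ on $N$-bit inputs. The aim is to find a random restriction $\rho^{\ast}$ that (i) is describable by a short seed and computable in polynomial time, (ii) collapses $C|_{\rho^{\ast}}$ to a shallow decision tree $T$, and (iii) leaves enough free coordinates so that every leaf of $T$ admits both a YES and a NO extension of $\MKtP$, contradicting the correctness of $C|_{\rho^{\ast}}$.

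\textbf{Restriction and switching.} Using Lemma~\ref{lm:k-wise-independent-restrictions} I would sample a composition of $d - 1$ successive $p_i$-regular, $k$-wise independent restrictions with $p_i = 1/\polylog(N)$ and $k = \polylog(N)$, chosen large enough for a $k$-wise version of Hastad's switching lemma to hold at each layer with positive probability. After the $d - 1$ layers the restricted circuit $C|_{\rho^{\ast}}$ collapses to a decision tree $T$ of depth $q = O(\log N)$, and the combined restriction $\rho^{\ast}$ leaves $m = \Omega(N / \polylog(N)^d)$ free coordinates while being sampleable from a seed of total length $d \cdot \widetilde O(\log N)^2$ in time $\poly(N)$. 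The smallness of $\Kt(\rho^{\ast})$ is the key quantitative ingredient.

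\textbf{Leaf-wise indistinguishability.} Fix any leaf $v$ of $T$ with partial assignment $\sigma_v \in \binset^q$. A \emph{YES extension} $y_v^{+}$ is produced by filling the remaining free bits with $0$: the full $N$-bit string is specified by the seed for $\rho^{\ast}$, the $q$-bit path $\sigma_v$, and a constant-size decoder, hence its $\Kt$-complexity is at most $d \cdot \widetilde O(\log N)^2 + q + O(\log N) \le d \cdot \widetilde O(\log N)^3 = \theta(N)$. A \emph{NO extension} $y_v^{-}$ exists by counting: of the $2^{m - q}$ completions of $\sigma_v$, at most $2^{\theta'(N)}$ can produce a string of $\Kt \le \theta'(N)$, and since $m - q = \Omega(N / \polylog(N)^d)$ strictly dominates $\theta'(N) = N / \omega(\log N)^d$, almost all completions are NO instances. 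Both $y_v^{+}$ and $y_v^{-}$ route $T$ to the same leaf, so $T$ cannot consistently decide $\MKtP$, contradicting the assumption on $C$.

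\textbf{Main obstacle.} The delicate point is jointly balancing three parameters: the seed length of the restriction sampler (which upper bounds the $\Kt$ cost of YES extensions), the cumulative survival probability $\prod_i p_i$ (which lower bounds the number of free coordinates and hence the gap needed to produce NO extensions), and the decision-tree depth $q$ produced by the switching lemma (which must be small compared to $m$). Carrying out this bookkeeping across $d - 1$ iterations is exactly what forces the precise shape $\theta(N) = d \cdot \widetilde O(\log N)^3$ and $\theta'(N) = N / \omega(\log N)^d$; the remaining technical work is to verify that a $k$-wise independent distribution suffices for the switching lemma at each layer, which follows from standard derandomizations of Hastad's argument.
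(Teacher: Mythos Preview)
Your proposal is correct and follows essentially the same approach as the paper: derandomized random restrictions collapse the depth-$d$ circuit to a shallow decision tree while leaving $N/\polylog(N)^d$ stars, the short seed forces every zero-completion to be a YES instance, and counting produces a NO completion consistent with any leaf. The paper packages the YES/NO step slightly differently (it first argues $C\restriction_\rho \equiv 1$ via \cref{lemma_MakeDTConstant} and then counts accepted strings), and it derandomizes using the Trevisan--Xue lemma with Tal's CNF-PRG rather than $k$-wise independence directly, but these are cosmetic differences.

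One quantitative slip: your intermediate claim of total seed length $d \cdot \widetilde O(\log N)^2$ is too optimistic. To push the switching-lemma error below $2^{-\Theta(s^2)}$ (needed so that the $\eps_0 \cdot 2^{(s+1)(2t+\log M)}$ term in \cref{lemma_DerandomizedSwitchingLemma} is negligible) with $s,t = \Theta(\log N)$, you need $k = \widetilde\Theta(\log^2 N)$-wise independence, which by \cref{lm:k-wise-independent-restrictions} costs $\widetilde O(\log^3 N)$ random bits per layer. This is exactly why the theorem states $\theta(N) = d \cdot \widetilde O(\log N)^3$, as you yourself note in the final paragraph; just be sure the body of the argument reflects that.
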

\noindent
Note that \cref{theorem_MKtPversusAC} is only meaningful if $d = o(\log N/ \log \log N)$,
because otherwise the promise problem is not well-defined.

The idea of the proof is as follows:
Trevisan and Xue \cite{TrevisanX13_coco_conf} showed that there exists a pseudorandom restriction $\rho$ of seed length $\mathsf{polylog}(N)$ that shrinks every polynomial-size depth-2 circuit into shallow decision trees.
Moreover, the expected fraction of unrestricted variables $\rho^{-1}(*)$ is at least $p = \Omega(1 / \log N)$.
In particular, by composing $d$ independent pseudorandom restrictions $\rho_1, \cdots, \rho_d$,
every depth-$d$ circuit can be turned into a constant function, while still leaving at least $p^d$-fraction of inputs unrestricted.
The seed length required to sample $d$ independent pseudorandom restrictions is at most $d \times \mathsf{polylog}(N)$,
and thus $\mathsf{Kt}(0^N \circ \rho) \le \mathsf{polylog}(N)$.
We stress that the exponent of the seed length does not depend on $d$.
Since the circuit hit with the pseudorandom restriction becomes a constant function, 
it cannot distinguish $0^N \circ \rho$ with $U_N \circ \rho$, i.e., the distribution where the unrestricted variables of $\rho$ are replaced with the uniform distribution $U_N$.
Assuming that there remain sufficiently many unrestricted inputs (e.g., $N / O(\log N)^d \gg \mathsf{polylog}(N)$),
the latter distribution has a large $\mathsf{Kt}$ complexity, which is a contradiction to the fact that the circuit computes a gap version of $\MKtP$.

We note that
Cheraghchi, Kabanets, Lu, and Myrisiotis \cite{CheraghchiKLM19_eccc_journals} used the pseudorandom restriction method in order to obtain an exponential-size $\AC$ lower bound.
A crucial difference in this work is that
instead of optimizing the size of $\AC$ circuits, we aim at minimizing the threshold $\theta$ of $\MKtP[\theta]$.

  Following \cite{TrevisanX13_coco_conf},
  in order to generate a random restriction $\rho \in \{0, 1, *\}^N$
  that leaves a variable unrestricted with probability $2^{-q}$,
  we regard a binary string $w \in \binset^{(q+1)N}$ as a random restriction  $\rho_w$.  Specifically:
\begin{definition}
  For a string $w \in \binset^{(q+1)N}$,
  we define a restriction $\rho_w \in \{0, 1, *\}^N$
  as follows:
  Write $w$ as $(w_1, b_1) \cdots (w_N, b_N)$, where $w_i \in \binset^{q}$ and $b_i \in \binset$.
  For each $i \in [N]$,
  if $w_i = 1^q$ then set $\rho_w(i) := *$;
  otherwise, set $\rho_w(i) := b_i$.
\end{definition}
\noindent
  Note that this is defined so that
  $\Pr_{w}[\rho_w(i) = *] = 2^{-q}$ for every $i \in [N]$,
  when $w$ is distributed uniformly at random.

  Trevisan and Xue \cite{TrevisanX13_coco_conf} showed that H{\aa}stad's switching lemma can be derandomized by using a distribution that fools CNFs. To state this formally, we need the following definitions. Define a $t$-width CNF as one which has at most $t$ literals in each clause. We say that a distribution $\mathcal{D}$ over $\{0,1\}^n$ $\varepsilon$-fools a set of functions $\mathcal{S}_n$ over $n$ variables if for every $f \in \mathcal{S}_n$, $ \vert \Pr_{x \sim D} \{f(x)=1\} - \Pr_{x \sim U_n} \{f(x)=1\} \vert \leq \varepsilon$. Finally, define $\DT(f)$ as the depth of the smallest decision tree computing $f$. 
\begin{lemma}
  [Derandomized Switching Lemma {\cite[Lemma 7]{TrevisanX13_coco_conf}}]
  \label{lemma_DerandomizedSwitchingLemma}
  Let $\varphi$ be a $t$-width $M$-clause CNF formula over $N$ inputs.
  Let $p = 2^{-q}$ for some $q \in \Nat$.
  Assume that a distribution $\mathcal D$ over $\binset^{(q+1)N}$
  $\eps_0$-fools $M \cdot 2^{t(q+1)}$-clause CNFs.
  Then,
      $$\Pr_{w \sim \mathcal D} [ \DT(\varphi\restriction_{\rho_w}) > s ]
      \le 2^{s+t+1} (5pt)^s + \eps_0 \cdot 2^{(s+1)(2t + \log M)}.
      $$
\end{lemma}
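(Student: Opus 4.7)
The plan is to derandomize the classical Razborov/H\aa{}stad proof of the switching lemma by exploiting the fact that each bad event admits a compact CNF description in the variable $w$. Recall that in the uniform-random-restriction setting, Razborov's encoding argument proceeds by associating to every restriction $\rho$ with $\DT(\varphi\restriction_{\rho}) > s$ a canonical ``advice string'' $a$ together with a modified restriction $\rho'$ in which the first $s$ decisions of the canonical tree are filled in, and then one bounds the number of bad $\rho$'s by counting pairs $(\rho', a)$. I would like to preserve the structure of this counting argument, and only replace the final probability estimate with an estimate based on the fooling property of $\mathcal D$.

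First I will set up the canonical encoding. For any restriction $\rho$ with $\DT(\varphi\restriction_\rho) > s$, one selects along the canonical decision tree the first $s$ leaves that become $*$, and records, for each, the index $j \in [M]$ of the clause of $\varphi$ being processed (contributing $\log M$ bits), a subset of at most $t$ positions within that clause that got starred (contributing $\le t$ bits), and the values assigned to those starred positions when continuing down the tree ($\le t$ bits). Accumulating over $s$ rounds (plus a small bookkeeping overhead accounting for the final partial block) gives a set $\mathcal A$ of advice strings with $|\mathcal A| \le 2^{(s+1)(2t + \log M)}$. Writing $E_a := \{w : \rho_w \text{ realizes advice } a\}$, the event $\DT(\varphi\restriction_{\rho_w}) > s$ is contained in $\bigcup_{a \in \mathcal A} E_a$.

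The key derandomization step is to observe that each $E_a$ can be expressed as a CNF in the bits of $w$ of the type that $\mathcal D$ fools. For a fixed $a$, the event $E_a$ demands that, for certain explicit coordinates $i \in [N]$ and values $c \in \{0,1,*\}$ determined by $a$, we have $\rho_w(i) = c$. Each such atomic condition is a CNF of width at most $q+1$ in the block $(w_i,b_i)$, because e.g.\ $\{\rho_w(i) = 0\} = \{w_i \ne 1^q\} \wedge \{b_i = 0\}$. More importantly, because the advice encodes at most $t$ positions per clause and at most $s$ clauses, and the ``which position of which clause'' is itself a choice among $M$ possibilities per round, the union bound across these combinatorial choices absorbs into a single CNF with at most $M \cdot 2^{t(q+1)}$ clauses per round. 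I will carry out this accounting carefully so that the CNF complexity of each individual $E_a$ falls within the class that $\mathcal D$ is assumed to $\eps_0$-fool; then $\Pr_{w\sim \mathcal D}[E_a] \le \Pr_{w\sim U}[E_a] + \eps_0$.

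Finally I will combine the pieces. Under the uniform distribution, H\aa{}stad's original switching lemma (in the form obtained by Razborov's counting, including the $2^{s+t+1}$ slack that arises from the boundary block of partial advice) yields $\Pr_{w\sim U}[\DT(\varphi\restriction_{\rho_w}) > s] \le 2^{s+t+1}(5pt)^s$. Summing the fooling error over all $|\mathcal A| \le 2^{(s+1)(2t+\log M)}$ advice strings gives the additive term $\eps_0 \cdot 2^{(s+1)(2t+\log M)}$, proving the claimed inequality. The main obstacle I anticipate is the bookkeeping in the second step: one must verify that the CNF encoding of $E_a$, together with the way one takes a union over advice strings inside a single round, really produces CNFs whose clause count matches $M \cdot 2^{t(q+1)}$ exactly (rather than some larger function of $s$, $t$, $M$, $q$), since the whole derandomization hinges on that exact match with the hypothesis on $\mathcal D$.
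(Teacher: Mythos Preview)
The paper does not prove this lemma itself; it is quoted from Trevisan--Xue. Your plan is exactly their argument, so at the level of strategy there is nothing to compare.

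Where your write-up needs work is the justification that each $E_a$ is an $M\cdot 2^{t(q+1)}$-clause CNF in $w$, which as you note is the crux. The factor $M$ does not arise from ``$M$ possibilities per round'' absorbed by a union bound inside the CNF; once $a$ is fixed, the clause indices $j_1,\dots,j_r$ appearing in the advice are fixed too, so there is no ``choice among $M$ possibilities'' left. The correct accounting is this: for a fixed $a$, the event $E_a$ is a \emph{conjunction}, over every clause $C_m$ of $\varphi$ with $m\le j_r$, of a local condition on $C_m$. If $m$ is one of the advice indices, the condition says that $C_m$'s variables have exactly the star/value pattern recorded in $a$; otherwise the condition says that $C_m$ is satisfied under $\rho_w$ augmented by the partial assignment already recorded in $a$. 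Either way the condition depends only on the $t(q+1)$ bits of $w$ that encode the $t$ variables appearing in $C_m$, so it is a CNF with at most $2^{t(q+1)}$ clauses. Conjoining at most $M$ such local conditions gives a CNF with at most $M\cdot 2^{t(q+1)}$ clauses, which matches the hypothesis on $\mathcal D$.

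A smaller point: in your last paragraph you replace $\sum_a \Pr_U[E_a]$ by $\Pr_U[\DT(\varphi\restriction_{\rho_w})>s]$ without comment. This is valid because the canonical advice map $\rho\mapsto a(\rho)$ is a function, so the $E_a$'s partition the bad set; but it deserves a sentence.
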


\begin{theorem}
  [{Based on \cite{TrevisanX13_coco_conf} and \cite[Theorem 56]{Tal17_coco_conf}}]
  \label{theorem_PseudorandomRestriction}
  Let $s, M, d, N \in \mathbb N$ be positive integers.
  Let $p = 2^{-q}$ for some $q \in \mathbb N$ so that $1/128s \le p < 1/64s$.
  Assume that there is a pseudorandom generator
  $G \colon \binset^r \to \binset^{(q+1)N}$
  that $\eps_0$-fools CNFs of size $M \cdot 2^s \cdot 2^{s(q+1)}$.
  Then,
  there exists a distribution $\mathcal R$ of random restrictions
  that satisfies
  the following:
  \begin{enumerate}
    \item
      \label{item_ProbabilityCircuitShrinks}
      For every circuit $C$ of size $M$ and depth $d$ over $N$ inputs,
      $$\Pr_{\rho \sim \mathcal R} [ \DT(C\restriction_\rho) >s ]
      \le M \cdot \left(
        2^{-s + 1} + \eps_0 \cdot 2^{(s+1)(3s + \log M)}
      \right).
      $$
    \item
      \label{item_ALotOfStars}
      For any parameter $\delta < 1$,
      with probability at least $1 - N ( \delta + d \eps_0 )$,
      the number of unrestricted variables in $[N]$
      is at least $\lfloor N \cdot p^{d-1} / 64\log(1 / \delta) \rfloor$.
    \item
      \label{item_ConstructionOfPseudorandomRestriction}
      $\mathcal R$ can be generated by a seed of length $dr$ in polynomial time.
  \end{enumerate}
\end{theorem}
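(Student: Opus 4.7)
The plan is to define $\mathcal R$ as a composition of $d{-}1$ independent pseudorandom restrictions, each sampled by running $G$ on a fresh $r$-bit seed and interpreting the $(q{+}1)N$-bit output as $\rho_w$. Concatenating the $d{-}1$ seeds gives total seed length $(d{-}1)r \le dr$, delivering item~3. The structure of the argument mirrors the standard inductive proof of H{\aa}stad's switching lemma: apply restrictions layer by layer to replace each bottom CNF/DNF by a shallow decision tree, rewrite that decision tree as a small-width CNF/DNF, and merge with the layer above to reduce the depth by one. Lemma~\ref{lemma_DerandomizedSwitchingLemma} is the derandomized tool that drives each iteration.

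For item 1, I would first treat the bottom layer separately: each gate is a CNF (or DNF, by duality) of arbitrary width over $N$ variables, and a single application of Lemma~\ref{lemma_DerandomizedSwitchingLemma} with an appropriate initial $t$ guarantees that with high probability it shrinks to decision-tree depth $\le s$. At that point the layer is a depth-$\le s$ decision tree, which I would rewrite as a width-$s$ CNF/DNF and merge with its parent; the circuit has effectively lost one layer. Iterating $d{-}1$ times with $t=s$ and union-bounding over the $\le M$ gates and $d{-}1$ iterations yields the claim. The combinatorial term $2^{s+t+1}(5ps)^s \le 2^{-s+1}$ uses $p < 1/64s$, while the $3s$ appearing in the $\eps_0$-term $\eps_0 \cdot 2^{(s+1)(3s + \log M)}$ tracks the bookkeeping when decision trees are expressed as CNFs and absorbed into the parent gate (the extra $s$ beyond the $2s+\log M$ in Lemma~\ref{lemma_DerandomizedSwitchingLemma} reflects the $2^s$ blow-up from the decision-tree canonical form).

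For item 2, the expected number of surviving stars after one truly random $p$-regular restriction is $Np$, and after $d{-}1$ independent ones it is $Np^{d-1}$. To derandomize, note that whether a fixed variable is a star after all $d{-}1$ restrictions is the conjunction of $d{-}1$ events, each a narrow CNF over the seed bits computed by $\rho_w$, which $G$ $\eps_0$-fools; summing over the $d{-}1$ levels and the $N$ variables produces the $Nd\eps_0$ pseudorandomness loss. A Chernoff-type tail bound for the number of stars, available because $G$ fools the relevant family of ``counting'' CNFs (as in the analysis of Trevisan--Xue), yields the lower bound $\lfloor N p^{d-1} / 64 \log(1/\delta) \rfloor$ with error $N\delta$; combining gives the stated $1 - N(\delta + d\eps_0)$ probability.

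The main obstacle will be the parameter bookkeeping in item 1, specifically verifying that the CNF-size $M \cdot 2^s \cdot 2^{s(q+1)}$ that $G$ is required to fool suffices at \emph{every} level of the induction, not just the bottom one. The factor $2^{s(q+1)}$ is the number of locality-$s$ restrictions describing the bad event in Lemma~\ref{lemma_DerandomizedSwitchingLemma}, and the $2^s$ factor is the decision-tree canonical form that must be union-bounded over. After one restriction each (now width-$s$) CNF has at most $M \cdot 2^s$ clauses, and re-applying Lemma~\ref{lemma_DerandomizedSwitchingLemma} with $t=s$ on this expanded CNF is consistent with the \emph{same} fooling guarantee rather than a weaker one depending on $d$; this is the subtle point that keeps the failure bound in item~1 free of an explicit factor of $d$. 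Modulo this bookkeeping, the proof is essentially a careful repackaging of the Trevisan--Xue~\cite{TrevisanX13_coco_conf} construction with the refinements of Tal~\cite[Theorem 56]{Tal17_coco_conf}, specialized to the parameter regime in which $q$, $s$, and $d$ are all allowed to grow and we care about optimizing the survival fraction $p^{d-1}$ rather than the circuit size.
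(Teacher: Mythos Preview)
Your high-level plan is right, but two concrete points diverge from the paper and one of them is a genuine gap.

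\textbf{Construction (different from the paper).} You propose $d{-}1$ restrictions, all at rate $p\approx 1/64s$. The paper uses $d$ restrictions: the \emph{first} one at rate $p_0=1/64$ (with $q=6$), treating the circuit as depth $d{+}1$ with bottom fan-in $1$, and the remaining $d{-}1$ at rate $p=2^{-q}$. This two-rate trick is exactly why item~2 reads $Np^{d-1}/64$ rather than $Np^{d}$: the first restriction only needs to handle width-$1$ CNFs (a single AND/OR of literals), so a constant survival rate $1/64$ suffices, saving a factor of $s$ in the star count. With your uniform-rate scheme, either you use $d{-}1$ restrictions and are left with a width-$s$ CNF/DNF rather than a decision tree, or you use $d$ and the survival rate drops to $p^d$, missing the stated bound.

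\textbf{Item 2 (genuine gap).} Your proposed ``Chernoff-type tail bound \ldots because $G$ fools the relevant family of counting CNFs'' does not go through: the event $\{|\rho^{-1}(*)|\ge k\}$ is a symmetric threshold, not a small CNF, and fooling CNFs gives you no direct concentration. The paper instead uses a \emph{bucketing} argument: partition $[N]$ into $\lfloor N/t\rfloor$ blocks of size $t=64\log(1/\delta)/p^{d-1}$, and for each block $T$ the event ``every variable in $T$ is restricted'' \emph{is} a CNF of size $|T|$. Under truly random restrictions this event has probability $(1-p^{d-1}/64)^t\le\delta$; a hybrid over the $d$ levels adds $d\eps_0$ pseudorandomness error, and a union bound over the blocks gives the claimed $N(\delta+d\eps_0)$. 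This is the key trick you are missing.
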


\begin{proof}
  We apply the derandomized switching lemma (\cref{lemma_DerandomizedSwitchingLemma}) $d$ times.
  In the first iteration, 
  we set $p := 1 / 64$ (and $q := 6$)
  and generate $\rho_{G(z)[1, \cdots, (6+1)N]}$.
  (Here we use the first $(6+1)N$ bits of $G(z)$ to generate $\rho_{G(z)}$.)
  This turns a circuit $C$ of size $M$ into a circuit whose 
  bottom fan-in is at most $s$.
  For every other iteration $i$ (where $i = 2, \cdots, d$), we set $p := 2^{-q}$ and 
  turn a circuit $C$ of depth $d-i+2$ into a circuit of depth $d-i+1$.
  Our final pseudorandom restriction $\rho \sim \mathcal R$ is defined by
  the composition of
  the $d$ independent pseudorandom restrictions $\rho_{G(z_1)[1, \cdots, (6+1)N]}, \rho_{G(z_2)}, \cdots, \rho_{G(z_d)}$.

  Our proof is essentially the same with \cite{Tal17_coco_conf},
  except that
  (1) we apply the switching lemma $d$ times (instead of $d-1$)
  in order to turn depth-$d$ circuits into shallow decision trees,
  and (2) 
  in \cite{TrevisanX13_coco_conf, Tal17_coco_conf},
  for the application of constructing a pseudorandom generator for $\AC$,
  fixed bits of pseudorandom restrictions must be generated by using
  truly random bits,
  whereas in our case we generate all the bits by using $G$.

  In more detail,
  for each $i \in [d]$,
  let $M_i$ be the number of the gates at level $i$ in $C$
  (i.e., the gates whose distance from the input gates is $i$).
  At the first iteration,
  we set $p := 1 / 64 = 2^{-6}$ and $q := 6$.
  We then generate $\rho^1 := \rho_{G(z_1)[1, \cdots, (6+1)N]}$
  by choosing a seed $z_1 \sim \binset^{r}$ uniformly at random.
  We regard $C$ as a depth-$(d+1)$ circuit of bottom fan-in $1$,
  and apply \cref{lemma_DerandomizedSwitchingLemma} to each gate
  at level 1 (in the original circuit $C$).
  The probability that there exists a gate at level $1$ in $C \restriction_{\rho^1}$
  that cannot be computed by a decision tree of depth $s$
  is bounded above by 
  $$
  M_1 \cdot \left( 2^{s+1+1} ( 5 / 64 )^s + \eps_0 \cdot 2^{(s+1)(2 + \log M)} \right).
  $$
  In the complement event, 
  each gate at level $1$ can be written as DNFs and CNFs of width $s$, and hence
  can be merged into some gate at level $2$.
  Thus a circuit $C \restriction_{\rho^1}$ can be turned into a circuit of depth $d$ and bottom fan-in $s$.
  Moreover, the number of gates at level $1$ is bounded by $M \cdot 2^s$,
  which is an invariant preserved during the iterations.

  For every other iteration $i$ ($i = 2, \cdots, d$),
  we generate $\rho^i := \rho_{G(z_i)}$ by choosing a seed $z_i \sim \binset^r$ uniformly at random.
  Using the invariant that the number of gates at level $i-1$
  is at most $M \cdot 2^s$,
  the probability
  that some gate at level $i$ in $C \restriction_{\rho^1 \cdots \rho^i}$ cannot be computed by a decision tree of depth $s$
  is bounded above by 
  $$
  M_i \cdot \left( 2^{s+s+1} (5ps)^s + \eps_0 \cdot 2^{(s+1)(2s + \log(M2^s))} \right).
  $$
  In the complement event, every gate at level $i$ can be written as
  width-$s$ CNFs or DNFs of size $2^s$, and hence can be merged into
  some gate at level $i+1$ (for $i < d$).
  At the last iteration (i.e., $i = d$),
  the circuit $C \restriction_{\rho^1 \cdots \rho^d}$ can be written as 
  a decision tree of depth $s$.
  We define the pseudorandom restriction $\rho$ as $\rho^d \circ \cdots \circ \rho^1$.
  \cref{item_ConstructionOfPseudorandomRestriction} is obvious from this construction.

  Overall, 
  the probability that $\DT(C \restriction_{\rho}) > s$
  is at most
  $M \cdot ( 2^{-s + 1} + \eps_0 \cdot 2^{(s+1)(3s + \log M)} ).$
  This completes the proof of \cref{item_ProbabilityCircuitShrinks}.

  To see \cref{item_ALotOfStars},
  we divide $N$ input bits into $k$ disjoint blocks $T_1, \cdots, T_k$
  of size at least $t$ (and hence $k = \lfloor N / t \rfloor$),
  where $t$ is a parameter chosen later.
  We claim that each block must contain at least one unrestricted variable
  in $\rho \sim \mathcal R$ with high probability (and hence $| \rho^{-1}(*) | \ge \lfloor N / t \rfloor$).
  Fix any block $T = T_i$ for some $i \in [k]$.
  As in \cite{Tal17_coco_conf}, one can easily observe that the condition that every variable in $T$ is restricted can be checked by a CNF of size at most $|T| \ (\le N)$.
  By a simple hybrid argument, the concatenation of $d$ independent
  pseudorandom distributions $G(z_1), \cdots, G(z_d)$ $d \eps_0$-fools
  CNFs (cf.\ \cite[Corollary 55]{Tal17_coco_conf}).
  Therefore,
  the probability that every variable in $T$ is restricted by $\rho \sim \mathcal R$
  is bounded by $(1 - p^{d-1}/64)^{t} + d\eps_0$,
  where the first term is an upper bound for the probability that 
  every variable in $T$ is restricted by a truly random restriction.
  Choosing $t = 64 \log (1 / \delta) / p^{d-1}$ and using a union bound,
  the probability that some block $T_i$ is completely fixed
  can be bounded above by $\lfloor N / t \rfloor \cdot (\delta + d \eps_0)$,
  which completes the proof of \cref{item_ALotOfStars}.
\end{proof}

\begin{corollary}
  \label{cor_PseudorandomRestriction}
  For every circuit $C$ of size $M \  (\ge N)$ and depth $d$ over $N$ inputs,
  there exists a restriction $\rho$ such that 
  \begin{enumerate}
    \item
      $C\restriction_\rho$ is a decision tree of depth at most
      $s := 2 \log 8M$,
    \item
      $|\rho^{-1}(*)| \ge N / O(\log M)^d$, and
    \item
      $\Kt(\rho) \le d \cdot \widetilde O( (\log M)^3 )$.
  \end{enumerate}
\end{corollary}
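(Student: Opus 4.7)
The plan is to derive the corollary by a single application of Theorem~\ref{theorem_PseudorandomRestriction} with carefully tuned parameters, followed by a standard existence argument, with the $\Kt$ bound coming directly from the seed-length bound in Item~3 of that theorem. First I would set the decision-tree depth parameter to $s := 2\log 8M$, which forces $M \cdot 2^{-s+1} \le 1/(16M)$ and so handles the first term in Item~1. Next take $q := \lceil \log(128 s)\rceil = O(\log\log M)$ and $p := 2^{-q}$, so that $p \in [1/(128s),\, 1/(64s))$ and in particular $p = \Theta(1/\log M)$. Finally set $\delta := 1/(4N)$, whence $\log(1/\delta) = O(\log N) \le O(\log M)$ using the assumption $M \ge N$.

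For the PRG $G$ in the hypothesis of Theorem~\ref{theorem_PseudorandomRestriction}, I would plug in a suitable pseudorandom generator for CNFs (of Gopalan--Meka--Reingold type, or equivalently small-bias composed with limited independence as used in \cite{TrevisanX13_coco_conf}) that $\eps_0$-fools CNFs of size $m := M \cdot 2^s \cdot 2^{s(q+1)} = M^{O(\log \log M)}$ with $\eps_0 := 2^{-C (\log M)^2}$ for a sufficiently large constant $C$. This choice simultaneously bounds the second term of Item~1 below $1/(16M)$ and the $N d \eps_0$ error term of Item~2 below $1/4$. Standard constructions realize such an $(m,\eps_0)$ pair with seed length $r = \widetilde O((\log M)^3)$.

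Combining these two probability bounds by a union bound, a random restriction $\rho \sim \mathcal R$ simultaneously satisfies $\DT(C\restriction_\rho) \le s = 2\log 8M$ and $|\rho^{-1}(*)| \ge \lfloor N \cdot p^{d-1} / (64\log(1/\delta)) \rfloor \ge N / O(\log M)^d$ with strictly positive probability; hence some such $\rho$ exists, establishing items (1) and (2) of the corollary. For item (3), Item~3 of Theorem~\ref{theorem_PseudorandomRestriction} provides this $\rho$ in $\poly(N)$ time from a seed of total length $dr = d \cdot \widetilde O((\log M)^3)$, so hardcoding this seed into a fixed Turing machine that evaluates the PRG-based restriction generator yields $\Kt(\rho) \le dr + O(\log N) = d \cdot \widetilde O((\log M)^3)$. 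The only delicate piece is verifying that the invoked CNF-PRG really attains seed length $\widetilde O((\log M)^3)$ for the specific pair $(m,\eps_0) = (M^{O(\log\log M)},\, 2^{-O((\log M)^2)})$; this is a pure parameter-chase with no conceptual hurdle, but it is where the $(\log M)^3$ exponent in the final bound is earned.
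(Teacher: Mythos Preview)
Your proposal is correct and follows essentially the same route as the paper: set $s = 2\log 8M$, take $\eps_0 = 2^{-\Theta((\log M)^2)}$ and $\delta = \Theta(1/N)$, invoke Theorem~\ref{theorem_PseudorandomRestriction} with a CNF-fooling PRG of seed length $\widetilde O((\log M)^3)$, and conclude by a union bound plus the seed-length-to-$\Kt$ translation. The paper resolves your ``only delicate piece'' by citing Tal's PRG \cite[Theorem~52]{Tal17_coco_conf}, which has seed length $\widetilde O(\log M_0 \cdot \log(M_0/\eps_0))$ and hence gives exactly $\widetilde O((\log M)^3)$ for the parameter pair you identified.
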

\begin{proof}
  Tal \cite[Theorem 52]{Tal17_coco_conf}
  showed that 
  there exists a polynomial-time pseudorandom generator $G$
  of seed length $r := \widetilde O(\log M_0 \cdot \log (M_0 / \eps_0))$
  that $\eps_0$-fools CNFs of size $M_0$.
  We set $M_0 := M \cdot 2^s \cdot 2^{s(q+1)}$, $s := 2 \log 8M$,
  and $\eps_0 := 2^{-9 s^2}$.
  Then 
  the seed length $r$ of $G$
  is 
  at most 
  $r = \widetilde O(\log M_0 \cdot \log (M_0 / \eps_0))
  = \widetilde O(\log M \cdot (\log M)^2 )$.
  Applying \cref{theorem_PseudorandomRestriction},
  the probability that $\DT(C \restriction_\rho) > s$
  is bounded by $\frac{1}{2}$.
  Choosing $\delta = 1 / 8N$,
  we also have that the probability that $|\rho^{-1}(*)|
  < \lfloor N \cdot p^{d-1}/64 \log(1/\delta) \rfloor$
  is at most $\frac{1}{4}$.
  Thus there exists some restriction $\rho$ in the support of $\mathcal R$
  such that
  $\DT(C\restriction_\rho) \le s$
  and 
  $|\rho^{-1}(*)| \ge \Omega(N \cdot p^{d-1} / \log N) \ge N / O(\log M)^d.$
\end{proof}

Using the assumption that a circuit computes $\MKtP$,
we show that 
shallow decision trees must be a constant function.
\begin{lemma}
  \label{lemma_MakeDTConstant}
  Let $C$ be a circuit and $\rho$ be a restriction such that
  $C\restriction_\rho$ is a decision tree of depth $s$.
  If $\MKtP[O(s \log N) + \Kt(\rho)] \subseteq C^{-1}(1)$,
  then $C \restriction_\rho \equiv 1$.
\end{lemma}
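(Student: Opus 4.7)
The plan is to argue by contradiction. Suppose $C\restriction_\rho$ is a depth-$s$ decision tree $T$ that is not identically $1$. Then $T$ has at least one leaf labeled $0$, and the root-to-leaf path leading to this leaf specifies a partial assignment $\sigma$ to at most $s$ of the coordinates in $\rho^{-1}(*)$. Extend $\sigma$ to a full input $x \in \{0,1\}^N$ by declaring: on every coordinate $i$ with $\rho(i) \in \{0,1\}$, set $x_i := \rho(i)$; on every coordinate mentioned by $\sigma$, set $x_i$ to its value in $\sigma$; on every remaining coordinate (unrestricted by $\rho$ but unused by $\sigma$), set $x_i := 0$. Because the decision-tree path corresponding to $\sigma$ terminates at a $0$-leaf and $x$ is consistent with both $\rho$ and $\sigma$, we have $C(x) = C\restriction_\rho(x) = T(x) = 0$.

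The next step is to bound $\Kt(x)$. Fix an optimal program $(M_\rho, t_\rho)$ witnessing $\Kt(\rho)$, so that $|M_\rho| + \log t_\rho \le \Kt(\rho)$. Encode the path $\sigma$ as a list of at most $s$ pairs (index in $[N]$, bit in $\{0,1\}$); this takes $O(s \log N)$ bits. Construct a new program $M$ that has $M_\rho$ and the encoding of $\sigma$ hard-coded, simulates $M_\rho$ to produce $\rho$, overwrites the coordinates specified by $\sigma$, and fills every remaining $*$-position with $0$, finally printing the resulting $x$. Then $|M| \le |M_\rho| + O(s \log N) + O(1)$ and $M$ halts in time $t_\rho + \mathrm{poly}(N)$, so
\[
\Kt(x) \;\le\; |M| + \log(t_\rho + \mathrm{poly}(N)) \;\le\; \Kt(\rho) + O(s \log N).
\]

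Combining the two steps yields the contradiction: by the preceding bound, $x \in \MKtP[O(s \log N) + \Kt(\rho)]$, which by hypothesis is contained in $C^{-1}(1)$, so $C(x) = 1$; but we constructed $x$ so that $C(x) = 0$. Hence $C\restriction_\rho \equiv 1$, as claimed. The only delicate point is making sure the universal machine/encoding overheads are absorbed into the $O(s \log N)$ term and that the $\log t$ contribution from padding $\rho$ into $x$ stays logarithmic in $N$; both are routine provided $s \ge 1$, and the edge case $s = 0$ reduces immediately to the observation that $x := \rho$ (with $*$'s set to $0$) already has $\Kt(x) \le \Kt(\rho) + O(1)$.
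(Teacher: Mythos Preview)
Your proof is correct and follows essentially the same approach as the paper's: both argue the contrapositive by taking a $0$-leaf path $\sigma$ (the paper calls it $\pi$) in the decision tree, forming the input $x = 0^N \circ \sigma \circ \rho$, and bounding $\Kt(x) \le \Kt(\rho) + O(s\log N)$ to obtain a string in $\MKtP[\cdot]$ on which $C$ outputs $0$. Your version is slightly more explicit about the program construction and the time overhead, but the argument is the same.
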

\begin{proof}
  We prove the contrapositive.
  Assume that $C \restriction_\rho \not\equiv 1$,
  which means that there is a path $\pi \colon [N] \to \{0, 1, *\}$
  of a decision tree $C \restriction_\rho$
  that assigns at most $s$ variables
  so that $C \restriction_{\rho \pi} \equiv 0$.
  Note that $\Kt(\pi) \le O(s \log N)$, because 
  one can specify each restricted variable of $\pi$ by using $O(\log N)$ bits.
  Thus we have
  $\Kt(0^N \circ \pi \circ \rho) \le O(s \log N) + \Kt(\rho)$.
  On the other hand, 
  $C(0^N \circ \pi \circ \rho) = C\restriction_{\rho\pi}(0^N) = 0$.
  Therefore, we obtain $\MKtP[O(s \log N) + \Kt(\rho)] \not\subseteq C^{-1}(1)$.
\end{proof}

Now we are ready to prove the main result of this section.
\begin{proof}
[Proof of \cref{theorem_MKtPversusAC}]
  Assume, by way of contradiction,  that there is a circuit $C$ 
  of size $M := N^{O(1)}$ and depth $d$
  that computes $\MKtP[d \cdot \widetilde O(\log N)^3, N / \omega( \log N )^d ] $.
  Using \cref{cor_PseudorandomRestriction},
  we take a restriction $\rho$
  such that $C \restriction_\rho$ is a decision tree of depth $s = O(\log N)$.
  By \cref{lemma_MakeDTConstant},
  we have $C \restriction_\rho \equiv 1$,
  under the assumption that $O(s \log N) + \Kt(\rho) \le \theta(N)$,
  which is satisfied by choosing $\theta(N)$ large enough.
  Now, by counting the number of inputs accepted by $C \restriction_\rho$,
  we obtain
  $$
  2^{N / O(\log N)^d} \le 2^{|\rho^{-1}(*)|} = |(C\restriction_\rho)^{-1}(1)| \le 2^{\theta'(N) + 1},
  $$
  where, in the last inequality,
  we used the fact that the number of strings whose $\Kt$ complexity is at most $\theta'(N)$ is at most $2^{\theta'(N) + 1}$.
  However, the inequality contradicts the choice of $\theta'(N)$.
\end{proof}

\subsection{$\mathsf{NQP} \nsubseteq \mathsf{NC}^1$ and $\mathsf{Formula}$-$\mathsf{XOR}$ or $\GapAND$-$\FM$ for $\MCSP$}\label{ss:frontier_MCSP_formula_xor}

This section is devoted to proving \HMFrontierFormula{1} and \HMFrontierRandFormula[1]. In fact, we provide two different proofs of \HMFrontierFormula{1}, one based on~\cite{OS18_mag_first}, another one based on~\cite{Magnification_FOCS19}.

In both proofs, the hardness magnification is achieved by constructing an oracle circuit for $\MCSP$. The most interesting part of the first proof is that it gives a \emph{conditional} construction assuming $\QPtime \subseteq \Ppoly$. While the oracle circuit construction can be made \emph{unconditional} (as in the second proof), it illustrates a potentially more applicable approach: proving the hardness magnification theorem while assuming the target circuit lower bound is false (i.e., $\NQPtime \subseteq \NC$).

\subsubsection{Reduction Based Approach from~\cite{OS18_mag_first}}

In the initial magnification theorem \cite[Theorem 1]{OS18_mag_first}, approximate $\MCSP$ was shown to admit hardness magnification phenomena. Here we present a similar hardness magnification theorem for a worst-case version of \MCSP.

A natural way of reducing worst-case \MCSP to approximate \MCSP is to apply error-correcting codes. Error-correcting codes map a hard Boolean function to a Boolean function which is hard on average. A problem with this approach is that error-correcting codes do not guarantee that an easy Boolean function will be mapped to an easy Boolean function. 
Our main idea is to enforce the latter property with an extra assumption $\QPtime\subseteq\Ppoly$.
Here, \QPtime denotes $\TIME[n^{\log^{O(1)} n}]$. Similarly, \NQPtime will stand for $\NTIME[n^{\log^{O(1)} n}]$.  

We will use the following explicit error-correcting code.
\begin{theorem}[Explicit linear error-correcting codes \citep{Justesen72_tit_journals, SipserS96_tit_journals}]\label{thm:ecc}
There exists a sequence $\{E_N\}_{N \in \mathbb{N}}$ of error-correcting codes $E_N \colon \{0,1\}^N \to \{0,1\}^{M(N)}$ with the following properties:
\begin{itemize}
\item $E_N(x)$ can be computed by a uniform deterministic algorithm running in time $\mathsf{poly}(N)$.
\item $M(N) = b \cdot N$ for a fixed $b \geq 1$.
\item There exists a constant $\delta > 0$ such that any codeword $E_N(x) \in \{0,1\}^{M(N)}$ that is corrupted on at most a $\delta$-fraction of coordinates can be uniquely decoded to $x$ by a uniform deterministic algorithm $D$ running in time $\mathsf{poly}(M(N))$.
\item Each output bit is computed by a parity function: for each input length $N \geq 1$ and for each coordinate $i \in [M(N)]$, there exists a set $S_{N,i} \subseteq [N]$ such that for every $x \in \{0,1\}^N$, $$E_N(x)_i = \bigoplus_{j \in S_{N,i}} \;x_j.$$
\end{itemize}
\end{theorem}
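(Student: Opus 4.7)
The plan is to instantiate a classical construction of asymptotically good binary linear codes with polynomial-time encoding and decoding, such as Justesen's concatenated code or the Sipser--Spielman expander code cited in the statement. I would proceed with Justesen's construction since its properties align most transparently with the four bullets.

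Concretely, I would set the outer code to be a Reed--Solomon code $\mathrm{RS} \colon \mathbb{F}_{2^k}^{K} \to \mathbb{F}_{2^k}^{n}$ of rate $1/2$, with $k = \Theta(\log N)$ and $n = \Theta(N/k)$. Reed--Solomon is $\mathbb{F}_{2^k}$-linear (hence $\mathbb{F}_2$-linear under the identification $\mathbb{F}_{2^k} \cong \mathbb{F}_2^k$), meets the Singleton bound, and admits polynomial-time encoding and Berlekamp--Massey decoding. I would then concatenate each symbol with an inner code from the Wozencraft ensemble, each of which is a linear map $C_i \colon \mathbb{F}_2^k \to \mathbb{F}_2^{2k}$ of rate $1/2$; a constant fraction of the ensemble meets the Gilbert--Varshamov bound, so one may fix explicit inner codes of relative distance $\delta_0 > 0$ (either by a derandomized search within the ensemble, or by precomputing in time $2^{O(k)} = \mathrm{poly}(N)$). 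This produces a concatenated $\mathbb{F}_2$-linear code of block length $M(N) = 2kn = b \cdot N$ for a fixed $b$, establishing the first two bullets.

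For the third bullet, I would invoke Zyablov's theorem (or do a direct two-step calculation) to obtain constant relative distance $\delta = \Omega(\delta_0)$, and then decode as follows: brute-force each inner block in time $2^{O(k)} = \mathrm{poly}(N)$ to recover most outer symbols correctly, then apply Berlekamp--Massey to the outer Reed--Solomon code (or Forney's generalized minimum distance decoding for a better error radius). For the fourth bullet, any $\mathbb{F}_2$-linear map $E_N$ is specified by a generator matrix $G \in \mathbb{F}_2^{M(N) \times N}$, so each output bit satisfies $E_N(x)_i = \bigoplus_{j \in S_{N,i}} x_j$ with $S_{N,i} = \{\, j : G_{ij} = 1 \,\}$. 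This parity structure is automatic once linearity is ensured.

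I expect no essential obstacle: the statement is a well-known fact from coding theory being quoted. The only subtlety worth attention is making the mapping $N \mapsto E_N$ uniform and explicit, so that the generator matrix and the sets $S_{N,i}$ are computable in deterministic polynomial time; this is handled either by the derandomized search over the Wozencraft ensemble above or, perhaps more cleanly, by switching to Sipser--Spielman expander codes built from explicit constant-degree bipartite expanders, which give linear-time encoding and decoding and produce the required parity structure directly from the parity-check and generator matrices.
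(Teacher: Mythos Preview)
Your proposal is correct and essentially follows the constructions in the cited references. Note, however, that the paper itself does not prove this theorem: it is stated as a known result with citations to Justesen and Sipser--Spielman and used as a black box, so there is no ``paper's own proof'' to compare against beyond those original sources, which your sketch accurately reflects.
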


Under the assumption that $\QPtime\subseteq\Ppoly$,
we present an efficient reduction from worst-case \MCSP to approximate \MCSP:
Given the truth table of a function $f$, we simply map it to $E_N(\ttable(f))$.
The following lemma establishes the correctness of this reduction.

\begin{lemma}[Reducing worst-case \MCSP to approximate \MCSP]\label{lem:red}
Assume $\QPtime\subseteq\Ppoly$. Then the error-correcting code $E_N$ from Theorem \ref{thm:ecc} satisfies the following:

  \begin{enumerate}
    \item
      $f_n\in\Circuit[2^{n^{1/3}}]\Rightarrow$ $E_N(\ttable(f_n))\in\Circuit[2^{\sqrt{m}}]$,%
      \footnote{
        Here we identify $E_N(\ttable(f_n))$ with the function whose truth table is $E_N(\ttable(f_n))$.
      }
    \item
      $f_n\not\in\Circuit[2^{n^{2/3}}]\Rightarrow$ $E_N(\ttable(f_n))$ is hard to $(1-\delta)$-approximate by $2^{\sqrt{m}}$-size circuits,
  \end{enumerate}
  where $m = \Theta(n)$.
%
%
\end{lemma}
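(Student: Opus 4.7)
The plan is to use the hypothesis $\QPtime \subseteq \Ppoly$ in both directions: it converts an exponential-time encoding/decoding computation on the truth table of a small circuit into a polynomial-size circuit in the input size of that smaller circuit, which is precisely what we need to stay below the target size $2^{\sqrt{m}}$.

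\medskip

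\noindent \emph{Item 1 (easy direction).} Define the parameterized Boolean function family $H(\langle C, i \rangle) := E_N(\ttable(f_C))_i$, where $C$ is a Boolean circuit on $n$ variables of size at most $s := 2^{n^{1/3}}$, $N = 2^n$, and $i \in [M(N)]$. Given $\langle C, i \rangle$, one can compute the entire truth table $\ttable(f_C) \in \{0,1\}^N$ in time $2^{O(n)}$ by brute-force evaluation of $C$, then invoke the polynomial-time encoder of Theorem \ref{thm:ecc} to produce $E_N(\ttable(f_C))_i$ in additional $\mathsf{poly}(N) = 2^{O(n)}$ time. The input length is $|C| + \lceil \log M \rceil = \Theta(s)$, so $2^{O(n)} = s^{O(\log^2 s)}$ is quasi-polynomial in the input size; hence $H \in \QPtime$. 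Invoking $\QPtime \subseteq \Ppoly$ yields circuits of size $\mathsf{poly}(s) = 2^{O(n^{1/3})}$ for $H$. Hard-wiring the promised small circuit $C$ for $f_n$ leaves a circuit on inputs $i \in \{0,1\}^m$ of size $2^{O(n^{1/3})}$, which is at most $2^{\sqrt{m}}$ for all sufficiently large $n$ since $m = \Theta(n)$.

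\medskip

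\noindent \emph{Item 2 (hardness-preservation direction).} I would argue by contrapositive. Suppose a circuit $A$ of size $t := 2^{\sqrt{m}}$ $(1-\delta)$-approximates $E_N(\ttable(f_n))$, i.e., $\ttable(A)$ differs from the codeword $E_N(\ttable(f_n))$ on at most a $\delta$-fraction of coordinates. Consider the parameterized function $G(\langle A, y \rangle) := D(\ttable(A))_y$, where $A$ is a circuit on $m$ bits of size at most $t$ and $y \in [N]$. Given $\langle A, y \rangle$ one can enumerate all $2^m$ inputs to $A$ to form $\ttable(A) \in \{0,1\}^M$ in time $\mathsf{poly}(M) \cdot t = 2^{O(n)}$, then invoke the efficient decoder $D$ from Theorem \ref{thm:ecc} in another $\mathsf{poly}(M) = 2^{O(n)}$ steps, and return the $y$-th bit. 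The input length is $|A| + n = \Theta(t) = \Theta(2^{\sqrt{n}})$, so the running time $2^{O(n)}$ is quasi-polynomial in the input size; hence $G \in \QPtime$. Applying $\QPtime \subseteq \Ppoly$ yields circuits for $G$ of size $\mathsf{poly}(t) = 2^{O(\sqrt{n})}$; hard-wiring the assumed $A$ produces a circuit on inputs $y \in \{0,1\}^n$ of the same size computing $D(\ttable(A))_y = f_n(y)$ (by the unique decoding guarantee since the error is at most $\delta$). This contradicts $f_n \notin \Circuit[2^{n^{2/3}}]$ because $2^{O(\sqrt{n})} \le 2^{n^{2/3}}$ for large $n$.

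\medskip

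\noindent The main technical point — not so much an obstacle as a subtlety — is the accounting of input sizes relative to running times, so that the $\QPtime \subseteq \Ppoly$ collapse produces circuits strictly below $2^{\sqrt{m}}$ in item 1 and strictly below $2^{n^{2/3}}$ in item 2. The gap between $n^{1/3}$, $\sqrt{n}$, and $n^{2/3}$ in the statement is precisely tuned to absorb the polynomial blow-up of the $\Ppoly$ simulation in each direction. A minor bookkeeping issue is the mismatch between $M(N) = bN$ and a power of two when defining $g_m$ on $m = \lceil \log M \rceil$ bits; this can be resolved by standard padding (e.g.\ defining the extra coordinates as fixed constants), which does not affect either the encoding circuit size or the decodability argument.
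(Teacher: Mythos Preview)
Your proof is correct and follows essentially the same approach as the paper: in each direction you define the map $(C,i)\mapsto E_N(\ttable(C))_i$ (resp.\ $(A,y)\mapsto D(\ttable(A))_y$), observe that it runs in time $2^{O(n)}$ on an input of length $2^{\Theta(n^{1/3})}$ (resp.\ $2^{\Theta(\sqrt{n})}$) and is therefore in $\QPtime\subseteq\Ppoly$, and then hard-wire the small circuit to obtain the desired size bound. The paper's argument is identical in structure and parameter accounting; your additional remarks about padding $M(N)$ to a power of two are a harmless refinement.
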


\begin{proof}
  For the first implication we consider the map
  $$C,i\mapsto E_N(\ttable(C))_i$$
  where $C$ is a circuit with $n$ inputs and size $2^{n^{1/3}}$, $i\in\{0,1\}^{m}$, and $m=\log |E_N|$.
  The map takes an input of length $2^{O(n^{1/3})}$,
  and
  is computable in time $2^{O(n)}$; hence the map is in $\QPtime\subseteq\Ppoly$.
  Thus, there exists a circuit $F$ of size $2^{O(n^{1/3})}$ that,
  taking
  the description of a circuit $C$ of size $2^{n^{1/3}}$ and $i \in \binset^m$ as input,
  outputs the $i$th bit of $E_N(\ttable(C))$.
  Therefore if $f_n$ is computed by a circuit $C$ of size $2^{n^{1/3}}$,
  the function $i \mapsto E_N(\ttable(f_n))_i$ is computable by a circuit $F(C, \text{-})$ of size $2^{O(n^{1/3})}<2^{\sqrt{m}}$. 

  The second implication is obtained in a similar way by considering the map
  $$C,i\mapsto D_N(\ttable(C))_i$$
  where $C$ is a circuit with $m=\log |E_N|$ inputs and size $2^{\sqrt{m}}$, $i\in\{0,1\}^n$ and $D_N$ is an efficient decoder of $E_N$. The new map is computable in time $2^{O(m)}$ and again is in $\QPtime\subseteq\Ppoly$. Therefore if $E_N(\ttable (f_n))$ is $(1-\delta)$-approximated by a circuit $C$ of size $2^{\sqrt{m}}$, $f_n$ is computable by a circuit of size $2^{O(\sqrt{m})}<2^{n^{2/3}}$.
\end{proof}

Since the error-correcting code of Theorem \ref{thm:ecc} can be computed by using one layer of XOR gates, we obtain the following corollary.
\begin{corollary}
  \label{cor_WorstMCSPtoApproxMCSP}
  If $\QPtime\subseteq\Ppoly$,
  then
  $\MCSP[2^{n^{1/3}},2^{n^{2/3}}]$ 
  is many-one-reducible to $\MCSP[(2^{\sqrt{n}},0), (2^{\sqrt{n}},\delta)]$ by using a linear-size circuit of $\mathsf{XOR}$ gates.
\end{corollary}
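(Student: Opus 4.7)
The plan is to exhibit the reduction explicitly as the evaluation of the error-correcting code $E_N$ from \cref{thm:ecc}, and to argue that (i) it can be implemented by a linear-size single layer of $\mathsf{XOR}$ gates, and (ii) it sends YES instances of the worst-case gap problem to YES instances of the approximate gap problem (and similarly for NO instances), appealing to \cref{lem:red}.

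First I would define the reduction $R$ on an input $y \in \binset^N$ (thought of as the truth table of a function $f_n$ on $n = \log N$ variables) by $R(y) := E_N(y) \in \binset^{M(N)}$, where $M(N) = b \cdot N$. By the last bullet of \cref{thm:ecc}, for each coordinate $i \in [M(N)]$ there is a set $S_{N,i} \subseteq [N]$ with $E_N(y)_i = \bigoplus_{j \in S_{N,i}} y_j$. Thus the $i$-th output of $R$ is computed by a single $\mathsf{XOR}$ gate of arbitrary fan-in reading the appropriate subset of the inputs, so the whole reduction is implemented by a depth-$1$ circuit of $M(N) = O(N)$ unbounded fan-in $\mathsf{XOR}$ gates, which is a linear-size $\mathsf{XOR}$ circuit as required.

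It remains to check that the reduction preserves the promise. Let $m := \log M(N) = n + \log b = \Theta(n)$, so that $R(y) \in \binset^{2^m}$ is naturally identified with the truth table of a function on $m$ variables. If $y$ is a YES instance of $\MCSP[2^{n^{1/3}}, 2^{n^{2/3}}]$, then $f_y \in \Circuit[2^{n^{1/3}}]$, so by the first item of \cref{lem:red} (which uses $\QPtime \subseteq \Ppoly$) the function whose truth table is $E_N(y)$ lies in $\Circuit[2^{\sqrt m}]$, making $R(y)$ a YES instance of $\MCSP[(2^{\sqrt m},0),(2^{\sqrt m},\delta)]$. If instead $y$ is a NO instance, i.e., $f_y \notin \Circuit[2^{n^{2/3}}]$, then the second item of \cref{lem:red} gives that $E_N(y)$ is not $(1-\delta)$-approximable by any circuit of size $2^{\sqrt m}$, so $R(y)$ is a NO instance. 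Using $m = \Theta(n)$ to match parameters yields the stated reduction to $\MCSP[(2^{\sqrt n},0),(2^{\sqrt n},\delta)]$.

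There is no real obstacle here: all the content is packaged into \cref{thm:ecc} (the parity structure of $E_N$ and its linear blowup) and \cref{lem:red} (the preservation of the gap under the assumption $\QPtime \subseteq \Ppoly$). The only minor point to keep track of is the rescaling from $m$ to $n$ in the size bound $2^{\sqrt{\cdot}}$, which is harmless because $m - n = O(1)$ and the parameters $2^{n^{1/3}}, 2^{n^{2/3}}$ leave plenty of slack on both sides of the gap.
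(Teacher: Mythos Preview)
Your proposal is correct and matches the paper's approach exactly: the corollary is stated as an immediate consequence of \cref{lem:red} together with the fact (from \cref{thm:ecc}) that each output bit of $E_N$ is a parity, so the reduction is a single layer of $O(N)$ $\mathsf{XOR}$ gates. Your expanded write-up, including the remark that $m = n + O(1)$ absorbs the parameter shift, is a faithful elaboration of what the paper leaves implicit.
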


We are ready to prove the main result of this section:
\begin{theorem}[Magnification for worst-case \MCSP via error-correcting codes]\label{thm:mcspecc}
  \hspace{0em}
  \\
  Assume that $\MCSP[2^{n^{1/3}},2^{n^{2/3}}]\not\in \Formula$-$\mathsf{XOR}[N^{1 + \eps}]$ for some constant $\eps > 0$.
  Then either $\QPtime\not\subseteq\Ppoly$ or $\NP\not\subseteq\NC$.
  In particular, $\NQPtime\not\subseteq\NC$. 
\end{theorem}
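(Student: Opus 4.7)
The plan is to prove the contrapositive. Assuming $\QPtime \subseteq \Ppoly$ \emph{and} $\NP \subseteq \NC$, I will show that for every constant $\eps > 0$ the problem $\MCSP[2^{n^{1/3}}, 2^{n^{2/3}}]$ lies in $\Formula$-$\mathsf{XOR}[N^{1+\eps}]$, contradicting the hypothesis. The ``in particular'' clause is immediate, since $\NQPtime \subseteq \NC$ forces both $\QPtime \subseteq \NQPtime \subseteq \NC \subseteq \Ppoly$ and $\NP \subseteq \NQPtime \subseteq \NC$.

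The first step is to apply \cref{cor_WorstMCSPtoApproxMCSP}: under $\QPtime \subseteq \Ppoly$ there is a linear-size XOR reduction sending the worst-case problem on $N$-bit inputs to the approximate problem $\MCSP[(2^{\sqrt{m}},0),(2^{\sqrt{m}},\delta)]$ on $M = \Theta(N)$-bit inputs (so $m = n + O(1)$), with each output bit a single $\mathsf{XOR}$ of original input bits. Consequently, placing the approximate problem in $\Formula[M^{1+\eps/2}]$ and then pushing those parity leaves down onto the original input bits yields a $\Formula$-$\mathsf{XOR}$ circuit of size $O(N^{1+\eps/2}) \leq N^{1+\eps}$ for worst-case $\MCSP$. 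It therefore suffices to solve the approximate version by near-linear formulas under the additional assumption $\NP \subseteq \NC$.

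To construct the formula I would use the NP-oracle-with-sampling strategy of \citep[Theorem 1]{OS18_mag_first}. Fix a sample size $K = \Theta(2^{\sqrt{m}} \sqrt{m}/\delta)$ and a repetition count $T = \Theta(M)$, and consider fixed subsets $S_1,\dots,S_T \subseteq [M]$ each of size $K$. The formula outputs
\[
  \bigwedge_{j=1}^T \Pi_j(y|_{S_j}),
\]
where $\Pi_j$ is the NP predicate ``there exists a circuit of size $2^{\sqrt{m}}$ whose evaluation on the hardwired positions in $S_j$ matches the $K$ queried input bits.'' For any YES instance the single global witness circuit satisfies every $\Pi_j$. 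For a fixed NO instance $y$, a union bound over the $2^{O(2^{\sqrt{m}}\sqrt{m})}$ candidate size-$2^{\sqrt{m}}$ circuits shows that a uniformly random $K$-subset is consistent with some such circuit with probability at most $1/100$; taking $T$ independent samples and then unionizing over the $\leq 2^M$ possible NO strings gives a positive-probability event in which $(S_1,\dots,S_T)$ simultaneously rejects every NO input, so one such good sequence can be hardwired into the non-uniform formula.

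Finally, under $\NP \subseteq \NC$, the NP predicate underlying each $\Pi_j$, whose instance has length $\mathsf{poly}(K) = 2^{O(\sqrt{m})}$, is computed by a formula of size $2^{O(\sqrt{m})}$. Hence the total formula size is $T \cdot 2^{O(\sqrt{m})} = M \cdot 2^{O(\sqrt{m})} = M^{1+o(1)}$, comfortably below $M^{1+\eps/2}$ for large enough $m$. I expect the only nontrivial point to be the sampling/union-bound bookkeeping in the third paragraph, but the parameters are highly forgiving because $K$ and the NP-check instance length are $2^{O(\sqrt{m})} \ll M^{\eps}$, so the polynomial blow-up implicit in $\NP \subseteq \NC$ is easily absorbed by the exponent gap between the two $\MCSP$ thresholds $2^{n^{1/3}}$ and $2^{n^{2/3}}$.
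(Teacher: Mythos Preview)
Your proposal is correct and follows essentially the same route as the paper: contrapositive, invoke \cref{cor_WorstMCSPtoApproxMCSP} under $\QPtime\subseteq\Ppoly$ to reduce to the approximate problem via a single XOR layer, then use $\NP\subseteq\NC$ to get near-linear formulas for the approximate problem. The only difference is cosmetic: the paper cites \cite[Lemma 16]{OS18_mag_first} as a black box for the second step, whereas you unpack that lemma's sampling-plus-$\NP$-predicate argument explicitly (and accurately).
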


\begin{proof}
  We prove the contrapositive.
  Assume that $\QPtime \subseteq \Ppoly$ and $\NP \subseteq \NC$.
  \cite[Lemma 16]{OS18_mag_first} shows that
  $\NP \subseteq \NC$ implies $\MCSP[(2^{\sqrt{n}},0), (2^{\sqrt{n}},\delta)] \in \Formula[N^{1 + \eps}]$ for any constant $\eps > 0$.
  By combining this with Corollary \ref{cor_WorstMCSPtoApproxMCSP},
  we obtain that $\MCSP[2^{n^{1/3}},2^{n^{2/3}}] \in \Formula$-$\mathsf{XOR}[O(N^{1 + \eps})]$.
\end{proof}

\subsubsection{Kernelization Based Approach from~\cite{Magnification_FOCS19}}

Now we give another proof of \HMFrontierFormula{1} by adapting techniques from~\cite{Magnification_FOCS19}. 
In fact, the following proof implies (under a straightforward adjustment of parameters) both \HMFrontierFormula[1] and \HMFrontierRandFormula[1].

\begin{theorem}[Magnification for worst-case \MCSP via kernelization for $\GapAND$-$\Formula$-$\mathsf{XOR}$]\label{thm:mcspker}
	\hspace{0em}
	Assume that $\MCSP[2^{n^{1/3}}]\not\in \GapAND_{O(N)}$-$\Formula$-$\mathsf{XOR}[N^{\eps}]$ for some constant $\eps > 0$. Then $\NQPtime\not\subseteq\NC$.
\end{theorem}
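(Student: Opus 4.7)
The plan is to prove the contrapositive: assuming $\NQPtime \subseteq \NC$, the goal is to build, for every constant $\eps>0$, a $\GapAND_{O(N)}$-$\Formula$-$\mathsf{XOR}[N^\eps]$ circuit for $\MCSP[2^{n^{1/3}}]$, directly contradicting the hypothesis.

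\emph{Unconditional kernelization.} Adapting the sparse-$\NP$ magnification framework of~\cite{Magnification_FOCS19}, the first step is to build \emph{unconditionally} an oracle circuit for $\MCSP[2^{n^{1/3}}]$ of the form $\GapAND_{O(N)}$-$L^\ast$-$\mathsf{XOR}$, where $L^\ast \in \NQPtime$ and every $L^\ast$-oracle gate has fan-in $N' = 2^{O(n^{1/3})}$. The guiding observation is that the YES-instances of $\MCSP[s]$ with $s=2^{n^{1/3}}$ form a very sparse subset of $\binset^N$ (at most $2^{O(s\log s)}$ truth tables among $2^N$ strings), so an $O(N)$-long family of $\mathsf{XOR}$-projections can compress $y\in\binset^N$ into $N'$-bit fingerprints without collapsing YES and NO instances; each of the $O(N)$ inner checks then asks whether its fingerprint is consistent with the truth table of \emph{some} $n$-input circuit of size $s$, a question decidable in $\NQPtime$ by nondeterministically guessing the circuit and verifying consistency. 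A standard gap-amplification/anti-checker analysis in the style of~\cite{OPS19_CCC,Magnification_FOCS19} ensures the outer $\GapAND$ rejects every NO-instance on at least a $1/10$ fraction of its inputs.

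\emph{Plugging in the $\NC$ assumption.} Under $\NQPtime\subseteq\NC$, the language $L^\ast$ admits $\poly$-size De Morgan formulas; on inputs of length $N'=2^{O(n^{1/3})}$ this yields formulas of size $(N')^{O(1)}=2^{O(n^{1/3})}=N^{o(1)}$, in particular at most $N^\eps$ for every constant $\eps>0$. Substituting them for the oracle gates in the kernelized circuit turns it into a genuine $\GapAND_{O(N)}$-$\Formula$-$\mathsf{XOR}[N^\eps]$ circuit for $\MCSP[2^{n^{1/3}}]$, which finishes the contradiction.

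\emph{Main obstacle.} The technical heart is the unconditional kernelization: exhibiting an $O(N)$-fold $\GapAND$ of local checks in which (i) each check reads only an $N^{o(1)}$-bit $\mathsf{XOR}$-projection of the input, (ii) the inner check language stays within $\NQPtime$ (rather than in a higher level of $\cPH$), and (iii) the $\GapAND$ gap is preserved for worst-case (non-gap) $\MCSP$ instances. The analogous kernelizations for $\MKtP$ and for approximate $\MCSP$ in~\cite{OPS19_CCC,Magnification_FOCS19} provide a template, but the worst-case variant forces the sparsity-plus-hashing argument outlined above in place of a direct random-sampling reduction, and aligning the three parameters simultaneously is the step most likely to demand nontrivial adjustment.
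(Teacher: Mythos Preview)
Your overall plan matches the paper's proof: exploit the sparsity of the YES instances, hash via $\mathsf{XOR}$ projections into a short fingerprint, define an $\NQPtime$ inner-check language, and substitute $\NC$ formulas for the oracle under the assumption. The one step you leave imprecise is how the $\GapAND$ gap arises: it is not an anti-checker argument (that is a different technique used for \HMFrontieralmostFormula) but comes directly from the constant relative distance of a linear error-correcting code $E_N$. Concretely, the paper's $i$-th inner check receives the common hash $H(x)$ together with the single codeword coordinate $E_N(x)_i$, and nondeterministically guesses a YES instance $y$ with $H(y)=H(x)$ and $E_N(y)_i=E_N(x)_i$; on a NO input $x$, the (at most one) YES instance $x'$ sharing its hash has $E_N(x')$ disagreeing with $E_N(x)$ on a constant fraction of coordinates $i$, so a constant fraction of the $O(N)$ checks reject, and a routine sampling step boosts this to the $0.9$ gap. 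This ECC-distance observation is the precise mechanism your sketch attributes only to ``standard gap-amplification/anti-checker analysis.''
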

\begin{proofsketch}
	\newcommand{\Nyes}{N_{\sf yes}}
	\newcommand{\Syes}{S_{\sf yes}}
	The following proof is just an adaption of Theorem 3.4~of~\cite{Magnification_FOCS19}.
	
	Let $N = 2^n$ and $s = 2^{n^{1/3}} = 2^{(\log N)^{1/3}}$. Let $S = \MCSP[2^{n^{1/3}}]^{-1}(1)$ (that is, all yes instances of $\MCSP[2^{n^{1/3}}]$ on inputs of length $N$), and $m = |S|$. We have that $m \le s^{O(s)}$. Let $E_N$ be the error correcting code from Theorem~\ref{thm:ecc}. Recall that $E_N $ maps from $\{0,1\}^{N}$ to $\{0,1\}^{b \cdot N}$ for a constant $b$.
	
	Let $T = c_1 \cdot \log m$ for a large enough constant $c_1$. Suppose we pick $T$ random indexes $I = (i_1,i_2,\dotsc,i_T)$ from $[b \cdot N]$ independently and uniformly at random. Given $x \in \{0,1\}^{N}$, let $H_{I}(x) := (E_N(x)_{i_1},E_N(x)_{i_2},\dotsc,E_N(x)_{i_T})$.
	
	By a Chernoff bound and a union bound, we can see that with high probability over random choices of $I$, all inputs from $S$ are mapped into distinct strings in $\{0,1\}^{T}$ by $H_{I}$. We fix such a good collection of indexes $I_{\sf good}$.
	
	Now, consider the following language 
	\[
	L_{\sf check} : [b \cdot N]^{T} \times \{0,1\}^{T} \times [b \cdot N] \times \{0,1\} \to \{0,1\},
	\]
	which takes inputs $I$ (hash function coordinates), $w$ (hash value), $i$ (index), and $z$ (check-bit). $L_{\sf check}(I,w,i,z)$ guesses an input $y \in \{0,1\}^N$, and accepts if $H_{I}(y) = w$, $\MCSP[2^{n^{1/3}}](y) = 1$, and $E_N(y)_i = z$. It is easy to see that $L_{\sf check}$ is in $\NQPtime$.
	
	Given $x \in \{0,1\}^N$, we claim that $\MCSP[2^{n^{1/3}}](x) = 1$ if and only if $L_{\sf check}(I_{\sf good},H_{I_{\sf good}}(x),i,E_N(x)_i) = 1$ for all $i \in [b \cdot N]$. 
	
	\begin{enumerate}
		\item When $\MCSP[2^{n^{1/3}}](x) = 1$, on the particular guess $y = x$, we have $L_{\sf check}(I_{\sf good},H_{I_{\sf good}}(x),i,E_N(x)_i)$ accepts for all $i \in [b \cdot N]$.
		
		\item When $\MCSP[2^{n^{1/3}}](x) = 0$, we set $z = H_{I_{\sf good}}(x)$. By our choice of $I_{\sf good}$, there is at most one $x'$ satisfying both $\MCSP[2^{n^{1/3}}](x') = 1$ and $H_{I_{\sf good}}(x') = z$. If there is no such $x'$, then all $L_{\sf check}(I_{\sf good},H_{I_{\sf good}}(x),i,x_i)$ reject. Otherwise, we have $x \ne x'$. Let $i$ be an index such that $E_N(x)_i \ne E_N(x')_i$. Then $L_{\sf check}(I_{\sf good},H_{I_{\sf good}}(x),i,E_N(x)_i)$ rejects.
	\end{enumerate}

	Moreover, in the second case, since $E_N(x)$ is an error correcting code, $L_{\sf check}(I_{\sf good},H_{I_{\sf good}}(x),i,E_N(x)_i)$ indeed rejects at least for a constant fraction of $i \in [b \cdot N]$.
	
	Now suppose $\NQPtime \subseteq \NC$ for the sake of contradiction. Since $H_{I_{\sf good}}(x)$ can be computed by $T = N^{o(1)}$ many XOR gates ($I_{\sf good}$ is hardwired into the circuit), we can construct $b \cdot N$ $\Formula$-$\mathsf{XOR}[N^{o(1)}]$ circuits $C_1,C_2,\dotsc,C_{b \cdot N}$, such that if  $\MCSP[2^{n^{1/3}}](x) = 1$ then $C_i(x) = 1$ for all $x$, and otherwise $C_i(x) = 0$ for a constant fraction of $i$'s. 
	
	By a simple error reduction via random sampling, we can construct $m = O(N)$ $\Formula$-$\mathsf{XOR}[N^{o(1)}]$ circuits $D_1,D_2,\dotsc,D_{m}$, such that if $\MCSP[2^{n^{1/3}}](x) = 1$ then $D_i(x) = 1$ for all $x$, and otherwise $D_i(x) = 0$ for at least a $0.9$ fraction of inputs. Hence, we have $\MCSP[2^{n^{1/3}}] \in \GapAND_{O(N)}$-$\Formula$-$\mathsf{XOR}[N^{o(1)}]$, a contradiction to the assumption.
\end{proofsketch}

\begin{remark}
	Note that $\GapAND_{O(N)}$-$\Formula$-$\mathsf{XOR}[N^{\eps}]$ circuits are special cases of both $\Formula$-$\mathsf{XOR}[N^{1+\eps}]$ circuits and $\GapAND_{O(N)}$-$\Formula[N^{2+\eps}]$ circuits. Therefore, the above proof implies both \HMFrontierFormula[1] and \HMFrontierRandFormula[1].
\end{remark}

\subsection{$\mathsf{NP} \nsubseteq \mathsf{NC}^1$ and Almost-Formula Lower Bounds for $\MCSP$}\label{ss:almost_formulas_magnification}

Recall that near-quadratic \emph{formula} lower bounds are known for $\MCSP[2^{n^{o(1)}},2^{n^{o(1)}}]$. On the other
hand, 
 a hardness magnification obtained by a super efficient construction of anticheckers established in \cite{OPS19_CCC} states that $\NP\subseteq\Ppoly$ implies almost linear-size circuits for a worst-case version of parameterized $\MCSP[2^{n^{o(1)}},2^{n^{o(1)}}]$. Consequently, if we could make the hardness magnification work for formulas, 
$\NP\not\subseteq\NC$ would follow. We make a step in this direction by showing that $\NP\subseteq\NC$ implies the existence of almost-formulas of almost linear size solving the worst-case $\MCSP[2^{n^{o(1)}},2^{n^{o(1)}}]$, cf.~Theorem \ref{pnp}. This is established by a more detailed analysis of the proof from \cite{OPS19_CCC} extended with an application of the Valiant-Vazirani Isolation Lemma (cf.~\cite[Lemma 17.19]{AB09}) in the process of selecting anticheckers. We also observe that almost-formulas of subquadratic size cannot solve \PARITY, cf.~Theorem \ref{thm:parityflalikelb}. These results yield \HMFrontieralmostFormula{1} and \HMFrontieralmostFormula{3}.

We start the presentation with a lemma needed to derive \HMFrontieralmostFormula{1}.

\begin{lemma}[Anticheckers]\label{lem} Assume $\NP\subseteq\NC$. Then for any $\lambda\in (0,1)$ there are 
circuits $\{C_{2^n}\}_{n=1}^{\infty}$ of size $2^{n+O(n^\lambda)}$ which given $\ttable(f) \in \{0,1\}^N$, outputs $2^{O(n^\lambda)}$ $n$-bit strings $y_1,\dots,y_{2^{O(n^\lambda)}}$ together with bits $f(y_1),\dots,f(y_{2^{O(n^\lambda)}})$ forming a set of anticheckers for $f$, i.e. if $f$ is hard for circuits of size $2^{n^\lambda}$ then every circuit of size $2^{n^\lambda}/2n$ fails to compute $f$ on one of the inputs $y_1,\dots,y_{2^{O(n^\lambda)}}$. Moreover, each $y_i,f(y_i)$ is generated by a subcircuit of $C_{2^n}$ with inputs $y_1,\dots,y_{i-1},f(y_1),\dots,f(y_{i-1}),\ttable(f)$ whose only gates with fanout $>1$ are $y_1,\dots,y_{i-1},f(y_1),\dots,f(y_{i-1})$.
\end{lemma}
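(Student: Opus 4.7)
The plan is to revisit the anticheckers construction of~\cite{OPS19_CCC} and verify that, under $\NP \subseteq \NC$ together with the Valiant-Vazirani Isolation Lemma, it can be laid out as an almost-formula of total size $2^{n + O(n^{\lambda})}$ with fanout $> 1$ only on the previously produced anticheckers. Recall the greedy scheme: at step $i$ we choose an input $y_i \in \binset^{n}$ that kills a constant fraction of the size-$s$ circuits consistent with $f$ on $(y_j, f(y_j))_{j<i}$, where $s = 2^{n^{\lambda}}$; after $k = 2^{O(n^{\lambda})}$ steps no size-$s/(2n)$ circuit survives, which gives the required antichecker property.

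The step-$i$ search can be recast, as in~\cite{OPS19_CCC}, as a single $\NP$ predicate: guess $y_i$ together with a constant-fraction subfamily of surviving circuits witnessing that $y_i$ splits the family in a way compatible with both possible values of $f(y_i)$. Crucially, this predicate depends only on $(y_j, f(y_j))_{j<i}$ and the candidate $y_i$, not on the full truth table $\ttable(f)$, so its input has length $O(kn) = 2^{O(n^{\lambda})}$. Under $\NP \subseteq \NC \subseteq \Formula[m^{O(1)}]$, it becomes a formula of size $2^{O(n^{\lambda})}$. To avoid chaining a sequential bit-by-bit search-to-decision reduction (which would force fanout on intermediate wires), I would apply the Valiant-Vazirani Isolation Lemma with a pairwise-independent hash fixed non-uniformly so that the antichecker witness is unique; each of the $n$ bits of $y_i$ then becomes an independent $\NP$ predicate on the same $2^{O(n^{\lambda})}$-size input, giving $n$ disjoint formulas of size $2^{O(n^{\lambda})}$ with no shared internal gates.

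To recover $f(y_i)$ without introducing fanout on the freshly computed bits of $y_i$, I build a formula-style multiplexer over $\ttable(f)$ that reads each bit at most once: for every $b \in \binset^{n}$, an independent sub-formula recomputes $y_i$ from scratch (size $n \cdot 2^{O(n^{\lambda})}$), compares the result with $b$, and selects the single bit $\ttable(f)[b]$ when equal. The $N$ resulting single-bit outputs are OR-ed together, giving per-iteration size $N \cdot 2^{O(n^{\lambda})} = 2^{n + O(n^{\lambda})}$. Summing over $k = 2^{O(n^{\lambda})}$ iterations yields the claimed total size $2^{n + O(n^{\lambda})}$, with the only fanout-$>1$ wires being the previous anticheckers $y_j, f(y_j)$, exactly as the lemma demands.

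The main obstacle I anticipate is adjusting the step-$i$ predicate so that Valiant-Vazirani isolation actually yields a unique, correct antichecker without needing to know $f(y_i)$ upfront -- in particular, ensuring that a progress-making $y_i$ exists that is compatible with both possible values of $f(y_i)$ at every step, possibly by relaxing the balanced-split condition to a weaker potential-function argument along the lines of~\cite{OPS19_CCC}. A secondary technical issue is amplifying the isolation probability below $2^{-O(kn + N)}$ so that a single non-uniform hash works simultaneously for all $k$ steps and all possible truth tables $\ttable(f)$; this is handled by independent repetitions of the hash trials followed by a union bound, and contributes only a $\poly(n)$ factor that is absorbed into the $2^{O(n^{\lambda})}$ terms.
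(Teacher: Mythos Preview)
Your overall architecture --- greedy anticheckers, an $\NP$-style predicate for the next antichecker, Valiant--Vazirani isolation, and a truth-table lookup for $f(y_i)$ --- matches the paper. The genuine gap is the step-$i$ predicate. You propose to ``guess $y_i$ together with a constant-fraction subfamily of surviving circuits,'' but there are $2^{\widetilde{O}(2^{n^\lambda})}$ circuits of size $2^{n^\lambda}/2n$, so any constant-fraction subfamily is far too large to fit in a $2^{O(n^\lambda)}$-bit witness. The progress condition is really a \emph{counting} statement (at most a $(1-1/4n)^i$ fraction of circuits survive), which is not an $\NP$ predicate on its face. The paper handles this via approximate counting with linear hash functions in the style of Je\v{r}\'abek: one guesses hash matrices defining an injection from (a Cartesian power of) the surviving circuits into a set of the target size; verifying injectivity is a $\Pi_1$ condition, so the full predicate lives in $\Sigma_2^p$ and collapses to a $2^{O(n^\lambda)}$-size formula under $\NP\subseteq\NC$.

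Your secondary obstacle --- making the predicate independent of $f(y_i)$ --- is also left unresolved, and the paper simply does not need it: it feeds $f(z)$ explicitly into the predicate and performs the exhaustive search $\bigvee_{z\in\{0,1\}^n}(\cdots \wedge F(\ldots,z,\ldots,f(z)))$ with $f(z)$ read directly from $\ttable(f)$, the same mechanism you already use for $f(y_i)$. This makes each bit of $y_i$ cost $2^{n+O(n^\lambda)}$ rather than $2^{O(n^\lambda)}$, but the total over $2^{O(n^\lambda)}$ iterations is unchanged. Two further points you miss: the paper uses a \emph{set} $\mathcal{R}$ of $2^{O(n^\lambda)}$ VV hashes together with an additional $\NP\subseteq\NC$ step to select the successful one (your ``$\poly(n)$ repetitions driving failure below $2^{-O(kn+N)}$'' is off --- beating $2^{-N}$ would require $\widetilde{\Omega}(2^n)$ trials), and there is a separate endgame once fewer than $2n^2$ circuits survive, where the remaining circuits are listed explicitly via an $\NP^{\coNP}$ computation and each is killed by a dedicated input found by exhaustive search.
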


\proof This proof follows \cite{OPS19_CCC}. Our contribution here is the ``moreover" part, but we also give a more succinct self-contained proof. For each Boolean function $f$ the desired set of anticheckers is known to exist, the only problem is to find it with a circuit of the desired size and formula-like form. In order to do so, we will simulate the proof of the existence of anticheckers, but make the involved counting constructive by using linear hash functions and the assumption $\NP\subseteq\NC$. Additionally, for the ``moreover'' part of the lemma, we will employ the Valiant-Vazirani Isolation Lemma (cf.~\cite[Lemma 17.19]{AB09}) in the process of selecting good anticheckers. 
\medskip

Let $\lambda\in (0,1)$ and $f$ be a Boolean function with $n$ inputs hard for circuits of size $2^{n^\lambda}$. For $j$ $n$-bit strings $y_1,\dots,y_j$ and $s\in [0,1]$, define a predicate $$P_f(y_1,\dots,y_j)[s]\text{ iff } \leq s\text{ fraction of all circuits of size }2^{n^\lambda}/2n\text{ compute }f\text{ on }y_1,\dots,y_j.$$ Further, let $R_f(y_1,\dots,y_j)$ be the number of circuits of size $2^{n^\lambda}/2n$ which do not make any error on $y_1,\dots,y_j$ when computing $f$. Note that $P_f$ and $R_f$ depend on $j$ values of $f$, not on the whole $\ttable(f)$, but for simplicity we do not display them.
\smallskip

\sloppy Suppose that given $\ttable(f)$ we already generated $y_1,\dots, y_{i-1}, f(y_1),\dots,f(y_{i-1})$ such that $P_f(y_1,\dots,y_{i-1})[(1-1/4n)^{i-1}]$ holds. For $i=1$ the generated set is empty. We want to find $y_i, f(y_i)$ such that $P_f(y_1,\dots,y_i)[(1-1/4n)^i]$. In order to do so, we will construct a formula $F(y_1,\dots,y_i,f(y_1),\dots,f(y_i))$ of size $2^{O(n^\lambda)}$ (if $i\le 2^{O(n^\lambda)}$) such that under the assumption $R_f(y_1,\dots,y_{i-1})\ge 2n^2$, $$F(y_1,\dots,y_i,f(y_1),\dots,f(y_i))=1\quad\Rightarrow\quad P_f(y_1,\dots,y_i)[(1-1/4n)^i]$$ $$P_f(y_1,\dots,y_{i-1})[(1-1/4n)^{i-1}]\quad \Rightarrow\quad \exists y_i, F(y_1,\dots,y_i,f(y_1),\dots,f(y_i))=1.$$ Assume for now that we already have such a formula $F$. We firstly show how to find $y_i,f(y_i)$ given $F$ by an exhaustive search through all $n$-bit strings in combination with Valiant-Vazirani Lemma. 

Consider a $2^{O(n^\lambda)}$-size formula $F^{r,h}(y_1,\dots,y_{i-1},z,f(y_1),\dots,f(y_{i-1}),f(z))$ computing the following predicate \begin{equation}\label{Frh}F(y_1,\dots,y_{i-1},z,f(y_1),\dots,f(y_{i-1}),f(z))\wedge h(z)=0^r\end{equation} where $z\in\{0,1\}^n$, $r\le n+2$ and $h\in \mathcal{H}_{n,r}$ for a pairwise independent efficiently computable hash function collection $\mathcal{H}_{n,r}$ from $\{0,1\}^{n}$ to $\{0,1\}^r$. Formula $F^{r,h}$ exists since $\NP\subseteq\NC$. By Valiant-Vazirani Lemma, for fixed $y_1,\dots, y_{i-1},f(y_1),\dots,f(y_{i-1})$, if $h$ is chosen randomly from $\mathcal{H}_{n,r}$ and $r$ randomly from $\{2,\dots,n+1\}$, then with probability $\ge 1/8n$, there is a unique $z$ satisfying (\ref{Frh}). Therefore, the probability that none of $2^{O(n^\lambda)}$ many randomly chosen tuples $r,h$ guarantees a unique solution is $<(1-1/8n)^{2^{O(n^\lambda)}}\le 1/2^{2^{O(n^\lambda)}/8n}$. That is, there exist a set $\mathcal{R}$ of $2^{O(n^\lambda)}$ tuples $r,h$ such that for each $y_1,\dots,y_{i-1},f(y_1),\dots,f(y_{i-1})$, at least one tuple $r,h$ from $\mathcal{R}$ will guarantee a unique solution. 
Consequently, for each $y_1,\dots,y_{i-1},f(y_1),\dots,f(y_{i-1})$ for at least one $r,h\in \mathcal{R}$ the following $2^{n+O(n^\lambda)}$-size formula 
$$\bigvee_{k=1,\dots,2^n} (b^k_j\wedge F^{r,h}(y_1,\dots,y_{i-1},b^k,f(y_1),\dots,f(y_{i-1}),f(b^k)),$$ where $b^k_j$ is the $j$th bit of the $k$th $n$-bit string $b^k$ (in the lexicographic order), outputs the $j$th bit of a good antichecker $y_i$. 
Since $\NP\subseteq\NC$, we can select the right $y_i$ from the $2^{O(n^\lambda)}$ candidate strings corresponding to tuples $r,h$ from $\mathcal{R}$ by applying a formula of size $2^{O(n^\lambda)}$ on top of them. Having $y_i$, a formula of size $poly(n)2^{n}$ with access to $\ttable(f)$ can generate $f(y_i)$.\def\lookslinethicanbesimplified{Formula $D^{r,h}$ is constructed as follows. Since $\NP\subseteq\NC$, for bits $u,v$, we have $2^{O(n^\lambda)}$-size formulas $D_{u,v}(y_1,\dots,y_{i-1},f(y_1),\dots,f(y_{i-1}))$ computing the predicate $$\exists c\ne c', F^{r,h}(y_1,\dots,y_{i-1},c,f(y_1),\dots,f(y_{i-1}),u)\wedge F^{r,h}(y_1,\dots,y_{i-1},c',f(y_1),\dots,f(y_{i-1}),v).$$ $\NP\subseteq\NC$ also implies the existence of $2^{O(n^\lambda)}$-size formulas $S_{u,v}(y_1,\dots,y_{i-1},f(y_1),\dots,f(y_{i-1}))$ which output such $c\ne c'$ whenever they exist. For each $u,v$, we can thus check if $D_{u,v}$ is consistent with $\ttable(f)$ by generating $c,c'$ and then applying $poly(n)2^n$-size formulas $$\bigwedge_{k=1,\dots,2^n} [(c=b^k\rightarrow u=f(b^k))\wedge (c'=b^k\rightarrow v=f(b^k))].$$ This yields $2^{n+O(n^\lambda)}$-size formulas $D^{r,h}$ and completes the process of generating the right string $y_i$.}

Iteratively, a circuit of size $2^{n+O(n^\lambda)}$ will generate $y_1,\dots,y_{2^{O(n^\lambda)}},f(y_1),\dots,f(y_{2^{O(n^\lambda)}})$ such that $P_f(y_1,\dots,y_{2^{O(n^\lambda)}})[(1-1/4n)^{2^{O(n^\lambda)}}]$ as long as $R_f(y_1,\dots,y_{2^{O(n^\lambda)}})\ge 2n^2$. Deciding whether $R_f(y_1,\dots,y_i)\ge 2n^2$ is in $\NPtime\subseteq\NC$ (on input $y_1,\dots,y_i,f(y_1),\dots,f(y_i),1^{2^{n^\lambda}}$), so there are formulas of size $2^{O(n^\lambda)}$ for it. Since $(1-1/4n)^{2^{O(n^\lambda)}}\le 1/2^{2^{O(n^\lambda)}/4n}$, we reach $R_f(y_1,\dots,y_i)<2n^2$ with $i\le 2^{O(n^\lambda)}$. 
When this happens, the remaining $<2n^2$ circuits of size $2^{n^\lambda}/2n$ can be generated by an $\NPtime^{\coNP}$ algorithm, and since $\NPtime\subseteq\NC$, by a formula of size $2^{O(n^{\lambda})}$. Finally, for each of the remaining circuits we can find an $n$ bit string witnessing its error exhaustively by a formula of size $2^{n+O(n^\lambda)}$. Altogether, the desired anticheckers $y_1,\dots,y_{2^{O(n^\lambda)}}$ with bits $f(y_1),\dots,f(y_{2^{O(n^\lambda)}})$ will be generated by a circuit of size $2^{n+O(n^\lambda)}$. Note that this circuit will have the desired formula-like structure because its only gates with fanout bigger than 1 are those computing tuples $y_i,f(y_i)$. 

\begin{claim}\label{cl}
If $P_f(y_1,\dots, y_{i-1})[(1-1/4n)^{i-1}]$ and $R_f(y_1,\dots,y_{i-1})\ge 2n^2$, then for some $y_i$, $P_f(y_1,\dots,y_i)[(1-1/4n)^{i-1}(1-1/2n)]$.
\end{claim}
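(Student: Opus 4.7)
The plan is to argue by contradiction, reducing the nonexistence of a good $y_i$ to the hardness assumption on $f$ via a randomized majority-vote construction. Set $\alpha := (1-1/4n)^{i-1}$, $s := 2^{n^\lambda}$, $s' := s/(2n)$, and let $T$ denote the total number of circuits of size $s'$. Write $S$ for the set of those circuits which agree with $f$ on every one of $y_1,\dots,y_{i-1}$, so the hypotheses become $|S|/T \le \alpha$ and $|S| = R_f(y_1,\dots,y_{i-1}) \ge 2n^2$. I want to exhibit $y_i$ such that $|S \cap \{C : C(y_i)=f(y_i)\}|/T \le \alpha(1-1/2n)$.

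Suppose for contradiction that no such $y_i$ exists. Then for every $y \in \{0,1\}^n$ one has $|S \cap \{C : C(y)=f(y)\}|/T > \alpha(1-1/2n)$, and combining this with $|S|/T \le \alpha$ forces \emph{strictly more than} a $(1-1/(2n))$-fraction of the circuits in $S$ to compute $f(y)$ correctly, at every single input $y$. This is a strong pointwise agreement statement between $S$ and $f$, from which I will manufacture a small circuit for $f$.

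To do so, I would sample $k$ circuits $C_1,\dots,C_k$ independently and uniformly (with replacement) from $S$ and define $g(y):=\mathrm{MAJ}(C_1(y),\dots,C_k(y))$. For each fixed $y$, the event $g(y)\ne f(y)$ amounts to a $\mathrm{Binomial}(k,1-1/(2n))$ variable falling below $k/2$; a Chernoff bound controls this probability by roughly $(e/n)^{k/2}$. A union bound over the $2^n$ inputs shows that for some $k = \Theta(n/\log n)$ there exists a choice of $C_1,\dots,C_k$ with $g \equiv f$ on all of $\{0,1\}^n$. The total size of $g$ as a circuit is $O(ks' + k)$, which is $o(n s')=o(s)$ for $n$ large enough, contradicting the hardness of $f$ for circuits of size $s = 2^{n^\lambda}$. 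The role of $R_f \ge 2n^2$ is to guarantee that $|S|$ comfortably exceeds the subsample size $k$, so that the random choice inside $S$ is well-defined and the averaging has slack.

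The principal obstacle is the tight multiplicative window: the majority construction produces a circuit of size approximately $k s'$, and this must be \emph{strictly} less than $s = 2n \cdot s'$, so $k$ is essentially forced to be $o(n)$. The Chernoff estimate does deliver $k = \Theta(n/\log n)$, which is asymptotically $o(n)$, but I would need to be careful about absolute constants, about the cost of the top majority gate, and about the behaviour for small $n$. I expect this quantitative balancing—choosing $k$ large enough for the union bound to close yet small enough to beat the hardness threshold $s = 2n s'$—to be the main technical step in executing the plan.
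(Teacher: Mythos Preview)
Your approach is correct and is precisely the ``standard counting argument'' the paper defers to (citing \cite[Claim 22]{OPS19_CCC}): a majority of $k=\Theta(n/\log n)$ circuits sampled from $S$ computes $f$ exactly with positive probability, yielding a circuit of size $O(ks')=o(2ns')<s$ and contradicting the hardness of $f$. One minor remark: since you sample with replacement, the hypothesis $R_f\ge 2n^2$ is not actually load-bearing in your argument (it goes through for any nonempty $S$); that hypothesis is present because this is the regime in which the claim is invoked in the surrounding lemma, the case $R_f<2n^2$ being handled separately there.
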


Claim \ref{cl} is proved by a standard counting argument, cf.~\cite[Claim 22]{OPS19_CCC}. Observe that with Claim \ref{cl} we can construct the desired formula $F$. Here we employ approximate counting with linear hash functions: if $X\subseteq \{0,1\}^m$ is a set of size $s$, there are matrices $A_1,\dots,A_{\log (4s^c)}$ such that each $A_j$ defines a linear function mapping a Cartesian power $X^c$ to $(s(1+\eps))^c/\log (4s^c)$, for $c=2(\eps^{-1}(\log\log s+\log\eps^{-1}))$. Moreover, for each $A_j$ there is $X^c_j\subseteq X^c$ satisfying $\forall x\in X^c_j\forall x'\in X^c\ (x\ne x' \rightarrow A_j(x)\ne A_j(x'))$, and $\bigcup_{j} X^c_j=X^c$. Mapping $x\in X^c$ to $A_j(x)$ in the $j$th block of size $(s(1+\eps))^c/\log (4s^c)$, for the first $A_j$ with $x\in X^c_j$, thus defines an injection from $X^c$ to $(s(1+\eps))^c$ which witnesses that the size of $X$ is $\le s(1+\eps)$. See e.g.~\cite[Section 3, 2nd paragraph]{DBLP:journals/jsyml/Jerabek09} for details.

Therefore, once we have $P_f(y_1,\dots,y_i)[(1-1/4n)^{i-1}(1-1/2n)]$ we can conclude that there are matrices $A_1,\dots,A_{2^{O(n^\lambda)}}$ defining an injective mapping of a Cartesian power (with exponent of rate $poly(n)$) of the set of all circuits of size $2^{n^\lambda}/2n$ that compute $f$ on $y_1,\dots,y_i$ to the same Cartesian power of $(1-1/4n)^{i-1}(1-1/2n)(1+1/4n)\le (1-1/4n)^i$ fraction of the set of all circuits of size $2^{n^\lambda}/2n$. The existence of such matrices, not only witnesses $P_f(y_1,\dots,y_i)[(1-1/4n)^i]$ but is also an $\NPtime^\coNP$ property
, and since $\NPtime\subseteq\NC$, decidable by a formula $F$ of size $2^{O(n^\lambda)}$.\qed

\bigskip

\begin{theorem}[Improved magnification via anticheckers]\label{pnp} Assume that $\MCSP[2^{n^{1/2}}/2n,2^{n^{1/2}}]$ is hard for circuits $C$ (with $2^n$ inputs) of size $2^{n+O(n^{1/2})}$ with the following form. Given $\ttable(f)$, subcircuits of $C$ generate $y_1,\dots,y_{2^{O(n^{1/2})}}, f(y_1),\dots,f(y_{2^{O(n^{1/2})}})$ so that each $y_i,f(y_i)$ is generated by a subcircuit of $C$ with inputs $y_1,\dots,y_{i-1},f(y_1),\dots,f(y_{i-1}),\ttable(f)$ whose only gates with fanout $>1$ are $y_1,\dots,y_{i-1},f(y_1),\dots,f(y_{i-1})$. Having $y_1,\dots,y_{2^{O(n^{1/2})}}, f(y_1),\dots,f(y_{2^{O(n^{1/2})}})$, $C$ applies a formula of size $2^{O(n^{1/2})}$ on top of these gates. 

Then $\NPtime\not\subseteq\NC$.
\end{theorem}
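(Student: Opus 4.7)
I would prove the contrapositive: assuming $\NP \subseteq \NC$, I will construct, for every input length $N = 2^n$, a circuit of the exact almost-formula form forbidden by the hypothesis, of size $2^{n+O(n^{1/2})}$, that decides $\MCSP[2^{n^{1/2}}/2n,\,2^{n^{1/2}}]$. The two ingredients are Lemma~\ref{lem} instantiated at $\lambda = 1/2$, which already gives the antichecker-generation stage in precisely the prescribed layered layout, and the Succinct-$\MCSP$ predicate that checks consistency of the generated pairs with some small circuit, which sits in $\NP$ and hence, under the assumption, is computed by polynomial-size formulas in its input length.

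First I apply Lemma~\ref{lem} with $\lambda = 1/2$. On input $\ttable(f)$ this yields a circuit of size $2^{n+O(n^{1/2})}$ that generates $2^{O(n^{1/2})}$ pairs $(y_i,f(y_i))$ incrementally, where each pair is computed by a subcircuit taking as inputs the previously produced pairs together with $\ttable(f)$, and whose only internal gates of fanout larger than one are the earlier $(y_j,f(y_j))$. Moreover, when $f$ is hard for circuits of size $2^{n^{1/2}}$, these pairs form anticheckers, so every circuit of size at most $2^{n^{1/2}}/2n$ disagrees with $f$ on some $y_i$. Next I feed the pairs into a formula $G$ of size $2^{O(n^{1/2})}$ deciding the $\NP$-predicate ``there exists a circuit of size at most $2^{n^{1/2}}/2n$ consistent with all pairs $(y_i,f(y_i))$''; such a formula exists because $\NP \subseteq \NC$ implies that every $\NP$ language has polynomial-size formulas, applied here to an input of length $2^{O(n^{1/2})}$. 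On YES instances of $\MCSP[2^{n^{1/2}}/2n, 2^{n^{1/2}}]$ the witnessing small circuit is consistent with any sample, so $G$ accepts; on NO instances the antichecker guarantee forces every size-$2^{n^{1/2}}/2n$ circuit to disagree on some $y_i$, so $G$ rejects. Composing $G$ on top of the antichecker-generation stage yields a circuit of total size $2^{n+O(n^{1/2})}$ of exactly the form the hypothesis forbids, a contradiction.

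The main point I expect to require care is the structural claim that the composed circuit genuinely satisfies the ``almost-formula with anticheckers as the only high-fanout gates'' restriction: the antichecker-generation block obeys this by Lemma~\ref{lem}, and $G$ is a formula by construction, so fanout larger than one in the composed circuit occurs only at the $(y_i, f(y_i))$ gates, which are exactly the ones the hypothesis permits. A secondary verification is parameter bookkeeping: Lemma~\ref{lem} at $\lambda = 1/2$ produces $2^{O(n^{1/2})}$ pairs and an antichecker-generation circuit of size $2^{n+O(n^{1/2})}$, the top formula $G$ has size polynomial in $2^{O(n^{1/2})}$ and hence $2^{O(n^{1/2})}$, and both sides of the promise problem are correctly discriminated by $G$, even in the degenerate case where the antichecker guarantee of Lemma~\ref{lem} does not apply (namely on YES instances), since there consistency is immediate from the existence of the small circuit computing $f$.
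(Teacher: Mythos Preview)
Your proposal is correct and follows essentially the same approach as the paper: prove the contrapositive by invoking Lemma~\ref{lem} with $\lambda=1/2$ to generate the antichecker pairs in the prescribed layered form, then use $\NP\subseteq\NC$ to place a $2^{O(n^{1/2})}$-size formula on top that decides the $\NP$ consistency predicate on those pairs. Your write-up is in fact more explicit than the paper's, which compresses the whole argument to two lines; your attention to the YES-instance case (where the antichecker guarantee is vacuous but consistency still holds trivially) and to the fanout structure of the composed circuit are exactly the verifications the paper leaves implicit.
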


\proof If $\NPtime\subseteq\NC$, then $\MCSP[2^{n^{1/2}}/2n,2^{n^{1/2}}]$ can be solved by circuits of size $2^{n+O(n^{1/2})}$ of the required form: given a Boolean function $f$, apply Lemma \ref{lem} to generate a set of its anticheckers $y_1,\dots,y_{2^{O(n^{1/2})}}$ together with bits $f(y_1),\dots,f(y_{2^{O(n^{1/2})}})$ and using $\NPtime\subseteq\NC$ decide whether $f$ is hard for circuits of size $2^{n^{1/2}}/2n$ on $y_1,\dots,y_{2^{O(n^{1/2})}}$. \qed

\bigskip

Note that circuits from the assumption of hardness magnification via anticheckers, Theorem \ref{pnp}, are $2^{O(n^{1/2})}$-almost formulas of almost linear size which gives us \HMFrontieralmostFormula{1}. We can now complement it with \HMFrontieralmostFormula{3}.

Consider an $s$-almost formula. Each gate $G$ of $F$ with fanout larger than 1 is computed by a formula with inputs being either the original inputs of $F$ or gates of $F$ with fanout larger than 1. We call any maximal formula of this form a \emph{principal} formula of $G$.

\begin{theorem}\label{thm:parityflalikelb}
\PARITY$\not\in n^{\eps}$-$\mathsf{almost}$-$\Formula[n^{2-9\eps}]$, if $\eps<1$.
\end{theorem}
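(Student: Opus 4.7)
The plan is to derive a contradiction by applying a carefully chosen random restriction that (i) collapses every branching gate to a constant, turning the almost-formula into a pure DeMorgan formula on the surviving inputs, and (ii) forces that formula to be strictly smaller than Khrapchenko's $m^2$ lower bound for parity on the $m$ surviving variables. Suppose, for contradiction, that $F$ is an $n^\varepsilon$-almost formula of size $T = n^{2-9\varepsilon}$ computing parity, with principal formulas $F_0, F_1, \dots, F_s$ for $s \leq n^\varepsilon$, where $F_0$ is rooted at the output and $F_i$ is rooted at the branching gate $G_i$; let $L_i$ denote the leaf-count of $F_i$, so that $\sum_i L_i \leq O(T)$.

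First, I apply a $p$-regular random restriction $\rho$ on the $n$ input variables with $p := c \cdot n^{9\varepsilon - 1}$ for a small absolute constant $c$. By H\aa stad's $p^2$-shrinkage lemma, $\mathbb{E}[|F_i|_\rho|] \leq p^2 L_i$ for each $i \geq 1$. Summing and using Markov's inequality with a union bound over the $\leq n^\varepsilon$ branching gates (the constant $c$ is chosen so that $p^2 T \leq 1/(4s)$), with probability at least $1/2$ every $F_i|_\rho$ has size at most $1$ and thus is either a constant or a single literal; fixing at most $s$ additional variables then makes every branching gate evaluate to a specific constant vector $c = c(\rho)$. A Chernoff bound guarantees that the number of surviving inputs is $m = \Omega(pn) = \Omega(n^{9\varepsilon})$ with high probability, so on the surviving inputs the circuit collapses to the pure formula $F_0(x, c)|_\rho$, which must compute parity (up to sign) on $m$ variables and therefore, by Khrapchenko, have size at least $m^2 = \Omega(n^{18\varepsilon})$. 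On the other hand, viewing $F_0(\cdot, c)$ as an $x$-formula of size at most $L_0 \leq T$ and applying H\aa stad's shrinkage under $\rho$ gives expected restricted size $\leq p^2 T = O(n^{9\varepsilon})$, which is strictly less than $m^2 = \Omega(n^{18\varepsilon})$ for every $\varepsilon > 0$ and large $n$, yielding the desired contradiction.

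The main technical obstacle is that in the shrinkage bound for Step 2, the constant vector $c = c(\rho)$ depends on the restriction $\rho$, so a naive summation over the $2^s$ candidate $c$-vectors would introduce an unacceptable exponential factor. To handle this I plan to use the $k$-wise independent $p$-regular restrictions provided by \cref{lm:k-wise-independent-restrictions} with $k = \Theta(\max_i L_i) = \Theta(T)$, so that both the shrinkage of every $F_i$ (for $i \geq 1$) and that of $F_0(\cdot, c)$ for each fixed $c$ are fooled by $\rho$, and then to combine this with the structural observation that each bit $c_i(\rho)$ is determined only by $\rho$ restricted to the variables occurring in $F_i$: this should permit conditioning on the collapse event $\bigwedge_i \{|F_i|_\rho| \leq 1\}$ while preserving enough independence in $\rho$ to apply the shrinkage bound to $F_0(\cdot, c(\rho))$ directly. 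Closing this decoupling step cleanly — without losing a multiplicative $2^s$ factor — is the crux of the argument.
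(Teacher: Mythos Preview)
Your proposal has two genuine gaps, one arithmetic and one structural.

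\textbf{Arithmetic.} With your choice $p = c\cdot n^{9\varepsilon-1}$ you compute (correctly, later on) that $p^2 T = c^2 n^{9\varepsilon}$. This is a growing quantity, so the parenthetical claim ``the constant $c$ is chosen so that $p^2T\le 1/(4s)$'' is impossible: you would need $c^2 n^{9\varepsilon}\le 1/(4n^\varepsilon)$. With this $p$ there is no reason the principal formulas $F_i$ collapse to size~$\le 1$ with probability~$1/2$; the sum $\sum_{i\ge1}\mathbb E[|F_i|_\rho|]$ can be of order $n^{9\varepsilon}$.

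\textbf{Structure.} More importantly, the ``decoupling step'' you flag at the end is not a technicality but the entire obstacle, and the proposed remedies do not close it. The principal formula $F_0$ has both $x$-leaves and $G_j$-leaves; H\aa stad shrinkage on the $x$-variables alone says nothing about the $G_j$-leaves, so you must first substitute $G_j\to c_j(\rho)$. But the variables read by the $F_i$'s may coincide with those read by $F_0$, so ``$c_i(\rho)$ depends only on $\rho$ restricted to the variables of $F_i$'' buys no independence from the shrinkage of $F_0(\cdot,c)$. A straight union bound over $c\in\{0,1\}^s$ costs $2^{n^\varepsilon}$, which swamps any polynomial saving; $k$-wise independence with $k=\Theta(T)$ does not change this, because the bad events for different $c$'s are correlated through the same coordinates.

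The paper's proof avoids restrictions altogether. It fixes the $s\le n^\varepsilon$ branching gates to the single constant vector $c^*$ that is realised on the largest fraction of inputs (at least a $2^{-n^\varepsilon}$ fraction), and builds a pure formula of size $O(n^{2-8\varepsilon})$ that checks consistency with $c^*$ and either simulates $F_0(\cdot,c^*)$ or outputs the majority bit. This formula agrees with \PARITY\ with probability $\ge \tfrac12+2^{-n^\varepsilon}$. The point is that the correlation upper bound one now needs is available: by Komargodski--Raz polynomial approximation, any size-$L$ formula agrees with \PARITY\ with probability at most $\tfrac12+2^{-t}$ whenever $L=o((n/t)^2)$ (up to $\polylog$), so taking $t=n^{2\varepsilon}$ forces the advantage below $2^{-n^{2\varepsilon}}\ll 2^{-n^\varepsilon}$. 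In other words, the $2^{n^\varepsilon}$ loss you were trying to avoid in Markov is simply \emph{absorbed} by an exponentially strong correlation bound; no decoupling is needed.
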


\proof[Proof Sketch.] For the sake of contradiction, assume \PARITY has $n^\eps$-almost formulas of size $n^{2-9\eps}$. Since there are only $n^{\eps}$ gates of fanout $>1$, we can replace these gates by appropriate constants and obtain formulas $F_n$ of size $n^{2-8\eps}$ computing \PARITY with probability $\ge 1/2+1/2^{n^{\eps}}$. In more detail, each formula $F_n$ checks if the principal formulas compute the fixed constants. If this is the case, then $F_n$ outputs the output of the original almost-formula (since gates with fan-out larger than 1 are fixed, the output can be computed by a formula). Otherwise, $F_n$ outputs a fixed constant, whichever is better on the majority of the remaining inputs. This does not increase the size of the resulting formula $F_n$ by more than a constant factor. As pointed out by Komargodski-Raz~\cite{KR_STOC_paper}, each boolean function $f$ on $n$ input bits can be approximated by a real polynomial of degree $O(t\sqrt{L(f)}\frac{\log n}{\log\log n})$ up to a point-wise additive error of $2^{-t}$, and this can be shown to imply that each formula of size $o((n/t)^2(\log\log n/\log n)^2)$ computes \PARITY over $n$ input bits with probability at most $1/2+1/2^{t+O(1)}$ (for large enough $t$). 
Taking $t=n^{2\eps}$ we get a contradiction. 
\qed

\subsection{$\mathsf{NP} \nsubseteq \mathsf{NC}^1$ and $\mathsf{AC}^0$ Lower Bounds for $(n-k)$-$\mathsf{Clique}$}\label{ss:frontier_clique}

In this Section, we discuss the proofs of some statements claimed in \HMFrontierClique from Section \ref{ss:intro_background_motivation}. Recall that we consider graphs on $n$ vertices that are described in the adjacency matrix representation. The input graph is therefore represented using $m = \Theta(n^2)$ bits. We begin with the proof of the magnification result in \HMFrontierClique{1}.

\begin{proposition}\label{p:hm_clique}
Let $k(n) = (\log n)^C$ for some constant $C$. If there exists $\varepsilon >0$ such that for every depth $d \geq 1$, $(n-k)$\emph{-}$\mathsf{Clique} \notin \mathsf{AC}^0_d[m^{1 + \varepsilon}]$, then $\mathsf{NP} \nsubseteq \mathsf{NC}^1$. 
\end{proposition}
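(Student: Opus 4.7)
The plan is to prove the contrapositive: assuming $\mathsf{NP} \subseteq \mathsf{NC}^1$, show that for every $\varepsilon > 0$ there is a constant depth $d = d(\varepsilon, C)$ such that $(n-k)$-$\mathsf{Clique} \in \mathsf{AC}^0_d[m^{1+\varepsilon}]$, where $k = (\log n)^C$. I will follow the same two-stage template used in \cref{thm_B1}: first construct an $\mathsf{AC}^0$ oracle circuit for the problem whose NP-oracle gates have fan-in only $(\log n)^{O(1)}$, and then replace each such oracle gate by an $\mathsf{AC}^0_{d'}$ simulation of size $2^{(\log n)^{O(C/d')}} = n^{o(1)}$ obtained from $\mathsf{NP} \subseteq \mathsf{NC}^1$ together with \cite[Lemma 8.1]{AHMPS08_siamcomp_journals}, choosing $d'$ large enough in terms of $C$ and $\varepsilon$.

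To build the oracle circuit, I reformulate $(n-k)$-$\mathsf{Clique}$ in $G$ as $k$-vertex cover in the complement $\bar G$, and perform a Buss-style kernelization inside the circuit. Any vertex $v$ with $\deg_{\bar G}(v) > k$ must belong to every $k$-cover, so the high-degree set $H$ satisfies $|H| \le k$ whenever the answer is yes; the residual graph $\bar G[V \setminus H]$ then has maximum degree at most $k$ and at most $k^2$ edges, reducing the task to deciding $(k - |H|)$-vertex cover on a kernel of description length $(\log n)^{O(C)}$. A single polylog-fan-in NP-oracle gate then resolves the kernel instance, and small auxiliary checks (e.g.\ that $|H| \le k$ and that the residual edge count is at most $k^2$) are also routed through oracle gates whose inputs are already polylogarithmically many kernel-sized quantities.

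The main obstacle will be implementing this kernelization in constant-depth $\mathsf{AC}^0$ with only polylogarithmic-fan-in oracle gates. The critical subroutine is, for each vertex $v$, deciding the threshold-$(k+1)$ predicate on the $n$-bit row $\bar G(v, \cdot)$. I plan to partition each such row into blocks of a carefully chosen polylogarithmic size, use one oracle gate per block to compute its sum truncated at $k+1$, and aggregate these truncated partial sums through a bounded-depth tree of oracle gates, balancing the fan-in of intermediate oracle nodes against the tree depth so that both stay within the required ranges. Getting this depth-versus-fan-in tradeoff to close is the technical crux and the place where the constants $d$ and $d'$ must be tuned against $C$ and $\varepsilon$; once it is in place, plugging in the $n^{o(1)}$-size $\mathsf{AC}^0_{d'}$ simulations of each oracle gate and collapsing depths yields a pure $\mathsf{AC}^0$ circuit of constant total depth and size $m^{1+\varepsilon}$, as required.
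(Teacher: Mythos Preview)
Your high-level strategy is right and mirrors the paper exactly: the paper's own proof is nothing more than the complement-graph equivalence $(n-k)$-$\mathsf{Clique}(G)\leftrightarrow k$-$\mathsf{Vertex}$-$\mathsf{Cover}(\bar G)$ together with a black-box citation of the $k$-Vertex-Cover magnification theorem from \cite{OS18_mag_first}. You are in effect proposing to reprove that cited theorem with the same kernelization-plus-oracle template.

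The genuine gap is in your implementation of the degree-threshold step. With leaf blocks of polylogarithmic size there are $\Theta(n/\polylog n)$ block sums to aggregate; a tree whose internal oracle nodes have fan-in $(\log n)^{O(1)}$ then necessarily has depth $\Theta(\log n/\log\log n)$, so after substituting constant-depth subcircuits for the oracle gates the result is not an $\mathsf{AC}^0$ circuit. Raising the intermediate fan-in to $n^{1/D}$ to force tree depth $D=O(1)$ fails on the other side: the $\mathsf{NC}^1\to\mathsf{AC}^0_{d'}$ simulation of a gate on $\approx n^{1/D}$ input wires has size $2^{n^{\Omega(1/(Dd'))}}$, which is $\poly(n)$ only when $Dd'=\Omega(\log n/\log\log n)$. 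So the ``depth-versus-fan-in tradeoff'' you flag as the crux provably does not close under this oracle-tree scheme, for any choice of parameters.

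The construction in \cite{OS18_mag_first} sidesteps this entirely: the whole Buss kernelization, including the degree tests, is carried out \emph{directly} inside polynomial-size constant-depth $\mathsf{AC}^0$, leaving exactly one oracle gate applied to the $O(k^4)$-bit kernel (this is recorded in Proposition~\ref{p:frontier_A} and in the appendix as a $[1,(\log n)^{4C},1]$--$\mathsf{AC}^0_d$ upper bound). What makes this feasible is that the kernelization only needs an approximate degree predicate --- any $H$ with $\{v:\deg_{\bar G}(v)>2k\}\subseteq H\subseteq\{v:\deg_{\bar G}(v)>k\}$ is contained in every $k$-cover and leaves a residual graph of maximum degree $\le 2k$ --- and such polylog \emph{gap} thresholds on $n$ bits are in poly-size $\mathsf{AC}^0$ via standard hashing/sampling. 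Replacing your oracle tree with this direct $\mathsf{AC}^0$ computation, the rest of your argument goes through.
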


\begin{proof}
We use a straightforward reduction to the magnification theorem for $k$-$\mathsf{Vertex}$-$\mathsf{Cover}$ established in \citep[Theorem 7]{OS18_mag_first}. (We state Proposition \ref{p:hm_clique} in a slightly weaker form just for simplicity.) Indeed, a graph $G$ on $n$ vertices has a vertex cover of size $\leq k$ if and only if $G$ has an independent set of size $\geq n - k$. In turn, the latter is true if and only if the complement graph $\overline{G}$ has a clique of size $\geq n-k$. Therefore, by negating input literals, the complexities of $(n-k)$\emph{-}$\mathsf{Clique}$ and $k$-$\mathsf{Vertex}$-$\mathsf{Cover}$ are equivalent with respect to $\mathsf{AC}^0$ circuits. For this reason, the hardness magnification theorem of \citep{OS18_mag_first} immediately implies Proposition \ref{p:hm_clique}.
\end{proof}

We state below conditional and unconditional lower bounds on the complexity of detecting very large cliques. The next proposition implies the lower bound claimed in \HMFrontierClique{4}.

\begin{proposition}[\citep{DBLP:journals/dm/AndreevJ08}; see also {\citep[Section 9.2]{DBLP:books/daglib/0028687}}]\label{p:mon_clique_lb}
For $k(n) \leq n/2$, every monotone circuit for $(n-k)$-$\mathsf{Clique}$ requires $2^{\Omega(k^{1/3})}$ gates.
\end{proposition}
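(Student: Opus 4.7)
The plan is to derive the bound via a monotone reduction from small-clique detection on an auxiliary graph to $(n-k)$-Clique on $n$ vertices, and then to invoke a Razborov--Alon--Boppana style lower bound for small-clique detection. Fix a parameter $s = s(k)$ to be chosen at the end, and let $m := k + s$.

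Given an $m$-vertex graph $H$, I would construct a graph $G = G(H)$ on $n$ vertices as follows: identify $V(H)$ with a subset $V_H \subseteq V(G)$ of size $m$, let $U := V(G) \setminus V_H$ (so $|U| = n - k - s$), hard-wire all edges inside $U$ and all edges between $U$ and $V_H$ to be present, and keep on $V_H$ exactly the edges of $H$. On the edge variables of the $n$-vertex graph this map is a monotone projection: each edge variable is either fixed to $1$ or substituted by an edge variable of $H$. Correctness of the reduction is immediate: any $(n-k)$-clique in $G(H)$ must contain at least $(n-k) - |U| = s$ vertices of $V_H$, and since $U$ is adjacent to every vertex of $V_H$, those $s$ vertices form a clique in $H$; conversely, any $s$-clique in $H$ together with $U$ gives an $(n-k)$-clique in $G(H)$. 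Consequently, a monotone circuit of size $S$ for $(n-k)$-Clique on $n$ vertices specializes (by fixing the appropriate edge variables to $1$) to a monotone circuit of size at most $S$ computing $s$-Clique on $m = k + s$ vertices.

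To finish, I would invoke the Razborov--Alon--Boppana monotone circuit lower bound for $s$-Clique on $m$-vertex graphs. Since the assumption $k \leq n/2$ ensures $|U| \geq n/2$ so that the reduction is meaningful, one is free to pick $s$ as a small polynomial in $k$ with $m = k + s = \Theta(k)$. Choosing $s$ as a suitable power of $k$ so that (i) $s$ lies in the admissible regime of the lower bound, and (ii) the resulting size bound is $\exp(\Omega(k^{1/3}))$, yields $S \geq 2^{\Omega(k^{1/3})}$, as desired.

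The main technical obstacle is the parameter balancing in the last step: one must pick a form of the monotone clique lower bound (Razborov's $n^{\Omega(s)}$ for tiny $s$, or the Alon--Boppana refinement $\exp(\Omega(\sqrt{s}\,\log(m/s)))$ valid up to larger $s$) whose admissible regime admits a choice of $s$ giving exponent $\Omega(k^{1/3})$ when $m = \Theta(k)$. This is a routine optimization but is the only place where concrete constants from the cited literature are needed; all other steps are general structural properties of the padding reduction.
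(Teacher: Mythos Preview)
Your approach is correct and is essentially the argument presented in the cited reference (Jukna's book, Section~9.2): the paper itself does not supply a proof but defers to \citep{DBLP:journals/dm/AndreevJ08} and \citep{DBLP:books/daglib/0028687}, and the padding reduction you describe---blowing up an $m$-vertex graph $H$ by adjoining a universal clique $U$ of size $n-k-s$ so that $(n-k)$-cliques in $G(H)$ correspond to $s$-cliques in $H$---followed by the Alon--Boppana bound for $s$-Clique on $m=k+s=\Theta(k)$ vertices is precisely the standard route. Taking $s$ near the top of the Alon--Boppana admissible range, $s \approx (k/\log k)^{2/3}$, gives a lower bound of the form $(k/s\log k)^{\Omega(\sqrt{s})} = 2^{\Omega(k^{1/3}(\log k)^{2/3})} \ge 2^{\Omega(k^{1/3})}$, so the parameter balancing you flag as the only remaining step indeed works out.
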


\noindent Interestingly, the problem can be solved by (bounded depth) polynomial size monotone circuits if $k \leq \sqrt{\log n}$ \citep{DBLP:journals/dm/AndreevJ08}.

Finally, by the observation employed in the proof of Proposition \ref{p:hm_clique}, for non-monotone computations the complexities of detecting large cliques and small vertex covers are equivalent. A consequence of this is that one can show the following result, which implies the statement in \HMFrontierClique{2}.

\begin{proposition}
If \emph{ETH} for non-uniform circuits holds, then $(n-k)$\emph{-}$\mathsf{Clique} \notin \mathsf{P}/\mathsf{poly}$ as long as $\omega(\log n) \leq k \leq n/2$.
\end{proposition}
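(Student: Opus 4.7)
The plan is to establish hardness by a polynomial-size reduction from 3-SAT, exploiting the equivalence between $(n-k)$-$\mathsf{Clique}$ and $k$-$\mathsf{Vertex}$-$\mathsf{Cover}$ already used in the proof of Proposition \ref{p:hm_clique}. First, I would invoke the Sparsification Lemma to reduce non-uniform 3-SAT on $N$ variables to the same problem on formulas with $M = O(N)$ clauses, and then apply the textbook Karp reduction to $k$-$\mathsf{Vertex}$-$\mathsf{Cover}$ to produce a graph $G_0$ on $n_0 = \Theta(N)$ vertices with target cover size $k_0 = \Theta(N)$. By complementation, $\overline{G_0}$ contains a clique of size $\geq n_0 - k_0$ iff the 3-CNF is satisfiable.

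The next step is a padding argument. Given a function $k(\cdot)$ with $\omega(\log n) \leq k(n) \leq n/2$, for each sufficiently large $N$ I would choose an integer $n \geq n_0$ with $k(n) = k_0$ and add $n - n_0$ universal vertices to $\overline{G_0}$ (equivalently, $n - n_0$ isolated vertices to $G_0$). Since every such added vertex lies in every maximum clique of the augmented graph $\overline{G}$ on $n$ vertices, $\overline{G}$ contains a clique of size $\geq n - k(n) = (n - n_0) + (n_0 - k_0)$ iff $\overline{G_0}$ contains one of size $\geq n_0 - k_0$, i.e., iff the original 3-CNF is satisfiable. The whole reduction is computable by circuits of size $\mathsf{poly}(n)$.

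Finally, suppose for contradiction that $(n-k)$-$\mathsf{Clique} \in \Ppoly$. Composing the reduction with the assumed $\mathsf{poly}(n)$-size circuit yields a circuit of size $\mathsf{poly}(n)$ deciding 3-SAT on $N$ variables. Since $k(n) = k_0 = \Theta(N)$ and $k(n) = \omega(\log n)$ together imply $\log n = o(k_0) = o(N)$, the resulting circuit has size $2^{o(N)}$, contradicting non-uniform ETH.

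The main obstacle is the calibration of the padding: one needs, for each $N$, an integer $n$ with $k(n)$ exactly equal to $k_0(N)$. The upper bound $k(n) \leq n/2$ guarantees that $k$ reaches values of order $\Theta(N)$ at some $n = O(N)$, while $k(n) = \omega(\log n)$ forces $k$ past $k_0$ already at some $n = 2^{o(N)}$. Any residual mismatch between consecutive values of $k(\cdot)$ can be absorbed by attaching a small number of pendant edges to $G_0$ to nudge $k_0$ upward by a constant before choosing $n$; this preserves the asymptotic relations $n_0 = \Theta(N)$, $k_0 = \Theta(N)$, and $n = 2^{o(N)}$, so the final contradiction with non-uniform ETH goes through.
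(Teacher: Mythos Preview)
Your proposal is correct and follows essentially the same route as the paper. The paper's own argument is a two-line sketch that cites the known parameterized result (under ETH, $k$-$\mathsf{Vertex}$-$\mathsf{Cover}$ requires time/size $2^{\Omega(k)}\cdot\poly(m)$, via \cite{DBLP:journals/jcss/ImpagliazzoPZ01} and \cite{DBLP:series/txcs/DowneyF13}) and then invokes the equivalence with $(n-k)$-$\mathsf{Clique}$; you are simply unpacking that citation by writing out the Sparsification Lemma plus Karp reduction plus padding explicitly, which is exactly how that result is proved. Your calibration discussion is more careful than what the paper provides, and the observation that $k(n)\le n/2$ guarantees $n-2k(n)\ge 0>n_0-2k_0$ (so the padding inequality is automatically satisfied once $k(n)\ge k_0$) makes the argument go through cleanly.
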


Indeed, under ETH the $k$-Vertex-Cover problem cannot be solved in time $2^{o(k)} \cdot \mathsf{poly}(m)$  (see \citep{DBLP:journals/jcss/ImpagliazzoPZ01} and \citep[Theorem 29.5.9]{DBLP:series/txcs/DowneyF13}). Further discussion on the conditional hardness of $k$-$\mathsf{Vertex}$-$\mathsf{Cover}$ that also applies to $(n-k)$-$\mathsf{Clique}$ appears in \citep{OS18_mag_first}.

\section{Hardness Magnification and Natural Proofs}\label{s:non-natural}

\subsection{Equivalences}\label{ss:proof_equivalences}

The main contribution of this section is new hardness magnification results showing non-learnability of circuit classes from slightly super-linear lower bounds for the approximate version of \MCSP and the gap version of \MCSP. We then use these magnification results to establish a series of equivalences.

\begin{lemma}[Hardness Magnification for Learnability from Lower Bounds for Approximate \MCSP]
  \label{lem:learn_lb}    
  Let $s, t : \N \rightarrow \N$ be size functions such that $n\le s(n)\le t(n)$ and $\eps, \delta$ be parameters such that $\eps<1/2$, $0 \leq \delta \leq 1/9$. 
If for infinitely many input lengths $N=2^n$, $\MCSP[(s,0),(t,\eps)] \notin\Circuit[N \cdot \poly(t(n)/\eps)]$, then for infinitely many inputs $n$, $\Circuit[s(n)]$ cannot be learnt up to error $\eps/2$ with confidence $1-\delta$ by $t(n)$-size circuits using non-adaptive membership queries over the uniform distribution.
\end{lemma}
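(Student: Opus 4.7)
The plan is to establish the contrapositive: starting from a learner $L$ for $\Circuit[s(n)]$ with the stated parameters, I will build a $\Circuit[N \cdot \poly(t(n)/\eps)]$ circuit for $\MCSP[(s,0),(t,\eps)]$. The driving observation is that since $L$ makes \emph{non-adaptive} queries, once its randomness $w$ is fixed each oracle call to $f_y$ reads a specific, predetermined bit of the $N$-bit input $y$; hence the hardwired learner becomes an ordinary $O(t(n))$-size circuit that takes $y$ as input and outputs the description of a hypothesis circuit $h$ of size $O(t(n))$. My circuit will then evaluate $h$ on \emph{all} $N = 2^n$ inputs (via a universal-circuit evaluator of size $O(t \log t)$) and compare each value with the corresponding bit of $y$, producing the exact agreement count $a = |\{x : h(x) = y_x\}|$ in total size $O(N t \log t)$.

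To cope with the constant-confidence parameter $\delta \le 1/9$, I will run $L$ in parallel with $k = \Theta(s(n) \log s(n))$ independent hardwired seeds $w_1, \dots, w_k$, compute the exact agreements $a_1, \dots, a_k$ of the resulting hypotheses $h_1, \dots, h_k$ with $y$, and accept iff $\max_i a_i \ge (1 - 3\eps/4)\,N$. The seeds will be chosen by a probabilistic argument: for any fixed $f \in \Circuit[s(n)]$, the probability over random seeds that none of the $k$ hypotheses is a $(1-\eps/2)$-approximation of $f$ is at most $\delta^k$, and since the number of truth tables of functions in $\Circuit[s(n)]$ is at most $2^{O(s(n) \log s(n))}$, a union bound succeeds as soon as $\delta^k$ drops below $2^{-C s \log s}$. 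With $\delta \le 1/9$, this is achieved by $k = \Theta(s \log s)$, yielding a single tuple $(w_1,\dots,w_k)$ that works uniformly over all YES functions, which I hardwire.

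Correctness is immediate from the promise. On a YES input $y$ we have $f_y \in \Circuit[s(n)]$, so some $h_i$ is a $(1-\eps/2)$-approximation of $f_y$ and $a_i \ge (1 - \eps/2)\,N > (1 - 3\eps/4)\,N$; on a NO input $y$ every $t(n)$-size circuit, $h_i$ included, disagrees with $f_y$ on more than $\eps N$ points, so $a_i < (1 - \eps)\,N < (1 - 3\eps/4)\,N$ for every $i$. Summing the cost of the $k$ learner copies, the $k$ universal evaluations on all $2^n$ inputs, and the comparison and summation layers, the total size is $O(k \cdot N \cdot t \log t) = N \cdot \poly(t(n)/\eps)$, matching the required bound.

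The step I expect to be the main obstacle is producing a \emph{single, deterministic} circuit that is correct on every promise instance. A tempting size shortcut would be to estimate each $a_i$ from a sample of $O(1/\eps^2)$ hardwired test points instead of evaluating $h_i$ on the full input $y$; however, any such hardwired sample would have to be a reliable agreement estimator against every NO instance $y$ simultaneously, and there are up to $2^N$ such instances, which is far too many to dispatch by a union bound. Evaluating on the full truth table of $y$ circumvents this obstacle: the agreement measurement becomes exact, so only the learner's internal randomness has to be derandomized, and there only the $\le 2^{O(s \log s)}$ YES functions need to be controlled. This is precisely why the bound in the statement scales linearly with $N$ rather than being $\poly(t(n)/\eps)$ alone.
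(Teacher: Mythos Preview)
Your proof is correct, but it takes a genuinely different derandomization route from the paper's.

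Both arguments prove the contrapositive and exploit non-adaptivity in the same way, so that a hardwired-seed learner becomes an ordinary $O(t)$-size circuit reading bits of $y$. The divergence is in how the hypothesis is tested and how the randomness is eliminated. The paper estimates agreement on a \emph{sample} of $q = O(1/\eps^2)$ random test points, obtaining a randomized two-sided tester with constant error; it then amplifies via $R = \Theta(N)$ independent repetitions and a majority vote so that the failure probability drops below $2^{-2N}$, and finally applies Adleman's trick, union-bounding over all $2^N$ promise inputs (YES and NO alike). You instead compute the agreement \emph{exactly} on all $N$ inputs, which makes the NO-side analysis deterministic; this lets you derandomize by a union bound over only the $2^{O(s\log s)}$ YES functions, needing merely $k = \Theta(s\log s)$ hardwired seeds combined with a $\max$ rather than a majority.

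Each approach incurs the same $N\cdot\poly(t/\eps)$ bound, but for different structural reasons: in the paper the $N$ factor arises from the $\Theta(N)$ repetitions required for Adleman, while for you it comes from the full-truth-table evaluation of each hypothesis. One small point: your remark that sampling ``is far too many to dispatch by a union bound'' against $2^N$ NO instances is not quite right as stated --- the paper shows that it \emph{can} be dispatched, precisely by amplifying to failure probability $2^{-2N}$ before fixing the randomness. What you correctly ruled out is the weaker variant with a \emph{single} hardwired sample and no amplification. Your exact-count approach is a clean way to sidestep that amplification altogether.
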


We also show a related result which gives lower bounds for learnability of a circuit class $\cC$ using $\cC$-circuits by starting with a lower bound against worst-case $\MCSP$ instead of the average-case.
\begin{lemma}[Hardness Magnification for Learnability from Lower Bounds for Gap MCSP]
  \label{lem:wc_mcsp_learn_lb}
Let $c\ge 1$ be an arbitrary constant. If there is $\eps<1/2$, such that infinitely many input lengths $N = 2^n$, $\MCSP[n^c,2^n/n^c] \notin  \Circuit[N^{1+\varepsilon}]$, then for every $\gamma\in (0,1)$, for infinitely many inputs $n$, $\Circuit[n^c]$ cannot be learnt up to error $1/O(n^{2c})$ with confidence $1-1/n$ by $\Circuit[2^{O(n^\gamma)}]$-circuits using non-adaptive membership queries over the uniform distribution.
\end{lemma}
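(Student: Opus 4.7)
I would prove the contrapositive. Suppose that for some $\gamma \in (0,1)$, for all sufficiently large $n$ there is an oracle circuit $L_n$ of size $2^{O(n^\gamma)}$ that learns $\Circuit[n^c]$ under the uniform distribution up to error $\eps_L = 1/(Cn^{2c})$ with confidence $1 - 1/n$ using non-adaptive membership queries, where $C$ is a sufficiently large constant to be fixed later. The plan is to build, for all but finitely many $N = 2^n$, a circuit of size $N^{1+o(1)}$ (and hence of size $N^{1+\varepsilon}$ for every $\varepsilon > 0$) that decides $\MCSP[n^c, 2^n/n^c]$, contradicting the hypothesis of the lemma.

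On input $y \in \{0,1\}^N$, the construction uses seeds $r_1, \ldots, r_k$ with $k = \mathrm{poly}(n)$ hardwired non-uniformly into the circuit. For each $i$, exploit the non-adaptivity of $L_n$ to first generate the list of query indices from $r_i$, then read those bits of $y$ via a standard indexing gadget, and finally run $L_n$ to obtain a hypothesis circuit $h_{r_i}$ of size $2^{O(n^\gamma)}$. Then evaluate $h_{r_i}$ at every $x \in \{0,1\}^n$ via a universal circuit, compute the Hamming distance $d_i$ between the resulting length-$N$ truth table and $y$, and accept iff $\min_i d_i \le \tau N$ for a threshold $\tau$ chosen so that $\eps_L < \tau < \Omega(1/n^{c+1})$.

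Correctness on YES instances: if $f_y \in \Circuit[n^c]$, each seed causes $L_n$ to output an $\eps_L$-approximating hypothesis with probability at least $1 - 1/n$. Since $|\Circuit[n^c]| \le 2^{O(n^c \log n)}$, a union bound over all such $f_y$ and over the $k$ independent failures shows that $k = O(n^{c+1})$ seeds suffice to guarantee, non-uniformly, that for every YES instance at least one $h_{r_i}$ is $\eps_L$-close to $y$, so $\min_i d_i \le \eps_L N < \tau N$. On NO instances, $f_y \notin \Circuit[2^n/n^c]$: if some $h_{r_i}$ of size $s = 2^{O(n^\gamma)}$ agreed with $y$ outside a set $S$ with $|S| = k$, then $f_y = h_{r_i} \oplus \mathbf{1}_S$ would lie in $\Circuit[s + O(kn)]$, forcing $k = \Omega(2^n / n^{c+1})$ and hence $d_i/N \ge \Omega(1/n^{c+1}) > \tau$ for every $i$.

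The size accounting is straightforward: each iteration costs $N \cdot 2^{O(n^\gamma)}$ for the universal-circuit evaluation, plus $\mathrm{poly}(N)$ for indexing and Hamming comparison, and there are only $k = \mathrm{poly}(n)$ iterations, giving total size $N \cdot 2^{O(n^\gamma)} = N^{1+o(1)}$ since $\gamma < 1$. The most delicate step is the threshold separation: when $c$ is small (say $c=1$) the learner's error $\Theta(1/n^{2c})$ and the NO-case lower bound $\Theta(1/n^{c+1})$ are of the same order, so one must pick the constant $C$ in $\eps_L$ sufficiently large relative to the constant hidden in the NO-case disagreement bound in order to fit a threshold $\tau$ strictly between them. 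Once the constants line up, the rest is routine bookkeeping.
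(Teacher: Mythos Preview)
Your argument is correct, and it follows the same high-level contrapositive strategy as the paper, but the execution differs in how the reduction is derandomized. The paper's proof sketch (which reuses the machinery of its preceding lemma on approximate $\MCSP$) estimates the hypothesis error by sampling $q = O(n^{3c})$ random points and applying Hoeffding's inequality, producing a two-sided-error randomized reduction that is then derandomized Adleman-style by a majority over $R = O(N)$ independent repetitions. You instead evaluate the learner's hypothesis on the \emph{entire} truth table, so the error is computed exactly and the NO case is deterministic; the only remaining randomness is the learner's seed, and you handle it by a union bound over the $2^{O(n^c\log n)}$ functions in $\Circuit[n^c]$, which needs only $k = \mathrm{poly}(n)$ hardwired seeds. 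Your route is arguably cleaner here since it sidesteps the Hoeffding and amplification steps, at the cost of the per-seed work jumping from $\mathrm{poly}(n)\cdot 2^{O(n^\gamma)}$ to $N\cdot 2^{O(n^\gamma)}$; because the number of seeds drops from $O(N)$ to $\mathrm{poly}(n)$, the total size is the same $N^{1+o(1)}$. Both arguments share the delicate threshold issue at $c=1$ that you correctly flag. One minor remark: since the queries are non-adaptive and the seeds are hardwired, the query positions are fixed constants and no indexing gadget is actually needed---direct wiring suffices---so your ``$\mathrm{poly}(N)$ for indexing'' is a harmless overestimate.
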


\begin{proof}[Proof of Theorem \ref{thm:equivalences}]
  The following implications establish the desired equivalences.\\

  \vspace{-0.3cm}

  \noindent $(a) \Longrightarrow (c)$: For the parameters $c, \gamma, \varepsilon$ given by $(a)$, we apply Lemma \ref{lem:learn_lb}
  for 
$s(n)=n^c$ and $t(n)=2^{n^\gamma}$, to see that for some $\gamma'>0$, $\Circuit[n^c]$ cannot be learned by circuits of size $2^{O(n^{\gamma'})}$ via non-adaptive queries up to an error $O(1/n^c)$. 

  \vspace{0.1in}
  
  \noindent $(c) \Longrightarrow (d)$: We show the contrapositive of this implication. Suppose that for every $d \geq 1$, there exists a $\Circuit[\poly(n)]$-natural property that is useful against $\Circuit[n^d]$ for all large enough $n$. By Theorem \ref{thm:cikk_learn}, for every $c \geq 1$, we can learn $\Circuit[n^c]$ by a sequence of oracle $\Circuit[2^{O(n^{1/2})}]$-circuits up to an error of $n^{-c}$, by choosing $d = 2ac$ for the constant $a$ from 
  Theorem  \ref{thm:cikk_learn}.

  \vspace{0.1in}
 \noindent $(d) \Longrightarrow (a)$, $(d) \Longrightarrow (b)$: Trivial, using the fact that random functions are hard.

  \vspace{0.1in}
  \noindent $(c) \Longrightarrow (e)$: Follows from the contrapositive of Theorem \ref{thm:no_prf_implies_learning}.
  
  \vspace{0.1in}
  \noindent $(e) \Longrightarrow (c)$: Follows from the non-uniform version of Proposition 29 in \cite{DBLP:conf/coco/OliveiraS17}, using essentially the same proof.

  \vspace{0.1in}
  \noindent $(b) \Longrightarrow (c)$: For the parameter $c$ given by $(b)$, we apply Lemma \ref{lem:wc_mcsp_learn_lb} 
to see that $\Circuit[n^c]$ cannot be learned by circuits of size $2^{O(n^\gamma)}$ via non-adaptive queries up to an error $O(1/n^c)$, for any $\gamma \in (0,1)$.

\end{proof}

We now complete the proof of Theorem \ref{thm:equivalences} by proving Lemmas \ref{lem:learn_lb} and \ref{lem:wc_mcsp_learn_lb}. 
\begin{proof}[Proof of Lemma \ref{lem:learn_lb}]
  For the promise problem $\MCSP[(s,0),(t,\eps)]$ over $N$ inputs, define
  \begin{equation*}
    \begin{split}
      \Pi_{yes} &= \{y \in \{0,1\}^N \mid \exists \text{ circuit of size} \leq s(n) \text{ that computes } f_y \} \\
      \Pi_{no} &= \{y \in \{0,1\}^N \mid \text{no circuit of size} \leq t(n) \text{ $(1-\eps)$-approximates } f_y \} \\
    \end{split}
  \end{equation*}

  We prove the contrapositive of the statement, by showing a reduction from $\MCSP[(s,0),(t,\eps)]$ to a learning algorithm for $\Circuit[s(n)]$
  using non-adaptive membership queries over the uniform distribution. For a fixed $\eps<1/2$ and $0 \leq \delta \leq 1/9$, let
  $\{D_n\}_{n \geq 1} \in \Circuit[t(n)]$ be the corresponding sequence of oracle circuits which learns $\Circuit[s(n)]$ up to error $\eps/2$, where $D_n$ makes non-adaptive queries to some function $f \in \Circuit[s(n)]$ over $n$ inputs. 

  Let $q = q(n) = \frac{200}{\eps^2}$. Define $F_N : \{0,1\}^N \times \{0,1\}^{n q(n)} \times \{0,1\}^{t(n)} \rightarrow \{0,1\}$ as the sequence of randomized
  circuits such that :
  \begin{equation*}
    \begin{split}
      z \in \Pi_{yes} &\implies \Pr_{y_1, w} \{F_N(z, y_1, w) = 1\} > 2/3 \\
      z \in \Pi_{no}  &\implies \Pr_{y_1, w} \{F_N(z, y_1, w) = 1\} < 1/3 \\
    \end{split}
  \end{equation*}


  \sloppy
  The reduction $F_N$ does the following. Let $Y = (x_1, \dots, x_{t(n)})$ be the set of queries made by $D_n$. $F_N$ runs the learner $D_n$ with input $w$ as its source
  of internal randomness and answers its oracle queries to $f_z$ by using the other input $z \in \{0,1\}^N$. If the output string of the learner cannot be interpreted
  as a $t(n)$-sized circuit, then $F_N$ outputs $0$. Otherwise, let $h$ be the $t(n)$-sized circuit on $n$ inputs, which can interpret the hypothesis output by the
  learner as a $t(n)$-sized circuit. $F_N$ then interprets the random input $y_1$ as a sequence of $q$ random examples $v_1, \dots, v_q \in \{0,1\}^n$ and computes $h$
  on each of these. It then forms a string $u \in \{0,1\}^{q}$, where for every $i \in [q], u_i = 1$ if and only if $h(v_i) = f_z(v_i)$. Finally, it uses a threshold gate on $T$ on ${q(n)}$ inputs to check if the Hamming weight of $u$ is at least $((1 - 3\eps/4)q)$.

  We now show the correctness of the reduction. If $z \in \Pi_{yes}$, then $f_z$ is computed by some circuit of size at most $s(n)$. Thus, for every random
  choice of $y_1$ and $w$, $D_n$ can learn the function $f_z$ and with probability at least $(1-\delta)$, output a hypothesis $h$ which has an error of at most
  $\eps/2$ with respect to $f_z$. Now, for the $q$ samples given by $y_1$, by an application of Hoeffding's inequality (Lemma \ref{lem:hoeffding}), the probability that the Hamming weight of $u \in \{0,1\}^{q}$ is lesser than $ \left( 1 - 0.6\eps \right)q$ is at most $2\exp(-2q \eps^2/100)$ which is at  most $1/4$ for our choice of $q$. When $\delta \leq 1/9$, we see that $T(u) = 1$ with probability at least 
  $(1-\delta)3/4 \geq 2/3$.

  On the other hand, if $z \in \Pi_{no}$, then no circuit of size at most $t(n)$ can even $(1-\eps)$-approximate $f_z$. Thus, for any random choice of $y_1$
  and $w$, any hypothesis $h$ which $D_n$ outputs is a circuit of size at most $t(n)$ and thus is at least $\eps$-far from $f_z$. By a similar application
  of Hoeffding's inequality, we see that the probability that the Hamming weight of $u \in \{0,1\}^{q}$ is greater than $\left( 1 - 0.9\eps \right)q$ is at most
  $2\exp(-2q \eps^2/100) \leq 1/4$. Therefore, $T(u) = 0$ with probability $2/3$. 

  For the next step, we need to derandomize the circuits $F_N$. Define $E_N$ as
  \begin{equation*}
    \begin{split}
      E_N : \{0,1\}^N \times &\left( \{0,1\}^{n \cdot q + t(n)} \right)^{R} \rightarrow \{0,1\} \\
      E_N(z, y^{(1)}, \dots, y^{(R)}) = & \mathsf{MAJ}_{R} (F_N(z, y^{(1)}), \dots, F_N(z, y^{(R)})) \\
    \end{split}
  \end{equation*}
  where $R = CN$ and each $y^{(j)} \in \{0,1\}^{n \cdot q + t(n)}$, for each $j \in [R]$. 

  When, $z \in \Pi_{yes}$, then using Hoeffding's inequality, we see that with probability at most $2^{-2N}$ (for suitably chosen $C$), the string
  $(F_N(z, y^{(1)}), \dots, F_N(z, y^{(R)}))$ has Hamming weight $\le 3R/5$. Similarly, when $z \in \Pi_{no}$, with probability at most $2^{-2N}$, the string
  $(F_N(z, y^{(1)}), \dots, F_N(z, y^{(R)}))$ has Hamming weight $\ge 2R/5$. Thus, the majority gate differentiates between the two cases except with probability at
  most $2^{-2N}$. We use Adleman's trick \cite{AB09} to fix a string $\alpha \in \{0,1\}^{R \cdot (n \cdot q + t(n))}$ which correctly derandomizes $F_N$ on all
  inputs in $\Pi_{yes}$ and $\Pi_{no}$ and call the resulting circuit as $E^*_N$ which computes the function $E^*_N : \{0,1\}^N \rightarrow \{0,1\}$.

  \sloppy We next compute the size of $E^*_N$. Each $F_N(z, y^{(i)})$ is fixed to $F_N(z, \alpha^{(i)})$, where $\alpha^{(i)} \in \{0,1\}^{(n \cdot q + t(n))}$ is the
  $i^{\text{th}}$ section of the hardwired random string $\alpha$. Observe that for the set of oracle queries $Y$ made by $D_n$, it is enough to use appropriate literals
  from the input $z$ whenever we need to access the truth table of $f_z$. Indeed, whenever $D_n$ uses a random example, the randomness comes from $\alpha^{(i)}$ which
  is fixed non-uniformly and whenever it makes a membership query, the set of queries $Y$ is fixed for $D_n$ because of its non-adaptivity. Recall that the size of the
  circuit $D_n$ is $t(n)$  
  and the hypothesis $h$ output by the learner can be interpreted as a circuit and efficiently computed by another
  circuit of size $\poly(t(n))$. Thus, the circuit size to compute $F_N(z, \alpha)$ is at most $\poly(t(n)\cdot q)$ 
and the total circuit size to construct $E^*_N$ is $O(N \cdot \poly(t(n)/\eps))$. 
\end{proof}

\begin{proof}[Proof sketch of Lemma \ref{lem:wc_mcsp_learn_lb}]
  We show a two-sided error randomized reduction from $\MCSP[n^c,2^n/n^c]$ to $\{D_n\}_{n \geq 1}$. Let $q = q(n) = O \left( n^{3c} \right)$. The reduction is 
  almost the same as that of Lemma \ref{lem:learn_lb}. Here we use a threshold gate on $q(n)$ inputs which answers $1$ whenever the Hamming weight of its input is greater than $(1 - 1/n^{1.5c})q(n)$.

  When the input to $\MCSP[n^c, 2^n/n^c]$ is a yes instance, with probability at least $(1-1/n)$, $D_n$ outputs a hypothesis $h_n \in \Circuit[2^{n^\gamma}]$ which has
  error at most $1/O(n^{2c})$. Now for the $q(n)$ samples drawn uniformly at random, the probability that $h$ agrees with the input instance on at least
  a $(1-1/n^{1.5c})q(n)$ samples is at least $(1-1/n)2/3$.

  When the input to $\MCSP[n^c, 2^n/n^c]$ is a no instance, any hypothesis $h$ which $D_n$ outputs must have error greater than $1/O(n^{c+2})$. Indeed, if the error
  is less than $O(1/n^{c+2})$, then by hardwiring all the error inputs by using circuits of size at most $O \left( \frac{2^n}{n^{c+2}} \cdot n \right)$ we get a circuit
  of size at most $2^n/n^{c}$, which is a contradiction to the promise of the no instance. By Hoeffding's inequality, the probability that $h$ agrees with the input instance on at most a $(1-1/n^{1.5c})q(n)$ samples is at least $2/3$. 

  The derandomization is the same as that of Lemma \ref{lem:learn_lb}, obtained by repeating the above reduction $R = O(N)$ times and computing the majority over the $R$ outputs of the reduction. The circuit size to compute $\MCSP[n^c, 2^n/n^c]$ is thus $O(N \cdot 2^{O(n^\gamma)} n^{3c}) = O(N^{1+\varepsilon})$, for $\varepsilon = o(1)$. 
\end{proof}

\subsection{Towards a More Robust Theory}\label{ss:redmcsp}

The question of non-naturalizability of hardness magnification for $\MCSP[n^c/2n,n^{c}]$ is connected to the question of basing hardness of learning on the assumption $\NP\not\subseteq\Circuit[2^{O(n^\gamma)}]$.

\begin{proposition} Assume that for every $\gamma\in (0,1)$ there is $d\ge 2$ such that $\NP\not\subseteq\Circuit[2^{O(n^\gamma)}]$ implies hardness of learning $\Circuit[n^d]$ by $2^{n^\gamma}$-size circuits with error $1/n^d$. Then, there is a constant $e$ such that for every $\gamma\in (0,1)$ and $c\ge 1$, $\MCSP[n^c/2n,n^c]\notin\Circuit[N^{1+e\gamma c}]$ implies that there is no $\Ppoly$-natural property against \Ppoly.
\end{proposition}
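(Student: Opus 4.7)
The plan is to establish the proposition by contrapositive: assuming a $\Ppoly$-natural property useful against $\Ppoly$, I will show $\MCSP[n^c/2n, n^c] \in \Circuit[N^{1+e\gamma c}]$ for every $\gamma \in (0,1)$ and $c \geq 1$, for some universal constant $e$. The overall strategy is a three-link chain: natural property $\Rightarrow$ learning algorithm $\Rightarrow$ $\NP \subseteq \Circuit[2^{O(n^\gamma)}]$ $\Rightarrow$ small circuits for $\MCSP$.

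For the first two links, I would invoke Theorem \ref{thm:cikk_learn} to convert the natural property into non-adaptive learners: for every $k \geq 1$ and $\gamma' \in (0,1)$, the class $\Circuit[n^k]$ is learnable up to error $1/n^k$ by oracle circuits of size $2^{O(n^{\gamma'})}$. Then, for the given $\gamma$, I would pick $d = d(\gamma)$ from the proposition's hypothesis and specialize $k := d$, $\gamma' := \gamma$; the contrapositive of the hypothesis then yields $\NP \subseteq \Circuit[2^{O(n^\gamma)}]$.

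The third and main link is to turn $\NP \subseteq \Circuit[2^{O(n^\gamma)}]$ into the desired circuit upper bound for $\MCSP[n^c/2n, n^c]$. I would adapt the anticheckers-based magnification construction behind Lemma \ref{lem} and Theorem \ref{pnp} to the parameter regime $s = n^c/2n$, $t = n^c$. On input $y \in \{0,1\}^N$, the circuit iteratively builds a set $T$ of $\widetilde{O}(n^c)$ anticheckers through a sequence of NP queries, and a final NP query then decides whether some circuit of size $\leq s$ agrees with $y$ on $T$. Each NP gate, having fan-in $\poly(n)$, is realizable by a $2^{O(\poly(n)^\gamma)}$-size circuit under our hypothesis. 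Combined with an $O(N \cdot \poly(n))$ front-end that routes the appropriate bits of $y$ to the NP gates, the overall size should fit into $N^{1+e\gamma c}$ for a large enough universal constant $e$.

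The main obstacle lies in the parameter bookkeeping in the third step. A naive implementation of Lemma \ref{lem} feeds the entire truth table $y$ (of length $N$) to each NP oracle, and under the weaker hypothesis $\NP \subseteq \Circuit[2^{O(n^\gamma)}]$ this blows up to $2^{O(N^\gamma)}$-size gates, which destroys the magnification bound. One must carefully exploit the observation that each NP query in the anticheckers construction depends only on the anticheckers generated so far together with their values under $y$, amounting to just $\polylog(N)$ bits, and route these bits explicitly from the circuit inputs. Making the size estimate go through with a constant $e$ that is independent of $\gamma$ and $c$, rather than a function of them, is the delicate technical crux, and is where most of the work lies.
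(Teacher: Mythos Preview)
Your proposal is correct and follows essentially the same approach as the paper: contrapositive, then the three-link chain (natural property $\Rightarrow$ learning via Theorem~\ref{thm:cikk_learn}, learning $\Rightarrow$ $\NP\subseteq\Circuit[2^{O(n^\gamma)}]$ via the hypothesis, and finally the anticheckers construction behind Lemma~\ref{lem}/Theorem~\ref{pnp} to get the $\MCSP$ upper bound). The paper's proof is terser and simply points to ``the proof of Theorem~\ref{pnp}'' for the last step, whereas you spell out the oracle-fan-in and size bookkeeping and correctly flag that obtaining a constant $e$ independent of $\gamma$ and $c$ is the delicate point; this is exactly where the work lies, and the paper does not elaborate further either.
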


\proof By Theorem \ref{thm:cikk_learn}, \Ppoly-natural property against \Ppoly implies that for every $d$ there is $\gamma<1/d$ and $2^{n^{\gamma}}$-size circuits learning $\Circuit[n^d]$ with error $1/n^d$. By our assumption, this implies $\NP\subseteq\Circuit[2^{O(n^\gamma)}]$. We can now use $\NP\subseteq\Circuit[2^{O(n^\gamma)}]$ as the assumption in the proof of Theorem \ref{pnp} to conclude that there is a constant $e$ independent of $\gamma$ such that for $c\ge 1$, $\MCSP[n^c/2n,n^c]\in\Circuit[N^{1+e\gamma c}]$. \qed

\bigskip

A form of the opposite implication holds as well if we assume \NP-completeness of \MCSP. Moreover, instead of the non-naturalizability of hardness magnification, we need to assume a reduction from worst-case \MCSP to approximate \MCSP. 

\begin{definition}
A p-time algorithm $A$ $k$-reduces $\MCSP[s,t]$ to $\MCSP[(s,0),(t,\eps)]$ if it maps instances of $\MCSP[s,t]$ to instances of $\MCSP[(s,0),(t,\eps)]$ and

1. For $f\in\Circuit[s]$, $A(\ttable(f))$ is the truth-table of a Boolean function in $\Circuit[s^k]$.

2. For $f\not\in\Circuit[t]$, $A(\ttable(f))$ is not $(1-\eps)$-approximable by circuits of size $t^{1/k}$.
\end{definition}

\begin{proposition}\label{prop:redcoms}
Assume there is a p-time algorithm $k$-reducing $\MCSP[s,t]$ to $\MCSP[(s,0),(t,\eps)]$ and that for all $0<\alpha<\beta<1$, $\MCSP[2^{\alpha n},2^{\beta n}]$ is \NP-complete. If for every sufficiently small $\alpha>0$ there is $\beta<1/k$ and a $2^{\beta n}$-time algorithm learning $\Circuit[2^{\alpha n}]$ with error $\eps$, then \Ptime= \NP.
\end{proposition}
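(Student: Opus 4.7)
The strategy is to chain the three assumed ingredients into a polynomial-time algorithm for an NP-complete version of worst-case $\MCSP$. Namely, the $k$-reduction converts worst-case $\MCSP[s,t]$ into approximate $\MCSP[(s^k,0),(t^{1/k},\eps)]$, the learner is then used to decide the approximate version, and finally the NP-completeness of worst-case $\MCSP[2^{\alpha n},2^{\beta n}]$ upgrades the result to $\Ptime=\NP$.

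I would begin by fixing parameters. Pick $\alpha_0>0$ small enough for the learning hypothesis to apply, and let $\beta_0<1/k$ and $L$ be the corresponding learner running in time $2^{\beta_0 n}$ and learning $\Circuit[2^{\alpha_0 n}]$ up to error $\eps$. Since $k\beta_0<1$, choose $\beta^\star\in(k\beta_0,\,1)$ and set $\alpha^\star:=\alpha_0/k$, shrinking $\alpha_0$ further if necessary so that $0<\alpha^\star<\beta^\star<1$. I claim $\MCSP[2^{\alpha^\star n},2^{\beta^\star n}]\in\Ptime$. Given a truth-table $y\in\binset^{N}$ with $N=2^n$, the decision procedure is: (i) compute $y':=A(y)$, the output of the $k$-reduction; (ii) run $L$ with oracle access to $f_{y'}$ to obtain a hypothesis circuit $h$; (iii) evaluate $h$ at all inputs and count the fraction of positions on which $h$ agrees with $y'$; (iv) accept iff this agreement is at least $1-\eps$.

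Correctness is case-by-case. If $y$ is a YES instance then $f_y\in\Circuit[2^{\alpha^\star n}]$, so by property (1) of the $k$-reduction $f_{y'}\in\Circuit[2^{k\alpha^\star n}]=\Circuit[2^{\alpha_0 n}]$, hence $L$ outputs a hypothesis $h$ that $(1-\eps)$-approximates $f_{y'}$. If $y$ is a NO instance then $f_y\notin\Circuit[2^{\beta^\star n}]$, so by property (2) no circuit of size $2^{\beta^\star n/k}$ can $(1-\eps)$-approximate $f_{y'}$; since $h$ has size at most $2^{\beta_0 n}$ and $k\beta_0<\beta^\star$, $h$ is too small and fails the agreement test. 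Step (i) runs in polynomial time in $N$, step (ii) in time $2^{\beta_0 n}=N^{\beta_0}$, and step (iii) in time $\poly(N)$. Combining with the assumed NP-completeness of $\MCSP[2^{\alpha^\star n},2^{\beta^\star n}]$ yields $\Ptime=\NP$.

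The main subtlety is keeping the exponents aligned across the three reductions: the $k$-reduction inflates the YES parameter by a factor of $k$ in the exponent and deflates the NO parameter by the same factor, so the learner's hypothesis-size exponent $\beta_0$ must lie strictly below $1/k$ for the NO gap to survive after amplification. That is precisely why the hypothesis takes the form $\beta<1/k$. A secondary point worth flagging is that if the learner is understood as randomized the argument as stated delivers only $\NP\subseteq\cBPP$ (or $\NP\subseteq\Ppoly$ in the non-uniform reading); reaching the clean conclusion $\Ptime=\NP$ either requires reading the learning hypothesis as deterministic or a standard derandomization step, and this is the point most deserving care in a formal write-up.
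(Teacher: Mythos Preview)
Your proof is correct and follows essentially the same approach as the paper: use the learner to solve the approximate version $\MCSP[(2^{k\alpha n},0),(2^{\beta n},\eps)]$ in polynomial time, then compose with the $k$-reduction to solve the NP-complete worst-case problem $\MCSP[2^{\alpha n},2^{k\beta n}]$. The paper's proof is terser---it simply asserts that learning yields a p-time algorithm for the approximate problem---whereas you spell out the learn-and-test procedure explicitly and add the useful caveat about randomized learners; these are welcome clarifications but do not change the underlying argument.
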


\proof Let $A$ bet the p-time $k$-reduction from the statement and $\alpha>0$ be sufficiently small. Assume we can learn in $2^{\beta n}$-time $\Circuit[2^{k\alpha n}]$ with error $\eps$ and $k\alpha<\beta<1/k$. This implies that $\MCSP[(2^{k\alpha n},0),(2^{\beta n},\eps)]$ can be solved in p-time. Since $A$ reduces an \NP-complete problem $\MCSP[2^{\alpha n},2^{k\beta n}]$ to $\MCSP[(2^{k\alpha n},0),(2^{\beta n},\eps)]$, this shows that \Ptime= \NP.  \qed

\section{The Locality Barrier}\label{s:difficulties}

\subsection{Lower Bounds Above Magnification Threshold}

\subsubsection{The Razborov-Smolensky Polynomial Approximation Method}\label{ss:ACp_Polynomial_Approx}

In this section, we observe that the lower bound techniques of Razborov and Smolensky \citep{Razborov87, DBLP:conf/stoc/Smolensky87} can be ``localized.''
The following proposition instantiates the locality barrier for \HMFrontierAC.

\begin{proposition}[Locality Barrier for \HMFrontierAC]
  \label{p:frontier_B} The following results hold.
 \begin{itemize}
\item[\emph{(A1}$^\mathcal{O}$\emph{)}] \emph{(Oracle Circuits from Magnification)}
  $\mathsf{MKtP}[n^c,2n^c] \in \mathsf{AND}$-$\mathcal{O}$-$\mathsf{XOR}[N^{1.01}]$. More precisely, $\mathsf{MKtP}[n^c,2n^c]$ is computed by circuits with $N^{1.01}$ gates and of the following form: the output gate is an AND gate of fan-in $O(N)$, at the middle layer are oracle gates of fan-in $poly(n)$, and at the bottom layer are XOR gates.
\item[\emph{(A3}$^\mathcal{O}$\emph{)}] \emph{(Extension of  Lower Bound Techniques)}
  For a constant $d$, assume that $O_1, \cdots, O_d \in \Nat$
     satisfy $\prod_{i=1}^d O_i \le \sqrt{N}/\omega(\log N)^{d}$.
     Then $\mathsf{Majority}$ cannot be computed by
     a depth-$d$ polynomial-size oracle $(\AC[\oplus])^\mathcal{O}$ circuit
     whose oracle gates on the $i$-th level have fan-in at most $O_i$.
 \end{itemize}
\end{proposition}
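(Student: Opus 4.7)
I would handle the two items of the proposition separately.

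For (A1$^\mathcal{O}$), no new argument is required: the oracle construction inside the proof of \cref{thm_B1} already shows that for $\theta(N) = n^c$ (so that $\theta \ge \log N$), $\MKtP[\theta, \theta + c\log N]$ lies in $\mathsf{AND}_{O(N)}\text{-}L_{O(n^c)}\text{-}\mathsf{XOR}$ for some language $L \in \mathsf{EXP}$. Identifying $L$ with the oracle $\mathcal{O}$ yields a circuit of total size $O(N) \le N^{1.01}$ with an $\mathsf{AND}$ of fan-in $O(N)$ at the top, oracle gates of fan-in $O(n^c) = \poly(n)$ in the middle layer, and $\mathsf{XOR}$ gates at the bottom, as required.

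For (A3$^\mathcal{O}$), the plan is to extend the Razborov--Smolensky polynomial approximation method so that oracle gates are absorbed into the degree bookkeeping. Fix an odd prime $p$ (say $p = 3$) and a depth-$d$, polynomial-size oracle $\AC[\oplus]^{\mathcal{O}}$ circuit $C$ of size $s = \poly(N)$ with oracle fan-in at most $O_i$ on level $i$. I would construct, bottom-up, a random polynomial $p_v \in \mathbb{F}_p[x_1,\dots,x_N]$ for each wire $v$ such that, for every fixed input, $p_v$ agrees with the value computed at $v$ with high probability over the random choices. The per-gate rules are:
\begin{itemize}
\item an $\mathsf{XOR}$ gate is replaced exactly by the sum of its input polynomials, giving degree blow-up $1$;
\item an AND or OR gate of fan-in at most $s$ is replaced by the standard Razborov--Smolensky randomised polynomial of degree $O(\log s) = O(\log N)$ that errs with probability at most $1/(10s)$ at any fixed input;
\item an oracle gate at level $i$ of fan-in $f \le O_i$ is replaced \emph{exactly} by its unique multilinear representation, which is a polynomial of degree at most $f \le O_i$.
\end{itemize}

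By induction on depth, the polynomial at a gate on level $i$ has degree at most $\prod_{j=1}^{i} \max(O_j, O(\log N))$. Only the AND/OR gates contribute error, so a union bound over the at most $s$ such gates yields pointwise error at most $1/10$. Averaging over inputs and fixing the randomness in the usual way produces a deterministic polynomial $q$ of the same degree that agrees with $C$, and hence with $\mathsf{Majority}_N$, on at least a $9/10$ fraction of inputs. Bounding $\max(O_i, O(\log N)) \le O_i \cdot O(\log N)$ (both factors being at least $1$), the total degree satisfies
\[
\deg(q) \;\le\; \prod_{i=1}^d \max(O_i, O(\log N)) \;\le\; \Bigl(\prod_{i=1}^d O_i\Bigr) \cdot O(\log N)^d \;\le\; \frac{\sqrt{N}}{\omega(1)},
\]
using the hypothesis $\prod_i O_i \le \sqrt{N}/\omega(\log N)^d$ and the fact that $d$ is constant. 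This contradicts the Razborov--Smolensky $\Omega(\sqrt{N})$ lower bound on the degree of any $\mathbb{F}_p$ polynomial that agrees with $\mathsf{Majority}_N$ on a $9/10$ fraction of inputs.

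The only substantively new ingredient is the third bullet: an oracle gate of fan-in $f$ is represented \emph{exactly and without any random choice} by a multilinear polynomial of degree at most $f$, so oracle gates plug into the polynomial method cleanly, contributing a per-level degree blow-up equal to their fan-in and no approximation error. Everything else is a direct adaptation of the classical proof, so I do not foresee a serious technical obstacle; the main care lies in tracking that the $O(\log N)^d$ overhead from the AND/OR layers is exactly the slack absorbed by the $\omega(\log N)^d$ in the hypothesis.
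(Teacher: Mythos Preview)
Your approach to both items matches the paper's proof closely: (A1$^\mathcal{O}$) is indeed immediate from the oracle construction inside the proof of \cref{thm_B1}, and for (A3$^\mathcal{O}$) the paper also approximates the circuit layer by layer, using the standard Razborov probabilistic polynomials for AND/OR and the exact multilinear extension (degree $\le$ fan-in) for oracle gates, then invokes the Majority inapproximability bound.

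There is, however, a genuine slip in your field choice. You fix an odd prime $p$ (e.g., $p=3$) and then claim that an $\mathsf{XOR}$ gate is ``replaced exactly by the sum of its input polynomials, giving degree blow-up $1$.'' This identity holds over $\mathbb{F}_2$ but fails over $\mathbb{F}_3$: for $\{0,1\}$-valued inputs, parity is \emph{not} a degree-$1$ polynomial modulo $3$ (indeed, this is precisely the obstruction Smolensky exploits to separate $\AC[\oplus]$ from $\AC[3]$). Over $\mathbb{F}_3$ the $\mathsf{XOR}$ gates would themselves require high degree, collapsing your degree accounting. The paper avoids this by working over $\mathbb{F}_2$ throughout (see \cref{lemma_MajorityVersusLowDegree}), where both the exact degree-$1$ representation of $\mathsf{XOR}$ and the $\Omega(\sqrt{N})$ approximate-degree lower bound for $\mathsf{Majority}$ are available. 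Once you switch to $\mathbb{F}_2$, your argument goes through exactly as written and coincides with the paper's.
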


The first item is immediate from the proof of Theorem \ref{thm_B1} in Section \ref{ss:frontier_ACXOR}.
In what follows, we prove the second item of Proposition \ref{p:frontier_B}.

Recall that the proof techniques of Razborov and Smolensky \citep{Razborov87, DBLP:conf/stoc/Smolensky87} consist of two parts:
The first lemma shows that any low degree polynomial cannot approximate $\mathsf{Majority}$.
(A simple proof sketch can be found in, e.g., \cite{Kopparty11_stoc_conf}.)
\begin{lemma}
  \label{lemma_MajorityVersusLowDegree}
  For any polynomial $p \in \mathbb{F}_2[x_1, \cdots, x_N]$ of degree $\le \sqrt{N} / 4$,
  $$
  \Pr_{x \sim \binset^N} [ p(x) \neq \mathsf{Majority}(x) ] \ge \frac{1}{4}.
  $$
\end{lemma}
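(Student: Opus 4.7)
The plan is to carry out the classical Razborov--Smolensky polynomial-method lower bound for $\mathsf{Majority}$ over $\mathbb{F}_2$, arguing by contradiction via a dimension count on the space of polynomial representatives restricted to a large bit-flip-closed subset of the cube. Suppose for contradiction that $p \in \mathbb{F}_2[x_1,\ldots,x_N]$ has degree $d \le \sqrt{N}/4$ and agrees with $\mathsf{Majority}$ on a set $A \subseteq \{0,1\}^N$ of density exceeding $3/4$; the goal is to derive an inequality between $|A|$ (large) and a counting estimate on $\mathbb{F}_2$-polynomial restrictions (small) that is violated for this choice of $d$.

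First I would assume $N$ odd (the even case follows by fixing one input bit) and exploit the bit-flip involution $x \mapsto \bar x$. Since $\mathsf{Majority}(\bar x) = 1 - \mathsf{Majority}(x)$, the set $S := A \cap \bar A$ is closed under the involution, has $|S| > 2|A| - 2^N > 2^{N-1}$, and on $S$ one has both $p(x) = \mathsf{Majority}(x)$ and the dual flip-identity $p(\bar x) = 1 - p(x)$. This flip closure is what enables the degree reduction below: the naive rewriting $m_T = m_T \cdot p$ on $A$ does not lower degree, and the symmetry is essential to swap a high-degree monomial for its low-degree complement.

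The technical heart of the argument is a degree-reduction lemma: every function $S \to \mathbb{F}_2$ is the restriction of some multilinear polynomial of degree at most $N/2 + d$. For a monomial $m_T(x) = \prod_{i \in T} x_i$ with $|T| > N/2 + d$, I would use the universal identity $m_T \cdot (1 - \mathsf{Majority}) \equiv 0$ on $\{0,1\}^N$ (which holds because $|T| > N/2$ forces $\mathsf{Majority}(x) = 1$ whenever $m_T(x)=1$), substitute $\mathsf{Majority} = p$ on $A$, and combine with the flip-identity on $S$ to express $m_T$ on $S$ as $m_{T^c}(\bar x)$ plus corrections of degree $|T^c|+d = N - |T| + d < N/2$. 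Expanding $m_{T^c}(\bar x) = \prod_{i \in T^c}(1-x_i)$ as a multilinear polynomial in $x$ gives a polynomial of degree $|T^c| < N/2$, establishing the lemma by induction on degree.

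The dimension count then closes the argument. The $\mathbb{F}_2$-dimension of the space of multilinear polynomials of degree $\le N/2 + d$ is $\sum_{k=0}^{N/2+d} \binom{N}{k}$; by surjectivity of the restriction map given by the lemma, this upper-bounds $|S|$. Passing to the subspace compatible with the flip-identity $p(x)+p(\bar x)=1$ on the symmetric set $S$ effectively halves the accessible dimension, so that $|S| \le 2^{N-1} + O\!\left(\sum_{k=N/2-d}^{N/2+d}\binom{N}{k}\right)$. A standard central-binomial (Chernoff) estimate gives $\sum_{k=N/2-d}^{N/2+d} \binom{N}{k} = O(d/\sqrt{N}) \cdot 2^N$, which for $d \le \sqrt{N}/4$ is strictly less than $2^{N-1}\cdot\tfrac{1}{2}$, contradicting $|S| > 2^{N-1}$. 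The main obstacle is making the degree-reduction lemma actually reduce degree rather than merely bound it; the $\sqrt{N}/4$ constant and the $1/4$ error threshold in the statement are dictated by how tightly the Chernoff step can match the combined slack of the lemma's $+d$ loss and the flip-symmetry's factor-of-two saving.
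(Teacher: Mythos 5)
Your overall plan---pass to the flip-closed set $S = A \cap \bar A$ with $|S| > 2^{N-1}$, show every function on $S$ has a low-degree representative, and then count dimensions using a central-binomial estimate---is the right architecture, and it matches the standard Razborov--Smolensky argument that the paper cites. However, the degree-reduction step at the heart of your proof is wrong as written. You claim that for $|T| > N/2 + d$ one can express $m_T$ on $S$ as $m_{T^c}(\bar x)$ plus corrections of degree $|T^c| + d < N/2$, starting from the identity $m_T \cdot (1 - \mathsf{Maj}) \equiv 0$. That identity only gives $m_T = m_T \cdot p$ on $A$, which \emph{raises} the degree to $|T| + d$; it does not connect $m_T$ to $m_{T^c}(\bar x)$ at all. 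In fact $m_T(x)$ indicates $T \subseteq \mathrm{supp}(x)$ while $m_{T^c}(\bar x)$ indicates $\mathrm{supp}(x) \subseteq T$---these are genuinely different functions on $\{0,1\}^N$, and their difference has degree $|T|$, not $< N/2$. (Concretely, for $N=5$, $T=\{1,2,3,4\}$, the two functions disagree both at $1^5$ and at $0^5$, and $m_T - m_{T^c}(\bar x) = x_1x_2x_3x_4 - (1-x_5)$ has degree $4$ with no reason to collapse on $S$.) Moreover, your bound of ``degree $< N/2$'' after reduction would already force $|S| \le 2^{N-1}$, contradicting $|S| > 2^{N-1}$ outright, which would make your later ``halving'' and Chernoff steps superfluous---a sign that the claimed reduction is too strong to be true.

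The correct degree-reduction works with point indicators rather than monomials, and the induction runs over Hamming weight, not over monomial degree. For $a$ with $|a| = (N+1)/2$ (and $N$ odd), write $\ell_a(x) := \prod_{a_i = 0}(1 - x_i)$, of degree $(N-1)/2$; one checks that $\ell_a(x)\mathsf{Maj}(x) = \delta_a(x)$ pointwise, since $\mathrm{supp}(x) \subseteq \mathrm{supp}(a)$ together with $|x| > N/2$ forces $x = a$. Substituting $\mathsf{Maj} = p$ on $S$ gives $\delta_a$ a representative of degree $(N-1)/2 + d$. For $|a| > (N+1)/2$ one has $\ell_a \mathsf{Maj} = \sum_{b \subseteq a,\, |b| > N/2} \delta_b = \delta_a + \sum_{b \subsetneq a,\, |b| > N/2}\delta_b$, and a strong induction on $|a|$ (downward from $(N+1)/2$ being trivially handled, upward for larger weights) keeps the representative degree at $(N-1)/2 + d$. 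The case $\mathsf{Maj}(a) = 0$ follows by flip symmetry. This yields $|S| \le \sum_{k \le (N-1)/2 + d}\binom{N}{k}$, after which your central-binomial estimate closes the argument as you describe; the informal ``halving the dimension via flip symmetry'' language can simply be dropped. So the gap is specifically in your proposed $m_T \mapsto m_{T^c}(\bar x)$ rewriting; replace it by the $\delta_a = \ell_a \cdot \mathsf{Maj}$ decomposition and the rest of your outline is sound.
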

The second lemma shows that $\AC[\oplus]$ circuits can be approximated by low degree polynomials.
We show that this argument can be localized.
\begin{lemma}
  \label{lemma_RazborovApprox}
  Let $C$ be a depth-$d$ polynomial-size oracle $\AC[\oplus]$ circuit
  whose oracle gates on the $i$-th level have fan-in at most $O_i$.
  Then there exists a polynomial $p \in \mathbb{F}_2[x_1, \cdots, x_N]$
  of degree $\le O(\log N)^{d} \cdot \prod_{i=1}^d O_i$
  such that
  $\Pr_{x \sim \binset^N} [ p(x) \neq C(x) ] < \frac{1}{4}.$
\end{lemma}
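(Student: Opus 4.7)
The plan is to adapt the standard Razborov--Smolensky polynomial approximation argument by inserting a new substitution rule for oracle gates. Three observations underlie the plan. First, an $\oplus$ gate of any fan-in is exactly computed by a degree-$1$ polynomial over $\mathbb{F}_2$. Second, an AND or OR gate of arbitrary fan-in admits a random polynomial approximation over $\mathbb{F}_2$ of degree $O(\log(1/\delta))$ such that, for any fixed input to the gate, the probability over the random polynomial of computing the wrong value is at most $\delta$ (this is the classical probabilistic polynomial construction). Third---and this is the new ingredient---an oracle gate of fan-in $O_i$ computes some function $g \colon \binset^{O_i} \to \binset$, which admits a unique multilinear representation over $\mathbb{F}_2$ of degree at most $O_i$; consequently every oracle gate can be replaced \emph{exactly} by a polynomial of degree at most $O_i$ in its direct inputs.

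Given a size-$S$ ($S \le \poly(N)$) depth-$d$ oracle circuit $C$ as in the statement, I would proceed bottom-up. For each AND or OR gate of $C$, independently sample a random polynomial approximation of degree $t = O(\log S) = O(\log N)$, tuned so that the per-gate error probability is at most $1/(4S)$. Replace each $\oplus$ gate by its exact degree-$1$ representation, and each oracle gate at level $i$ by its exact multilinear representation of degree at most $O_i$. If $D_{i-1}$ is an upper bound on the total degree of the polynomial produced at any gate at level $i-1$, then substituting into a level-$i$ gate polynomial of degree at most $\max(t, O_i) \le t \cdot O_i$ yields degree $D_i \le (t \cdot O_i) \cdot D_{i-1}$. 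Iterating, the final polynomial $\tilde p$ has degree at most $\prod_{i=1}^d (t \cdot O_i) \;=\; O(\log N)^d \cdot \prod_{i=1}^d O_i$, matching the claimed bound.

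For the error bound, fix any input $x \in \binset^N$. By a union bound over the at most $S$ randomized gates, all of the sampled polynomials simultaneously compute the correct gate value on the propagated inputs with probability at least $1 - S \cdot \tfrac{1}{4S} = \tfrac{3}{4}$ over the random choice of approximations; in that event $\tilde p(x) = C(x)$. Averaging over $x$ as well and swapping the order of expectations, there exists a fixed choice of per-gate random polynomials yielding a deterministic $p$ with $\Pr_{x \sim \binset^N}[\, p(x) \neq C(x)\,] \le \tfrac{1}{4}$, as required.

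I do not anticipate a serious conceptual obstacle: the Razborov--Smolensky argument is robust under the substitution of any low-degree gate polynomials, and the coarse per-level cap $\max(t, O_i) \le t \cdot O_i$ absorbs the XOR, AND/OR, and oracle contributions cleanly into the target product. The main technical point to be careful about is that the multilinear representations of the oracle functions are taken over $\mathbb{F}_2$, so that their composition with other $\mathbb{F}_2$-polynomials remains well-defined and of the claimed degree, and that the same sampled polynomials are used for all inputs so that the final averaging step indeed yields a single fixed $p$.
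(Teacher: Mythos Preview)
Your proposal is correct and follows essentially the same approach as the paper's proof sketch: replace $\mathsf{AND}/\mathsf{OR}$ gates by degree-$O(\log N)$ probabilistic polynomials, replace oracle gates at level $i$ by their exact multilinear $\mathbb{F}_2$-extension of degree $\le O_i$, compose level by level, union-bound the error over the $\le S$ randomized gates, and fix the randomness. The paper bounds the per-level degree by $\max\{O(\log N), O_i\}$ and then relaxes $\prod_i \max\{O(\log N), O_i\} \le O(\log N)^d \cdot \prod_i O_i$; your use of $\max(t,O_i) \le t \cdot O_i$ arrives at the same final bound by the same arithmetic.
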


\begin{proof}
  [Proof Sketch]
  We convert each layer of the circuit $C$ into a low degree probabilistic polynomial $p$
  that approximates $C$.

  Consider the $i$-th level of a circuit $C$.
  $\mathsf{NOT}$, $\mathsf{OR}$, $\mathsf{AND}$, and $\mathsf{XOR}$ gates can be converted into a probabilistic polynomial of degree $O(\log N)$ and error $1/\poly(N)$ in the standard way \citep{Razborov87}.
  In order to represent an oracle gate $\mathcal{O}$ as a low-degree polynomial,
  we simply take the multilinear extension of the oracle gate $\mathcal{O}$.
  Note that, at the $i$-th level, the fan-in of the oracle gate $\mathcal O$ is bounded by $O_i$;
  thus the oracle gate at the $i$-th level can be represented as a polynomial of degree $\le O_i$.
  Thus, in either cases, any gate at $i$-th level can be represented as a probabilistic polynomial of degree $\max \{ O(\log N), O_i \}$.
   Continuing this for $i = 1, \cdots, d$ and composing resulting polynomials, we obtain a probabilistic polynomial of degree $\prod_{i=1}^d \max \{ O(\log N), O_i \}$
  that approximates $C$. This implies via standard techniques the existence of a (deterministic) polynomial of the same degree that correctly computes the circuit on most inputs.
\end{proof}

These two lemmas immediately imply the $\mathsf{Majority}$ lower bound for $(\AC[\oplus])^\mathcal{O}$:
\begin{proof}
  [Proof of \emph{(A3}$^\mathcal{O}$\emph{)} of Proposition \ref{p:frontier_B}]
  Suppose that there exists a depth-$d$ polynomial-size oracle $\AC[\oplus]$ circuit that computes $\mathsf{Majority}$ and satisfies the condition of Proposition \ref{p:frontier_B}.
  By Lemma \ref{lemma_RazborovApprox}, there exists a polynomial $p$ of degree at most $O(\log N)^{d} \cdot \prod_{i=1}^d O_i \le o(\sqrt{N})$
  that approximates $\mathsf{Majority}$.
  However, this contradicts Lemma \ref{lemma_MajorityVersusLowDegree}.
\end{proof}

Finally, we mention that an incomparable bound can be obtained by using a lower bound for $\AC[\oplus]$ interactive compression games.
\begin{proposition}
[{\citep[Corollary 5.3]{DBLP:conf/coco/OliveiraS15}}]
    \emph{(A3}$^\mathcal{O}$\emph{)} 
    $\mathsf{Majority} \not\in(\AC[\oplus])^\mathcal{O}[\mathsf{poly}(n)]$ if the total number of input wires
    in the circuit feeding the $\mathcal O$-gates is $N / (\log N)^{\omega(1)}$.
\end{proposition}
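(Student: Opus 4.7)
The plan is to reduce the Majority lower bound for $(\AC[\oplus])^\mathcal{O}$ circuits to the lower bound for AC$[\oplus]$ interactive compression games established in Corollary 5.3 of Oliveira and Santhanam~\citep{DBLP:conf/coco/OliveiraS15}. Since the cited result is invoked as a black box, the main work is to show that the small budget of wires feeding the $\mathcal{O}$-gates translates into a small communication budget in a compression protocol of the right shape.

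Suppose, toward contradiction, that some polynomial-size $(\AC[\oplus])^\mathcal{O}$ circuit $C$ computes $\mathsf{Majority}_N$ while the total number of input wires entering all $\mathcal{O}$-gates in $C$ is $k = N / (\log N)^{\omega(1)}$. I would convert $C$ into a two-party protocol in which Alice has AC$[\oplus]$ power on the input $x$ and Bob is computationally unbounded. Processing $C$ in topological order, Alice computes every internal wire whose value depends only on $x$ and on previously revealed $\mathcal{O}$-gate outputs; whenever an $\mathcal{O}$-gate $g$ is reached, she transmits its $\mathrm{fan\text{-}in}(g)$ input bits to Bob, who returns the single output bit $\mathcal{O}(\cdot)$ on those inputs. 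After the last $\mathcal{O}$-gate has been resolved, Alice finishes the AC$[\oplus]$ computation and outputs $C(x)$. Summing over all gates, Alice sends a total of $k$ bits and Bob sends at most one bit per $\mathcal{O}$-gate, hence at most $k$ bits as well; the total communication is $O(k) = N / (\log N)^{\omega(1)}$. Applying Corollary 5.3 of~\citep{DBLP:conf/coco/OliveiraS15}, which rules out AC$[\oplus]$ compression protocols of polynomial complexity that compute $\mathsf{Majority}_N$ with communication $N / (\log N)^{\omega(1)}$, yields the desired contradiction.

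The main obstacle I anticipate is matching the precise protocol model used in~\citep{DBLP:conf/coco/OliveiraS15}: if their corollary is stated for one-round games in which Alice sends a single message to Bob, the multi-round simulation above is not immediately in the right form. The natural fix is to collapse the protocol by having Alice send, in one batch, the AC$[\oplus]$-computed inputs to all bottom-level $\mathcal{O}$-gates, receive Bob's batched responses, then batch the inputs to the next level, and so on; the number of rounds is bounded by the number of levels at which $\mathcal{O}$-gates appear, and each round uses an AC$[\oplus]$ subcircuit of $C$ (so the total AC$[\oplus]$ complexity remains polynomial and the total communication stays $O(k)$). Provided~\citep{DBLP:conf/coco/OliveiraS15} admits this multi-round setting — or provided we can appeal to a variant in the same paper that does — the argument goes through and establishes the stated bound.
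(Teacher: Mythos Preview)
Your approach is correct and matches the paper's own treatment: the paper does not give a standalone proof of this proposition but simply cites \citep[Corollary 5.3]{DBLP:conf/coco/OliveiraS15}, and in the analogous $\AC$ case (Proposition~\ref{p:frontier_A}, item (E3$^\mathcal{O}$)) it spells out exactly the simulation you describe --- view the oracle circuit as an interactive compression game where the bounded party computes in $\AC[\oplus]$ and the total wires feeding $\mathcal{O}$-gates becomes the communication to the unbounded party, then invoke the cited lower bound. Your worry about one-round versus multi-round is unnecessary: the framework in \citep{DBLP:conf/coco/OliveiraS15} is explicitly for \emph{interactive} compression games (multi-round protocols), so the topological-order simulation you describe is already in the right model and no collapsing into rounds is needed.
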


\subsubsection{The $\mathsf{Formula}$-$\mathsf{XOR}$ Lower Bound of \citep{Tal17}}\label{ss:Formula_XOR_Tal}

This section captures an instantiation of the locality barrier for HM Frontier B. Throughout this section we use the $\{-1,1\}$ realization of the Boolean domain (that is, $-1$ represents True and $1$ represents False). Let $\mathsf{Formula}$-$\mathsf{XOR}$ on variables $x_1, \dots, x_n$ be the class of formulas where the input leaves are labeled by parity functions of arbitrary arity over $x_1, \dots, x_n$. 

\begin{proposition}[Locality Barrier for HM Frontier B]
	\label{prop:frontier_C} 
	The following results hold.
	\begin{itemize}
		\item[\emph{(B1}$^\mathcal{O}$\emph{)}] \emph{(Oracle Circuits from Magnification)} For any $\varepsilon > 0$, $\mathsf{MCSP}[2^{n^{1/3}}, 2^{n^{2/3}}] \in \mathsf{Formula}$-$\mathcal{O}$-$\mathsf{XOR}[N^{1.01}]$, where every oracle $\mathcal{O}$ has fan-in at most $N^\varepsilon$ and appears in the layer right above the $\mathsf{XOR}$ leaves.
		\item[\emph{(B3}$^\mathcal{O}$\emph{)}] \emph{(Extension of Lower Bound Techniques)} For any $\delta > 0$, $\mathsf{InnerProduct}$ over $N$ input bits cannot be computed by $N^{2-3\delta}$-size $\mathsf{Formula}$-$\mathcal{O}$-$\mathsf{XOR}$ circuits with at most $N^{2-3\delta}$ oracle gates of fan-in $N^{\delta}$ in the layer right above the $\mathsf{XOR}$ leaves, for any oracle $\mathcal{O}$.
	\end{itemize}
\end{proposition}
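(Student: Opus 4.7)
For part (B1$^\mathcal{O}$), the plan is to reinspect the kernelization-based magnification proof of Theorem~\ref{thm:mcspker} and observe that it already produces an \emph{unconditional} Formula-$\mathcal{O}$-XOR upper bound for $\MCSP[2^{n^{1/3}}]$. Concretely, that proof, for each $i \in [bN]$, builds a depth-$2$ circuit computing $L_{\mathsf{check}}(I_{\mathsf{good}}, H_{I_{\mathsf{good}}}(x), i, E_N(x)_i)$ by a single call to $L_{\mathsf{check}} \in \NQPtime$ whose inputs are XORs of the $x$-variables; after error reduction by random sampling one takes an AND of $m = O(N)$ such circuits. The hypothesis $\NQPtime \subseteq \NC$ is used \emph{only} to replace each call to $L_{\mathsf{check}}$ by an $\NC$ formula. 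Leaving the calls in place as oracle gates yields the Formula-$\mathcal{O}$-XOR circuit claimed in (B1$^\mathcal{O}$): the top De Morgan formula is an AND of $O(N)$ outputs, each of the $O(N)$ oracle gates has fan-in equal to the input length $|I_{\mathsf{good}}| + T + \log(bN) + 1 = N^{o(1)}$ of $L_{\mathsf{check}}$, which is at most $N^\varepsilon$ for any $\varepsilon > 0$ and large $N$, and each oracle sits directly above $N^{o(1)}$ XOR leaves (computing $H_{I_{\mathsf{good}}}(x)$ and $E_N(x)_i$), for a total gate count of $N^{1+o(1)} \le N^{1.01}$.

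For part (B3$^\mathcal{O}$), the plan is to extend the Fourier-analytic $\widetilde{\Omega}(N^2)$ lower bound of~\cite{Tal17} for $\mathsf{InnerProduct}$ against Formula-XOR circuits to the Formula-$\mathcal{O}$-XOR setting. The crucial structural observation is that an oracle gate $\mathcal{O}(\chi_{S_1}, \ldots, \chi_{S_{N^\delta}})$ placed directly above XOR leaves has Fourier spectrum supported on characters $\chi_T$ with $T$ ranging over the $\mathbb{F}_2$-linear span of $\{S_1, \ldots, S_{N^\delta}\}$, a set of cardinality at most $2^{N^\delta}$. Tal's argument bounds, inductively through the formula, the $L_1$ mass of a Formula-XOR of size $s$ on its first $k$ Fourier levels, and then matches this against the spread-out spectrum of $\mathsf{InnerProduct}$ (each coefficient $\pm 2^{-N/2}$ on $2^{N/2}$ characters). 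I will re-run this induction, but at each leaf replace ``single parity'' by ``Fourier expansion of an oracle with support size $\le 2^{N^\delta}$'', tracking the contribution of each oracle carefully. The point is that each oracle output has $L_2$-mass $\le 1$, so by Cauchy--Schwarz restricted to the relevant Fourier levels its contribution can be controlled by a factor of $\mathsf{poly}(N^\delta)$ instead of the trivial $2^{N^\delta}$. Plugging in $q = N^{2-3\delta}$ oracle gates of fan-in $N^\delta$ should then yield the desired contradiction $q \cdot \mathsf{poly}(N^\delta) = \widetilde{\Omega}(N^2)$.

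The hard part will be the accounting in the inductive step: ensuring each oracle contributes only a $\mathsf{poly}(N^\delta)$ factor rather than $2^{N^\delta}$, while preserving Tal's level-$k$ decomposition across the formula. Should the direct Fourier argument prove recalcitrant, a plausible fallback is via pseudorandom restrictions in the spirit of~\cite{ImpagliazzoMZ19, Tal17_coco_conf}: apply a $k$-wise independent, $p$-regular random restriction that simultaneously shrinks the top De Morgan formula by its shrinkage exponent~$2$ and collapses most XOR leaves to constants, so that each oracle gate ends up with $o(N^\delta)$ live parity inputs and can be absorbed into a vanilla Formula-XOR, to which Tal's bound applies; choosing $p$ so that $pN$ live variables remain leaves $\mathsf{InnerProduct}$ restricted essentially intact (still an inner product on $\approx pN$ bits, with the same Fourier-spread property), which suffices to close out the contradiction.
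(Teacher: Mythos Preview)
Your treatment of (B1$^\mathcal{O}$) is correct and matches the paper: both simply inspect the kernelization proof of Theorem~\ref{thm:mcspker} and observe that leaving the $L_{\mathsf{check}}$ calls as oracle gates gives the unconditional $\mathsf{Formula}$-$\mathcal{O}$-$\mathsf{XOR}[N^{1.01}]$ circuit.

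For (B3$^\mathcal{O}$), however, your primary plan diverges from the paper and has a genuine gap at its crucial step. The paper does not run a level-$k$ Fourier $L_1$-mass induction through the formula. It argues via the polynomial method: replace each oracle gate by a fresh variable $z_i$, apply the approximate-degree bound of Reichardt--Tal (a size-$t$ De Morgan formula is $1/3$-approximated by a real polynomial of degree $O(\sqrt t)$), and then substitute for each $z_i$ the full multilinear expansion of the oracle over its $N^\delta$ input parities. This yields a sign-representing polynomial in parities with at most $s^{O(\sqrt s)} \cdot 2^{N^\delta \cdot O(\sqrt s)}$ monomials, equivalently a parity that is $\big(s^{O(\sqrt s)} \cdot 2^{N^\delta O(\sqrt s)}\big)^{-1}$-correlated with the circuit; with $s = N^{2-3\delta}$ the exponent is $O(N^{1-\delta/2}) = o(N)$, which is beaten by the $2^{\Omega(N)}$ sign-rank / discrepancy / parity-correlation lower bound for $\mathsf{InnerProduct}$. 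Note that the oracle cost enters as a factor $2^{N^\delta}$ \emph{per oracle occurrence in a monomial of $q$}, and is tolerable only because each monomial has degree $O(\sqrt s)$.

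Your Cauchy--Schwarz step, claiming each oracle contributes only $\mathsf{poly}(N^\delta)$, is the gap. A Boolean function on $N^\delta$ parities has $L_2$-mass $1$ but can have $L_1$ Fourier mass as large as $2^{N^\delta/2}$, and after composing with AND/OR gates the spectra convolve, so $L_2$ control on the leaves does not propagate to the $L_1$-type quantity you need at the root. Moreover, the final accounting you write, $q \cdot \mathsf{poly}(N^\delta) = \widetilde\Omega(N^2)$, is not the shape of any argument that is visible here: the bound one must beat for $\mathsf{InnerProduct}$ is exponential in $N$, not polynomial, so a merely polynomial budget cannot be what you are comparing against. Your restriction-based fallback is plausible in spirit but only a sketch, and in any case is also not the paper's route.
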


To prove item 2 of Proposition \ref{prop:frontier_C}, we adapt Tal's \cite{Tal17} lower bound for bipartite formulas\footnote{A bipartite formula on variables $x_1, \dots, x_n, y_1, \dots, y_n$ is a formula such that each leaf computes an arbitrary function in either $(x_1, \dots, x_n)$ or $(y_1, \dots, y_n)$. $\mathsf{Formula}$-$\mathsf{XOR}$ circuits are a subset of bipartite formulas as one can always write $\oplus(x_1, \dots, x_{2n})$ as the parity of $\oplus(x_1, \dots, x_{n})$ and $\oplus(x_{n+1}, \dots, x_{2n})$.}, for which we need the following results.
\begin{lemma}[\cite{Rei11, Tal17}]
	\label{lem:rei_approx_degree}
	Let $F$ be a De Morgan formula of size $s$ which computes $f : \{-1,1\}^n \rightarrow \{\-1,1\}$. Then, there exists a multilinear polynomial $p$ over $\mathbb{R}$ of degree $O(\sqrt{s})$, such that for every $x \in \{-1,1\}^n$, $p(x) \in [F(x)-1/3, F(x)+1/3]$. 
\end{lemma}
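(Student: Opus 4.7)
The plan is to derive this bound by composing two well-known reductions: from formulas to quantum query algorithms, and from quantum query algorithms to approximating polynomials. The approximate degree bound $O(\sqrt{s})$ for De Morgan formulas of size $s$ is essentially Reichardt's theorem, and the statement we want is obtained by tracking that the resulting polynomial can be taken to be multilinear.

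First, I would invoke Reichardt's quantum query upper bound: every De Morgan formula $F$ of size $s$ on $n$ inputs admits a bounded-error quantum query algorithm $\mathcal{A}$ that on input $x \in \{-1,1\}^n$ outputs $F(x)$ with probability at least $9/10$, using $T = O(\sqrt{s})$ queries to the bits of $x$. This is proved via span program duality and the general adversary bound; I would simply cite it as the main input. By standard amplification (repeating $\mathcal{A}$ a constant number of times and taking majority) one can reduce the error to any constant $\varepsilon < 1/6$ while keeping the number of queries at $O(\sqrt{s})$.

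Next, I would apply the polynomial method of Beals--Buhrman--Cleve--Mosca--de Wolf: for any quantum algorithm making $T$ queries on $\pm 1$-valued inputs, the acceptance probability is a real polynomial of degree at most $2T$ in the input variables. Taking $q(x) = 1 - 2\Pr[\mathcal{A}(x) = -1]$ (rescaled so that the range is $[-1,1]$ and the value lies within $1/3$ of $F(x)$ after amplification) yields a real polynomial of degree $O(\sqrt{s})$ satisfying $q(x) \in [F(x)-1/3, F(x)+1/3]$ for all $x \in \{-1,1\}^n$. To obtain multilinearity, I would let $p$ be the multilinear reduction of $q$, i.e., replace every occurrence of $x_i^k$ by $x_i^{k \bmod 2}$ when $x_i \in \{-1,1\}$; this operation preserves the value of the polynomial on $\{-1,1\}^n$ and does not increase the degree.

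The only genuinely hard ingredient is Reichardt's $O(\sqrt{s})$ quantum query upper bound for formula evaluation, but this is taken as a black box from the cited references. The remaining steps, error amplification, the polynomial method conversion, and multilinearization, are routine. Assembling these three facts gives the desired multilinear polynomial $p$ of degree $O(\sqrt{s})$ with $p(x) \in [F(x)-1/3, F(x)+1/3]$ for every $x \in \{-1,1\}^n$.
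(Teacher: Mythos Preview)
Your proposal is correct and is exactly the standard derivation underlying the cited references: the paper does not give its own proof of this lemma but simply cites \cite{Rei11, Tal17}, and the route you describe (Reichardt's $O(\sqrt{s})$ quantum query bound for De Morgan formulas, followed by the Beals--Buhrman--Cleve--Mosca--de Wolf polynomial method and multilinearization over $\{-1,1\}^n$) is precisely how that result is obtained in those works.
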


For any function $f : \{-1,1\}^n \rightarrow \{-1,1\}$, $f$ is $\varepsilon$-correlated with a parity $p_S(x) = \prod_{i \in S} x_i$, if $\vert \mathbb{E}_{x \in \{-1,1\}^n} [f(x) \cdot p_S(x)] \vert \geq \varepsilon$. We have
\begin{lemma}
	\label{lem:formula_xor_poly_approx}
	For any $\delta > 0$, let $F(x_1, \dots, x_n)$ be a $\mathsf{Formula}$-${\mathcal{O}}$-$\mathsf{XOR}$ formula of size $s$, where every oracle $\mathcal{O}$ has fan-in at most $n^\delta$ and appears in the layer right above the $\mathsf{XOR}$ leaves. Then the following hold true:
	\begin{enumerate}
		\item There exists a multi-linear polynomial $p(x_1, \dots, x_n)$ over $\mathbb{R}$ with at most $s^{O(\sqrt{s})} \cdot 2^{n^\delta \cdot O(\sqrt{s})}$ monomials such that for every $x \in \{-1,1\}^n$, $\text{sign}(p(x)) = F(x)$.
		\item There exists a parity function $f_T(x_1, \dots, x_n)$ which is at least $\left( \frac{1}{s^{O(\sqrt{s})} \cdot 2^{n^\delta O(\sqrt{s})}} \right)$-correlated with $F$.
	\end{enumerate}
\end{lemma}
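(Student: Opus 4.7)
The plan is to adapt Tal's argument for $\mathsf{Formula}$-$\mathsf{XOR}$~\cite{Tal17} by inserting an oracle-expansion step between the outer De Morgan formula and the XOR leaves. Let $\mathcal{O}_1, \ldots, \mathcal{O}_s$ denote the oracle gates (one at each leaf of the outer formula of size $s$), and write $y_i := \mathcal{O}_i(\chi_{S_{i,1}}(x), \ldots, \chi_{S_{i, n^\delta}}(x))$ for its output. First I would apply Lemma~\ref{lem:rei_approx_degree} to the top De Morgan formula viewed as a function of $y_1, \ldots, y_s$, obtaining a multilinear polynomial $q(y) \in \mathbb{R}[y_1, \ldots, y_s]$ of degree $O(\sqrt{s})$ that satisfies $|q(y) - F(y)| \leq 1/3$ for every $y \in \{-1,1\}^s$, with at most $\binom{s}{\leq O(\sqrt{s})} \leq s^{O(\sqrt{s})}$ monomials.

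Next I would substitute each $y_i$ by an explicit polynomial in $x$. Each oracle $\mathcal{O}_i$, being $\{-1,1\}$-valued on $n^\delta$ inputs, has multilinear Fourier expansion $\sum_{S \subseteq [n^\delta]} \widehat{\mathcal{O}}_i(S) \prod_{j \in S} z_{i,j}$ with at most $2^{n^\delta}$ monomials and $|\widehat{\mathcal{O}}_i(S)| \leq 1$ by Parseval. Setting $z_{i,j} := \chi_{S_{i,j}}(x)$ and using $x_k^2 = 1$ to multilinearize gives a polynomial $P_i(x)$ with at most $2^{n^\delta}$ monomials in $x$. Define $p(x) := q(P_1(x), \ldots, P_s(x))$. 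A degree-$d$ monomial of $q$ becomes a product of $d \leq O(\sqrt{s})$ of the $P_i$, expanding to at most $(2^{n^\delta})^{O(\sqrt{s})} = 2^{O(\sqrt{s})\,n^\delta}$ monomials after multilinearization; summing over $q$'s $\leq s^{O(\sqrt{s})}$ monomials gives at most $M := s^{O(\sqrt{s})} \cdot 2^{O(\sqrt{s})\,n^\delta}$ monomials for $p$. Since $|p(x) - F(x)| \leq 1/3$ while $F(x) \in \{-1,1\}$, $\mathrm{sign}(p(x)) = F(x)$, settling item~1.

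For item~2, write $p = \sum_T c_T \chi_T$ and bound the $L_1$ Fourier norm $L_1(p) := \sum_T |c_T|$ using submultiplicativity under polynomial multiplication: $L_1(p) \leq L_1(q) \cdot \max_i L_1(P_i)^{O(\sqrt{s})}$. One has $L_1(P_i) \leq 2^{n^\delta}$, and taking $q$ to be the degree-$O(\sqrt{s})$ \emph{truncated Fourier expansion} of $F$ together with Tal's bound $L_1(F, \leq k) = O(\sqrt{s})^k$~\cite{Tal17} gives $L_1(q) \leq s^{O(\sqrt{s})}$, hence $L_1(p) \leq M$. Then $\mathbb{E}_x[F(x) p(x)] \geq \mathbb{E}[F^2] - \mathbb{E}[|F - p|] \geq 2/3$, and by H\"older $\mathbb{E}[Fp] = \sum_T c_T \widehat{F}(T) \leq L_1(p) \cdot \max_T |\widehat{F}(T)|$, so some parity $\chi_{T^*}$ satisfies $|\widehat{F}(T^*)| \geq 2/(3M) = \Omega(1/M)$, yielding the desired $f_{T^*}$. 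The main obstacle is precisely this $L_1$ bookkeeping: one must choose $q$ to be Tal's truncated Fourier polynomial rather than an arbitrary Reichardt approximant, so that the multiplicative $L_1$ accounting across the oracle substitution matches the monomial count $M$ from item~1 without an extra blow-up.
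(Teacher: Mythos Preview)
Your argument for item~1 is correct and matches the paper's proof. The issue is in item~2, specifically in the step where you switch the approximant~$q$ from the Reichardt polynomial to the truncated Fourier expansion of~$F'$ (the outer formula viewed as a function of $y_1,\dots,y_s$).

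The problem is that the truncated Fourier expansion only guarantees closeness to~$F'$ in~$L_2$ over the \emph{uniform} distribution on $y\in\{-1,1\}^s$; it does \emph{not} give the pointwise bound $|q(y)-F'(y)|\le 1/3$. But your inequality $\mathbb E_x[F(x)p(x)]\ge \mathbb E[F^2]-\mathbb E_x[|F-p|]\ge 2/3$ requires exactly a pointwise (or at least $\mathbb E_x$-average) bound on $|F(x)-p(x)|=|F'(y(x))-q(y(x))|$. Since the distribution on~$y$ induced by uniform~$x$ through the oracle-on-parity maps $y_i=\mathcal O_i(\chi_{S_{i,1}}(x),\ldots)$ is highly non-uniform, the $L_2$ bound over uniform~$y$ tells you nothing about $\mathbb E_x[|F-p|]$. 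So after the switch you cannot conclude $\mathbb E_x[Fp]\ge 2/3$, and the averaging/H\"older step no longer yields the stated correlation.

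The ``obstacle'' you identify is illusory: there is no need to switch polynomials. The Reichardt approximant~$q$ itself already satisfies $L_1(q)\le s^{O(\sqrt s)}$. Indeed, $|q(y)|\le 4/3$ for all $y\in\{-1,1\}^s$ implies $|\hat q(S)|=|\mathbb E_y[q(y)\chi_S(y)]|\le 4/3$ for every~$S$, and since $\deg q=O(\sqrt s)$ there are at most $\binom{s}{\le O(\sqrt s)}\le s^{O(\sqrt s)}$ nonzero coefficients, so $L_1(q)\le \tfrac{4}{3}\,s^{O(\sqrt s)}$. This is precisely how the paper proceeds: it keeps the Reichardt polynomial throughout, uses the pointwise bound to get $\mathbb E_x[F(x)p(x)]\ge 2/3$, bounds each coefficient in the expansion by $|\hat q(S)|\cdot\prod|\hat{\mathcal O}(U)|\le 4/3$, and averages over the $s^{O(\sqrt s)}\cdot 2^{n^\delta O(\sqrt s)}$ terms to extract a well-correlated parity. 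Once you drop the unnecessary switch to the truncated Fourier polynomial, your proof coincides with the paper's.
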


\begin{proof}
	We assume that the oracle function is a Boolean function on $n^\delta$ inputs. Let $t \leq s/n^\delta$ be the number of oracle gates in $F$. Let $p_1, \dots, p_s$ be the leaves of $F$, where each $p_i$ is an $\mathsf{XOR}$ gate over $x_1, \dots, x_n$ and every oracle gate $g_1, \dots, g_t$ is such that $g_i(x) = \mathcal{O}(p_{i1}(x), \dots, p_{i \ell}(x))$, where $\ell = n^\delta$ and $p_{ij} \in \{p_1,\dots,p_t\}$ for every $i \in [t], j \in [\ell]$.
	
	Let $F'$ be a De morgan formula obtained by replacing oracle gates in $F$ with new variables $z_i$ (for notational simplicity we assume that every leaf is an input to some oracle gate), for $i \in [t]$. We now use Lemma \ref{lem:rei_approx_degree} on $F'$ to get a degree $d = O(\sqrt{t})$ polynomial $q(z)$ such that for every $z \in \{-1,1\}^t$, $\text{sign}(q(z)) = F'(z)$. Expanding $q(z)$ as a multilinear polynomial :
	\begin{equation*}
	q(z) = \sum_{S \subseteq [t], \vert S \vert \leq d} \hat{q}(S) \prod_{i \in S} z_i
	\end{equation*}
	
	To prove the first item, we replace each $z_i$ by the original leaf and we get that for every $x \in \{-1,1\}^n$,
	\begin{equation*}
	\begin{split}
	F(x) &= \text{sign} \left( \sum_{S \subseteq [t], \vert S \vert \leq d} \hat{q}(S) \prod_{i \in S} g_i(x) \right) \\
	&= \text{sign} \left( \sum_{S \subseteq [t], \vert S \vert \leq d} \hat{q}(S) \prod_{i \in S} \left( \sum_{U \subseteq [\ell]} \hat{\mathcal{O}}(U) \prod_{j \in U} p_{ij} (x)\right) \right) \\
	&= \text{sign} \left( \sum_{\substack{S \subseteq [t] \\ S = \{i_1, \dots, i_{\vert S \vert}\} \\ \vert S \vert \leq d}}  \sum_{U_{i_1}, \dots, U_{i_{\vert S \vert}} \subseteq [\ell]} \hat{q}(S) \cdot \left( \prod_{1 \leq k \leq \vert S \vert} \hat{\mathcal{O}}(U_{i_k}) \prod_{j \in U_{i_k}} p_{i_k j}(x) \right) \right) \\
	\end{split}
	\end{equation*}
	where the second equality uses the fact that any Boolean function on $\ell$ inputs can be represented by a multilinear polynomial of degree at most $\ell$ where each coefficient is between $[-1,1]$. Clearly the number of monomials is at most $s^{O(\sqrt{s})} \cdot 2^{n^\delta \cdot O(\sqrt{s})}$.
	
	To prove the second item, firstly observe that for every $z \in \{-1,1\}^t$, $q(z) \cdot F'(z) \in [2/3, 4/3]$, because $\vert q(z) - F'(z) \vert \leq 1/3$ for every $z$. This also means that for the polynomial $r(x) = q(g_1(x), \dots, g_t(x))$, $\mathbb{E}_{x \in \{-1,1\}^n}[r(x) \cdot F(x)] \geq 2/3$. 
	
	Given that $\hat{q}(S) = \mathbb{E}_{z \in \{-1,1\}^s} \left[ q(z) \prod_{i \in S} z_i \right]$, we see that $\vert \hat{q}(S) \vert \leq 4/3$. We have
	\begin{equation*}
	\begin{split}
	2/3 &\leq \mathop{\mathbb{E}}_{x \in \{-1,1\}^n}[F(x) \cdot r(x)] \\
	&= \mathop{\mathbb{E}}_{x \in \{-1,1\}^n} \left[ F(x) \cdot \sum_{S \subseteq [t], \vert S \vert \leq d} \hat{q}(S) \prod_{i \in S} g_i(x) \right] \\
	&\leq \sum_{\substack{S \subseteq [t] \\ S = \{i_1, \dots, i_{\vert S \vert}\} \\ \vert S \vert \leq d}}  \sum_{U_{i_1}, \dots, U_{i_{\vert S \vert}} \subseteq [\ell]} \hat{q}(S) \cdot \prod_{1 \leq k \leq \vert S \vert} \hat{\mathcal{O}}(U_{i_k}) \cdot \mathop{\mathbb{E}}_{x \in \{-1,1\}^n} \left[ F(x) \cdot \prod_{1 \leq k \leq \vert S \vert} \prod_{j \in U_{i_k}} p_{i_k j}(x) \right]  \\
	\end{split}
	\end{equation*}
	Since, $\vert \hat{q}(S) \vert \leq 4/3$ for every $S \subseteq [t]$ and $\vert \hat{\mathcal{O}}(U) \vert \leq 1$ for every $U \subseteq [\ell]$, we see that there exists a set $S$ of size at most $d$ and sets $U_{i_1}, \dots, U_{i_{\vert S \vert}}$ such that $\Big\vert \mathbb{E}_{x \in \{-1,1\}^n} \left[ F(x) \cdot \prod_{1 \leq k \leq \vert S \vert} \prod_{j \in U_{i_k}} p_{i_k j}(x) \right] \Big\vert \geq \frac{1}{t^{O(\sqrt{t})} \cdot 2^{n^\delta O(\sqrt{t})}}\geq \frac{1}{s^{O(\sqrt{s})}\cdot 2^{n^\delta O(\sqrt{s})}}$. Taking $p_T$ be the parity of the parities given by $p_T = \prod_{1 \leq k \leq \vert S \vert} \prod_{j \in U_{i_k}} p_{i_k j}(x)$, we see that $p_T$ is $\frac{1}{s^{O(\sqrt{s})} \cdot 2^{n^\delta O(\sqrt{s})}}$-correlated with $F$.
\end{proof}

Define the Inner Product modulo 2 function, $\mathsf{InnerProduct}_n : \{-1,1\}^n \times \{-1,1\}^n \rightarrow \{-1,1\}$ as $IP_n(x,y)=(-1)^{\sum_{i=1}^n (1-x_i)(1-y_i)/4}$.

\begin{proof}[Proof Sketch of Proposition \ref{prop:frontier_C}]
	The first item follows from an inspection of the proof of 
	Theorem \ref{thm:mcspker} in Section \ref{ss:frontier_MCSP_formula_xor}. Theorem \ref{thm:mcspecc} gives the same oracle circuit construction (with different oracles) under the assumption $\mathsf{QP}\subseteq \Ppoly$.
	
	The second item follows from Lemma \ref{lem:formula_xor_poly_approx}. We observe that three different techniques used to show $\mathsf{Formula}$-$\mathsf{XOR}$ lower bounds localize. Firstly, Tal's lower bound based on sign rank shows that the sign rank of any $\mathsf{Formula}$-$\mathsf{XOR}$ circuit $F$ is at most the number of monomials in the polynomial $p$ given by the first item of lemma \ref{lem:formula_xor_poly_approx}. Since this is at most $s^{O(\sqrt{s})} \cdot 2^{n^\delta \cdot O(\sqrt{s})}$ and $\mathsf{InnerProduct}$ has a sign rank which is at least $2^{n/2}$~\cite{Forster02}, the lower bound follows. Secondly, Tal's lower bound based on the discrepancy of a function also localizes, as he shows that the discrepancy of $F$ is at least a constant times the correlation of $F$ with the parity $f_T$ given by item 2 of Lemma \ref{lem:formula_xor_poly_approx}, which is at least $\Omega \! \left( \frac{1}{s^{O(\sqrt{s})} \cdot 2^{n^\delta O(\sqrt{s})}} \right)$, whereas the discrepancy of the inner product is at most $1/2^{n/2}$ (cf.~\citep[Lemma 14.5]{DBLP:books/daglib/0028687}), thus proving the given lower bound for inner product. Finally, we also observe that the lower bound technique of showing high correlation of $F$ with some parity $f_T$ and the fact that inner product has exactly $2^{-n/2}$-correlation with any parity also localizes to give the same lower bound. 
\end{proof}

\subsubsection{Almost-Formula Lower Bounds}\label{ss:Almost_Formulas_LB}

This section captures an instantiation of the locality barrier for HM Frontier C. We recall the following definition. Consider an $s$-almost formula. Each gate $G$ of $F$ with fanout larger than 1 is computed by a formula with inputs being either the original inputs of $F$ or gates of $F$ with fanout larger than 1. We call any maximal formula of this form a \emph{principal} formula of $G$.

\begin{theorem}[Locality Barrier for HM Frontier C]\label{thm:fclb} The following results hold.
	\begin{itemize}
		\item[\emph{(C1}$^\mathcal{O}$\emph{)}] \emph{(Oracle Circuits from Magnification)} $\MCSP[2^{n^{1/2}}/2n,2^{n^{1/2}}]$ is computable by $2^{O(n^{1/2})}$-almost formulas of size $2^{n+O(n^{1/2})}$ with oracles of fanin $2^{O(n^{1/2})}$ at the bottom layer of principal formulas computing gates with fanout larger than 1.
		\item[\emph{(C3}$^\mathcal{O}$\emph{)}] \emph{(Extension of Lower Bound Techniques)} For every $\eps<1$, \PARITY is not in $n^{\eps}$-$\mathsf{almost}$-$\Formula[n^{2-9\eps}]$ even if the almost-formulas are allowed to use arbitrary oracles of fanin $<n^{\eps}$ at the bottom layer of principal formulas computing gates with fanout larger than $1$.
	\end{itemize}
\end{theorem}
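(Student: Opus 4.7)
The plan is to prove each item by extending the corresponding result already in the paper: (C1$^\mathcal{O}$) by revisiting the construction of Lemma~\ref{lem} and Theorem~\ref{pnp} and replacing each invocation of $\NPtime\subseteq\NC$ with an oracle gate, and (C3$^\mathcal{O}$) by composing the Komargodski--Raz polynomial approximation underlying Theorem~\ref{thm:parityflalikelb} with the multilinear extensions of the oracle gates.

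For (C1$^\mathcal{O}$), I would revisit the proof of Theorem~\ref{pnp} without the assumption $\NPtime\subseteq\NC$ and observe that every use of this assumption in Lemma~\ref{lem}---the Valiant--Vazirani isolation via $F^{r,h}$, the selection of a unique witness among the $2^{O(n^\lambda)}$ tuples $(r,h)$, the test $R_f(\cdot)\ge 2n^2$, and the approximate-counting step with the hash matrices $A_j$---is a $\NPtime$/$\coNP$-type predicate whose inputs are only the previously generated anticheckers $y_1,\dots,y_{i-1}$ and their values $f(y_1),\dots,f(y_{i-1})$, not the full truth table $\ttable(f)$. Since $i\le 2^{O(n^{1/2})}$ and each $y_j$ has $n$ bits, every such predicate has at most $O(in)=2^{O(n^{1/2})}$ input wires and can be implemented by a single oracle gate of the required fan-in, placed at the bottom of the principal formula computing the corresponding output bits of $y_i$. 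The remaining steps that do touch $\ttable(f)$---namely the formula $\bigvee_k\bigl((y_i=b^k)\wedge f(b^k)\bigr)$ extracting $f(y_i)$ from the truth table and the exhaustive error-witness search for the residual $<2n^2$ circuits---are already formulas of size $\mathrm{poly}(n)\cdot 2^n$ and require no oracles. Assembling these pieces yields an unconditional almost-formula of size $2^{n+O(n^{1/2})}$ whose only fan-out-$>1$ gates are the $2^{O(n^{1/2})}$ bits of the $y_i$ and $f(y_i)$, and whose oracle gates have fan-in $2^{O(n^{1/2})}$ and occur only at the bottom of the principal formulas producing those fan-out-$>1$ gates.

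For (C3$^\mathcal{O}$), I would follow the structure of Theorem~\ref{thm:parityflalikelb} while tracking the effect of oracle leaves. Suppose for contradiction that some $n^{\eps}$-almost-formula $F$ of size $n^{2-9\eps}$ with oracle leaves of fan-in $<n^\eps$ computes \PARITY. Enumerating the $2^{n^\eps}$ assignments to the fan-out-$>1$ gates, a best assignment together with a best choice of output constant yields an oracle De Morgan formula $F_n$ of size $\le n^{2-8\eps}$ that agrees with \PARITY on at least a $1/2+1/2^{n^\eps}$ fraction of inputs. Once the fan-out-$>1$ gates are fixed to constants, every oracle leaf depends only on $<n^\eps$ original input bits and therefore equals its multilinear extension, a real polynomial of degree $<n^\eps$ on $\{-1,1\}^n$. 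Treating the oracle outputs as auxiliary variables $z_i$, Komargodski--Raz produces a polynomial $p(z)$ of degree $O(t\sqrt{n^{2-8\eps}}\log n/\log\log n)$ with $|p(z)-F_n(z)|\le 2^{-t}$ for every $z\in\{-1,1\}^m$. Substituting the degree-$<n^\eps$ extensions $z_i\mapsto g_i(x)$ preserves the pointwise error (because $g_i(x)\in\{-1,1\}$ on Boolean inputs) and yields a polynomial in $x$ of degree $O(t\sqrt{n^{2-8\eps}}\cdot n^\eps\log n/\log\log n)$. Choosing $t=n^{2\eps}$ makes this degree $O(n^{1-\eps}\log n/\log\log n)=o(n)$ and the error $2^{-n^{2\eps}}\ll 1/2^{n^{\eps}}$; the resulting polynomial then has strictly positive correlation with \PARITY, contradicting the fact that every multilinear polynomial of degree $<n$ is Fourier-orthogonal to \PARITY.

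The main obstacle is expected to be the quantitative bookkeeping in (C3$^\mathcal{O}$): I need to confirm that the size blow-up caused by inserting consistency checks for the $n^\eps$ fan-out-$>1$ gates keeps the skeleton of $F_n$ in the regime $n^{2-O(\eps)}$ in which the substituted Komargodski--Raz polynomial has degree $o(n)$, and that pointwise error survives substitution at every Boolean input. For (C1$^\mathcal{O}$) the work is more tedious than deep, amounting to verifying that none of the $\NPtime$-oracle subroutines extracted from Lemma~\ref{lem} ever need to read $\ttable(f)$ directly, and that they can all be inserted at the bottom of the principal formulas computing the $y_i$ and $f(y_i)$ without disturbing the almost-formula structure described in Theorem~\ref{pnp}.
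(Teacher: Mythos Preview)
Your proposal is correct and follows essentially the same approach as the paper for both items. For (C3$^\mathcal{O}$), your argument via Komargodski--Raz composed with the degree-$n^\eps$ multilinear extensions of the oracle leaves, and the choice $t=n^{2\eps}$, matches the paper's proof verbatim.

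For (C1$^\mathcal{O}$), the paper also proceeds by inspecting Theorem~\ref{pnp} and replacing each $\NPtime\subseteq\NC$ invocation by an oracle, but it is more explicit about one point you only flag as ``tedious'': ensuring that every oracle ends up at the \emph{bottom} of a principal formula. Two places require actual modification of the construction rather than mere inspection. First, the selection oracle that picks the correct $y_i$ among the $2^{O(n^{1/2})}$ Valiant--Vazirani candidates takes those candidate strings as inputs, and in the original construction these are \emph{not} fanout-$>1$ gates; the paper adds dummy negation gates to promote them to fanout-$>1$, so that the selection oracle sits at the bottom of a new principal formula (this costs only $2^{O(n^{1/2})}$ extra fanout-$>1$ gates). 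Second, contrary to your claim that the residual phase ``requires no oracles,'' the step that generates the remaining $<2n^2$ circuits when $R_f<2n^2$ does use oracles; the paper merges ``output the next surviving circuit'' and ``evaluate it on input $b^k$'' into a single oracle whose inputs are only the anticheckers and the constant $b^k$, so that it too sits at the bottom. These are precisely the structural checks you anticipated, but they do require changing the circuit, not just observing it.
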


\proof The first item follows by inspecting the proof of Theorem \ref{pnp}. It is not hard to see that $\MCSP[2^{n^{1/2}}/2n,2^{n^{1/2}}]$ is computable by $2^{O(n^{1/2})}$-almost formulas $F_N$ of size $2^{n+O(n^{1/2})}$ with local oracles of fanin $2^{O(n^{1/2})}$. Moreover, the only gates of fanout larger than 1 are the gates computing anticheckers $y_1,\dots,y_{2^{O(n^{1/2})}}$ with bits $f(y_1),\dots, f(y_{2^{O(n^{1/2})}})$. We want to show that the local oracles are at the bottom of principal formulas generating gates with fanout larger than 1. In order to achieve this we need to modify 
formulas $F_N$ a bit. 

First, note that $F_N$ contains an oracle which is applied on top of anticheckers $y_1,\dots,y_{2^{O(n^{1/2})}}$ with bits $f(y_1),\dots, f(y_{2^{O(n^{1/2})}})$. In order to ensure that 
this oracle is at the bottom of a principal formula computing a gate with fanout bigger than 1 we simply add dummy negation gates to the output gate and the gates computing anticheckers $y_1,\dots,y_{2^{O(n^{1/2})}}$ with bits $f(y_1),\dots, f(y_{2^{O(n^{1/2})}})$, if necessary.

Second, note that each $y_{i+1}$, $f(y_{i+1})$ is generated as follows: 1. if $R_f(y_1,\dots,y_i)\ge 2n^2$ then a subformula $F'$ generates anticheckers $y_{i+1}$, $f(y_{i+1})$, and 2. if $R_f(y_1,\dots,y_i)< 2n^2$ then a subformula $F''$ generates anticheckers $y_{i+1}$, $f(y_{i+1})$. In both cases we replace predicates $R_f(y_1,\dots,y_i)< 2n^2$ by oracles. In case 1, subformulas of $F'$ with oracles at the bottom compute predicates $F^{r,h}$ from the proof of Lemma \ref{lem}. This process generates a set of $2^{O(n^{1/2})}$ potential anticheckers. $F'$ chooses the right antichecker by applying another oracle. In order to ensure that this top oracle is at the bottom of a principal formula, we add dummy negation gates to the gates generating the potential anticheckers. This increases the number of gates with fanout larger than 1 only by $2^{O(n^{1/2})}$. 
In case 2, $y_{i+1}$, $f(y_{i+1})$ is generated by oracles outputting circuits which have not been killed yet and evaluating them on all possible inputs. Here we ensure that the oracles are at the bottom by asking them to perform both tasks: choose the next alive circuit and evaluate it on a given input. The oracle selecting the right antichecker from the set of potential anticheckers is treated in the same way as in case 1. All in all, we obtain the desired oracle almost formulas.
\smallskip

The second item is proved analogously to Theorem \ref{thm:parityflalikelb}.  For the sake of contradiction assume \PARITY has $n^\eps$-almost formulas of size $n^{2-9\eps}$ with local oracles at the bottom of principal formulas. Since there are only $n^{\eps}$ gates of fanout $>1$, we can replace these gates by constants and obtain formulas $F_n$ of size $n^{2-8\eps}$ with local oracles at the bottom computing \PARITY with probability $\ge 1/2+1/2^{n^{\eps}}$. Let $L'(f)$ be the size (i.e. the number of leafs) of the smallest formula with local oracles at the bottom computing $f$. Since oracles have fanin $<n^{\eps}$ and are located at the bottom, each function $f:\{-1,1\}^n\mapsto\{-1,1\}$ can be approximated by a polynomial of degree $O(t\sqrt{L'(f)}\frac{\log n}{\log\log n}n^{\eps})$ up to point-wise error of $2^{-t}$. This implies that each formula of size $o((n/t)^2(\log\log n/\log n)^2(1/n^{\eps})^2)$ with local oracles at the bottom computes \PARITY with probability at most $1/2+1/2^{t+O(1)}$ (for large enough $t$). Taking $t=n^{2\eps}$ we get a contradiction. \qed

\subsubsection{$\GapAND$-$\FM$ Lower Bounds}\label{ss:GapAND_Formulas_LB}

This section captures an instantiation of the locality barrier for HM Frontier D.

\begin{theorem}[Locality Barrier for HM Frontier D]\label{thm:GapAND-locality-barrier}
	The following results hold.
	\begin{enumerate}
		\item[\emph{(D1}$^\mathcal{O}$\emph{)}] \emph{(Oracle Circuits from Magnification)} 
		$\MCSP[2^{\sqrt{n}}] \in \GapAND_{O(N)}$-$O_{N^{o(1)}}$-$\FM[N^2]$.
		
		\item[\emph{(D3}$^\mathcal{O}$\emph{)}] \emph{(Extension of Lower Bound Techniques)} 
		$\Andreev_N \notin \GapAND_{O(N)}$-$O_{N^{\beta}}$-$\FM[N^{3-\eps}]$, for $0 < \beta < \eps < 1$.
		
		\item[\emph{(D3}$^\mathcal{O}$\emph{)}] Furthermore, 
		$\MCSP[2^n / n^4] \notin \GapAND_{O(N)}$-$O_{N^{\beta}}$-$\FM[N^{3-\eps}]$, for $0 < \beta < \eps < 1$.
	\end{enumerate}
	
\end{theorem}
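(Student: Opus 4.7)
The plan for part (D1$^\mathcal{O}$) is to directly inspect the kernelization argument proving Theorem~\ref{thm:mcspker}: there each wire of the top $\GapAND_{O(N)}$ already computes $L_{\sf check}(I_{\sf good}, H_{I_{\sf good}}(x), i, E_N(x)_i)$, and the proof then invokes $\NQPtime \subseteq \NC$ to replace $L_{\sf check}$ by a small formula. We stop just before that substitution, treating $L_{\sf check}$ as an oracle gate. Its four arguments have total length $|I_{\sf good}| + T + O(\log N) + O(1) = N^{o(1)}$, so the oracle has fan-in $N^{o(1)}$. Each of the $T+1 = N^{o(1)}$ parities feeding it depends on at most $N$ variables and is realized by a standard De Morgan parity formula of size $O(N^2)$. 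Hence each $\GapAND$ branch has size $N^{2 + o(1)}$, matching the claim once the $o(1)$ error is absorbed into the oracle fan-in.

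The plan for the two lower bounds in (D3$^\mathcal{O}$) is to \emph{localize} the shrinkage-based formula lower bounds of~\cite{Hastad98} and~\cite{CheraghchiKLM19_eccc_journals} by exploiting the slack $\beta < \eps$. Suppose for contradiction that some $\GapAND_{O(N)}$-$O_{N^\beta}$-$\FM$ circuit of per-branch size $s = N^{3-\eps}$ computes $\Andreev_N$, and fix one branch $F$ with oracle gates $\mathcal{O}_1,\ldots,\mathcal{O}_t$ of fan-in at most $k := N^\beta$; since each oracle contributes $k$ leaves, $t \le s/k$. Apply the standard Andreev-style $p$-random restriction $\boldsym{\rho}$ with $p = \Theta(\log n/n)$ and $n = \Theta(N/\log N)$. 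By H\aa stad's shrinkage lemma, with high probability the De Morgan part of $F\restriction_{\boldsym{\rho}}$ has size $\tilde O(s \cdot p^{2-o(1)}) = \tilde O(N^{1-\eps + o(1)})$. For each oracle, $\Pr_{\boldsym{\rho}}[\mathcal{O}_j \text{ keeps } \ge J \text{ live inputs}] \le (ekp/J)^J$; since $kp = N^{\beta - 1 + o(1)} \ll 1$, choosing $J$ to be a large enough constant depending on $\eps - \beta$ and union-bounding over the $t$ oracles shows that with high probability every surviving oracle depends on at most $J = O(1)$ live inputs and can be rewritten as a constant-size De Morgan subformula. The restricted branch then becomes an ordinary De Morgan formula of polynomial size in $N$, contradicting the Shannon-type lower bound $\tilde\Omega(n/\log n)$ for the restricted multiplexer encoded by $\Andreev_N\restriction_{\boldsym{\rho}}$. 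The $\MCSP[2^n/n^4]$ case follows either by the same localization applied directly to the construction of~\cite{CheraghchiKLM19_eccc_journals}, or by pushing oracles through the standard linear-overhead reduction from $\Andreev_N$.

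The main obstacle is bounding the total contribution of the surviving oracle gates after restriction. Even when each live oracle collapses to a constant-size subformula, a naive accounting gives $\tilde O(t) = \tilde O(N^{3-\eps-\beta})$ extra leaves, which is polynomial in $N$ and in adverse parameter regimes can exceed the Shannon floor $\tilde\Omega(n/\log n)$; closing this gap is exactly where the hypothesis $\beta < \eps$ must be used in a quantitatively tight way. The cleanest resolution is either (i) to choose $p$ adaptively smaller so that with high probability $p \cdot k \cdot t \to 0$ and essentially no oracle retains any live input, or (ii) to run shrinkage inductively while treating each surviving oracle as a super-leaf whose cost is its live fan-in rather than $N^\beta$. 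Establishing a shrinkage exponent arbitrarily close to $2$ in this oracle-aware setting is the technical heart of the proof.
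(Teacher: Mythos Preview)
Your treatment of (D1$^\mathcal{O}$) is fine and matches the paper: one simply stops the kernelization proof of Theorem~\ref{thm:mcspker} before the $\NQPtime\subseteq\NC$ substitution.

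For the Andreev part of (D3$^\mathcal{O}$), however, you take a genuinely different route and leave the proof incomplete. After shrinkage you try to \emph{eliminate} the oracle gates by arguing that each retains only $O(1)$ live inputs, and you correctly identify that the accounting does not close: your proposal~(i) is in fact impossible, since killing all oracle inputs requires $p\cdot k\cdot t = p\cdot N^{3-\eps}\to 0$, i.e.\ $p=o(N^{-(3-\eps)})$, which leaves no live variables at all; proposal~(ii) is not carried out. The paper sidesteps this entirely. It applies the same restriction (one survivor per block) and the same shrinkage bound \cite{Tal14}, obtaining for every $w$ an $O_{N^\beta}$-$\FM[N^{1-\eps}\polylog N]$ formula that $0.8$-approximates $f_w$. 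Crucially, it then \emph{keeps the oracles} and uses a \emph{counting} argument: the oracle $O$ is fixed across all $w$, so the number of such oracle formulas is at most $2^{N^{1-\eps+\beta}\polylog N}$, and each one $0.8$-approximates at most $2^{\alpha N}$ functions on $\log m$ bits for some $\alpha<1$. Since $\beta<\eps$ gives $1-\eps+\beta<1$, these formulas cannot cover all $2^{N/2}$ choices of $w$, a contradiction. The condition $\beta<\eps$ enters exactly once, in this counting step, rather than through any delicate shrinkage exponent for oracle gates.

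For the $\MCSP[2^n/n^4]$ part you are too vague, and your second option (a ``standard linear-overhead reduction from $\Andreev_N$'') does not exist in the required form. The paper does \emph{not} reduce to the Andreev case; it uses a different mechanism. It invokes the observation (stated as a claim without proof) that the PRG of \cite{CheraghchiKLM19_eccc_journals} with seed length $M=N^{1-\Omega_{\beta,\eps}(1)}$ in fact $0.01$-fools every single $O_{N^\beta}$-$\FM[N^{3-\eps}]$ branch. Since every PRG output has a small circuit, all branches must accept on the pseudorandom distribution, whereas on the uniform distribution a random branch accepts with probability at most $0.2$ (because $\MCSP[2^n/n^4]$ rejects most strings and $\GapAND$ forces $90\%$ of branches to reject). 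Some branch therefore distinguishes, contradicting the PRG guarantee.
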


Item 1 of the theorem above follows directly from Theorem~\ref{thm:mcspker}.

Next we show that the classical $N^{3-o(1)}$ formula size lower bound for the Andreev's function~\cite{Hastad98,Tal14} localizes, even in the presence of a $\GapAND$ gate of bounded fan-in at the top of the formula.

\begin{proofof}{Item 2}
Let $m = N/2$, recall that $\Andreev_N$ is defined on a $2 m$-bit string $z = x \circ y$, where $x,y \in \{0,1\}^m$. For simplicity, we assume $m$ is a power of $2$ in the following.

$\Andreev_N(x,y)$ first partitions $x$ into $\log m$ blocks $x_1,x_2,\dotsc,x_{\log m}$, each of length $m / \log m$. After that, it computes $i \in \{0,1\}^{\log m}$ as $i = \PARITY(x_1)\circ\PARITY(x_2)\circ\dotsc\PARITY(x_{\log m})$. It then treats $i$ as an integer from $[m]$, and outputs $y_i$.

Now, suppose there is a $\GapAND_{O(N)}$-$O_{N^{\beta}}$-$\FM[N^{3-\eps}]$ formula for $\Andreev_N$. Suppose we fix the $y$ variables to a string $w \in \{0,1\}^m$, and apply a random restriction keeping exactly one variable from each block alive to $x$ variables, then w.p. $0.9$, we obtain a $\GapAND_{O(N)}$-$O_{N^\beta}$-$\FM[N^{1-\eps} \cdot \polylog(N)]$ formula computing $f_w : \{0,1\}^{\log m} \to \{0,1\}$~\cite{Tal14}.

That is, for all $w \in \{0,1\}^m$, there exists an $O_{N^\beta}$-$\FM[N^{1-\eps} \cdot \polylog(N)]$ formula $0.8$-approximating $f_w$. Note that there are at most $2^{N^{1-\eps + \beta}\cdot \polylog(N)}$ such $O_{N^\beta}$-$\FM[N^{1-\eps} \cdot \polylog(N)]$ formulas, and there are $2^{N}$ possible $w$'s (Note that $O$ is a fixed oracle which does not depend on $w$). Since each $O_{N^\beta}$-$\FM[N^{1-\eps} \cdot \polylog(N)]$ formula can only $0.8$-approximate $2^{\alpha \cdot N}$ many functions from $\{0,1\}^{\log m} \to \{0,1\}$ for a constant $\alpha < 1$, there must exist a $w$ such that $f_w$ cannot be $0.8$-approximated by such formulas, contradiction.
\end{proofof}

Next, we observe that the $N^{3-o(1)}$ formula lower bound for $\MCSP$~\cite{CheraghchiKLM19_eccc_journals} also localizes.

\begin{proofof}{Item 3}

We first observe that the PRG construction of~\cite{CheraghchiKLM19_eccc_journals} also works for oracle formulas. (We omit the details of this proof.)
	
\begin{claim}[\cite{CheraghchiKLM19_eccc_journals}]\label{claim:PRG}
	For $0 < \beta < \eps < 1$, there is $M = N^{1-\Omega_{\beta,\eps}(1)}$ and a PRG $G:\{0,1\}^{M} \to \{0,1\}^{N}$ such that the following hold.
	
	\begin{enumerate}
		\item For each fixed $z \in \{0,1\}^{M}$, $G(z)$, when interpreted as a function from $\{0,1\}^{\log N} \to \{0,1\}$, can be computed by a circuit of size $N^{1-\Omega(1)}$.
		\item For all $O_{N^{\beta}}$-$\FM[N^{3-\eps}]$ formulas $C$, we have
		\[
		\left| \Pr_{z \in \{0,1\}^N}[C(z) = 1] - \Pr_{z \in \{0,1\}^M}[C(G(z)) = 1] \right| \;\le\; 0.01.
		\]
	\end{enumerate}
\end{claim}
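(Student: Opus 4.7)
The plan is to verify that the pseudorandom-restriction-based PRG construction underlying~\cite{CheraghchiKLM19_eccc_journals}, originally analysed against De Morgan formulas of size $N^{3-\eps}$, continues to fool the class $O_{N^\beta}$-$\FM[N^{3-\eps}]$ without essential changes. The central observation is that a sufficiently aggressive pseudorandom restriction simultaneously (i) shrinks the underlying De Morgan formula to a constant via H{\aa}stad's shrinkage, and (ii) fixes every input of every small-fan-in oracle gate, so that each oracle evaluates to a constant. Once both events occur, the residual circuit is a constant function of the fixed bits of the restriction, so the distribution of $C(G(z))$ matches the distribution of $C(U_N)$ up to the $o(1)$ failure probability of the two events.

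Concretely, I will use Lemma~\ref{lm:k-wise-independent-restrictions} to sample a $p$-regular $k$-wise-independent pseudorandom restriction $\boldsym{\rho}$ on $[N]$ with $p = N^{-\gamma}$ and $k = \mathrm{polylog}(N)$, and define $G(z)$ coordinate-wise: each bit is the fixed value of $\boldsym{\rho}(z)$ if the corresponding coordinate is restricted, and otherwise a short pseudorandom bit (e.g.\ a further $k$-wise-independent bit indexed by $i$). Since each coordinate of $\boldsym{\rho}(z)$ is computable in polynomial time from the seed, $G(z)$ viewed as a truth table on $\log N$ input bits is computable by a circuit of size $N^{1-\Omega(1)}$, giving item~1. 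Picking $\gamma$ slightly larger than $\max\{(3-\eps)/2,\ 3-\eps+\beta\}$, the derandomized shrinkage lemma used in~\cite{CheraghchiKLM19_eccc_journals} forces the oracle-free part of $\restr{C}{\boldsym{\rho}}$ to have expected size $p^{2-o(1)}N^{3-\eps}=o(1)$, while the expected number of live inputs to any fixed oracle gate is $p\cdot N^{\beta}\le N^{-(3-\eps)}$. A Markov bound (requiring only pairwise independence of $\boldsym{\rho}$), combined with a union bound over the at most $N^{3-\eps}$ oracle gates, then shows every oracle gate is reduced to a constant with probability $1-o(1)$; thus $\restr{C}{\boldsym{\rho}}$ is a constant function, and the resulting PRG error against $C$ is $o(1)<0.01$, establishing item~2.

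The main obstacle I anticipate is the joint parameter choice: $\gamma$ must be large enough both to kill every oracle gate (with enough slack for the union bound) and to shrink the formula to constant size, while the pseudorandom restriction must remain sufficiently independent for the derandomized shrinkage analysis of~\cite{CheraghchiKLM19_eccc_journals}. The hypothesis $\beta<\eps<1$ is used precisely to guarantee that both thresholds on $\gamma$ can be met by a single absolute constant, and $k=\mathrm{polylog}(N)$-wise independence then suffices for both the shrinkage bound and the oracle-killing Markov step, keeping the seed length well within the $M = N^{1-\Omega(1)}$ budget. The remaining work is the essentially routine verification that no parameter in the original proof of~\cite{CheraghchiKLM19_eccc_journals} is degraded by the at-most-polynomially-many restricted coordinates feeding the oracle gates.
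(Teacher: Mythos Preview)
The paper omits the proof of this claim, deferring to \cite{CheraghchiKLM19_eccc_journals}, so there is no detailed argument to compare against. Your high-level intuition---that the shrinkage-based analysis survives the presence of small-fan-in oracle gates---is the right one, but the execution has two genuine gaps.

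First, you have misread the circuit class. Following the pattern $\GapAND_m$-$\FM[s]$ used throughout the paper (and confirmed by the formula count $2^{N^{1-\eps+\beta}\cdot\polylog(N)}$ in the proof of Item~2 of Theorem~\ref{thm:GapAND-locality-barrier}), an $O_{N^\beta}$-$\FM[N^{3-\eps}]$ circuit is a \emph{single} oracle gate of fan-in $N^\beta$ at the \emph{top}, each input being a De~Morgan formula of size $N^{3-\eps}$. Oracles are not scattered at the leaf level reading variables directly, so your ``expected number of live inputs $p\cdot N^\beta$'' calculation and your union bound over ``at most $N^{3-\eps}$ oracle gates'' do not match the actual structure. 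The correct adaptation is to shrink each of the $N^\beta$ sub-formulas simultaneously and union-bound over those; this is where $\beta<\eps$ is used.

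Second, and more seriously, your single restriction with $p=N^{-\gamma}$ and $\gamma>(3-\eps)/2>1$ leaves essentially no stars, so your generator is just a $\polylog(N)$-wise independent distribution on $\{0,1\}^N$. In that regime the shrinkage statement is vacuous: $C\!\restriction_{\boldsym\rho}$ is the constant $C(G(z))$ because every variable is fixed, and you have learned nothing about whether $\Pr_z[C(G(z))=1]\approx\Pr[C(U_N)=1]$. That approximation is precisely the assertion that $\polylog$-wise independence fools size-$N^{3-\eps}$ formulas, which is the conclusion you are trying to prove. The actual construction in \cite{CheraghchiKLM19_eccc_journals} avoids this circularity by keeping the star probability high enough that $\Theta(N^{1-\Omega(1)})$ coordinates remain unrestricted---hence the seed length $M=N^{1-\Omega(1)}$ in the claim, not the $\polylog(N)$ your construction would give---and filling those coordinates with truly uniform seed bits, so that the hybrid step only asks the short pseudorandom part to fool low-complexity ``did the formula shrink'' events rather than the full formula.
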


	Now, suppose $\MCSP[2^n/n^4]$ on $N = 2^n$ bits can be computed by a $\GapAND_{O(N)}$-$O_{N^\beta}$-$\FM[N^{3-\eps}]$ formula $C$. Let $C_1,C_2,\dotsc,C_{b \cdot N}$ be the $O_{N^\beta}$-$\FM[N^{3-\eps}]$ sub-formulas of $C$ under the top $\GapAND$ gate, where $b$ is a constant.
	
	We know that
	\[
	\Pr_{z \in \{0,1\}^N}[\MCSP[2^n/n^4](z) = 1] = o(1).
	\]
	Since $C$ computes $\MCSP[2^n/n^4]$, and $C(x) = 0$ implies $C_i(x) = 0$ for at least a $0.9$ fraction of $i \in [b \cdot N]$. We have that
	\[
	\Pr_{i \in [b \cdot N],\,z \in \{0,1\}^N} \left[C_i(z) = 1 \right] \le 0.2.
	\]
	
	On the other side, by the definition of $\MCSP[2^n/n^4]$, and the Item (1) of~\cref{claim:PRG}, it follows that
	\[
	\Pr_{z \in \{0,1\}^M}[\MCSP[2^n/n^4](G(z)) = 1] = 1.
	\]
	Again, since $C$ computes $\MCSP[2^n/n^4]$, and $C(x) = 1$ implies $C_i(x) = 1$ for all $i \in [b \cdot N]$. We have that
	\[
		\Pr_{i \in [b \cdot N],\,z \in \{0,1\}^M} \left[C_i(G(z)) = 1 \right] = 1.
	\]
	
	Therefore, there must exist an $i$ such that
	\[
	\left| \Pr_{z \in \{0,1\}^N}[C_i(z) = 1] - \Pr_{z \in \{0,1\}^M}[C_i(G(z)) = 1] \right| \ge 0.5,
	\]
	which is a contradiction to Item (2) of~\cref{claim:PRG}.
\end{proofof}

Finally, we show that there is a language in $\mathsf{E}$ which cannot be computed by $\GapAND_{O(N)}$-$\FM[N^{3-\eps}]$ formulas, but it \emph{can} be computed by an $O_{N^{o(1)}}$-$\FM[N^2]$ formula. Therefore, this lower bound does not localize in the sense of Theorem~\ref{thm:GapAND-locality-barrier}.

\begin{theorem}\label{thm:E-lower-bound-for-FM}
	There is a language $L \in \mathsf{E}$, such that $L \notin \GapAND_{O(N)}$-$\FM[N^{3-\eps}]$ for all constants $\eps > 0$, but $L \in O_{N^{o(1)}}$-$\FM[N^2]$.
\end{theorem}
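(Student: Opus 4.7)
The plan is to take $L$ to be a padding of a hard function on only $\ell = \ell(N) := \lceil (\log N)^2 \rceil$ input bits, so that $\ell(N) = N^{o(1)}$ while the truth-table size $2^{\ell(N)} = N^{\Theta(\log N)}$ comfortably fits inside the $\mathsf{E}$ time budget $2^{O(N)}$. For each $N$ I will fix a function $f_N \colon \{0,1\}^{\ell(N)} \to \{0,1\}$ (to be specified below) and set $L(x) := f_N(x_1, \dots, x_{\ell(N)})$ for every $x \in \{0,1\}^N$. The oracle upper bound is then immediate: take the oracle language $\mathcal{O}$ so that $\mathcal{O}(y) = f_N(y)$ whenever $|y| = \ell(N)$ (this is well defined since $\ell$ determines $N$ uniquely). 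A single oracle gate of fan-in $\ell(N) = N^{o(1)}$ computes $L$, which places $L$ well inside $O_{N^{o(1)}}\text{-}\FM[N^2]$.

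For the lower bound, the padding transfers hardness: given any $\GapAND_{O(N)}\text{-}\FM[N^{3-\eps}]$ formula computing $L$, restricting bits $\ell+1, \dots, N$ to $0$ yields a $\GapAND_{O(N)}\text{-}\FM[N^{3-\eps}]$ formula of no larger size on $\ell$ variables computing $f_N$ (the $\GapAND$ promise at a restricted input $y$ inherits from the promise at $(y, 0, \dots, 0)$). It therefore suffices to choose $f_N$ so that for every constant $\eps > 0$ and all sufficiently large $N$, no such formula computes $f_N$ on $\ell$ bits. A direct counting argument shows such $f_N$ are abundant: the number of $\GapAND_{O(N)}\text{-}\FM[N^{3-\eps}]$ formulas over $\ell$ variables is at most $2^{\tilde{O}(N^{4-\eps})}$, whereas there are $2^{2^{\ell}} = 2^{N^{\Theta(\log N)}}$ boolean functions on $\ell$ bits. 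A union bound over $\eps \in \{1/k : k \ge 1\}$ then guarantees that, for all large enough $N$, only a vanishing fraction of $f_N$'s fall into any of the target classes.

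The main obstacle will be arranging $L \in \mathsf{E}$: although many valid $f_N$ exist, a uniform algorithm must produce (or evaluate) one in time $2^{O(N)}$, while the naive enumeration of all $\GapAND\text{-}\FM[N^{3-\eps}]$ formulas is much too expensive (there are $2^{\mathrm{poly}(N)}$ of them). I plan to define $f_N$ by a halving-style diagonalization that reveals the bits of the truth table in a canonical lexicographic order, at each step choosing the next bit so as to eliminate at least half of the formulas still consistent with the partial truth table; since the truth table has $2^{\ell} = N^{\Theta(\log N)}$ positions while only $\log_2(\text{number of formulas}) = \mathrm{poly}(N)$ halving steps suffice to exhaust all candidates, there is plenty of room, and the remaining positions can be filled with any default value. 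The delicate point---which I expect to be the main technical hurdle---is implementing this diagonalization inside the $2^{O(N)}$ uniform time budget by using a succinct representation of the surviving set of formulas and the fact that each formula can be evaluated in time $\mathrm{poly}(N)$ from its description, so that partial consistency can be maintained via compact counters indexed by formula descriptions rather than an explicit list.
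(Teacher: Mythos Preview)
Your padding construction gives the oracle upper bound and the (existential) lower bound cleanly, but the argument breaks down at the step you yourself flag as delicate: placing $L$ in $\mathsf{E}$. The halving diagonalization you sketch needs, at every step, to compare the number of still-consistent circuits that output $0$ versus $1$ on the next truth-table position. There are $2^{\tilde O(N^{4-\eps})}$ many $\GapAND_{O(N)}$-$\FM[N^{3-\eps}]$ circuits on $\ell$ variables, and no ``succinct representation'' or ``compact counters indexed by formula descriptions'' gets around the fact that this count is over $2^{N^{c}}$ objects with $c>1$; any method that resolves the majority must in effect touch that many descriptions (or solve a $\#\mathsf{P}$-style counting problem for which no $2^{O(N)}$ algorithm is known). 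Put differently, in terms of $\ell$ your padded function would have to be hard against formulas of size $2^{\Theta(\sqrt{\ell})}$, and we have no explicit or $2^{O(2^{\sqrt{\ell}})}$-time construction of such a function. So the diagonalization cannot run inside the $\mathsf{E}$ budget; with your approach you would only get $L\in\mathsf{EXP}$, which is weaker than the claimed theorem.

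The paper avoids this obstacle by using an Andreev-style function rather than padding. It takes $f_{\mathsf{hard}}$ on only $m=\log N$ bits and composes it with $m$ disjoint parities over the $N$ input bits. The lower bound then goes through H{\aa}stad/Tal shrinkage: a random restriction leaving one variable alive per block collapses a $\GapAND_{O(N)}$-$\FM[N^{3-\eps}]$ device to one whose sub-formulas have size only $N^{1-\eps}\cdot\polylog(N)$, so the target of the diagonalization is formulas of size $N^{1-\Omega(1)}$ on $\log N$ bits. Now there are only $2^{N^{1-\Omega(1)}\polylog(N)}$ such formulas, and there are $2^{2^m}=2^N$ candidate functions, so the brute-force search (enumerate all functions and test each against all formulas for $0.8$-approximation) runs in $2^{O(N)}$ time. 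The oracle upper bound is still immediate: the parities cost $O(N^2/\log N)$ formula size and a single fan-in-$\log N$ oracle computes $f_{\mathsf{hard}}$. The key point your approach misses is precisely this shrinkage step, which is what makes the search space small enough for $\mathsf{E}$.
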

\begin{proof}
	The function $L$ is very similar to the Andreev's function. On an input $x$ of length $N$, let $m = \log N$ (we assume $N$ is a power of $2$ for simplicity). To avoid the second input to $\Andreev_N$, we want to find a function $f_{\sf hard}: \{0,1\}^{m} \to \{0,1\}$ which cannot be $0.8$-computed by $N^{1-\eps/2}$ formulas in $2^{O(N)}$ time (such a function exists by a simple counting argument). To find $f_{\sf hard}$, we simply enumerate all possible functions $f : \{0,1\}^m \to \{0,1\}$, and check whether it can be $0.8$-approximated by an $N^{1-\eps/2}$-size formula.
	
	There are $2^{2^m} = 2^{N}$ possible functions on $m$ bits, and $(N^{1-\eps/2})^{O(N^{1-\eps/2})} = 2^{N^{1-\eps/2} \cdot \polylog(N)}$ many formulas of $N^{1-\eps/2}$ size. Hence, a straightforward implementation of the algorithm runs in $2^{O(N)}$ time.
	
	Next, $L$ partitions $x$ into $m$ blocks $x_1,x_2,\dotsc,x_{m}$, each of length $N / m$. After that, it computes $i \in \{0,1\}^{m}$ as $i = \PARITY(x_1)\circ\PARITY(x_2)\circ\dotsc\PARITY(x_{m})$. It then outputs $f_{\sf hard}(i)$.
	
	Now, suppose there is a $\GapAND_{O(N)}$-$\FM[N^{3-\eps}]$ for $L$. We apply a random restriction keeping exactly one variable from each block alive, then w.p. $0.9$, we obtain a $\GapAND_{O(N)}$-$\FM[N^{1-\eps} \cdot \polylog(N)]$ formula for $f_{\sf hard}$~\cite{Tal14}, which implies that there is an $N^{1-\eps} \cdot \polylog(N)$-size formula $0.8$-approximating $f_{\sf hard}$, contradiction.
	
	Finally, it is easy to verify that $L \in \mathsf{E}$ and $L \in O_{N^{o(1)}}$-$\FM[N^2]$.
\end{proof}

\subsubsection{$\mathsf{AC}^0$ Lower Bounds via Random Restrictions}\label{ss:AC_Random_Restrictions}

This section states and proves a result capturing an instantiation of the locality barrier for HM Frontier E.

\begin{proposition}[Locality Barrier for HM Frontier E]
	\label{p:frontier_A} The following results hold.
	\begin{itemize}
		\item[\emph{(E1}$^\mathcal{O}$\emph{)}] \emph{(Oracle Circuits from Magnification)} For each $k = (\log n)^C$ and every large enough depth $d$, $(n-k)$\emph{-}$\mathsf{Clique} \in (\mathsf{AC}^0_d)^\mathcal{O}[m^{1 + \varepsilon_d}]$, where $\varepsilon_d \to 0$ as $d \to \infty$, and the corresponding circuit employs a single oracle gate $\mathcal{O}$ of fan-in at most $O((\log n)^{4C})$.
		\item[\emph{(E3}$^\mathcal{O}$\emph{)}] \emph{(Extension of Lower Bound Techniques)} $\mathsf{Parity} \notin (\mathsf{AC}^0)^\mathcal{O}[\mathsf{poly}(n)]$ if the total number of input wires in the circuit feeding the $\mathcal{O}$-gates is $n/(\log n)^{\omega(1)}$. 
	\end{itemize}
\end{proposition}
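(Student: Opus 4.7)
The plan is to prove the two items of Proposition~\ref{p:frontier_A} separately: item (E1$^\mathcal{O}$) via a kernelization that writes a small instance to a single oracle gate, and item (E3$^\mathcal{O}$) via a two-stage random restriction argument tailored to tolerate oracle gates.

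For (E1$^\mathcal{O}$), I would first reduce $(n-k)$-$\mathsf{Clique}$ to $k$-$\mathsf{Vertex}$-$\mathsf{Cover}$ by complementing the graph, which is implemented for free by negating input literals (cf.~the proof of Proposition~\ref{p:hm_clique}). Then I would apply Buss's classical kernelization: in any graph $H$, every vertex of degree at least $k+1$ must belong to every vertex cover of size at most $k$, and once all such vertices are removed, any remaining cover of size $\le k$ witnesses a residual graph with at most $k^2$ edges, so the instance collapses to a kernel on $O(k^2)$ vertices. This kernel, together with the residual budget, can be encoded in an adjacency matrix of $O(k^4) = O((\log n)^{4C})$ bits, matching the stated oracle fan-in. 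The oracle $\mathcal{O}$ is then chosen to hard-code the $k$-$\mathsf{Vertex}$-$\mathsf{Cover}$ answer on all instances of this bounded size, and the circuit computes the kernel encoding in $\mathsf{AC}^0_d$ of size $m^{1+\varepsilon_d}$ and feeds it to a single copy of $\mathcal{O}$.

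The main obstacle in (E1$^\mathcal{O}$) is realizing the degree test in bounded depth. Since $k = (\log n)^C$ grows with $n$, the exact threshold $[\deg_H(v) \ge k+1]$ is not in polynomial-size $\mathsf{AC}^0$, so I would use Ajtai-style approximate counting to decide a gap version: vertices with $\deg_H(v) \le k$ are certified not to be marked, vertices with $\deg_H(v) \ge (1+\varepsilon)k$ are marked as being in the cover, and any vertex in the gap region is simply retained in the kernel. The resulting kernel still has $O(k^2)$ vertices, because the residual graph has maximum degree at most $(1+\varepsilon)k$, so a yes-instance still has at most $(1+\varepsilon)k^2$ edges. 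The quantitative dependence $\varepsilon_d \to 0$ as $d \to \infty$ is obtained by the same bookkeeping that underlies the construction of~\cite{OS18_mag_first} used in Proposition~\ref{p:hm_clique}.

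For (E3$^\mathcal{O}$), the plan is a two-stage random restriction. Let $V \subseteq [n]$ be the set of input variables that feed at least one oracle gate; by hypothesis $|V| \le n/(\log n)^{\omega(1)}$. In stage one, set every variable in $V$ independently and uniformly at random; this completely determines every input to every $\mathcal{O}$-gate, so each oracle gate collapses to a constant and the circuit becomes an ordinary polynomial-size $\mathsf{AC}^0$ circuit on the remaining $n - |V| = n(1-o(1))$ variables. In stage two, apply a standard $p$-regular random restriction to those remaining variables, with $p = 1/(\log n)^{c}$ for a sufficiently large constant $c$ depending on the depth of the original circuit. H\r{a}stad's switching lemma (applied once per layer) then implies that with positive probability the surviving circuit reduces to a decision tree of depth $O(\log n)$, while $\mathsf{Parity}$ restricted in this manner is $\pm\,\mathsf{Parity}$ on the $\Omega(pn) = \Omega(n/(\log n)^{c})$ alive variables. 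Since $\mathsf{Parity}$ on $m$ bits requires a decision tree of depth exactly $m$ and $\Omega(pn) \gg \log n$, we obtain a contradiction. The main obstacle here is the parameter balance: the hypothesis $|V| \le n/(\log n)^{\omega(1)}$ is precisely the slack needed both for the switching-lemma collapse of the $\mathsf{AC}^0$ skeleton and for the non-triviality of $\mathsf{Parity}$ on the surviving bits, and it is this slack that renders the oracle gates ``invisible'' to the classical restriction argument.
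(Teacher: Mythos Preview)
Your treatment of (E1$^\mathcal{O}$) is correct and is exactly the paper's approach: reduce to $k$-$\mathsf{Vertex}$-$\mathsf{Cover}$ by complementation, run the Buss kernelization of \cite{OS18_mag_first} in $\mathsf{AC}^0$, and feed the $O(k^4)$-bit kernel to a single oracle gate.

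For (E3$^\mathcal{O}$), however, there is a genuine gap stemming from a misreading of the hypothesis. The condition ``the total number of input wires in the circuit feeding the $\mathcal{O}$-gates is $n/(\log n)^{\omega(1)}$'' bounds the \emph{sum of the fan-ins of the oracle gates}, not the number of distinct \emph{input variables} on which the oracle inputs depend. In the model of $(\mathsf{AC}^0)^{\mathcal{O}}$ there is no restriction on where oracle gates sit in the circuit; each wire entering an oracle gate may carry the output of an $\mathsf{AC}^0$ sub-circuit that reads all $n$ input variables. Consequently your set $V$ of ``input variables that feed at least one oracle gate'' need not satisfy $|V|\le n/(\log n)^{\omega(1)}$; in general $V=[n]$, and stage one of your restriction does not force the oracle gates to constants. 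Your argument would be sound only under the additional assumption that the oracle gates are at the bottom layer, which is not part of the statement.

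The paper handles this by a different route: it interprets the oracle circuit as an interactive compression game (in the sense of \cite{DBLP:conf/coco/OliveiraS15}), where the $\mathsf{AC}^0$ party sends to an unbounded party exactly as many bits as the total oracle fan-in, and then invokes the lower bound of Chattopadhyay and Santhanam \cite{DBLP:conf/focs/ChattopadhyayS12} for $\mathsf{Parity}$ in such games. That result already incorporates the random-restriction machinery in a form that tolerates oracle gates with inputs computed by $\mathsf{AC}^0$ sub-circuits, which is precisely the point your stage-one step sidesteps.
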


\begin{proof}
	The first item is established by inspection of the proof of Proposition \ref{p:hm_clique}, which relies on the circuit construction from \citep{OS18_mag_first} and a straightforward translation between vertex cover and clique detection. Recall that the circuit in \citep{OS18_mag_first} simulates a well-known kernelization algorithm for $k$-$\mathsf{Vertex}$-$\mathsf{Cover}$. This algorithm produces a graph $H$ containing $O(k^2)$ vertices and a new parameter $k_H \leq k$. This graph can be described by a string of length $O(k^4)$, and the pair $(H,k_H)$ becomes the input string to the single oracle $\mathcal{O}$ that is necessary in the oracle circuit construction. (If $\mathcal{O}$ solves vertex cover, the resulting oracle circuit correctly solves $(n-k)$-$\mathsf{Clique}$.)
	
	The second item easily follows by simulating oracle circuits via interactive compression games (see e.g.~\citep[Section 5]{DBLP:conf/coco/OliveiraS15}). In other words, one can view a circuit with oracles as an interactive protocol between two parties, where one of them has unbounded computational power, and the other is restricted to computations in a fixed circuit class. The total number of wires feeding the oracle gates corresponds to the number of bits sent to the unbounded party. The desired lower bound for oracle circuits then follows immediately from the main result from \citep{DBLP:conf/focs/ChattopadhyayS12}, which shows that the random restriction method can be extended to establish limitations on circuits with oracle gates of large fan-in.
\end{proof}

Informally, the main difficulty with the use of random restrictions in connection to HM Frontier E is that as soon as one simplifies a boolean circuit so that the oracle gate $\mathcal{O}$ is directly fed by input literals, one can fix just $(\log n)^{O(C)}$ input variables and eliminate this gate. Sacrificing such a small number of coordinates won't affect a typical  worst-case lower bound based on the random restriction method.


\subsubsection{Lower Bounds Through Reductions}\label{ss:reductions} 


Consider a reduction of \PARITY to $\MCSP[2^{n^{1/2}}/2n,2^{n^{1/2}}]$ by subquadratic-size $n^\eps$-almost formulas with $n^{\eps'}$ MCSP (possibly non-local) oracles at the bottom of each principal formula computing a gate with fanout $>1$. By Theorem \ref{thm:fclb}, such a reduction would imply $\MCSP[2^{n^{1/2}}/2n,2^{n^{1/2}}]\notin n^\eps$-$\mathsf{almost}$-$\Formula[N^{1.1}]$ assuming that after replacing all oracles by $n^\eps$-almost formulas of size $N^{1.1}$ the total size of the resulting circuit remains $<N^{2-9(\eps+\eps')}$. In combination with hardness magnification, this would give us $\NP\not\subseteq\NC$. Unfortunately, Theorem \ref{thm:fclb} rules this possibility out.

\begin{corollary}\label{cor:redafla} \PARITY is not computable by subquadratic-size $n^\eps$-almost formulas with $n^{\eps'}$ oracle gates computing $\MCSP[2^{n^{1/2}}/2n,2^{n^{1/2}}]$, assuming that after replacing all oracles by $n^\eps$-almost formulas of size $N^{1.1}$ the total size of the resulting circuit remains $<N^{2-9(\eps+\eps')}$ for $\eps+\eps'<1$.
\end{corollary}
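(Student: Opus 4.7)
The plan is to derive the corollary by combining the upper bound for $\MCSP$ from Theorem~\ref{thm:fclb}(C1) with the almost-formula lower bound for $\PARITY$ from Theorem~\ref{thm:fclb}(C3), via a substitution argument. Suppose toward a contradiction that $\PARITY$ on $n$ inputs admits an $n^\eps$-almost formula $F$ of subquadratic size whose $n^{\eps'}$ $\MCSP$-oracle gates sit at the bottom of every principal formula computing a gate of fanout $>1$. I first substitute each such $\MCSP[2^{n^{1/2}}/2n, 2^{n^{1/2}}]$ oracle gate by an $n^\eps$-almost formula of size $N^{1.1}$ with \emph{local} oracles at the bottom of its principal formulas, as guaranteed in the hypothesis (and supplied by Theorem~\ref{thm:fclb}(C1)). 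Call the resulting composite circuit $F'$. By the size assumption of the corollary, $F'$ has total size less than $N^{2-9(\eps+\eps')}$.

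Next I account for the number of fanout-$>1$ gates in $F'$ and its structural form. The original $F$ has at most $n^\eps$ such gates, and each of the $n^{\eps'}$ inserted almost-formulas contributes at most $n^\eps$ additional gates of fanout $>1$, giving a total of $n^\eps(1 + n^{\eps'}) = O(n^{\eps+\eps'})$. For the structural invariant required by Theorem~\ref{thm:fclb}(C3)---namely that all (non-input) oracle gates lie at the bottom of principal formulas of fanout-$>1$ gates---I verify that substitution preserves it: the new fanout-$>1$ gates are precisely those imported from the substituted subformulas, their own principal formulas carry the local oracles at the bottom as required, and the principal formulas of the original $F$ are now terminated by the inputs of the substituted almost-formulas, whose own principal-formula structure extends the chain. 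Thus $F'$ is an $O(n^{\eps+\eps'})$-almost formula of size $<N^{2-9(\eps+\eps')}$ computing $\PARITY$, with all oracles local and at the bottom of principal formulas of fanout-$>1$ gates.

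Finally, I invoke Theorem~\ref{thm:fclb}(C3) with parameter $\eps+\eps' < 1$: $\PARITY \notin (\eps+\eps')$-$\mathsf{almost}$-$\Formula[n^{2-9(\eps+\eps')}]$ even in the presence of local oracles at the bottom of principal formulas. Since $F'$ witnesses exactly such a computation, we obtain the desired contradiction.

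The main obstacle I anticipate is the careful bookkeeping in the second step. I need to make sure that after substitution, the composite circuit still has \emph{local} oracles placed at the bottom of principal formulas of gates with fanout $>1$, as required by Theorem~\ref{thm:fclb}(C3); a naive substitution could in principle break this invariant by creating long paths between a substituted oracle and the nearest fanout-$>1$ gate in $F$. A secondary subtlety is aligning the length parameters in the size bound---in particular, reconciling the $N^{2-9(\eps+\eps')}$ size of the composed circuit with the $n^{2-9(\eps+\eps')}$ threshold of (C3), where $N$ and $n$ refer to the same underlying PARITY input length by construction of the reduction. These are both bookkeeping issues rather than mathematical obstacles, and can be handled by slightly adjusting the chosen substitutions (or by inserting dummy negation gates, as was done in the proof of Theorem~\ref{thm:fclb}(C1)).
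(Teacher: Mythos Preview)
Your proposal is correct and follows essentially the same approach as the paper: substitute the $\MCSP$ oracle gates by the oracle almost-formulas supplied by Theorem~\ref{thm:fclb}(C1$^{\mathcal O}$), count that the composite has $O(n^{\eps+\eps'})$ gates of fanout~$>1$ and size below $N^{2-9(\eps+\eps')}$, and then contradict Theorem~\ref{thm:fclb}(C3$^{\mathcal O}$). Your treatment is in fact more careful than the paper's three-line argument, which does not spell out the structural bookkeeping you flag in your last paragraph.
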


\proof Assume the reduction in question exists. By Theorem \ref{pnp}, for every $\eps>0$ and all sufficiently big $n$, $\MCSP[2^{n^{1/2}}/2n,2^{n^{1/2}}]$ is computable by $N^{1.1}$-size $n^{\eps}$-almost formulas with local oracles at the bottom of principal formulas computing gates with fanout $>1$. By the assumption, if we replace the \MCSP oracles in the reduction by almost-formulas with local oracles, the resulting circuit is an $n^{\eps+\eps'}$-almost formula of size $N^{2-9(\eps+\eps')}$ with oracles of bounded fan-in. This contradicts the second item of Theorem \ref{thm:fclb}. \qed

\bigskip

Analogous arguments rule out the possibility of establishing strong lower bounds via reductions also in other HM frontiers.

\subsection{Lower Bounds Below Magnification Threshold}

The localizations presented in this section show that one cannot obtain strong circuit lower bounds by ``lowering the threshold'' in certain hardness magnification proofs. As a consequence of one of our results (Theorem \ref{thm:local-HS17} in Section \ref{ss:local_HS}), we also refute the Anti-Checker Hypothesis  from~\cite{OPS19_CCC}.

\subsubsection{$\mathsf{AC}^0$ Lower Bounds via Pseudorandom Restrictions}\label{ss:AC_pseudo_random_restrictions}

In this section we show that the $\AC$ lower bounds proved for $\MCSP$ ($\MKtP$) via pseudorandom restrictions~\cite{CheraghchiKLM19_eccc_journals} (see also Section~\ref{ss:frontier_ACXOR}) localize in a very strong sense. 

We use $\AC_d[O_1,O_2,\dotsc,O_d]$ to denote $\AC_d$ circuits extended with arbitrary oracles, such that oracle gates on the $i$-th level (the gates whose distance from the inputs is $i$) have fan-in at most $O_i$.

\begin{theorem}\label{thm:localize-MCSP-AC0}
	There is a constant $c$ such that for all $\eps > 0$, constants $d$, and $O_1,O_2,\dotsc,O_d$ such that $\prod_{i=1}^{d} O_i \le N / (\log N)^{\omega(1)}$, $\MCSP[n^c, n^{2c}] \notin \AC_d[O_1,O_2,\dotsc,O_d][\poly(N)]$.
\end{theorem}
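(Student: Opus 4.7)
The plan is to extend the pseudorandom-restriction argument behind \cref{theorem_MKtPversusAC} so that a single pseudorandom restriction simultaneously shrinks both the ordinary $\AC$ gates and the additional oracle gates, and then to close with an MCSP analog of \cref{lemma_MakeDTConstant}. Fix an allegedly MCSP-computing circuit $C$ of size $M = \poly(N)$ and depth $d$ with oracle gates of fan-in $\le O_i$ at level $i$. I would apply $d$ rounds of pseudorandom restrictions: round $i$ uses a $p_i$-regular, $k$-wise independent distribution (\cref{lm:k-wise-independent-restrictions}) with $p_i = \Theta(1/\max(O_i, \log M))$ and $k = \Theta(\log M)$, each samplable from $\poly(n)$ bits. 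The composed restriction $\rho$ is then sampled from $\poly(n)$ random bits, and every coordinate of $\rho$ is computable by a circuit of size $\poly(n)$.

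At round $i$ the derandomized switching lemma (\cref{lemma_DerandomizedSwitchingLemma}) shrinks each bottom $\AC$ CNF into a decision tree of depth $s := O(\log M)$; at the same time, for each oracle gate at level $i$ the number of surviving inputs is a sum of $\le O_i$ indicators each alive with probability $p_i$, with mean $p_i O_i = O(1)$, so a standard $k$-wise Chernoff bound gives $\Pr[\#\text{alive} > s] \le (e p_i O_i / s)^s \ll 1/\poly(M)$. A union bound over the $\le M$ gates shows that every gate at level $i$ becomes a depth-$s$ decision tree, which merges into the next layer as CNFs/DNFs of width $s$, exactly as in \cref{theorem_PseudorandomRestriction}. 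After $d$ rounds $C\restriction_\rho$ is a decision tree of depth $s = O(n)$ and
\[
|\rho^{-1}(*)| \;\ge\; \Omega\!\left(\frac{N}{\prod_i \max(O_i,\log M)}\right) \;\ge\; \Omega\!\left(\frac{N}{\prod_i O_i \cdot \poly(n)}\right) \;\ge\; n^{\omega(1)},
\]
using $\prod_i O_i \le N/(\log N)^{\omega(1)}$ and that $d$ is a constant.

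I would then replay the MCSP analog of \cref{lemma_MakeDTConstant}. If $C\restriction_\rho \not\equiv 1$, some path $\pi$ of length $\le s$ through the collapsed decision tree outputs $0$. Viewing $y := 0^N \circ \pi \circ \rho$ as the truth table of an $n$-variable function, $y$ is computable by a circuit that hard-codes the seed for $\rho$ and the path $\pi$, of total size at most $\poly(n)$. Choosing the universal constant $c$ so that $n^c$ dominates this polynomial (its degree depends only on the PRG and switching parameters, not on $d$), $y$ is a YES instance of $\MCSP[n^c, n^{2c}]$, forcing $C(y) = 1$ and contradicting $\pi$. Hence $C\restriction_\rho \equiv 1$, so each of the $\ge 2^{n^{\omega(1)}}$ extensions of $\rho$ is accepted by $C$ and therefore must be the truth table of some function in $\Circuit[n^{2c}]$; but the number of such truth tables is only $2^{O(n^{2c}\log n)}$, and $n^{\omega(1)}$ dominates any polynomial in $n$, giving the contradiction.

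The main obstacle I expect is the joint shrinkage step: at each round one needs a single pseudorandom distribution of $\poly(n)$ seed length that simultaneously fools the CNFs appearing in the derandomized switching lemma and controls the live fan-in of every oracle gate. Both tasks should be achievable via $k$-wise independence with $k = \Theta(\log M)$, but one must carefully track (following the composition in \cref{theorem_PseudorandomRestriction}) that the CNFs to fool at round $i+1$ have size at most $M \cdot 2^{O(s(q_{i+1}+1))} \le 2^{\poly(n)}$ with $q_{i+1} = \log(1/p_{i+1}) = O(n)$, so that the error parameter can still be taken inverse-polynomial with a $\poly(n)$-long seed.
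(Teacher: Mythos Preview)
Your overall skeleton is right: hit the circuit with a composition of $k$-wise independent restrictions of $\poly(n)$ seed length, shrink everything to a shallow decision tree while keeping $n^{\omega(1)}$ variables alive, then finish with the easy-input/counting argument. The endgame you describe is equivalent to the paper's (the paper fixes all free variables to $0$ rather than first chasing a $0$-path, but the two finishes are interchangeable).

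The gap is in the oracle-shrinkage step. You write that ``the number of surviving inputs is a sum of $\le O_i$ indicators each alive with probability $p_i$,'' but this is only literally true at level~$1$, where the oracle's inputs are input variables. At level $i>1$ the inputs to the oracle are the depth-$s$ decision trees produced by the previous round, not single variables; the probability such a tree is non-constant under a $p_i$-regular restriction is of order $s\,p_i$, not $p_i$, so your mean is $O(s)$ rather than $O(1)$. Worse, even once you know only a few of the oracle's inputs remain non-constant, each surviving input is still a depth-$s$ decision tree, so the oracle output is a priori a decision tree of depth $s\times(\#\text{survivors})$ rather than depth $s$. If you just let this depth grow, the CNF width fed to the switching lemma at the next level blows up, and you have to redo the parameter calculation inductively in $i$; that inductive bookkeeping is exactly the content you are eliding.

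The paper resolves this with a dedicated shrinkage lemma for oracles composed with shallow decision trees (\cref{lm:shrink-oracle}): if the oracle has fan-in $T$ and its inputs are depth-$k$ decision trees, then under a $k(s+1)$-wise independent $1/(Tk^2)$-regular restriction the composite becomes a depth-$s$ decision tree with failure probability $(k(s+1)/2e^2)^{-(s+1)}$. To get a restriction that aggressive at level $i$, the paper does not take a single $p_i\approx 1/O_i$ restriction; it applies $\tau_i=\lceil\log_{1/p}O_i\rceil+1$ copies of a fixed $p=1/\log^5 N$-regular restriction, whose composition is $\le 1/(O_i\cdot\log^5 N)$-regular. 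This keeps the decision-tree depth bounded by the same $s=O(\log^2 N)$ at every level, and the total number of restriction rounds is $\sum_i\tau_i\le 2d+\log_{1/p}(\prod_i O_i)$, which is what makes the $\prod_i O_i\le N/(\log N)^{\omega(1)}$ hypothesis enter. Your proposal can be fixed along the same lines, but you will need either an analog of \cref{lm:shrink-oracle} or an explicit inductive control on the decision-tree depth at each level; the ``indicator each alive with probability $p_i$'' heuristic does not do the job.
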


\vspace{0.1cm}

\begin{remark}
	We remark that the constraint on oracles in the above theorem is incomparable to the second item of Proposition~\ref{p:frontier_A}. Here we focus on the maximum oracle fan-in at each level, while there the focus is on the total fan-in of all oracles. A lower bound result for an explicit problem with parameters similar to Theorem \ref{thm:localize-MCSP-AC0} is not known for $\mathsf{AC}^0$ oracle circuits extended with parity gates \emph{(}see \emph{\citep{DBLP:conf/coco/OliveiraS15}} for results in this direction\emph{)}.
\end{remark}

\vspace{0.1cm}

We are going to apply Lemma~\ref{lemma_DerandomizedSwitchingLemma}, together with the following well-known results on $k$-wise independence fooling CNFs.

\vspace{0.1cm}

\begin{lemma}[\cite{Bazzi09,Tal17_coco_conf}]\label{lm:k-wise-independence-fool-CNFs}
	$k=O(\log(M/\eps) \cdot \log(M))$-wise independent distribution $\eps$-fools $M$-clauses CNFs.
\end{lemma}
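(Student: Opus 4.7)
The plan is to use the \emph{sandwiching polynomial method}. As a first step, I would establish the standard reduction: a $k$-wise independent distribution $\mathcal{D}$ over $\{0,1\}^N$ $\eps$-fools a function $f : \{0,1\}^N \to \{0,1\}$ whenever there exist multilinear polynomials $p_\ell, p_u$ of degree at most $k$ satisfying $p_\ell(x) \le f(x) \le p_u(x)$ for every $x \in \{0,1\}^N$ and $\mathbb{E}_{U_N}[p_u - p_\ell] \le \eps$. This follows because $k$-wise independence preserves the expectation of every multilinear monomial of degree at most $k$, so $\mathbb{E}_\mathcal{D}[p_\ell] = \mathbb{E}_{U_N}[p_\ell]$ and $\mathbb{E}_\mathcal{D}[p_u] = \mathbb{E}_{U_N}[p_u]$; the sandwiching then pinches $\mathbb{E}_\mathcal{D}[f]$ and $\mathbb{E}_{U_N}[f]$ between two quantities that differ by at most $\eps$.

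The task then reduces to constructing such polynomials of degree $k = O(\log(M/\eps) \cdot \log M)$ for every $M$-clause CNF $\varphi$. By De Morgan duality, it is equivalent to sandwich the dual $M$-term DNF $\neg\varphi = \bigvee_{i \in [M]} T_i$. Here I would follow the approach of Bazzi, as later simplified by Razborov and adapted by Tal: consider the natural inclusion-exclusion expansion of the OR, i.e.\ $\sum_{\emptyset \neq S \subseteq [M]} (-1)^{|S|+1} \prod_{i \in S} T_i(x)$, and truncate it at levels $|S| \le r$ of opposite parities to obtain an \emph{upper} and a \emph{lower} approximation $p_u$ and $p_\ell$. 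Each conjunction $\prod_{i \in S} T_i$ is itself a DNF term, which one represents (or, when the widths are too large, approximates via a second inclusion-exclusion truncation) by a multilinear polynomial of degree $O(\log(M/\eps))$. Combining the outer truncation at depth $r = O(\log M)$ with this inner approximation produces multilinear $p_\ell, p_u$ of total degree $O(\log M \cdot \log(M/\eps))$ that sandwich $\neg\varphi$.

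The main obstacle will be bounding $\mathbb{E}_{U_N}[p_u - p_\ell] \le \eps$. This reduces to a concentration statement about the number of terms $T_i$ satisfied by a uniform input: under $U_N$ this count has exponentially decaying upper tails around its expectation, so the contribution of the truncated levels $|S| > r$ (which form the entire discrepancy between consecutive truncations of inclusion-exclusion) is at most $\eps$ once $r$ is chosen to be a sufficiently large multiple of $\log(M/\eps)$. The delicate aspect is making this tail-bound argument quantitatively tight enough so that the outer truncation can be taken at depth only $O(\log M)$ rather than $O(\log(M/\eps))$, which is what saves a logarithmic factor relative to Bazzi's original bound and yields the $k = O(\log(M/\eps) \cdot \log M)$ in the statement. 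Once this error estimate is in place, combining it with the reduction in the first paragraph completes the proof.
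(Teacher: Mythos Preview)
The paper does not provide its own proof of this lemma; it is quoted verbatim from \cite{Bazzi09,Tal17_coco_conf} and used as a black box. So there is no ``paper's proof'' to compare your proposal against.

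Regarding the proposal itself: the sandwiching-polynomial reduction in your first paragraph is correct and standard. However, the construction you sketch afterwards has a genuine gap. Naive Bonferroni truncation of inclusion--exclusion for $\bigvee_{i} T_i$ does \emph{not} yield sandwiching polynomials with small $L_1$ gap for arbitrary $M$-term DNFs. If many terms $T_i$ coincide or are highly correlated, the level-$r$ symmetric sum $\sum_{|S|=r}\prod_{i\in S}T_i(x)$ can have expectation as large as $\binom{M}{r}$, so the difference between consecutive truncations does not decay as $r$ grows. The ``concentration statement about the number of satisfied terms'' you appeal to simply fails in such instances (take $M$ copies of one term). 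Bazzi's actual contribution---and the crux of Razborov's simplification---is precisely a structural step that circumvents this: one first passes to a carefully chosen sub-DNF $f'$ with $\Pr[f\neq f']$ small and whose terms are quantitatively ``spread out'' (no term is covered, up to small probability, by the union of earlier ones), and only for such $f'$ does an inclusion--exclusion-style construction produce sandwiching polynomials with the required error. Your outline omits this reduction, which is the heart of the argument; without it the error bound in your third paragraph cannot be established.
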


\vspace{0.1cm}

Combining Lemma~\ref{lemma_DerandomizedSwitchingLemma} and Lemma~\ref{lm:k-wise-independence-fool-CNFs}, we have the following lemma.

\vspace{0.1cm}

\begin{lemma}\label{lm:shrink-CNF}
	Let $\varphi$ be a $t$-width $M$-clause CNF formula over $N$ inputs, $p = 2^{-q}$ for some $q \in \Nat$, and $\eps_0 > 0$ be a real. There is a $p$-regular, 
	\[
	k = \Theta(\log(M \cdot 2^{t(q+1)}/\eps_0) \cdot \log(M \cdot 2^{t(q+1)}) \cdot q^{-1})\text{-wise}
	\]
	independent random restriction $\boldsym{\rho}$ such that
	$$
	\Pr_{\rho \sim \boldsym{\rho} } [ \DT(\varphi\restriction_{\rho}) > s ]
	\le 2^{s+t+1} (5pt)^s + \eps_0 \cdot 2^{(s+1)(2t + \log M)}.
	$$
	\noindent Moreover, $\boldsym{\rho}$ is samplable with $O(t\cdot q\cdot\polylog(M,N) \cdot \log(1/\eps_0))$ bits, and each output coordinate of the random restriction can be computed in time polynomial in the number of random bits.
\end{lemma}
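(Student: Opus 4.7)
The plan is to combine Lemma~\ref{lemma_DerandomizedSwitchingLemma} with Lemma~\ref{lm:k-wise-independence-fool-CNFs} by pushing a $k$-wise independent distribution on $\binset^{(q+1)N}$ through the Trevisan--Xue encoding $w \mapsto \rho_w$ defined earlier in Section~\ref{ss:frontier_ACXOR}. First, set $M' := M \cdot 2^{t(q+1)}$ and invoke Lemma~\ref{lm:k-wise-independence-fool-CNFs} with error parameter $\eps_0$: this yields a threshold
\[
k_w \;=\; \Theta\bigl(\log(M'/\eps_0) \cdot \log(M')\bigr) \;=\; \Theta\bigl((\log M + tq)\cdot(\log M + tq + \log(1/\eps_0))\bigr)
\]
such that every $k_w$-wise independent distribution on $\binset^{(q+1)N}$ $\eps_0$-fools $M'$-clause CNFs.

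Next, let $\mathcal{D}$ be a standard $k_w$-wise independent distribution on $\binset^{(q+1)N}$ samplable with $O(k_w \cdot \log((q+1)N))$ random bits and with each output bit computable in polynomial time (e.g., the standard polynomial-based construction cited in Lemma~\ref{lm:k-wise-independent-restrictions}). Define $\boldsym{\rho}$ to be the restriction $\rho_{\mathcal{D}}$. By the Trevisan--Xue definition, each output coordinate $\boldsym{\rho}(i)$ is a deterministic function of a disjoint block of exactly $q+1$ bits of $w$, so any $k := \lfloor k_w/(q+1) \rfloor$ coordinates of $\boldsym{\rho}$ depend on at most $k_w$ bits of $w$; by $k_w$-wise independence their joint distribution matches the uniform-$w$ case, giving $p$-regularity (since $\Pr[\boldsym{\rho}(i)=*]=2^{-q}=p$ under uniform $w$) and the desired $k$-wise independence, with
\[
k \;=\; \Theta\bigl(\log(M'/\eps_0) \cdot \log(M') \cdot q^{-1}\bigr),
\]
matching the statement. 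Applying Lemma~\ref{lemma_DerandomizedSwitchingLemma} directly with $\mathcal{D}$ (which $\eps_0$-fools $M'$-clause CNFs by step one) produces the claimed bound on $\Pr_{\rho \sim \boldsym{\rho}}[\DT(\varphi\restriction_\rho) > s]$.

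For the ``moreover'' clause, the seed length is $O(k_w \cdot \log((q+1)N))$; substituting the value of $k_w$ and absorbing $\log M$, $\log N$, and $\log q \le \log N$ into $\polylog(M,N)$ yields $O(t\cdot q\cdot \polylog(M,N)\cdot \log(1/\eps_0))$, and each output coordinate of $\boldsym{\rho}$ is computable in time polynomial in the number of random bits since each bit of $\mathcal{D}$ is. The main obstacle is almost entirely bookkeeping: one must verify that the block structure of the $w\mapsto \rho_w$ encoding yields \emph{exact} (not approximate) $k$-wise independence when $w$ is $(q+1)k$-wise independent, and carefully collapse the product of logs in $k_w\cdot \log((q+1)N)$ into the stated seed-length expression without accidentally dropping the $\log(1/\eps_0)$ factor or inflating the dependence on $t$ or $q$.
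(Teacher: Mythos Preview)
Your proposal is correct and is exactly the approach the paper intends: it simply says to combine Lemma~\ref{lemma_DerandomizedSwitchingLemma} with Lemma~\ref{lm:k-wise-independence-fool-CNFs} and invoke the standard $k$-wise independent construction, without spelling out any of the details you supply. Your observation that the disjoint $(q+1)$-bit block structure of the Trevisan--Xue encoding turns $k_w$-wise independence on $w$ into exact $\lfloor k_w/(q+1)\rfloor$-wise independence on the restriction is precisely the mechanism, and your caveat about the seed-length bookkeeping is apt (the stated bound $O(tq\cdot\polylog(M,N)\cdot\log(1/\eps_0))$ is loose and only cleanly absorbs the $(tq)^2$ cross-term under parameter regimes like the one in Theorem~\ref{thm:localize-MCSP-AC0}, where $tq \le \log(1/\eps_0)$).
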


\vspace{0.1cm}

The moreover part follows from standard construction of $k$-wise independent distributions~(see e.g.~\cite{Vadhan12}).

We also need the following lemma which states that an arbitrary oracle with inputs being small-size decision trees shrinks to a small-size decision tree with high probability, under suitable pseudorandom restrictions.

\begin{lemma}\label{lm:shrink-oracle}
	Let $O:\{0,1\}^{T} \to \{0,1\}$ be an arbitrary function, and $D_1,D_2,\dotsc,D_T$ be $T$ $k$-query decision trees on variables $x_1,x_2,\dotsc,x_N$. Let $F := O \circ (D_1,D_2,\dotsc,D_T)$ be their compositions. For $s \in \mathbb{N}$, and all $k(s+1)$-wise independent $1/(T \cdot k^2)$-regular random restriction $\boldsym{\rho}$, we have
	\[
	\Pr_{\rho \sim \boldsym{\rho}}[\DT(F\restriction_{\rho}) > s] \le \left( \frac{k(s+1)}{2e^2} \right)^{-(s+1)}.
	\]
\end{lemma}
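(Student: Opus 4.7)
The plan is to upper-bound $\DT(F\restriction_\rho)$ by the total number of live variables that appear in any of the constituent decision trees $D_i$, and then to bound the probability that this count exceeds $s$ by a factorial-moment estimate. Specifically, let $V_i$ denote the set of variables queried by $D_i$ (so $|V_i| \le k$) and set $V := \bigcup_{i=1}^T V_i$, which gives $|V| \le Tk$. I claim that
\[
\DT(F\restriction_\rho) \;\le\; \bigl|\,V \cap \rho^{-1}(*)\,\bigr|,
\]
because a decision tree that queries each live variable of $V$ in turn knows, after all its queries, the value of every input to every $D_i\restriction_\rho$ and can thus evaluate each $D_i\restriction_\rho$ and apply $O$ without any further queries; its depth is precisely $|V \cap \rho^{-1}(*)|$.

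To control $|V \cap \rho^{-1}(*)|$, set $Y_v := \mathbf{1}[\rho(v) = *]$ for $v \in V$ and $Z := \sum_{v \in V} Y_v$. Because $\boldsym{\rho}$ is $k(s+1)$-wise independent and $p$-regular with $p = 1/(Tk^2)$, and $s+1 \le k(s+1)$, any $s+1$ of the indicators $Y_v$ are jointly independent Bernoulli$(p)$ variables, so the $(s+1)$-st factorial moment of $Z$ can be computed exactly as in the fully independent case:
\[
\mathbb{E}\!\left[\binom{Z}{s+1}\right] \;=\; \binom{|V|}{s+1}\, p^{s+1} \;\le\; \left(\frac{e\,|V|\,p}{s+1}\right)^{s+1}.
\]
Since $\binom{Z}{s+1} \ge 1$ whenever $Z \ge s+1$, Markov's inequality yields $\Pr[Z \ge s+1] \le (e|V|p/(s+1))^{s+1}$. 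Plugging in $|V|p \le Tk \cdot 1/(Tk^2) = 1/k$ gives the bound $\bigl(e/(k(s+1))\bigr)^{s+1}$, which is comfortably below the claimed $\bigl(2e^2/(k(s+1))\bigr)^{s+1}$; combined with the first step this completes the proof.

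The main subtlety I anticipate is justifying the crude inequality $\DT(F\restriction_\rho) \le |V \cap \rho^{-1}(*)|$: it is tight enough only under the reading of ``$k$-query'' in which each $D_i$ has at most $k$ distinct query variables in total. Should ``$k$-query'' instead be read as ``depth at most $k$,'' a single $D_i$ could involve up to $2^k - 1$ distinct variables, and the above bound would be too lossy. In that regime one would instead use $\DT(F\restriction_\rho) \le \sum_i \DT(D_i\restriction_\rho)$ together with a per-tree union bound over the (at most $2^k$) reachable root-to-leaf paths of each $D_i$, looking for a path carrying $\ge s+1$ live variables; it is precisely when bounding such path events---where one must control the joint reachability of the variables along the path simultaneously with their liveness---that the full strength of $k(s+1)$-wise independence of $\boldsym{\rho}$, rather than merely $(s+1)$-wise independence, would be needed.
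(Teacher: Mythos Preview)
Your factorial-moment argument is clean and correct as far as it goes, but it rests on the bound $|V| \le Tk$, which requires each $D_i$ to touch at most $k$ \emph{distinct} variables. In this paper (and in standard usage) a ``$k$-query decision tree'' means one of depth $k$: this is clear both from the way the lemma is applied later (the $D_i$ arise as functions with $\DT(\cdot) \le s$ coming out of the switching lemma) and from the paper's own proof, which counts $tk$ ``query slots'' across $t$ trees. Under the depth-$k$ reading a single $D_i$ can touch up to $2^k - 1$ variables, so $|V|$ may be as large as $T(2^k - 1)$; plugging this into your estimate gives $\bigl(e\, 2^k / (k^2 (s+1))\bigr)^{s+1}$, which is useless unless $k = O(1)$. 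So your main line does not prove the lemma as intended.

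Your suggested fallback is not a proof either. A single root-to-leaf path in some $D_i$ has length at most $k$ and so cannot carry $\ge s+1$ live variables once $s \ge k$; and any union bound that selects one path per tree costs a factor of $(2^k)^T$, which swamps everything. The paper's argument is genuinely different. It analyses the canonical sequential decision tree that evaluates $D_1, \dots, D_T$ in order (querying only live variables), and makes the key observation that the first $s+1$ live queries fall in at most $s+1$ of the $T$ trees. Thus, after fixing the answer string $w \in \{0,1\}^s$ to those queries and the subset of $t \le s+1$ ``active'' trees, the relevant adaptive process touches only $tk \le k(s+1)$ coordinates of $\rho$ --- exactly matching the assumed independence --- and the probability of seeing $\ge s+1$ stars among them is at most $\binom{tk}{s+1}\, p^{s+1}$. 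A union bound over $w$ and over the $\sum_{t \le s+1} \binom{T}{t}$ subsets of trees then yields the stated estimate. The idea you were missing is this reduction from $T$ trees to at most $s+1$ trees, which is precisely what makes $k(s+1)$-wise independence sufficient and avoids any $2^k$ blow-up.
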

\begin{proof}
	
	\newcommand{\event}{\mathcal{E}}
	\newcommand{\algo}{\mathbb{A}}
	\newcommand{\Eval}{\textsf{Eval}}
	
	We focus on the following particular decision tree for evaluating $\{D_1,D_2,\dotsc,D_T\}$ with respect to a restriction $\rho : [N] \to \{0,1,*\}$:
	
	\begin{framed}
		
		\centering Algorithm $\Eval(\rho,D_1,D_2,\dotsc,D_T)$.
		
		\begin{itemize}
			\item For $i$ from 1 to T:
			\begin{itemize}
				\item Simulate decision tree $D_i$ with restriction $\rho$. That is, when $D_i$ queries an index $j$, we feed $\rho_j$ to $D_i$ if $\rho_j \in \{0,1\}$, and query the $j$-th bit otherwise.
			\end{itemize}
			\item Let $\alpha_i$ be the output of the $i$-th decision tree, we output $\alpha = (\alpha_1,\alpha_2,\dotsc,\alpha_T)$.
		\end{itemize}
	\end{framed}
	
	To obtain a decision tree for $F\restriction_{\rho}$, we can run $\Eval(\rho,D_1,D_2,\dotsc,D_T)$ to obtain $\alpha$ first and output $F(\alpha)$ at the end.
	
	Let $\widetilde{\DT}(F\restriction_{\rho})$ be the query complexity of the above decision tree. Since $\DT(F\restriction_{\rho}) \le \widetilde{\DT}(F\restriction_{\rho})$ ($\DT(F\restriction_{\rho})$ is the minimum complexity among all decision trees computing $F\restriction_{\rho}$), it suffices to bound
	\[    
	\Pr_{\rho \sim \boldsym{\rho}}[\widetilde{\DT}(F\restriction_{\rho}) > s].
	\]
	
	Consider the event that $\widetilde{\DT}(F\restriction_{\rho}) > s$, it is equivalent to that there exists a string $w \in \{0,1\}^s$, such that if we fix the first $s$ queried unrestricted bits in $\rho$ according to $w$, $\Eval$ ends up querying $>s$ bits. (Note that since we only care about whether $\widetilde{\DT}(F\restriction_{\rho}) > s$, we can force the algorithm to abort if it tries to make the $(s+1)$-th query.)
	
	Now, suppose we fix the string $w$, then the number of queries made by $\Eval$ only depends on $\rho$. Suppose the algorithm has queried at least $s+1$ bits, we let $D'_1,D'_2,\dotsc,D'_{t}$ ($t \le s+1$) be the decision trees in which the algorithm made queries during the first $s+1$ queries.  This implies that if we run $\Eval(\rho,D'_1,D'_2,\dotsc,D'_t)$ with respect to the same string $w$, the algorithm also makes at least $s+1$ queries.
	
	Now, since $\boldsym{\rho}$ is $k(s+1)$-wise independent. The probability that $\Eval(\rho,D'_1,D'_2,\dotsc,D'_t)$ makes at least $s+1$ queries with respect to the fixed string $w$ is bounded by
	\begin{align*}
	&(T \cdot k^2)^{-(s+1)} \cdot \binom{t \cdot k}{s+1} \\
	\le &(T \cdot k^2)^{-(s+1)} \cdot \left( \frac{t \cdot k \cdot e}{s+1} \right)^{s+1} \\
	\le &\left( \frac{T \cdot k \cdot (s+1)}{t \cdot e} \right)^{-(s+1)} \le \left( \frac{T \cdot k}{e} \right)^{-(s+1)}.
	\end{align*}
	
	Putting everything together, we have
	\begin{align*}
	&\Pr_{\rho \sim \boldsym{\rho}}[\widetilde{\DT}(F\restriction_{\rho}) > s] \\
	\le &2^{s} \cdot \left( \frac{T \cdot k}{e} \right)^{-(s+1)} \cdot \sum_{t=0}^{s+1} \binom{T}{t} \\
	\le &2^{s} \cdot \left( \frac{T \cdot k}{e} \right)^{-(s+1)} \cdot \left( \frac{T \cdot e}{s+1} \right)^{s+1} \\
	\le &\left( \frac{k \cdot (s+1)}{2 e^2} \right)^{-(s+1)}.
	\end{align*}
\end{proof}

\begin{remark}
	Clearly, Lemma~\ref{lm:shrink-oracle} also holds when $\boldsym{\rho}$ is $k(s+1)$-wise independent and $p$-regular, for $p \le \frac{1}{T \cdot k^2}$.
\end{remark}


Now we are ready to prove Theorem~\ref{thm:localize-MCSP-AC0}.

\begin{proofof}{Theorem~\ref{thm:localize-MCSP-AC0}}
	
	We assume $N$ and $\log N$ are both powers of $2$ for simplicity.  Let $p = 1/\log^5 N$, $\eps_0 = 2^{-\log^6 N}$, $s = t = 10 \log^2 N$, $M = 2^{s} \cdot N^{\log N}$, and $\boldsym{\rho}$ be the $k$-wise independent $p$-regular random restriction guaranteed by Lemma~\ref{lm:shrink-CNF}. Note that we have $k = \omega(\log^6 N)$ and $k = \log^{O(1)} N$.
	
	Let $C \in \AC_d[O_1,O_2,\dotsc,O_d]$ be a circuit with $S$ gates computing $\MCSP[n^{c},n^{2c}]$. For each $i \in [d]$, let $S_i$ be the number of gates at level $i$ (i.e., the gates whose distance from the input gates is $i$). Recall that $O_i$ is the maximum oracle fan-in at level $i$. We are going to prove the stronger claim that $S = \Omega(N^{ \log N})$. Now, suppose for the sake of contradiction that $S \le N^{\log N} / 8$.

	Now we proceed in $d$ iterations. We will ensure that at the end of the $i$-th iteration, all gates at level $i$ become $s$-query decision trees with high probability. At the $i$-th iteration, we apply $\rho$
	\[
	\tau_i = \lceil \log_{1/p} O_i \rceil + 1
	\]
	times. It is straightforward to see that the composition of $\tau_i$ independent restrictions from $\boldsym{\rho}$ is a $k$-wise independent $p_i$-regular random restriction for $p_i = p^{\tau_i} \le \frac{1}{O_i \cdot \log^5 N}$.
	
	Note that each oracle gate at original level $i$ has inputs computed by $s$-query decision trees (at the first step, one can treat the input variables as $1$-query decision trees).    By Lemma~\ref{lm:shrink-oracle} and noting that $k \ge s(s+1)$ and $O_i \cdot \log^5 N \ge O_i \cdot s^2$, with probability at least
	\[
	1 - S_i \cdot \left(\frac{s (s+1)}{2 e^2} \right)^{-(s+1)} \ge 1 - S_i \cdot N^{-\log N},
	\]
	all oracle gates at level $i$ become $s$-query decision trees after these $\tau_i$ restrictions.
	
	Similarly, note that each AND / OR gate at level $i$ are equivalent to a CNF or DNF with width-$s$ and size at most $2^s \cdot S$. By Lemma~\ref{lm:shrink-CNF}, again with probability at least
	\begin{align*}
	   &1 - S_i \cdot \left( 2^{s+t+1} (5p t)^{s} + \eps_0 \cdot 2^{(s+1)(2 t + \log M)} \right) \\
	\ge&1 - S_i \cdot \left( 2^{20\log^2 N+1} (5 \cdot (1 /\log^5 N) \cdot 10 \log^2 N)^{10 \log^2 N} + 2^{-\log^6 N} \cdot 2^{(10\log^2 N+1)(20 \log^2 N + \log(N^{\log N} \cdot 2^{10\log^2 N}))} \right) \\
	\ge&1 - S_i \cdot N^{-\log N},
	\end{align*}
	all AND / OR gates at level $i$ become $s$-query decision tree after these $\tau_i$ restrictions.
	
	\newcommand{\tautotal}{\tau_{\sf total}}
	
	Finally, note that in total we have applied $\boldsym{\rho}$ at most
	\[
	\tautotal = 2 d + \log_{1/p} \left(\prod_{i=1}^{d} O_i\right) = \log_{1/p} N - \omega(1)
	\]
	times, and the final output gate shrinks to an $s$-query decision tree with probability at least
	\[
	1 - 2 \cdot S \cdot N^{-\log N}.
	\]
	Since $S \le N^{\log N} / 8$, with probability at least $3/4$, after all these restrictions, $C$ is equivalent to an $s$-query decision tree.
	
	\newcommand{\pend}{p_{\sf end}}
	\newcommand{\Nremain}{N_{\sf remain}}
	
	Now let $\pend = p^{\tautotal} = N^{-1} \cdot p^{-\omega(1)}$. By Chebyshev's inequality, the number of unrestricted variables at the end of the restriction is at least $\Nremain = \frac{1}{2} \cdot \pend \cdot N = (\log N)^{\omega(1)}$ with probability at least $1/2$. Therefore, with probability at least $1/4$, at the end of the restrictions, it holds that the remaining circuit $C$ is equivalent to an $s$-query decision tree $D$, and the number of unrestricted variables is at least $\Nremain$.
	
	Suppose we fix all these remaining unrestricted variables to be $0$ to get an input $x^*$, since each restriction from $\boldsym{\rho}$ can be computed by a $\poly(n)$-size circuit, $x^*$ has a circuit of $\poly(n) \cdot \log N = \poly(n) \le n^c$ size (now we set $c$). Let $S$ be the set of input variables that $D$ queries on the input $x^*$. Note that there are at least $2^{\Nremain - |S|}$ ways of assigning values to unrestricted variables while keeping variables in $S$ all $0$. And we can see that $F$'s output on $x^*$ is the same as its output on all of these assignments. But there must exist at least one assignment such the $\MCSP$ value is at least $(\log N)^{2c} = n^{2c}$ ($2^{\Nremain - |S|} = 2^{n^{\omega(1)}}$), contradiction to the assumption that $C$ computes $\MCSP[n^{c},n^{2c}]$.
\end{proofof}

\subsubsection{The Nearly Quadratic Formula Lower Bound of  \citep{HS17}}\label{ss:HS_lb}

In this section, we prove that the nearly quadratic formula lower bound of~\citep{HS17} localizes, and thereby proving the third item of Theorem~\ref{thm:frontier_C}. This localization indeed refutes a family of possible approaches to establish circuit lower bounds through hardness magnification via ``lowering the threshold''.

More concretely, consider the following hypothesized approach. Suppose we can compute $\MCSP[2^{\sqrt{n}}]$ by a formula $F$ with $\NP$ oracles, such that when we replace every oracle $O$ with fan-in $\beta$ in $F$ by a formula of size $\beta^k$ which reads all its inputs exactly $\beta^{k-1}$ times, the size of the new formula is less than $N^{1.99}$. Then we know that $\NP$ cannot be computed by formulas of size $n^k$ which reads all its inputs exactly $n^{k-1}$ times, as otherwise we get an $N^{1.99}$-size formula for $\MCSP[2^{\sqrt{n}}]$, which is a contradiction to the lower bound in~\citep{HS17}. If this holds for all $k > 0$, then we would have $\NP \not\subset \Formula[n^k]$ for all $k$.

In the following, by localizing~\citep{HS17}, we show that there is no such oracle formula construction for $\MCSP$ even if the oracles can be arbitrary. This excludes magnification theorems obtained by approaches that unconditionally produce circuits with oracles, and essentially addresses a question from \citep{OPS19_CCC}. It also suggests that the consideration of almost-formulas in HM Frontier C is unavoidable.  

\subsection*{A Size Measure on Oracle Formulas and A Potential Approach to Formula Size Lower Bound} We first introduce a size measure $\Size_t$ on oracle formulas to formalize the previous discussion.

For a parameter $t$ and an oracle formula $F$, we define $\Size_t(F)$ as the size of the formula, if we replace every oracle $O$ with fan-in $\beta$ in $F$ by a formula of size $\beta^t$ which reads all its inputs exactly $\beta^{t-1}$ times.

More formally, 
\[
\SIZE_t(F) := \begin{cases}
\SIZE_t(F_1) + \SIZE_t(F_2) &\quad \text{$F = F_1 \wedge F_2$  or $F = F_1 \vee F_2$,}\\
\beta^{t-1} \cdot \left( \sum_{i=1}^{\beta} \SIZE_t(F_i) \right) &\quad \text{$F=O(F_1,F_2,\dotsc,F_\beta)$}.
\end{cases}
\]

\newcommand{\ctiny}{c_{\sf tiny}}
\newcommand{\cdrop}{c_{\sf drop}}

\begin{proposition}\label{prop:approach}
	For a constant $k > 0$, if there is an $\NP$ oracle formula $F$ (all oracles are languages in $\NP$) for $\MCSP[2^{\sqrt{n}}]$ such that $\SIZE_{k+1}(F) \le N^{2-\eps}$ for a constant $\eps > 0$, then $\NP \not\subseteq \Formula[n^k]$.
\end{proposition}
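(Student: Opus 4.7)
I would prove the contrapositive: assume $\NP \subseteq \Formula[n^k]$ and produce from $F$ a plain De Morgan formula of size at most $\SIZE_{k+1}(F) \le N^{2-\eps}$ that computes $\MCSP[2^{\sqrt n}]$, contradicting the near-quadratic formula lower bound $\MCSP[2^{\sqrt n}] \notin \Formula[N^{2-o(1)}]$ implicit in~\cite{HS17,CJW_tight_threshold} (of which the $N^{1.99}$ bound in item (D4) is a weaker special case).

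\textbf{Construction.} Under the assumption, every $\NP$ oracle $O$ appearing in $F$ with fan-in $\beta$ admits a De Morgan formula $\phi_O$ of size $\beta^k$ computing it. Starting from the leaves of $F$ and working upward, I would replace each oracle node $O(F_1,\dots,F_\beta)$ by the formula $\phi_O$, feeding into the $i$th input literals of $\phi_O$ the plain formula $F_i^\ast$ already obtained inductively for the subcircuit rooted at $F_i$. The final plain formula $F^\ast$ then manifestly computes $\MCSP[2^{\sqrt n}]$.

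\textbf{Size bound.} The core of the argument is to verify by induction on the structure of $F$ that $|F^\ast| \le \SIZE_{k+1}(F)$, where $|\cdot|$ counts leaves. The literal and AND/OR cases match the recursion for $\SIZE_{k+1}$ in a completely trivial way. For an oracle node $F=O(F_1,\dots,F_\beta)$, let $r_i$ denote the number of times variable $i$ appears as a leaf in $\phi_O$; then $\sum_i r_i = |\phi_O| = \beta^k$, and in particular $r_i \le \beta^k$ for each $i$. The substitution duplicates $F_i^\ast$ exactly $r_i$ times, producing
\[
|F^\ast| \;=\; \sum_{i=1}^\beta r_i \cdot |F_i^\ast| \;\le\; \beta^k \sum_{i=1}^\beta \SIZE_{k+1}(F_i) \;=\; \SIZE_{k+1}(F),
\]
matching the recursive definition of $\SIZE_{k+1}$ with $t-1 = k$ exactly.

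\textbf{Main obstacle.} The substitution and size-bound steps turn out to be quite direct, because no ``leaf regularization'' of the $\phi_O$'s is required: the multiplier $\beta^{t-1}=\beta^k$ in the definition of $\SIZE_{k+1}$ already dominates the worst-case per-variable leaf count of any size-$\beta^k$ formula, so the naive substitution cost is automatically absorbed. The point I expect to need the most care is invoking the $\MCSP[2^{\sqrt n}]$ formula lower bound at the precise parameter $N^{2-\eps}$ for an \emph{arbitrary} constant $\eps>0$: the tabulated bound in (D4) is written as $N^{1.99}$, so I would appeal to the sharper $N^{2-o(1)}$ form of the Hirahara--Santhanam argument to cover arbitrarily small constants $\eps$ and close the contradiction.
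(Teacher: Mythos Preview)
Your proposal is correct and follows essentially the same route as the paper: assume $\NP \subseteq \Formula[n^k]$, replace each $\NP$ oracle of fan-in $\beta$ by a size-$\beta^k$ formula, and show the resulting plain formula has size at most $\SIZE_{k+1}(F) \le N^{2-\eps}$, contradicting the $N^{2-o(1)}$ lower bound of \cite{HS17}. The only cosmetic difference is that the paper first pads each replacement formula with dummy nodes so that it reads every input \emph{exactly} $\beta^k$ times (making the substitution cost match the $\SIZE_{k+1}$ recursion by equality), whereas you skip the padding and use the cruder per-variable bound $r_i \le \sum_j r_j = \beta^k$; both yield the same final inequality. Your concern about the lower-bound parameter is unfounded: the paper simply invokes \cite{HS17} directly, whose bound is $N^{2-o(1)}$ and hence covers every constant $\eps>0$.
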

\begin{proof}
	Suppose $\NP \subseteq \Formula[n^k]$ for the sake of contradiction. Then in particular each $\NP$ language can be computed by a size-$n^{k+1}$ formula which reads all its inputs exactly $n^{k}$ times by adding some dummy nodes in the formula. Therefore, by replacing all $\NP$ oracles in $F$ by such formulas, we have an $N^{2-\eps}$-size formula for $\MCSP[2^{\sqrt{n}}]$, in contradiction to the lower bound in~\citep{HS17}.
\end{proof}

\subsection*{Localization of~\citep{HS17}}\label{ss:local_HS}

Our following theorem shows that the above approach is not viable even with $k = 3$ by localizing~\citep{HS17}, with a moderate constraint on the adaptivity of the oracle circuits.

\begin{theorem}\label{thm:local-HS17}
	There is a universal constant $c$ such that for all constants $\eps > 0$ and $\alpha > 2$, $\MCSP[n^{c},2^{(\eps/\alpha) \cdot n}]$ cannot be computed by oracle formulas $F$ with $\SIZE_3(F) \le N^{2-\eps}$ and adaptivity $o(\log N/\log\log N)$ (that is, on any path from root to a leaf, there are at most $o(\log N / \log\log N)$ oracles).
\end{theorem}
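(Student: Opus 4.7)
The plan is to localize the formula-size lower bound for $\MCSP$ of \cite{HS17}, as refined via super-efficient anticheckers in \cite{OPS19_CCC}, to the oracle-formula setting with the $\SIZE_3$ measure. We argue by contradiction: suppose $F$ is an oracle formula with $\SIZE_3(F) \le N^{2-\eps}$ and adaptivity $a = o(\log N/\log\log N)$ that computes $\MCSP[n^c, 2^{(\eps/\alpha) n}]$. The proof has two main ingredients: a shrinkage lemma for $\SIZE_3$, and an anticheckers argument applied to the restricted formula.

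First I would establish an analog of H\aa stad's $\Gamma = 2$ shrinkage for the $\SIZE_3$ measure: for every oracle formula $F$ and every $p$-random restriction $\rho$, $\mathbb{E}_{\rho}[\SIZE_3(F|_\rho)] \le O(p^2) \cdot \SIZE_3(F) + O(1)$. The new case is the oracle gate $F = O(F_1, \ldots, F_\beta)$: after restriction some sub-formulas $F_i|_\rho$ may become constants and be absorbed into a modified oracle $O'$ of effective fan-in $\beta' \le \beta$; hence $\SIZE_3(F|_\rho) \le \beta^2 \sum_i \SIZE_3(F_i|_\rho)$, and one then invokes the standard shrinkage analysis on the sub-formulas $F_i$. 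Choosing $p = N^{-1 + \eps/2 + \delta}$ for a small $\delta = \delta(\eps,\alpha) > 0$, a Markov bound gives $\SIZE_3(F|_\rho) \le N^{2\delta}$ with high probability, while the number $M$ of alive coordinates is $\ge N^{\eps/2 + \delta}$.

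Next, observe that $\SIZE_3$ upper-bounds the leaf count of the formula, by a simple induction on $F$: for $F = O(F_1,\dots,F_\beta)$ one has leaves$(F) = \sum_i$ leaves$(F_i) \le \sum_i \SIZE_3(F_i) = \SIZE_3(F)/\beta^2 \le \SIZE_3(F)$. Consequently $F|_\rho$ depends on at most $N^{2\delta} \ll M$ distinct alive coordinates. The adaptivity bound $a = o(\log N/\log\log N)$ is then used to control how $F|_\rho$ processes these coordinates through its layered oracle gates---so that $F|_\rho$ may be viewed as an oracle-decision-tree of depth $O(a)$ over a small collection of ``effective'' oracle queries, each depending on $N^{o(1)}$ coordinates. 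Applying the super-efficient anticheckers construction of \cite{OPS19_CCC} at threshold $2^{(\eps/\alpha) n}$ then produces YES and NO instances of $\MCSP[n^c, 2^{(\eps/\alpha) n}]$ that agree on every set of $N^{2\delta + o(1)}$ alive coordinates; any such pair is indistinguishable to $F|_\rho$, contradicting correctness.

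The main obstacle is reconciling arbitrary oracle gates with the anticheckers construction. Oracles can compute any Boolean function of their inputs, so a priori they can extract arbitrary information from the alive coordinates they see; the $\SIZE_3$ measure (with its $\beta^2$ per-oracle cost) together with the adaptivity bound tame this by controlling both the leaf count and the depth of the oracle layering. The specific exponent $3$ in $\SIZE_3$ is calibrated to H\aa stad's $p^2$-shrinkage so that the restricted formula's leaf count drops well below $M$, while the adaptivity bound $o(\log N/\log\log N)$ is calibrated to the anticheckers' tolerance for oracle-layer interference---loosening either parameter would allow $F|_\rho$ to discriminate between the YES and NO anticheckers, breaking the lower bound.
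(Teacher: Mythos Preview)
Your high-level plan (shrink by restrictions, then show the shrunken formula cannot decide the gap $\MCSP$ instance) matches the paper, but several load-bearing steps are missing or incorrect.

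\textbf{Pseudorandomness of the restriction is essential.} You apply a fully random $p$-restriction. But the final contradiction requires a \emph{YES instance} consistent with $\rho$: in the paper one sets all free variables to $0$ and argues the resulting truth table has circuit size $\le n^c$, which only works because each round of the restriction is sampled with $\polylog(N)$ bits (a $k$-wise independent, $1/\sqrt{k}$-regular restriction) and there are $t=O(\log N/\log\log N)$ rounds. Under a truly random $\rho$, every input consistent with $\rho$ already has circuit complexity $\approx N/n$, so no YES instance exists. Your appeal to ``super-efficient anticheckers'' does not fix this: anticheckers are used in \cite{OPS19_CCC} for the \emph{upper bound/magnification} direction, not for the lower bound; the endgame here is the simple counting argument above (YES from the all-$0$ completion, NO by counting), and it needs a pseudorandom $\rho$.

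\textbf{The $\SIZE_3$ shrinkage you state is not proved and likely false in the form claimed.} H\aa stad's $p^2$ shrinkage is a global statement about AND/OR formulas, not a gate-by-gate induction, so ``invoke standard shrinkage on the $F_i$'' does not yield $\mathbb{E}[\SIZE_3(F|_\rho)]\le O(p^2)\,\SIZE_3(F)+O(1)$. At an oracle gate the additive $O(1)$ term gets multiplied by $\beta^2$ and then summed over the $\beta$ children, and these blowups compound over nested oracle layers. The paper avoids this entirely: it defines a bespoke potential $\Phi$ (four cases: pure formula; pure oracle; oracle-on-subformulas; formula-with-oracle-leaves), proves $\Phi$ drops by a factor $k/O(1)$ per round of mild restriction, and shows $\Phi\le \SIZE_3(F)\cdot k^{O(\tau)}$ where $\tau$ is the adaptivity. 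This is where the adaptivity bound enters: with $\tau=o(\log N/\log\log N)$ one gets $k^{O(\tau)}=N^{o(1)}$ and $c_F=O(1)$. Your explanation that adaptivity is needed ``to view $F|_\rho$ as an oracle-decision-tree'' and for ``anticheckers' tolerance'' is not the actual role it plays; it is needed to control the multiplicative $k^{O(\tau)}$ overhead in the potential and to keep the per-round shrinkage factor bounded.

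In short: the paper's proof does \emph{not} prove $p^2$-shrinkage for $\SIZE_3$; it engineers a different potential for which shrinkage under $k$-wise independent restrictions can be established inductively, using adaptivity to bound the accumulated overhead. Replacing your single random restriction and the anticheckers step with iterated pseudorandom restrictions, the potential-function machinery, and the direct YES/NO counting would align your argument with the paper's.
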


\begin{remark}
	It is not hard to see that the adaptivity can be at most $O(\log N)$ given the condition $\SIZE_3(F) \le N^{2-\eps}$.
\end{remark}

Before proving Theorem~\ref{thm:local-HS17}, we first show it refutes the Anti-Checker Hypothesis (restated below) from~\cite{OPS19_CCC}.
\medskip

\noindent {\bf The Anti-Checker Hypothesis.} {\it For every $\lambda\in (0,1)$, there are $\eps>0$ and a collection $\mathcal{Y}=\{Y_1,\dots,Y_\ell\}$ of sets $Y_i\subseteq\{0,1\}^n$, where $\ell=2^{(2-\eps)n}$ and each $|Y_i|=2^{n^{1-\eps}}$, for which the following holds.
	
	If $f:\{0,1\}^n\mapsto \{0,1\}$ and $f\notin \Circuit[2^{n^\lambda}]$, then some set $Y\in \cal{Y}$ forms an anti-checker for $f$: For each circuit $C$ of size $2^{n^\lambda}/10n$, there is an input $y\in Y$ such that $C(y)\ne f(y)$.}

\medskip

\begin{corollary}\label{cor:anti-checker-hypothesis-wrong}
	The Anti-Checker Hypothesis is false.
\end{corollary}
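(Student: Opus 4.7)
My plan is to derive a contradiction with Theorem~\ref{thm:local-HS17} by turning the hypothesised anti-checker collection into a small-$\SIZE_3$ oracle formula for $\MCSP$. First I would fix any $\lambda \in (0,1)$ and let $\epsilon > 0$ together with $\mathcal{Y} = \{Y_1,\dots,Y_\ell\}$, where $\ell = 2^{(2-\epsilon)n}$ and $|Y_i| = 2^{n^{1-\epsilon}}$, be the parameters supplied by the Anti-Checker Hypothesis. Set $N = 2^n$ and $s := 2^{n^\lambda}/10n$, and pick any constant $c \ge 1$ with $n^c \le s$ (valid for large~$n$).

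Next I would build an oracle formula $F$ on the $N$-bit truth table of $f$ as follows. The root is an $\mathsf{AND}$ of fan-in~$\ell$; its $i$-th child is an oracle gate $g_i$ of fan-in $|Y_i|$ whose input leaves read $f$ at the (hard-wired) positions in $Y_i$, and $g_i$ outputs $1$ iff some circuit of size at most~$s$ agrees with its observed bits. By the hypothesis, $F$ computes $\MCSP[n^c, 2^{n^\lambda}]$: on YES instances the size-$n^c$ circuit for $f$ witnesses every $g_i = 1$; on NO instances the anti-checker property forces some $g_i$ to be $0$. Since every root-to-leaf path crosses exactly one oracle, the adaptivity of $F$ equals~$1$, which trivially satisfies the $o(\log N/\log\log N)$ constraint.

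The remaining step is a direct $\SIZE_3$ calculation. Each oracle has fan-in $\beta = 2^{n^{1-\epsilon}}$ with input leaves of $\SIZE_3 = 1$, so the definition gives $\SIZE_3(g_i) = \beta^{2} \cdot \beta = \beta^3$, and therefore $\SIZE_3(F) = \ell \beta^3 = 2^{(2-\epsilon)n + 3n^{1-\epsilon}} \le N^{2 - \epsilon/2}$ for all sufficiently large $n$, because $n^{1-\epsilon} = o(n)$. Now I would invoke Theorem~\ref{thm:local-HS17} with $\epsilon' := \epsilon/2$ and any $\alpha > 2$: it asserts that $\MCSP[n^c, 2^{(\epsilon'/\alpha)n}]$ cannot be computed by an oracle formula with $\SIZE_3 \le N^{2-\epsilon'}$ and adaptivity $o(\log N/\log\log N)$. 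For large $n$, $n^\lambda < (\epsilon'/\alpha) n$, so $F$ also solves this weaker-promise problem (its NO-set is contained in ours), yielding the desired contradiction.

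The main obstacle is ensuring that all parameters line up simultaneously: the polynomial-in-fan-in $\beta^3$ blow-up contributed by each oracle must be absorbed into the $N^{\epsilon/2}$ slack, which is guaranteed because the anti-checker sets have size $2^{n^{1-\epsilon}}$ with a sub-linear exponent; and the $2^{n^\lambda}$ NO-threshold produced by the anti-checker argument must be at least as strong as the $2^{(\epsilon/\alpha)n}$ threshold from Theorem~\ref{thm:local-HS17}, which is automatic because $\lambda < 1$. Once these are matched, the corollary reduces to the bookkeeping above, so the real technical content sits in Theorem~\ref{thm:local-HS17}; the role of the corollary is merely to notice that the hypothesised $\mathcal Y$ is precisely the object needed to produce a tiny-$\SIZE_3$, low-adaptivity oracle formula of the kind ruled out there.
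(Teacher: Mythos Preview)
Your proposal is correct and follows essentially the same approach as the paper: build an $\mathsf{AND}$ of oracle gates, one per anti-checker set, that query the truth table at the hard-wired positions of $Y_i$ and test for a consistent small circuit; then bound $\SIZE_3$ by $\ell\cdot|Y_i|^3 = N^{2-\epsilon+o(1)}$ and invoke Theorem~\ref{thm:local-HS17}. The paper phrases this with the concrete instance $\MCSP[2^{n^{1/3}},2^{n^{2/3}}]$ and oracle fan-in $\poly(n)\cdot 2^{n^{1-\epsilon}}$, whereas you keep $\lambda$ generic and hard-wire each $Y_i$ into its own oracle to get fan-in exactly $|Y_i|$, but these are cosmetic differences.
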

\begin{proof}
	It is easy to see that, assuming the Anti-Checker Hypothesis, we can solve $\MCSP[2^{n^{1/3}},2^{n^{2/3}}]$ with a formula $F$ of $N^{2-\eps}$ size which uses $N^{2-\eps}$ oracles of fan-in $\poly(n)2^{n^{1-\eps}} = \polylog(N) \cdot 2^{(\log N)^{1-\eps}} = N^{o(1)}$ only at the layer above the inputs, for some $\eps > 0$. However, since $\SIZE_3(F) \le N^{2-\eps + o(1)}$, $F$ cannot compute $\MCSP[2^{n^{1/3}},2^{n^{2/3}}]$ by Theorem~\ref{thm:local-HS17}, contradiction.
\end{proof}

Now we are ready to prove Theorem~\ref{thm:local-HS17}.

\newcommand{\brho}{\boldsym{\rho}}

\begin{proofof}{Theorem~\ref{thm:local-HS17}}
	
	Let $k = \log^3 N$, and $\boldsym{\rho}$ be the $k$-wise independent $(1/\sqrt{k})$-regular random restriction guaranteed by Lemma~\ref{lm:k-wise-independent-restrictions}.
	
	For an oracle formula $F$ and a sub-formula $G$ of it, we say $G$ is a maximal sub-formula if $G$ is an entire subtree rooted at either the root, an oracle gate, or a gate whose father is an oracle.
	
	We are going to apply $t = \Theta(\log_k N)$ independent pseudorandom restrictions $\brho_1,\brho_2,\dotsc,\brho_{t}$, each distributed identically to $\brho$, where $t$ will be set precisely later.
	
	\subsection*{The Overall Proof Structure} 
	
	To analyze the size of the oracle formula under the random restriction sequence $\brho_1,\brho_2,\dotsc,\brho_{t}$, we define a potential function $\Phi$ inductively for all maximal sub-formulas of the given formula $F$. As it will be clear from the definition, $\Phi$ is not only a function of the structure of the oracle formula, but also depends on the history of the pseudorandom restrictions. 
	
	Formally, for each maximal sub-formula $G$ of the given formula $F$, and for each integer $0 \le i \le t$, we define a random variable $\Phi_{G,i}$, which denotes the potential function of $G$ after the first $i$ pseudorandom restrictions and only depends on $\brho_1,\brho_2,\dotsc,\brho_i$.

	\paragraph*{Definition of Tiny formulas and Blow up.} For an oracle formula, if the top gate is an oracle, we say it is tiny if it depends on at most $\log N$ variables. Otherwise, we say it is tiny if it depends on at most $\ctiny \cdot k$ variables, for a constant $\ctiny$ to be specified later.
	
	After each pseudorandom restriction, for a formula with an oracle gate at the top, when it depends on at most $b = 20$ variables, we blow it up to a formula of size $B=2^b$ (note that if there are two oracle gates $u,v$ such that $u$ and $v$ both depend on at most $b$ variables and $u$ is an ancestor of $v$, then it suffices to only blow up $u$). 
	
	The above two definitions (tiny formulas and the process of blowing up) may seem not easy to understand at first. Let us explain the motivation behind them. The key difficulty of the proof is to handle the oracle gates properly. The process of blowing up ensures that whenever an oracle becomes too small, we just replace it with a  constant size normal formula, so it becomes easier to deal with.
	
	The definition of tiny formulas is more subtle. As it will be clearly in Case II and Case III of the inductive definition of $\Phi$, setting the threshold of being tiny to $\log N$ for oracle formulas with top oracle gates ensures that the corresponding event of becoming tiny happens with high probability, which is indeed crucial in our proof.
	
	\paragraph*{The properties of $\Phi$.} We require the following properties on $\Phi$. 
	
	\begin{enumerate}
		\item For an oracle formula $F$, $\Phi$ is multiplied by a factor of $\frac{c_F}{k}$ under $\boldsym{\rho}$ in expectation, where $c_F$ depends on $F$ but it is upper bounded by a universal constant.
		
		\item With probability $1-p_F$, for all stages, and all maximal sub-formulas $G$ of $F$, $\Phi = 0$ for $G$ implies that $G$ is tiny, where $p_F$ depends on $F$ but it is upper bounded by $N^{-2}$.
		
		\item It holds that either $\Phi = 0 $ or $\Phi \ge 1$. Together with the second item, it implies that if the oracle formula is not tiny then $\Phi \ge 1$.
	\end{enumerate}
	
	
	With these carefully designed properties of $\Phi$, the overall proof is straightforward. We first show that $\Phi$ of $F$ is closely related to $\SIZE_3(F)$, and our conditions on the oracle formula imply that $\Phi$ of the whole oracle formula is bounded by $N^{2-\eps + o(1)}$ at the beginning. Then after roughly $t \approx \log_k(N^{2-\eps + o(1)})$ rounds of restrictions from $\boldsym{\rho}$, $\Phi$ becomes $0$ with a good probability, which also implies the whole oracle formula becomes tiny (only depend on $\polylog(N)$ bits).
	
	But then we argue that after $t$ rounds of restrictions from $\boldsym{\rho}$, with high probability the number of unrestricted variables is still at least $N^{\Omega(1)}$. Using a similar argument as from~\cite{HS17,OS18_mag_first,OPS19_CCC}, we show that the remained tiny oracle formula cannot compute $\MCSP[n^c,2^{\eps/\alpha \cdot n}]$ on the remaining variables, which concludes the proof.
	
	\subsection*{The Inductive Definition of the Potential Function $\Phi$}
	
	In the following, we gradually develop the definition of the potential function $\Phi$. We remark that Case I and Case II below are actually special cases of Case III and Case IV respectively. We discuss them first in the hope that they provide some intuitions and make it easier to understand the more complicated Case III and Case IV.
	
	\subsubsection*{Case I: $\Phi$ for a Pure Formula}
	
	We begin with the simplest case of pure formulas $F$ (formulas with no oracles) of size $S$. We define 
	\[
	\Phi = \begin{cases}
	S \quad& \text{ $S \ge 100 \cdot k$,}\\
	0 \quad& \text{ otherwise.}
	\end{cases}
	\]
	
	It follows from the shrinkage lemma \citep{Hastad98}, formula decomposition \citep[Claim 6.2]{Tal14}, and the $k$-wise independence of $\boldsym{\rho}$ that, when $S \ge 100 \cdot k$, the \emph{expected size} of $S$ drops by a factor of at least $k / c_{\sf Tal}$, for a universal constant $c_{\sf Tal}$ (we can set $c_F = c_{\sf Tal}$). Otherwise, the formula is tiny. It is straightforward to verify that all three properties of $\Phi$ are satisfied (we can set $p_F = 0$ in this case). 
	
	\subsubsection*{Case II: $\Phi$ for a Pure Oracle}
	
	Next we consider the case that $F$ is a pure oracle $O$ with fan-in $T$ (pure oracle means each input to $O$ is just a variable). We set 
	\[
	\Phi = T^{2} \cdot k^3
	\]
	at the beginning. And set $\Phi \leftarrow \Phi / k$ after each $\boldsym{\rho}$. Whenever it happens $\Phi < 1$, we set $\Phi = 0$ afterwards. Here, we can simply set $c_F = 1$. 
	
	Now we argue that with probability at least $1 - N^{-5}$ (that is, we set $p_F = N^{-5}$), when $\Phi  = 0$, $O$ only depends on at most $\log N$ variables and therefore becomes tiny.
	
	Note that $\Phi = 0$ means at least $\log_k T^{2}$ rounds of random restrictions have been applied.\footnote{Note that for this argument, a potential function of $T \cdot k$ already suffices. We use $T^2 \cdot k^3$ here to make it consistent with Case III.} Their composition is a $k$-wise independent restriction which keeps a variable unrestricted with probability at most $T^{-1}$. Therefore, the probability that the number of alive variable is larger than $\log N$ is smaller than
	\[
	\binom{T}{\log N} \cdot T^{-\log N} \le \left( \frac{e \cdot T}{\log N} \right)^{\log N} \cdot T^{-\log N} \le \left( \frac{e}{\log N} \right)^{\log N} \le N^{-5}.
	\]
	
	Note that in the above inequalities we can safely assume $T > \log N$.
	
	\subsubsection*{Case III: $\Phi$ for an Oracle Formula with an Oracle Top Gate}
	
	Then we move to the case of a maximal sub-formula $F$ with an oracle top gate $O$ with fan-in $T$. Let $\Phi_i$ be the corresponding potential function of the maximal sub-formula with root being the $i$-th input to $O$. We set
	\[
	\Phi = 
	\max\left(\sum_{i=1}^{T} \Phi_i, 1/k \right) \cdot T^{2} \cdot k^4,
	\] 
	at the beginning.
	
	When $\sum_{i=1}^{T} \Phi_i > 0$, we still let $\Phi = \left( \sum_{i=1}^{T} \Phi_i \right) \cdot T^2 \cdot k^4$.
	When $\sum_{i=1}^{T} \Phi_i$ first becomes $0$ (this could happen before the first restriction, if $\sum_{i=1}^{T} \Phi_i = 0$ at the beginning), we set $\Phi = T^{2} \cdot k^3$ and decrease it by a factor of $k$ during each later restriction, and set it to $0$ if it becomes $< 1$.
	
	Here, we set $c_F$ to be the maximum of $c_{F'}$ for all maximal sub-formulas $F'$ whose root is an input to the top oracle gate $O$ in $F$.
	
	First let us argue that $\Phi$ is multiplied by a factor of $\frac{c_F}{k}$ after each $\boldsym{\rho}$ in expectation. When $\sum_{i=1}^{T} \Phi_i = 0$, it is evident from the way we set $\Phi$ (note that $c_F \ge 1$). When $\sum_{i=1}^{T} \Phi_i > 0$, it follows from the induction as each $\Phi_i$ is multiplied by a factor of $\frac{c_F}{k}$ after each $\boldsym{\rho}$ in expectation. In the borderline case when $\sum_{i=1}^{T} \Phi_i > 0$ before $\boldsym{\rho}$ and becomes $0$ afterwards. One can see $\Phi$ drops from at least $T^2 \cdot k^4$ to at most $T^2 \cdot k^3$.
	
	Moreover, when $\sum_{i=1}^{T} \Phi_i = 0$, with probability at least $1 - \sum_{i=1}^{T} p_{F_i}$ ($F_i$ is the $i$-th sub-formula whose root is an input to the top oracle gate $O$ in $F$) all the sub-formulas are tiny, so at this time the oracle depends on at most $O(T \cdot k)$ variables.
	
	Therefore, when $\Phi$ drops to $0$, with probability at least $1 - \sum_{i=1}^{T} p_{F_i} - N^{-5}$ the whole oracle formula becomes tiny, by a calculation similar to the pure oracle case. Therefore, we can set $p_F = \sum_{i=1}^{T} p_{F_i} + N^{-5}$.

	\subsubsection*{Case IV: $\Phi$ for a Formula with Oracle Leaves}
	
	Finally, we deal with the most complicated case when the maximal sub-formula $F$ is a formula with oracle leaves. Suppose $F$ is a formula of size $S$ with $m$ oracle leaves. Let $\Phi_i$ be the potential function of the sub-formula corresponding to the $i$-th oracle leaf. Also, let $\cdrop$ be the maximum of the $c_F$'s of all the sub-formulas corresponding to the oracle leaves.
	
	\newcommand{\Nact}{N_{\sf active}}
	
	The difficulty in analyzing this case is that there could be many oracles which are tiny but have not blown up yet, and we have to keep track of the number of such oracles. Let $\Nact$ be the number of remaining active tiny oracles (oracles which are tiny but have not blown up). Clearly, $\Nact \le S$ at the beginning.
	
	We set
	\[
	\Phi = S + \Nact \cdot k^2 + \sum_{i=1}^{m} \Phi_i \cdot k^4,
	\]
	at the beginning. When $S \le 100 \cdot k$ happens, we change $\Phi$ to be 
	\[
	\Nact \cdot k^2 + \sum_{i=1}^{m} \Phi_i \cdot k^4
	\]
	afterwards.
	
	After each $\boldsym{\rho}$, if $S \ge 100 \cdot k$, the expected size of $S$ becomes at most
	\[
	c_1 \cdot S/k + c_2 \cdot k \cdot \left( \sum_{i=1}^{m} \Phi_i + \Nact\right),
	\]
	for two universal constants $c_1$ and $c_2$. This  bound holds because, by Claim 4.4 of~\cite{ImpagliazzoMZ19}, a formula of size $S$ can be decomposed into $6 S/k$ sub-formulas, each of size at most $k$, and each formula has at most $2$ sub-formula children. 
	
	The number of active oracle leaf (who is not blown up) is at most $\sum_{i=1}^{m} \Phi_i + \Nact$. Hence, at least $6 \cdot S/k - \sum_{i=1}^{m} \Phi_i - \Nact$ sub-formulas do not contain an active oracle leaf, and their total expected size is $O(S/k)$ after $\boldsym{\rho}$ (by Lemma 4.1 and Lemma 4.3 of~\cite{ImpagliazzoMZ19}, and~\cite{Tal14}).    For those sub-formulas containing active oracle leaves, their total size is at most $(\sum_{i=1}^{m} \Phi_i + \Nact) \cdot O(k)$ after $\boldsym{\rho}$ (this takes account of the worst case situation that all these active oracle leaves blow up).
	
	Also, we can see that after $\boldsym{\rho}$, $\Nact$ becomes at most
	\[
	\Nact/k^2 + \sum_{i=1}^{m} \Phi_i
	\]
	in expectation. This is because for a tiny active oracle depending on at most $\log N$ variables, the probability that it does not blow up after $\boldsym{\rho}$ is at most
	\[
	\binom{\log N}{b} \cdot k^{-b/2} \le (\log N)^{b - 1.5 \cdot b} = (\log N)^{-10} \le 1/k^2.
	\]    
	
	By induction, we also have that $\sum_{i=1}^{m} \Phi_i$ is multiplied by a factor of $\frac{\cdrop}{k}$ in expectation as well after each $\boldsym{\rho}$. Therefore, after $\boldsym{\rho}$, the expectation of $\Phi$ can be bounded by
	\begin{align*}
	&c_1 \cdot S/k + c_2 \cdot k \cdot \left( \sum_{i=1}^{m} \Phi_i + \Nact\right) + \left(
	\Nact/k^2 + \sum_{i=1}^{m} \Phi_i\right) \cdot k^2 + \left(\sum_{i=1}^{m} \Phi_i\right) \cdot \frac{\cdrop}{k} \cdot k^4\\
	\le& \;\; S \cdot \frac{c_1}{k} + \Nact \cdot k^2 \cdot \frac{c_2 + 1/k}{k} + \sum_{i=1}^{m} \Phi_i \cdot k^4 \cdot \left(\frac{\cdrop + c_2/k^2 + 1/k }{k} \right).
	\end{align*}
	
	We can set
	\[
	c_F = \max(c_1,c_2+1/k, \cdrop + c_2/k^2 + 1/k).
	\]
	
	Recall that when $S \le 100 \cdot k$ happens, we change $\Phi$ to be 
	\[
	\Nact \cdot k^2 + \sum_{i=1}^{m} \Phi_i \cdot k^4
	\]
	afterwards.
	
	By the previous discussion, after this $\Phi$ still drops by a factor of $k / c_F$ in expectation after each $\boldsym{\rho}$. Note that when $\Phi = 0$, we can see the size of the whole formula is smaller than $B \cdot 100 \cdot k = O(k)$, therefore it is tiny (here we set $\ctiny = B \cdot 100$). This is because $\Phi = 0$ implies $S \le 100 \cdot k$ happened at some point, and also $\Nact = \sum_{i=1}^m \Phi_i = 0$. They together imply that all oracles have blown up, and the size bound follows since each oracle adds at most $B$ leaves. 
	
	Let $F_i$ be the sub-formula with root being the $i$-th oracle leaf. In this case, we can set $p_F = \sum_{i=1}^m p_{F_i}$.
	
	\subsection*{The $\MCSP$ Lower Bound}
	
	Let $F$ be an oracle formula with $\SIZE_3(F) \le N^{2-\eps}$ and adaptivity $\tau =o(\log N/\log\log N)$. We first need to verify that $c_F$ is upper bounded by a universal constant. One can upper bound
	\[
	c_F \le \max(c_1,c_2 + 1/k,c_{\sf Tal}) + \tau \cdot (c_2/k^2 + 1/k) \le \max(c_1,c_2 + 1/k, c_{\sf Tal}) + o(1) = O(1).
	\]
	We can also upper bound $p_F$ by $p_F \le N^{-5} \cdot N^2 = N^{-3}$.
	
	By the inductive definition of the potential function $\Psi$ on maximal sub-formulas, it is not hard to show that
	\[
	\Phi \le \SIZE_3(F) \cdot k^{O(\tau)} \le N^{2-\eps + o(1)}.
	\]
	Note that this inequality crucially employs the definition of $\SIZE_3(\cdot)$.
	
	After each $\boldsym{\rho}$, $\Phi$ is reduced by a factor of $k/c_F$. After 
	\[
	t = \lceil \log_{k/c_F} \Phi \rceil + 2
	\]
	rounds of $\boldsym{\rho}$, the expected $\Phi$ of the overall formula becomes $<1/10$, which means with probability $0.9 - p_F \ge 0.8$ it is tiny and only depends on at most $O(k) = O(\log^3 N)$  variables.
	
	Note that by definition
	\[
	(k / c_F)^{t} \le \Phi \cdot k^3,
	\]
	and therefore
	\[
	k^{t} \le \Phi \cdot k^3 \cdot (c_F)^t \le N^{2-\eps + o(1)},
	\]
	as $(c_F)^t = (c_F)^{O(\log N /\log\log N)} = N^{o(1)}$.
	
	The composition of $t$ independent $\boldsym{\rho}$ keeps a variable unrestricted with probability $k^{-t/2} \ge N^{-1 + \eps/2 - o(1)}$, and is clearly pairwise independent. By Chebyshev's inequality, after $t$ restrictions from $\boldsym{\rho}$, with probability $0.5$, at least
	\[
	1/2 \cdot N \cdot N^{-1 + \eps/2 - o(1)} \ge N^{\eps/2 - o(1)}
	\]
	variables remain active. So with probability at least $0.3$, after $t$ restrictions from $\boldsym{\rho}$, the remaining formula $F$ only depends on $O(\log^3 N)$ variables, and the number of remaining unrestricted variables is at least $N^{\eps/2 - o(1)}$.
	
	Suppose we fix all these remaining unrestricted variables to be $0$ to get an input $x^*$. Since each restriction from $\boldsym{\rho}$ can be computed by a $\poly(n)$-size circuit, $x^*$ has a circuit of $\poly(n) \cdot t = \poly(n) \le n^{c}$ size (here we set $c$). Let $S$ be the set of input variables that $F$ depends on. Note that there are at least $2^{N^{\eps/2 - o(1)} - |S|}$ ways of assigning values to unrestricted variables while keeping variables in $S$ all $0$. Since $F$ only depends on $S$, $F$'s output on $x^*$ is the same as its output on all of these assignments. But there must exist at least one assignment such the $\MCSP$ value is at least $N^{\eps/\alpha} = 2^{(\eps /\alpha) \cdot n}$ as $\alpha > 2$. Therefore, $F$ cannot compute $\MCSP[n^{c},2^{(\eps / \alpha) \cdot n}]$.
\end{proofof}

\section*{Acknowledgements}

Part of this work was completed while some of the authors were visiting the Simons Institute for the Theory of Computing. We
are grateful to the Simons Institute for their support. This work was supported in part by the European Research Council under the European Union's Seventh Framework Programme (FP7/2007-2014)/ERC Grant
Agreement no.~615075. J\'{a}n Pich was supported in part by Grant 19-05497S of GA \v{C}R.  Lijie Chen is supported by NSF CCF-1741615 and a Google Faculty Research Award. Igor C. Oliveira was supported in part by a Royal Society University Research Fellowship.\footnote{Most of this work was completed while Igor C. Oliveira was affiliated with the University of Oxford.} 

\small

\bibliographystyle{alpha}	
\bibliography{refs}	

\newcommand{\etalchar}[1]{$^{#1}$}
\begin{thebibliography}{CMMW19}

\bibitem[AB09]{AB09}
Sanjeev Arora and Boaz Barak.
\newblock {\em Computational complexity: a modern approach}.
\newblock Cambridge University Press, 2009.

\bibitem[ABX08]{DBLP:conf/focs/ApplebaumBX08}
Benny Applebaum, Boaz Barak, and David Xiao.
\newblock On basing lower-bounds for learning on worst-case assumptions.
\newblock In {\em Symposium on Foundations of Computer Science \emph{(FOCS)}},
  pages 211--220, 2008.

\bibitem[AHM{\etalchar{+}}08]{AHMPS08_siamcomp_journals}
Eric Allender, Lisa Hellerstein, Paul McCabe, Toniann Pitassi, and Michael~E.
  Saks.
\newblock {Minimizing Disjunctive Normal Form Formulas and $\mathrm{AC}^0$
  Circuits Given a Truth Table}.
\newblock {\em {SIAM} J. Comput.}, 38(1):63--84, 2008.

\bibitem[AJ08]{DBLP:journals/dm/AndreevJ08}
Alexander~E. Andreev and Stasys Jukna.
\newblock Very large cliques are easy to detect.
\newblock {\em Discrete Mathematics}, 308(16):3717--3721, 2008.

\bibitem[Ajt83]{ajtai1983sup}
Mikl{\'o}s Ajtai.
\newblock $\sum_1^1$-formulae on finite structures.
\newblock {\em Annals of Pure and Applied Logic}, 24(1):1--48, 1983.

\bibitem[AK10]{AK10}
Eric Allender and Michal Kouck{\'{y}}.
\newblock Amplifying lower bounds by means of self-reducibility.
\newblock {\em J. {ACM}}, 57(3):14:1--14:36, 2010.

\bibitem[Baz09]{Bazzi09}
Louay M.~J. Bazzi.
\newblock Polylogarithmic independence can fool {DNF} formulas.
\newblock {\em {SIAM} J. Comput.}, 38(6):2220--2272, 2009.

\bibitem[CGJ{\etalchar{+}}18]{DBLP:journals/jcss/CheraghchiGJWX18}
Mahdi Cheraghchi, Elena Grigorescu, Brendan Juba, Karl Wimmer, and Ning Xie.
\newblock {AC}$^0 \circ$ {MOD}$_2$ lower bounds for the {Boolean Inner
  Product}.
\newblock {\em J. Comput. Syst. Sci.}, 97:45--59, 2018.

\bibitem[CIKK16]{DBLP:conf/coco/CarmosinoIKK16}
Marco~L. Carmosino, Russell Impagliazzo, Valentine Kabanets, and Antonina
  Kolokolova.
\newblock Learning algorithms from natural proofs.
\newblock In {\em Conference on Computational Complexity \emph{(CCC)}}, pages
  10:1--10:24, 2016.

\bibitem[CJW19a]{Magnification_FOCS19}
Lijie Chen, Ce~Jin, and Ryan Williams.
\newblock Hardness magnification for all sparse {NP} languages.
\newblock In {\em Symposium on Foundations of Computer Science \emph{(FOCS)}},
  2019.

\bibitem[CJW19b]{CJW_tight_threshold}
Lijie Chen, Ce~Jin, and Ryan Williams.
\newblock Sharp threshold results for computational complexity.
\newblock In {\em Unpublished manuscript}, 2019.

\bibitem[CKLM19]{CheraghchiKLM19_eccc_journals}
Mahdi Cheraghchi, Valentine Kabanets, Zhenjian Lu, and Dimitrios Myrisiotis.
\newblock Circuit lower bounds for {MCSP} from local pseudorandom generators.
\newblock In {\em International Colloquium on Automata, Languages, and
  Programming \emph{(ICALP)}}, pages 39:1--39:14, 2019.

\bibitem[CMMW19]{CMMW_CCC_paper}
Lijie Chen, Dylan~M. McKay, Cody~D. Murray, and R.~Ryan Williams.
\newblock Relations and equivalences between circuit lower bounds and
  {K}arp-{L}ipton theorems.
\newblock In {\em Computational Complexity Conference \emph{(CCC)}}, 2019.

\bibitem[COS17]{DBLP:conf/stoc/ChenOS17}
Xi~Chen, Igor~Carboni Oliveira, and Rocco~A. Servedio.
\newblock Addition is exponentially harder than counting for shallow monotone
  circuits.
\newblock In {\em Symposium on Theory of Computing \emph{(STOC)}}, pages
  1232--1245, 2017.

\bibitem[CS12]{DBLP:conf/focs/ChattopadhyayS12}
Arkadev Chattopadhyay and Rahul Santhanam.
\newblock Lower bounds on interactive compressibility by constant-depth
  circuits.
\newblock In {\em Symposium on Foundations of Computer Science \emph{(FOCS)}},
  pages 619--628, 2012.

\bibitem[CT19]{CT19_STOC}
Lijie Chen and Roei Tell.
\newblock Bootstrapping results for threshold circuits ``just beyond'' known
  lower bounds.
\newblock In {\em Symposium on Theory of Computing \emph{(STOC)}}, 2019.

\bibitem[DF13]{DBLP:series/txcs/DowneyF13}
Rodney~G. Downey and Michael~R. Fellows.
\newblock {\em Fundamentals of Parameterized Complexity}.
\newblock Texts in Computer Science. Springer, 2013.

\bibitem[For02]{Forster02}
J{\"{u}}rgen Forster.
\newblock A linear lower bound on the unbounded error probabilistic
  communication complexity.
\newblock {\em J. Comput. Syst. Sci.}, 65(4):612--625, 2002.

\bibitem[FSS84]{DBLP:journals/mst/FurstSS84}
Merrick~L. Furst, James~B. Saxe, and Michael Sipser.
\newblock Parity, circuits, and the polynomial-time hierarchy.
\newblock {\em Mathematical Systems Theory}, 17(1):13--27, 1984.

\bibitem[GKRS19]{DBLP:conf/innovations/GoosKRS19}
Mika G{\"{o}}{\"{o}}s, Pritish Kamath, Robert Robere, and Dmitry Sokolov.
\newblock Adventures in monotone complexity and {TFNP}.
\newblock In {\em Innovations in Theoretical Computer Science Conference
  \emph{(ITCS)}}, pages 38:1--38:19, 2019.

\bibitem[H{\aa}s98]{Hastad98}
Johan H{\aa}stad.
\newblock The shrinkage exponent of de morgan formulas is 2.
\newblock {\em {SIAM} J. Comput.}, 27(1):48--64, 1998.

\bibitem[Hir18]{Hir18}
Shuichi Hirahara.
\newblock Non-black-box worst-case to average-case reductions within {NP}.
\newblock In {\em Symposium on Foundations of Computer Science \emph{(FOCS)}},
  pages 247--258, 2018.

\bibitem[HS17]{HS17}
Shuichi Hirahara and Rahul Santhanam.
\newblock On the average-case complexity of {MCSP} and its variants.
\newblock In {\em Computational Complexity Conference \emph{(CCC)}}, pages
  7:1--7:20, 2017.

\bibitem[IMZ19]{ImpagliazzoMZ19}
Russell Impagliazzo, Raghu Meka, and David Zuckerman.
\newblock Pseudorandomness from shrinkage.
\newblock {\em J. {ACM}}, 66(2):11:1--11:16, 2019.

\bibitem[IPS97]{DBLP:journals/siamcomp/ImpagliazzoPS97}
Russell Impagliazzo, Ramamohan Paturi, and Michael~E. Saks.
\newblock Size-depth tradeoffs for threshold circuits.
\newblock {\em {SIAM} J. Comput.}, 26(3):693--707, 1997.

\bibitem[IPZ01]{DBLP:journals/jcss/ImpagliazzoPZ01}
Russell Impagliazzo, Ramamohan Paturi, and Francis Zane.
\newblock Which problems have strongly exponential complexity?
\newblock {\em J. Comput. Syst. Sci.}, 63(4):512--530, 2001.

\bibitem[IW01]{IW01}
Russell Impagliazzo and Avi Wigderson.
\newblock Randomness vs time: Derandomization under a uniform assumption.
\newblock {\em J. Comput. Syst. Sci.}, 63(4):672--688, 2001.

\bibitem[Jer09]{DBLP:journals/jsyml/Jerabek09}
Emil Jer{\'{a}}bek.
\newblock Approximate counting by hashing in bounded arithmetic.
\newblock {\em J. Symb. Log.}, 74(3):829--860, 2009.

\bibitem[Juk12]{DBLP:books/daglib/0028687}
Stasys Jukna.
\newblock {\em Boolean Function Complexity - Advances and Frontiers}.
\newblock Springer, 2012.

\bibitem[Jus72]{Justesen72_tit_journals}
J{\o}rn Justesen.
\newblock {Class of constructive asymptotically good algebraic codes}.
\newblock {\em {IEEE} Trans. Information Theory}, 18(5):652--656, 1972.

\bibitem[Kop11]{Kopparty11_stoc_conf}
Swastik Kopparty.
\newblock {On the complexity of powering in finite fields}.
\newblock In {\em Proceedings of the Symposium on Theory of Computing
  \emph{(STOC)}}, pages 489--498, 2011.

\bibitem[KR13]{KR_STOC_paper}
Ilan Komargodski and Raz Ran.
\newblock Average-case lower bounds for formula size.
\newblock In {\em Symposium on Theory of Computing \emph{(STOC)}}, 2013.

\bibitem[Kra11]{KForc11}
Jan Kraj\'i\v{c}ek.
\newblock {\em Forcing with random variables and proof complexity}.
\newblock Cambridge University Press, 2011.

\bibitem[KW90]{DBLP:journals/siamdm/KarchmerW90}
Mauricio Karchmer and Avi Wigderson.
\newblock Monotone circuits for connectivity require super-logarithmic depth.
\newblock {\em {SIAM} J. Discrete Math.}, 3(2):255--265, 1990.

\bibitem[LW13]{LW13}
Richard~J. Lipton and Ryan Williams.
\newblock Amplifying circuit lower bounds against polynomial time, with
  applications.
\newblock {\em Computational Complexity}, 22(2):311--343, 2013.

\bibitem[MMW19]{MMW_STOC_paper}
Dylan~M. McKay, Cody~D. Murray, and R.~Ryan Williams.
\newblock Weak lower bounds on resource-bounded compression imply strong
  separations of complexity classes.
\newblock In {\em Symposium on Theory of Computing \emph{(STOC)}}, 2019.

\bibitem[MP17]{DBLP:journals/eccc/MullerP17}
Moritz M{\"{u}}ller and J{\'{a}}n Pich.
\newblock Feasibly constructive proofs of succinct weak circuit lower bounds.
\newblock {\em Electronic Colloquium on Computational Complexity
  \emph{(ECCC)}}, 24:144, 2017.

\bibitem[Oli19]{DBLP:conf/icalp/Oliveira19}
Igor~Carboni Oliveira.
\newblock Randomness and intractability in {K}olmogorov complexity.
\newblock In {\em International Colloquium on Automata, Languages, and
  Programming \emph{(ICALP)}}, pages 32:1--32:14, 2019.

\bibitem[OPS19]{OPS19_CCC}
Igor~Carboni Oliveira, J{\'{a}}n Pich, and Rahul Santhanam.
\newblock Hardness magnification near state-of-the-art lower bounds.
\newblock In {\em Computational Complexity Conference \emph{(CCC)}}, 2019.

\bibitem[OS15]{DBLP:conf/coco/OliveiraS15}
Igor~Carboni Oliveira and Rahul Santhanam.
\newblock Majority is incompressible by {AC}$^0[p]$ circuits.
\newblock In {\em Conference on Computational Complexity \emph{(CCC)}}, pages
  124--157, 2015.

\bibitem[OS17]{DBLP:conf/coco/OliveiraS17}
Igor~Carboni Oliveira and Rahul Santhanam.
\newblock Conspiracies between learning algorithms, circuit lower bounds, and
  pseudorandomness.
\newblock In {\em Computational Complexity Conference \emph{(CCC)}}, pages
  18:1--18:49, 2017.

\bibitem[OS18]{OS18_mag_first}
Igor~Carboni Oliveira and Rahul Santhanam.
\newblock Hardness magnification for natural problems.
\newblock In {\em Symposium on Foundations of Computer Science \emph{(FOCS)}},
  pages 65--76, 2018.

\bibitem[Raz85]{Razborov:85a}
Alexander~A. Razborov.
\newblock Lower bounds on the monotone complexity of some {B}oolean functions.
\newblock {\em Doklady Akademii Nauk SSSR}, 281:798--801, 1985.
\newblock English translation in: Soviet Mathematics Doklady 31:354--357, 1985.

\bibitem[Raz87]{Razborov87}
Alexander~A. Razborov.
\newblock Lower bounds on the size of constant-depth networks over a complete
  basis with logical addition.
\newblock {\em Mathematicheskie Zametki}, 41(4):598--607, 1987.

\bibitem[Raz15]{Razb15}
Alexander~A. Razborov.
\newblock Pseudorandom generators hard for $k$-{DNF} resolution and polynomial
  calculus resolution.
\newblock {\em Annals of Mathematics}, 181(2):415--472, 2015.

\bibitem[Rei11]{Rei11}
Ben~W. Reichardt.
\newblock Reflections for quantum query algorithms.
\newblock In {\em Symposium on Discrete Algorithms \emph{(SODA)}}, pages
  560--569, 2011.

\bibitem[RR97]{DBLP:journals/jcss/RazborovR97}
Alexander~A. Razborov and Steven Rudich.
\newblock Natural proofs.
\newblock {\em J. Comput. Syst. Sci.}, 55(1):24--35, 1997.

\bibitem[Smo87]{DBLP:conf/stoc/Smolensky87}
Roman Smolensky.
\newblock Algebraic methods in the theory of lower bounds for {B}oolean circuit
  complexity.
\newblock In {\em Symposium on Theory of Computing \emph{(STOC)}}, pages
  77--82, 1987.

\bibitem[Sri03]{Srinivasan03}
Aravind Srinivasan.
\newblock On the approximability of clique and related maximization problems.
\newblock {\em J. Comput. Syst. Sci.}, 67(3):633--651, 2003.

\bibitem[SS96]{SipserS96_tit_journals}
Michael Sipser and Daniel~A. Spielman.
\newblock {Expander codes}.
\newblock {\em {IEEE} Trans. Information Theory}, 42(6):1710--1722, 1996.

\bibitem[Tal14]{Tal14}
Avishay Tal.
\newblock Shrinkage of {De} {Morgan} formulae by spectral techniques.
\newblock In {\em Symposium on Foundations of Computer Science \emph{(FOCS)}},
  pages 551--560, 2014.

\bibitem[Tal17a]{Tal17}
Avishay Tal.
\newblock Formula lower bounds via the quantum method.
\newblock In {\em Symposium on Theory of Computing \emph{(STOC)}}, pages
  1256--1268, 2017.

\bibitem[Tal17b]{Tal17_coco_conf}
Avishay Tal.
\newblock {Tight Bounds on the Fourier Spectrum of $\mathrm{AC}^0$}.
\newblock In {\em Computational Complexity Conference \emph{(CCC)}}, pages
  15:1--15:31, 2017.

\bibitem[TX13]{TrevisanX13_coco_conf}
Luca Trevisan and Tongke Xue.
\newblock {A Derandomized Switching Lemma and an Improved Derandomization of
  $\mathrm{AC}^0$}.
\newblock In {\em Conference on Computational Complexity \emph{(CCC)}}, pages
  242--247, 2013.

\bibitem[Vad12]{Vadhan12}
Salil~P. Vadhan.
\newblock Pseudorandomness.
\newblock {\em Foundations and Trends in Theoretical Computer Science},
  7(1-3):1--336, 2012.

\bibitem[Yao89]{Yao89}
Andrew~Chi{-}Chih Yao.
\newblock Circuits and local computation.
\newblock In {\em Symposium on Theory of Computing \emph{(STOC)}}, pages
  186--196, 1989.

\end{thebibliography}

\normalsize

\appendix

\section{Review of Hardness Magnification in Circuit Complexity}
\label{a:review_magnification}
\subsection{Previous Work}\label{ss:previous_work}

We focus on some representative examples. For definitions and more details, check Section \ref{s:preliminaries} or consult the original papers.\\

\noindent \textbf{Srinivasan \citep{Srinivasan03} (Informal).} If there exists $\varepsilon > 0$ such that $n^{1 - o(1)}$-approximating $\mathsf{MAX}$-$\mathsf{CLIQUE}$ requires boolean circuits of size at least $m^{1 + \varepsilon}$ (where $m = \Theta(n^2)$), then $\mathsf{NP} \nsubseteq \mathsf{Circuit}[\mathsf{poly}]$.\\

\noindent \textbf{Allender-Kouck{\'{y}} \citep{AK10} and Chen-Tell \citep{CT19_STOC}.} The following results hold.
\begin{itemize}
\item Let $\Pi \in \{\mathsf{BFE}, \mathsf{W}_{\mathsf{S}_5}, \mathsf{W}5\text{-}\mathsf{STCONN}\}$. Suppose that for each $c > 1$ there exist infinitely many $d \in \mathbb{N}$ such that $\mathsf{TC}^0$ circuits of depth $d$ require more than $n^{1 + c^{-d}}$ wires to solve $\Pi$. Then, $\mathsf{NC}^1 \nsubseteq \mathsf{TC}^0$.
\item Suppose that for each $c > 1$ there exist infinitely many $d \in \mathbb{N}$ such that $\mathsf{MAJ}$ cannot be computed by $\mathsf{ACC}^0$ circuits of depth $d$ with $n^{1 + c^{-d}}$ wires. Then $\mathsf{MAJ} \notin \mathsf{ACC}^0$, and consequently $\mathsf{TC}^0 \nsubseteq \mathsf{ACC}^0$.
\end{itemize} 

\vspace{0.25cm}

\noindent \textbf{Lipton-Williams \citep{LW13}.} If there is $\varepsilon > 0$ such that for every $\delta > 0$ we have $\mathsf{CircEval} \notin \mathsf{Size}\text{-}\mathsf{Depth}[n^{1 + \varepsilon}, n^{1 - \delta}]$, then for every $k \geq 1$ and $\gamma > 0$ we have $\mathsf{CircEval} \notin \mathsf{Size}\text{-}\mathsf{Depth}[n^k, n^{1 - \gamma}]$ (in particular $\mathsf{P} \nsubseteq \mathsf{NC}$).\\

\noindent \textbf{Oliveira-Santhanam \citep{OS18_mag_first}.} The following results hold.
\begin{itemize}
\item Let $s(n) = n^k$ and $\delta(n) = n^{-k}$, where $k \in \mathbb{N}$. If $\mathsf{MCSP}[(s,0),(s,\delta)] \notin \mathsf{Formula}[N^{1 + \varepsilon}]$ for some $\varepsilon > 0$, then there is $L \in \mathsf{NP}$ over $m$-bit inputs and $\delta > 0$ such that $L \notin \mathsf{Formula}[2^{m^\delta}]$.
\item Suppose there exists $k \geq 1$ such that for every $d \geq 1$ there is $\varepsilon_d > 0$ such that $\mathsf{MCSP}[(s,0),(s,\delta)] \notin \mathsf{AC}^0_d[N^{1 + \varepsilon_d}]$, where $s(n) = n^k$ and $\delta(n) = n^{-k}$. Then $\mathsf{NP} \nsubseteq \mathsf{NC}^1$. 
\item Let $k(n) = n^{o(1)}$. If there exists $\varepsilon > 0$ such that $k$-$\mathsf{Vertex}$-$\mathsf{Cover} \notin \mathsf{DTISP}[m^{1 + \varepsilon}, m^{o(1)}]$, where the input is an $n$-vertex graph represented by an adjacency matrix of bit length $m = \Theta(n^2)$, then $\mathsf{P} \neq \mathsf{NP}$.
\item Let $k(n) = (\log n)^C$, where $C \in \mathbb{N}$ is arbitrary. If for every $d \geq 1$ there exists $\varepsilon > 0$ such that $k$-$\mathsf{Vertex}$-$\mathsf{Cover} \notin \mathsf{AC}^0_d[m^{1 + \varepsilon}]$, then $\mathsf{NP} \nsubseteq \mathsf{NC}^1$.
\end{itemize}

\vspace{0.25cm}

\noindent \textbf{Oliveira-Pich-Santhanam \citep{OPS19_CCC} and McKay-Murray-Williams \citep{MMW_STOC_paper} (Informal).} If there exists $\varepsilon > 0$ such that
for every small enough $\beta > 0$,
\begin{itemize}
\item $\mathsf{MCSP}[2^{\beta n}] \notin \mathsf{Circuit}[N^{1 + \varepsilon}]$, then $\mathsf{NP} \nsubseteq \mathsf{Circuit}[\mathsf{poly}]$.
\vspace{-0.05cm}
\item $\mathsf{MKtP}[2^{\beta n}] \notin \mathsf{TC}^0[N^{1 + \varepsilon}]$, then $\mathsf{EXP} \nsubseteq \mathsf{TC}^0[\mathsf{poly}]$.
\item $\mathsf{MKtP}[2^{\beta n}] \notin U_2$-$\mathsf{Formula}[N^{3 + \varepsilon}]$, then $\mathsf{EXP} \nsubseteq \mathsf{Formula}[\mathsf{poly}]$.
\item $\mathsf{MKtP}[2^{\beta n}] \notin B_2$-$\mathsf{Formula}[N^{2 + \varepsilon}]$, then $\mathsf{EXP} \nsubseteq \mathsf{Formula}[\mathsf{poly}]$.
\item $\mathsf{MKtP}[2^{\beta n}] \notin \mathsf{Formula}$-$\mathsf{XOR}[N^{1 + \varepsilon}]$, then $\mathsf{EXP} \nsubseteq \mathsf{Formula}[\mathsf{poly}]$.
\item $\mathsf{MKtP}[2^{\beta n}] \notin \mathsf{BP}[N^{2 + \varepsilon}]$, then $\mathsf{EXP} \nsubseteq \mathsf{BP}[\mathsf{poly}]$.
\item $\mathsf{MKtP}[2^{\beta n}] \notin (\mathsf{AC}^0[6])[N^{1 + \varepsilon}]$, then $\mathsf{EXP} \nsubseteq \mathsf{AC}^0[6]$.
\end{itemize} 
\noindent Many results for $\mathsf{MKtP}$ admit analogues for $\mathsf{MrKtP}$, which considers a randomized version of $\mathsf{Kt}$ complexity introduced by \citep{DBLP:conf/icalp/Oliveira19}. An advantage of $\mathsf{MrKtP}$ is that strong unconditional lower bounds against uniform computations are known, while the hardness of problems such as $\mathsf{MCSP}$ and  $\mathsf{MKtP}$ currently relies on cryptographic assumptions.

\vspace{0.5cm}

\noindent \textbf{Chen-McKay-Murray-Williams \citep{CMMW_CCC_paper}.} The following results hold.
\begin{itemize}
\item If there is $\varepsilon > 0$, $c \geq 1$, and an $n^c$-sparse language $L \in \mathsf{NP}$ such that $L \notin \mathsf{Circuit}[n^{1 + \varepsilon}]$, then $\mathsf
  {NE} \nsubseteq \mathsf{Circuit}[2^{\delta \cdot n}]$ for some $\delta > 0$.
\item  If there is $\varepsilon  > 0$ such that for every $\beta > 0$ there is a $2^{n^\beta}$-sparse language $L \in \mathsf{NTIME}[2^{n^\beta}]$ such that $L \notin
  \mathsf{Circuit}[n^{1 + \varepsilon}]$, then $\mathsf{NEXP} \nsubseteq \mathsf{Circuit}[\mathsf{poly}]$.
\end{itemize} 

\vspace{0.25cm}

More recently, \citep{Magnification_FOCS19} established that many hardness magnification theorems for problems such as $\mathsf{MCSP}$ and $\mathsf{MKtP}$ hold in fact under the assumption that a \emph{sufficiently sparse and explicit language} admits weak lower bounds. We refer to their work for more details.

\subsection{Hardness Magnification Through the Lens of Oracle Circuits}\label{ss:kernelization}

We can view the results from Appendix \ref{ss:previous_work} as \emph{unconditional} upper bounds on the size of small fan-in oracle circuits solving the corresponding problems, for a certain choice of oracle gates. In a magnification theorem, it is important to upper bound the uniform complexity of the oracle gates. For our discussion, this is not going to be relevant. 

We repeat here a definition from Section \ref{s:preliminaries}, for convenience of the reader.
\begin{definition}[Local circuit classes]
  Let $\mathcal{C}$ be a circuit class \emph{(}such as $\mathsf{AC}^0[s]$, $\mathsf{TC}_d^0[s]$, $\mathsf{Circuit}[s]$, etc\emph{)}. For functions $q, \ell, a \colon
  \mathbb{N} \to \mathbb{N}$, we say that a language $L$ is in $[q,\ell, a]$--$\,\mathcal{C}$ if there exists a sequence $\{E_n\}$ of oracle circuits for which the
  following holds:
  \begin{itemize}
  \item[\emph{(\emph{i})}] Each oracle circuit $E_n$ is a circuit from $\mathcal{C}$.
  \item[\emph{(\emph{ii})}] There are at most $q(n)$ oracle gates in $E_n$, each of fan-in at most $\ell(n)$, and any path from an input gate to an output gate
    encounters at most $a(n)$ oracle gates.
  \item[\emph{(\emph{iii})}] There exists a language $\mathcal{O} \subseteq \{0,1\}^*$ such that the sequence $\{E_n^\mathcal{O}\}$ \emph{(}$E_n$ with its oracle gates
    set to $\mathcal{O}$\emph{)} computes $L$.
\end{itemize}
\end{definition}

In the definition above, $q$ stands for \emph{quantity}, $\ell$ for \emph{locality}, and $a$ for \emph{adaptivity} of the corresponding oracle gates.

The fact that existing magnification theorems produce such circuits is a consequence of the algorithmic nature of the underlying proofs, which show how to reduce an instance of a problem to shorter instances of another related problem. By inspection of each proof, it is possible to establish a variety of upper bounds. We explicitly state some of them below.

\begin{proposition}\label{p:magnification_as_oracle_circuits}
The following results hold.
\begin{itemize}
\item \emph{\citep{AK10}} For every $\Pi \in \{\mathsf{BFE}, \mathsf{W}_{\mathsf{S}_5}, \mathsf{W}5\text{-}\mathsf{STCONN}\}$ and every $\beta > 0$, $\Pi_n \in \left[ O\left( n^{1-\beta} \right), n^{\beta}, O\!\left( \frac{1}{\beta} \right) \right]$--$\,\mathsf{TC}^0[O(n)]$.
\item \emph{\citep{LW13}} For every $\delta > 0$, $\mathsf{CircEval}_n \in \left[ n \cdot \poly(\log n), n^\delta, n^{1-\delta} \right]$--$\,\Circuit[n \cdot \poly(\log n)]$.
\item \emph{\citep{OS18_mag_first}} For every constructive function $n \leq s(n) \leq 2^n/\poly(n)$ and parameter $0 < \delta(n) < 1/2$, $\mathsf{MCSP}[(s,0),(s,\delta)] \in [N,\poly(s/\delta),1]$--$\,\mathsf{Formula}[N \cdot \poly(s/\delta)]$.
\item \emph{\citep{OS18_mag_first}} Let $k = (\log n)^C$, where $C \in \mathbb{N}$. Then $k$-$\mathsf{Vertex}$-$\mathsf{Cover} \in [1, (\log n)^{4C}, 1]$--$\,\mathsf{AC}^0_d[m^{1 + \varepsilon}]$, where $\varepsilon_d \rightarrow 0$ as $d \rightarrow \infty$.
\item \emph{\citep{OPS19_CCC}} For every $\beta > 0$ and for every constructive function $s(n) \leq 2^{\beta n}$, $\mathsf{Gap}$-$\mathsf{MKtP} \in
  [N, \mathsf{poly}(s), 1]$--$\,\mathsf{Formula}$-$\mathsf{XOR}[N \cdot \poly(s)]$.
\item \sloppy \emph{\citep{OPS19_CCC}} For every constructive function $s(n) \leq 2^n/\poly(n)$, it follows that $\mathsf{Gap}$-$\mathsf{MCSP} \in [N \cdot \mathsf{poly}(s), \mathsf{poly}(s), \mathsf{poly}(s)]$--$\,\mathsf{Circuit}[N \cdot \poly(s)]$.
\item \sloppy \emph{\citep{MMW_STOC_paper}} For every constructive function $s(n) \leq 2^n/\poly(n)$, we have $\MCSP[s(n)] \in [O(N/\poly (s)), \poly (s), O(n/\log (s))]$--$\,\Circuit[N / \poly(s)]$.
\end{itemize}
\end{proposition}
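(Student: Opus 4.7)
The plan is to establish each bullet by direct inspection of the construction underlying the cited magnification theorem, not by proving anything new. Every such magnification theorem for a problem $Q$ over inputs of length $N$ is proved in the contrapositive via a two-step reduction: first, one unconditionally builds a circuit $E_N$ in the target class $\mathcal{C}$ that queries an auxiliary language $L$ on at most $q(N)$ sub-instances of length at most $\ell(N)$, with queries nested to depth at most $a(N)$; then, under the assumption that $L$ belongs to the target class, one substitutes a $\mathcal{C}$-circuit for $L$ into each query gate. Stopping just before the substitution step yields exactly a circuit in $[q,\ell,a]$--$\mathcal{C}$ computing $Q$, with the oracle gates set to $L$. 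Hence each bullet reduces to reading off the tuple $(q,\ell,a)$ and the skeleton size from the proof of the corresponding magnification theorem.

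Case by case: for \citep{AK10} I would unpack their divide-and-conquer simulation of a depth-$O(1/\beta)$ $\mathsf{TC}^0$ machine, which introduces $O(1/\beta)$ layers of oracle calls, each layer consisting of $O(n^{1-\beta})$ gates of fan-in $n^\beta$, housed inside an $O(n)$-size $\mathsf{TC}^0$ shell. For \citep{LW13} the recursive gate-evaluation reduction for $\mathsf{CircEval}$ makes $n^{1-\delta}$ adaptive oracle calls of fan-in $n^\delta$ inside a skeleton of size $n\cdot\polylog n$. For the approximate/worst-case $\mathsf{MCSP}$ items from \citep{OS18_mag_first,OPS19_CCC}, the oracle circuit on input $\ttable(f)$ takes an AND (or formula) over $O(N)$ single-layer calls to a consistency oracle on $\langle C, y_i, b_i\rangle$-tuples, yielding fan-in $\poly(s/\delta)$ and adaptivity $1$. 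For the $k$-$\mathsf{Vertex}$-$\mathsf{Cover}$ bullet, Buss's kernelization produces an instance of bit length $O(k^4) = O((\log n)^{4C})$, which is fed to a single oracle gate inside the $\mathsf{AC}^0_d$ scaffold produced in \citep{OS18_mag_first}. For the $\mathsf{Gap}$-$\mathsf{MKtP}$ item of \citep{OPS19_CCC}, the $\mathsf{Formula}$-$\mathsf{XOR}$ reduction yields $N$ queries of fan-in $\poly(s)$ with adaptivity one; for the circuit-$\mathsf{MCSP}$ version one incurs an additional $\poly(s)$ blow-up in size and adaptivity because of the anticheckers step. Finally, for the $\mathsf{MCSP}[s]$ item of \citep{MMW_STOC_paper}, the local error-correcting-code-based hashing construction gives the tighter count $O(N/\poly(s))$ of calls but with adaptivity $O(n/\log s)$ arising from the recursive hash evaluation.

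The hard part is purely bookkeeping: several of the cited proofs state their results as conditional size bounds rather than as explicit oracle-circuit constructions, so one must identify the ``pre-substitution'' intermediate circuit inside each argument and verify that its oracle count, fan-in, adaptivity, and skeleton size line up with the claimed tuple. In particular the adaptivity parameters in the \citep{AK10}, \citep{LW13}, and \citep{MMW_STOC_paper} bullets require some care, since they are governed by the recursion depth of the underlying simulation and not by the number of layers of the final magnified circuit. No new mathematics beyond the contents of the cited papers is needed; the contribution of the proposition is the uniform reformulation of these constructions as membership in $[q,\ell,a]$--$\mathcal{C}$, which is what makes the locality barrier of Section~\ref{ss:locality_barrier} applicable to all of them at once.
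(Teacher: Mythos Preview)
Your proposal is correct and matches the paper's approach. The paper does not give a detailed proof of this proposition at all; it simply states that the bullets follow ``by inspection of each proof'' of the cited magnification theorems, which is exactly the strategy you outline (stop the contrapositive argument before substituting a circuit for the auxiliary language $L$, and read off $(q,\ell,a)$ and the skeleton size). Your case-by-case sketch is more detailed than anything the paper provides, and the $k$-$\mathsf{Vertex}$-$\mathsf{Cover}$ analysis in particular agrees verbatim with the paper's own discussion in Section~\ref{ss:AC_Random_Restrictions}. One small inaccuracy: for the \citep{MMW_STOC_paper} bullet the underlying construction is their low-space streaming algorithm for $\MCSP$, not an error-correcting-code/hashing kernelization (that is the \citep{Magnification_FOCS19} technique); the adaptivity $O(n/\log s)$ there comes from the number of passes/updates of the streaming simulation rather than recursive hash evaluation. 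This does not affect the validity of your overall plan.
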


We stress however that not every hardness magnification theorem needs to lead to an unconditional construction of efficient oracle circuits. (All the proofs that we know of produce such circuits though.)

\end{document}